\documentclass[manuscript,screen,nonacm]{acmart}
\setcopyright{none}
\renewcommand\footnotetextcopyrightpermission[1]{\vspace{16pt}}
\usepackage{booktabs}
\usepackage{tabularx}
\usepackage{placeins}
\usepackage{enumitem}
\usepackage{caption}
\usepackage{xspace}
\usepackage{cleveref}
\usepackage{tikz}
\usetikzlibrary{arrows.meta,positioning,calc,backgrounds}
\usepackage{pgfplots}
\usepgfplotslibrary{groupplots}
\pgfplotsset{compat=1.17}

\providecommand{\Description}[1]{}
\newcolumntype{L}[1]{>{\raggedright\arraybackslash}p{#1}}
\newcolumntype{Y}{>{\raggedright\arraybackslash}X}
\usepackage[ruled,vlined,resetcount]{algorithm2e}

\definecolor{darkcommentgray}{rgb}{0.25, 0.25, 0.25}
\definecolor{proccommentgray}{rgb}{0.3, 0.3, 0.3}

\SetAlgoLined
\SetNoFillComment
\DontPrintSemicolon

\SetCommentSty{mycommfont}
\SetKwComment{Comment}{/* }{ */}
\SetKwProg{Proc}{procedure\unskip}{}{\KwRet}

\SetKwProg{Glob}{Global variables and constants}{:}{\KwRet}
\SetKwProg{Upon}{upon}{ do}{}
\SetKwProg{Fn}{function}{:}{\KwRet}
\crefname{algocf}{algorithm}{algorithms}
\Crefname{algocf}{Algorithm}{Algorithms}

\makeatletter
\renewcommand{\ACM@linecountL}{}
\renewcommand{\ACM@linecountR}{}
\fancypagestyle{plain}{%
  \fancyhf{}%
  \fancyhead[L]{\ACM@linecountL}%
  \fancyhead[R]{\ACM@linecountR}%
  \fancyfoot[C]{\if@ACM@printfolios\footnotesize\thepage\fi}%
}
\makeatother

\newcommand{\hash}[1]{\mathcal{H}(#1)}          
\newcommand{\quorum}{n{-}f}                      

\newcommand{\rstat}[1]{\textsc{#1}} 

\newcommand{\sysname}{Vantage\xspace}

\newcommand{\proposer}{\mathsf{Proposer}}  
\newcommand{\gfull}{\mathsf{commit\mbox{-}full}}
\newcommand{\gcore}{\mathsf{commit\mbox{-}core}}
\newcommand{\gopen}{\mathsf{open\mbox{-}tip}}
\newcommand{\gskip}{\mathsf{skip}}
\newcommand{\echoskip}{\rstat{echo-skip}}
\newcommand{\noready}{\rstat{no-ready}}
\newcommand{\skipvote}{\rstat{skip-vote}}
\newcommand{\skipseal}{\rstat{skip-seal}}
\newcommand{\wish}{\rstat{wish}}
\newcommand{\reswit}{\rstat{resolver-witness}}
\newcommand{\rwish}{\rstat{resolver-wish}}
\newcommand{\fastseal}{\rstat{fast-seal}}
\newcommand{\backed}{\textnormal{\textsc{Backed}}}  

\title{Vantage: Availability-Graded Broadcast for Signature-Free BFT}

\ccsdesc[500]{Security and privacy~Distributed systems security}
\ccsdesc[300]{Theory of computation~Distributed algorithms}
\keywords{Byzantine fault tolerance, atomic broadcast,
  availability-graded broadcast, signature-free protocols, partial synchrony}

\author{Nikita Polyanskii}
\affiliation{
  \institution{IOTA Foundation}
  \country{Germany}
}
\email{nikitapolyansky@gmail.com}

\begin{document}

\begin{abstract}
Digital signatures make blocks and votes transferable evidence:
one party can prove to another what a third party said.  Authenticated
channels convince only the direct receiver: it learns who sent a message but
cannot pass that proof on.  A proposer that references another party's block
therefore cannot prove to the voters who published that block, so the
high-throughput signature-free protocols we survey first complete an
availability vote or a reliable broadcast for every published data block;
only then may a proposer reference it.

We present \sysname{}, a partially synchronous Byzantine fault-tolerant (BFT)
protocol for $n\ge 3f{+}1$ parties, at most $f$ of them Byzantine, that uses
only authenticated channels and collision-resistant hashing.  Parties publish
blocks on hash-linked author lanes; each view's proposer pairs a \emph{core
manifest} of lane prefixes that a quorum has acknowledged to it with an
optimistic \emph{tip manifest} of blocks it has just received.  A new
primitive, \emph{Availability-Graded Broadcast} (AGB), makes the core
irrevocable on a quorum of directly received responses while grading,
rather than blocking on, the tip.  A view that AGB leaves open or silent is
decided by its own on-demand instance of information-theoretic agreement.  A
block may thus be referenced in a proposal one message delay after
publication, as in signed optimistic designs.  Let $\delta$ be the actual message delay after
the Global Stabilization Time (GST).  When all parties are correct, a
proposal commits within $2\delta$ of its send if all $n$ first responses
support core and tip, and within $3\delta$ on the two-round quorum path, so a block published in
time for the next proposal is sequenced $3\delta$ to $4\delta$ after
publication.  Safety holds under asynchrony; liveness holds after GST.  On an
emulated ten-region wide-area network (WAN) with 100 parties, \sysname{} has
the lowest p50 latency of the compared signed and signature-free protocols
at every offered load up to its peak, and it sequences 250k 512-byte
transactions per second at a median end-to-end latency below 500\,ms.
\end{abstract}

\maketitle
\pagestyle{plain}

\section{Introduction}\label{sec:introduction}

State-machine replication (SMR) lets a group of replicas --- we call them
\emph{parties} --- maintain one transaction log even when some are faulty.
Modern high-throughput
systems let every party publish transaction blocks in parallel, rather than
routing every transaction through one leader, and consensus orders short
references to those blocks~\cite{narwhal,autobahn,mysticeti,sailfish}.

\paragraph{When may a block be referenced?}
There are two basic choices.  With \emph{certified admission}, a proposer
waits until a quorum of parties confirm that they hold the block.  This is robust,
but the confirmation round must finish before ordering can begin, and
because every reference is already confirmed, voters can treat the proposal
as \emph{indivisible}: one value, accepted or not as a whole.  With
\emph{uncertified admission}, the next proposer references the block as soon
as it receives it from the author.  This removes that delay when all
messages arrive, but if the proposal is still indivisible, a voter that
missed one block can only wait for it, fetch it, or reject the proposal and
every other block in it.  Signed protocols can take the second choice
because signatures make fetched data attributable; every high-throughput
signature-free protocol we know takes the first.  This paper shows that the
second choice does not require signatures.

\paragraph{The cost of signatures.}
Votes, certificates, and blocks in signed BFT protocols carry signatures
under quantum-vulnerable schemes such as Ed25519 and BLS, which fit in
48--96 bytes and, for BLS, aggregate to constant size.  The post-quantum
standards give up both: an ML-DSA-44 signature is 2{,}420 bytes and an
SLH-DSA-128s signature is 7{,}856 bytes --- $38\times$ and $123\times$ an
Ed25519 signature~\cite{fips204,fips205} --- and neither standard
aggregates, so with $n$ parties of which up to $f$ are Byzantine, a quorum
certificate carries $n-f$ signatures: at $n=100$ and $f=33$ it alone carries
$162$--$526$\,kB.  Time adds to size: hash-based signing takes 0.3--1\,s per
SLH-DSA-128s signature on a recent x86 core, and the faster parameter sets
pay with 17\,kB signatures~\cite{nistsigzoo}.  Replacing classical
signatures with a post-quantum standard on the consensus critical path has
been measured to roughly double commit latency at 50 nodes~\cite{simpleit}.
Both costs are structural and land on every vote, certificate, and block.
Authenticated channels avoid them: a symmetric per-message tag is 16 bytes,
cheap to compute, and needs no public-key infrastructure (PKI).

\paragraph{What signatures buy.}
Signatures provide transferable attribution: after receiving signed bytes
from any party, a voter can verify who signed them.  An authenticated channel
tells a receiver who sent a message, but the receiver cannot pass that proof
on, and a hash proves that fetched bytes match a reference, not who published
them.  Repair can therefore restore missing bytes but not authorship.
Without transferable attribution, waiting can stall the view, repair supplies
data but not evidence, and rejection discards otherwise available blocks, so
the signature-free systems we survey certify each orderable block or wait for
availability votes before proposing it~\cite{sailfishpp,simpleit}, forgoing
the one-delay admission of optimistic signed designs.  Those designs have
their own bottleneck: in Autobahn's optimistic-tip path, a replica that
lacks a proposed tip's data requests it from the leader and votes only after
it arrives~\cite[\S5.5.2 and Lemma~7]{autobahn}, which is also how the
Autobahn artifact we build on behaves; and BBCA-Chain votes on the leader's
proposal as one indivisible value~\cite{bbcachain}.  In both, a view's
progress rests on one leader's proposal arriving whole;
\Cref{sec:leader-relay} isolates what this relay costs under partial
dissemination.

\paragraph{Our approach.}
Vantage admits another author's block one message delay after publication
without signatures and without first completing an availability vote or a
broadcast for it --- to our knowledge, the first high-throughput protocol to
do so.  The claim concerns the fault-free data path under the conditions of
\Cref{sec:closest-comparisons}, not faulty-view or resolver latency.  Each
party publishes its blocks as a hash-linked \emph{author lane}, and a view's
proposer names lane prefixes in two parts: a \emph{core manifest} of
prefixes that a quorum has already acknowledged to it directly, and an
optimistic \emph{tip manifest} of newer blocks received directly from their
authors.  Voters answer with graded responses: support for core and
tip together, or for the core alone when a tip is still missing at the
response deadline, so a block that some voters lack costs at most the
optimistic part.  The proposer never serves data as a voting precondition.
A quorum of responses makes the core irrevocable, and a quorum with one grade
decides the tip immediately.  No transferable proof is involved: every count
is of first-hand (directly received) responses.  We call this two-level
primitive
\emph{Availability-Graded Broadcast} (AGB).  A view left open by
mixed grades or silent by a faulty proposer is resolved by its own on-demand
Information-Theoretic HotStuff (IT-HS) instance~\cite{iths}, which validates
candidate outcomes against each receiver's own recorded responses for that
view.  A silent view whose proposer merely crashed is skipped by a quorum of
skip votes without the resolver.

Our contributions are:
\begin{itemize}[leftmargin=1.4em,topsep=2pt,itemsep=2pt]
  \item \textbf{Availability-Graded Broadcast (\Cref{sec:primitive}).}
        We define a broadcast primitive whose outcome has two levels: a
        quorum of first-hand responses makes the core irrevocable, and
        a quorum that grades the tip unanimously decides at once whether the
        tip is included or dropped.  We prove that correct parties agree on
        the core and on the tip decision without timing assumptions, and that
        a correct proposer's view terminates after the Global Stabilization
        Time (GST).  For a faulty proposer the primitive alone does not
        guarantee a common completion; the resolver below supplies one.
  \item \textbf{Signature-free SMR at $n\ge 3f{+}1$ (\Cref{sec:protocol}).}
        We build \sysname{} on AGB over hash-linked author lanes, and resolve
        each view that AGB leaves open or silent with its own on-demand IT-HS
        instance~\cite{iths}.  Safety needs no timing assumptions: correct
        parties \emph{seal} (commit) one common outcome per view, their
        output sequences are prefixes of one another, and every output block
        was published by its author and remains retrievable.  After GST,
        every view is eventually sealed at all correct parties and every
        valid block of a correct author is eventually included.
  \item \textbf{One-delay admission and latency (\Cref{sec:closest-comparisons,sec:primitive}).}
        Let $\delta$ be the actual message delay after GST, and let all
        parties be correct.  A block that another party published directly
        to the proposer may be referenced $\delta$ after publication; the
        proposal seals $2\delta$ after its send when all $n$ first
        responses support both core and tip, or $3\delta$ on the two-round
        quorum path.  A block published just in time for the next proposal
        (\emph{aligned}; \Cref{sec:closest-comparisons}) is therefore
        sequenced between $3\delta$ and
        $4\delta$ after publication once earlier views are output; admitting
        it through the core instead, after its acknowledgment quorum, costs
        $5\delta$ to $6\delta$.  The two-delay seal cannot be improved: with $f\ge 1$, no
        broadcast decides one message delay after its send, even in
        all-correct executions~\cite{arxunauth}.  These bounds hold at any
        $n\ge 3f{+}1$; at $n=3f{+}1$, the minimum for partially synchronous
        consensus~\cite{dls}, $3\delta$ equals the lowest
        publication-to-sequencing latency of any protocol we compare under
        the same assumptions, signed ones included
        (\Cref{sec:closest-comparisons}).
  \item \textbf{Implementation and evaluation.}
        We implement \sysname{}, two Autobahn variants (seamless and
        optimistic all-to-all), and two Simple-IT variants (Bracha and
        optimistic reliable broadcast) in one Rust binary derived from the
        Autobahn artifact~\cite{autobahn}, so all five share one transport,
        batching, storage, client, and metrics stack; Bluestreak and
        Sailfish++ run from their own artifact~\cite{starfish}.  With 100
        parties on an emulated ten-region WAN, \sysname{} sequences up to
        about 268k tx/s and has the lowest p50 latency at every offered load
        up to that peak: 0.49\,s at 250k offered tx/s, against 0.54\,s for
        Bluestreak, which reaches about 240k tx/s at that load, and 0.64\,s
        for optimistic Autobahn.  At low load \sysname{}'s wire bytes per
        sequenced byte are one fourteenth of optimistic Autobahn's, since it
        carries no per-lane availability proofs or quorum certificates
        (\Cref{sec:evaluation}).  At $n=20$, when Byzantine authors publish
        blocks to only a few parties, a \sysname{} proposer never relays data
        and its p50 latency stays at 0.46\,s while optimistic Autobahn's
        reaches 2.3\,s (\Cref{sec:leader-relay}).
\end{itemize}

\paragraph{Roadmap.}
\Cref{sec:overview} walks through one view and its recovery cases;
\Cref{sec:closest-comparisons} compares the closest protocols under one
latency model; \Cref{sec:model,sec:primitive,sec:protocol} give the model, the
primitive, and the resolver with their guarantees; \Cref{sec:evaluation}
evaluates the implementation; and \Cref{sec:discussion} discusses
limitations.  Complete specifications, proofs, comparison and evaluation
details, and extended related work are in the appendices.

\section{Overview}\label{sec:overview}

\paragraph{Data path: author lanes.}
Every party continuously publishes its data blocks on its own \emph{author
lane}, a hash chain sent block by block to all parties
(\Cref{fig:dissemination}).  For each author, a party records the longest
valid lane prefix it has received directly from that author.  When a proposal
later names a lane prefix, every party whose directly received prefix covers
it \emph{acknowledges} it in its response to the proposal.  Because channels
are authenticated, direct receipt tells the receiver who published the block,
and $f{+}1$ matching acknowledgments tell any party that counts them that at
least one correct party received the prefix from its author and retains it; a
quorum of $\quorum$ acknowledgments guarantees $f{+}1$ correct holders.  No
block waits for a broadcast or a certification round before it may be
proposed.

\begin{figure*}
    \centering
\begin{tikzpicture}[
    >={Stealth[length=2mm]},
    y=1.15cm,
    block/.style={draw, minimum size=5.5mm, inner sep=0pt, fill=white},
    core/.style={block, fill=black!16, thick},
    tip/.style={block, dashed, fill=black!4},
    hlink/.style={->, gray, shorten >=0.7pt, shorten <=0.7pt},
    plabel/.style={anchor=east, font=\small},
    blabel/.style={font=\scriptsize, anchor=south, inner sep=1.5pt},
]
  \draw[line width=1.6pt, black!45] (0.45,-0.55) -- (0.45,-4.4);
  \node[font=\scriptsize\itshape, text=black!55] at (0.45,-4.62) {genesis};

  \foreach \x [count=\i] in {1.0,2.2,3.3,4.8,5.9}
    \node[block] (b1-\i) at (\x,-1) {};
  \foreach \x [count=\i] in {1.0,2.3,3.7,5.1}
    \node[block] (b2-\i) at (\x,-2) {};
  \foreach \x [count=\i] in {1.0,2.0,3.0,4.1,5.3,6.6}
    \node[block] (b3-\i) at (\x,-3) {};
  \foreach \x [count=\i] in {1.0,2.5,3.8}
    \node[block] (b4-\i) at (\x,-4) {};

  \foreach \p in {1,...,4} \node[plabel] at (0.2,-\p) {$p_{\p}$};
  \foreach \i in {1,...,5} \node[blabel] at (b1-\i.north) {$b_1^{\i}$};

  \foreach \l in {1,...,4} \draw[hlink] (b\l-1.west) -- (0.5,-\l);
  \foreach \i/\j in {2/1,3/2,4/3,5/4}     \draw[hlink] (b1-\i.west) -- (b1-\j.east);
  \foreach \i/\j in {2/1,3/2,4/3}         \draw[hlink] (b2-\i.west) -- (b2-\j.east);
  \foreach \i/\j in {2/1,3/2,4/3,5/4,6/5} \draw[hlink] (b3-\i.west) -- (b3-\j.east);
  \foreach \i/\j in {2/1,3/2}             \draw[hlink] (b4-\i.west) -- (b4-\j.east);

  \node[core] at (b1-3) {}; \node[core] at (b2-2) {};
  \node[core] at (b3-2) {}; \node[core] at (b4-2) {};
  \node[tip]  at (b1-5) {}; \node[tip]  at (b2-4) {};
  \node[tip]  at (b3-6) {}; \node[tip]  at (b4-3) {};

  \begin{scope}[shift={(7.45,-1.7)}]
    \node[core, minimum size=4mm] at (0.15,0) {};
    \node[anchor=west, font=\scriptsize] at (0.5,0) {last block of a core prefix named in $C_v$};
    \node[tip, minimum size=4mm] at (0.15,-0.5) {};
    \node[anchor=west, font=\scriptsize] at (0.5,-0.5) {tip block named in $T_v$};
    \draw[hlink] (0,-1.0) -- (0.3,-1.0);
    \node[anchor=west, font=\scriptsize] at (0.5,-1.0) {hash link $b_i^{k}\!\to\!b_i^{k-1}$};
  \end{scope}
\end{tikzpicture}
    \caption{Each party $p_i$ publishes an author lane $b_i^1,b_i^2,\ldots$
    from a shared genesis; lanes advance asynchronously to a ragged frontier.  The
    snapshot is from the point of view of $\proposer(v)$, whose view proposal
    $B_v=(C_v,T_v)$ names at most one prefix per author in each manifest: the
    core manifest $C_v$ (gray), whose entries a quorum has acknowledged, and
    the optimistic tip manifest $T_v$ (dashed), newer blocks received directly
    from their authors.}
    \label{fig:dissemination}
    \Description{Four author lanes leave a shared genesis and advance to
    ragged frontiers; the last block of each core prefix is gray and each tip
    block is dashed.}
\end{figure*}

\paragraph{One view.}
Views have round-robin proposers.  The proposer of view $v$ broadcasts one
\emph{view proposal} $B_v=(C_v,T_v)$ made of two \emph{manifests}, each naming
at most one lane prefix per author.
The \emph{core manifest} $C_v$ names prefixes that the proposer holds directly
from their authors and for which it has itself counted a quorum ($\quorum$) of
acknowledgments, given in the responses to earlier proposals that named those
prefixes.  The optimistic \emph{tip manifest} $T_v$ names newer blocks that
the proposer received directly from their authors, each extending its author's
core entry (or genesis when the author has none).

This per-view broadcast and its graded responses are \sysname{}'s instance of
\emph{Availability-Graded Broadcast} (AGB), the primitive of
\Cref{sec:primitive}.  Every party answers in Bracha's two-round ECHO/READY
pattern~\cite{bracha} (\Cref{fig:agb-decision}):
\begin{itemize}[leftmargin=1.4em,topsep=2pt,itemsep=2pt]
  \item an \emph{ECHO} of $B_v$, with grade~$1$ if the party can verify core
        and tip --- each named prefix was received directly from its author or
        acknowledged by $f{+}1$ parties, and every tip block is held --- and
        grade~$0$ if it can verify the core but not the tip by a local
        deadline, because the tip is missing, conflicts with what the party
        received from that author, or does not extend the core.  A failed tip
        lowers the grade instead of blocking the view.  The ECHO also carries
        the party's acknowledgments of the named prefixes it holds directly
        from their authors; these feed later core manifests;
  \item a \emph{READY} for $B_v$ once a quorum of ECHOs names $B_v$, with
        grade~$1$ or $0$ if a quorum of those ECHOs share that grade, and the
        mixed grade otherwise.  A mixed READY is provisional: if one grade
        reaches a quorum before the party's READY deadline, the party sends
        one single-grade refinement for the same $B_v$.
\end{itemize}
A quorum of READYs for $B_v$, grades ignored, \emph{completes} the view: by
quorum intersection, $B_v$ is the only proposal that can complete in view $v$,
and its core becomes an irrevocable part of the view's output.  A single-grade
READY quorum also \emph{seals} the view (a \emph{direct seal}): grade~$1$
selects the full outcome $\gfull$, core and tip; grade~$0$ selects the
core-only outcome $\gcore$.  We call the primitive alone --- proposal, ECHO,
READY, completion, and direct seals --- \emph{Direct AGB}; the rest of this
section is Vantage's composition around it.  The first addition is the
\emph{fast seal}: grade-$1$ ECHOs for $B_v$ from all $n$ parties seal $\gfull$
one message delay before any READY quorum can form.  When all parties are
correct and the network is timely, a proposal therefore seals within $2\delta$
of its send, where $\delta$ is the actual message delay, and within $3\delta$
on the READY path (\Cref{fig:commit-chain}).  Views are pipelined: the
proposer of view $v{+}1$ sends $B_{v+1}$ as soon as it has processed $B_v$,
without waiting for view $v$'s responses.

\begin{figure*}[t]
    \centering
\begin{tikzpicture}[
    >={Stealth[length=2mm]},
    x=1.4cm, y=1.06cm,
    evt/.style={draw, circle, fill=white, minimum size=2.4mm, inner sep=0pt},
    fast/.style={evt, fill=black!38},
    prop/.style={draw, fill=black!16, minimum size=3.6mm, inner sep=0pt},
    lane/.style={font=\small, anchor=east},
    stg/.style={font=\scriptsize, anchor=north},
    ann/.style={font=\scriptsize, anchor=south},
    tlabel/.style={font=\scriptsize, text=black!70},
]
  \node[lane] at (-0.25,0) {view $v$};
  \node[prop] (pv) at (0.55,0) {};
  \node[ann] at (0.7,0.16) {$B_v$};
  \draw (0.73,0) -- (3.55,0);
  \node[evt] at (1.55,0) {};
  \node[stg] at (1.55,-0.16) {ECHOs sent};
  \node[fast] at (2.55,0) {};
  \node[ann] at (2.55,0.16) {grade-$1$ ECHOs from all $n$: \fastseal};
  \node[evt] at (3.55,0) {};
  \node[stg] at (3.55,-0.16) {direct seal ($Q$ READYs)};
  \draw[<->, black!55] (0.55,0.6) -- (2.55,0.6)
    node[midway, above, font=\scriptsize] {$\le 2\delta$};
  \draw[<->, black!55] (0.55,1.0) -- (3.55,1.0)
    node[midway, above, font=\scriptsize] {$\le 3\delta$};

  \node[lane] at (-0.25,-1.35) {view $v{+}1$};
  \node[prop] (pv1) at (1.55,-1.35) {};
  \node[ann] at (1.7,-1.19) {$B_{v+1}$};
  \draw (1.73,-1.35) -- (4.55,-1.35);
  \node[evt] at (2.55,-1.35) {};
  \node[fast] at (3.55,-1.35) {};
  \node[evt] at (4.55,-1.35) {};

  \node[lane] at (-0.25,-2.7) {view $v{+}2$};
  \node[prop] (pv2) at (2.55,-2.7) {};
  \node[ann] at (2.7,-2.54) {$B_{v+2}$};
  \draw (2.73,-2.7) -- (5.55,-2.7);
  \node[evt] at (3.55,-2.7) {};
  \node[fast] at (4.55,-2.7) {};
  \node[evt] at (5.55,-2.7) {};

  \draw[->, black!60] (pv.south) .. controls (0.6,-0.9) .. (pv1.west);
  \draw[->, black!60] (pv1.south east) .. controls (1.9,-2.25) .. (pv2.west);
  \node[tlabel, align=left, anchor=west] at (5.9,-0.55)
    {$\proposer(v{+}1)$ sends $B_{v+1}$ once it has\\
     processed $B_v$ and every earlier proposal,\\
     without waiting for view $v$'s responses:\\
     proposals at most $\delta$ apart};

  \draw[dashed, black!45] (pv1.south) -- (1.55,-3.15);
  \draw[dashed, black!45] (pv2.south) -- (2.55,-3.15);
  \draw[<->, black!55] (1.55,-3.05) -- (2.55,-3.05)
    node[midway, below, font=\scriptsize] {$\le\delta$};

  \draw[->, black!55] (0,-3.55) -- (8.9,-3.55)
    node[anchor=west, font=\scriptsize, text=black!55] {time};
\end{tikzpicture}
    \caption{Pipelined Vantage views in the all-correct case.  Grade-$1$
    ECHOs from all $n$ parties arrive everywhere within two delays of the
    proposal's send and fast-seal $\gfull$; the direct seal from $Q=n-f$ READYs
    follows within three.  The proposer of $v{+}1$ sends $B_{v+1}$ once it
    has processed $B_v$ and every earlier proposal, without waiting for view
    $v$'s ECHOs or READYs, so consecutive proposals are at most one delay
    apart; formal view entry through the WISH pacemaker (\Cref{sec:primitive})
    arms the fallback timers and triggers the send only when an earlier
    proposal has not arrived.}
    \label{fig:commit-chain}
    \Description{Three overlapping views show proposal send, ECHO, fast-seal,
    and direct-seal events.  Each fast seal is two message delays after the
    proposal's send and each direct seal three; proposals are one delay apart.}
\end{figure*}

\begin{figure}
\centering
\begin{tikzpicture}[
    >={Stealth[length=2mm]},
    resp/.style={draw, rounded corners=1pt, fill=black!3, font=\scriptsize,
                 inner sep=2.6pt},
    outc/.style={draw, thick, fill=black!16, font=\scriptsize,
                 inner sep=2.6pt},
    inp/.style={draw, fill=black!16, font=\scriptsize, inner sep=2.6pt},
    ghost/.style={draw, dashed, fill=black!4, font=\scriptsize,
                  inner sep=2.6pt},
    cond/.style={font=\scriptsize, text=black!55, align=center,
                 fill=white, inner sep=1.5pt},
    band/.style={font=\scriptsize, text=black!55},
]
  \node[band] at (0.85,1.5) {view $v$ proposal};
  \node[band] at (3.9,1.5) {ECHO --- one per party};
  \node[band] at (6.9,1.5) {READY --- from $Q$ ECHOs};
  \node[band] at (12.4,1.5) {terminal seal --- one per view};

  \node[inp] (bv) at (0.85,-1.25) {$B_v=(C,T)$};

  \node[resp] (e1)  at (3.9,0)    {grade-$1$ ECHO};
  \node[resp] (e0)  at (3.9,-2.5) {grade-$0$ ECHO};
  \node[resp] (esk) at (3.9,-3.9) {\echoskip{}};

  \node[resp] (r1) at (6.9,0)     {READY-$1$};
  \node[resp] (rm) at (6.9,-1.25) {READY-mix};
  \node[resp] (r0) at (6.9,-2.5)  {READY-$0$};
  \node[resp] (nr) at (6.9,-3.9)  {\noready{}};

  \node[ghost] (gopen) at (9.3,-1.25) {$\gopen$};
  \node[cond, below=1pt of gopen] {core fixed, tip open};
  \node[resp]  (svote) at (9.3,-3.9) {\skipvote{}};

  \draw[rounded corners=3pt, black!70] (10.85,0.45) rectangle (13.95,-4.35);
  \node[outc] (full)  at (12.4,0)    {$\gfull(C,T)$};
  \node[outc] (core)  at (12.4,-2.5) {$\gcore(C)$};
  \node[outc] (gskip) at (12.4,-3.9) {$\gskip$};

  \draw[->] (bv.east) -- node[cond, midway] {core and tip verified} (e1.west);
  \draw[->] (bv.east) -- node[cond, midway] {core verified;\\tip missing at the deadline} (e0.west);
  \draw[->] (bv.south) .. controls (0.85,-3.2) and (2.3,-3.9) ..
    node[cond, midway] {core invalid, or no proposal\\by the echo deadline} (esk.west);

  \draw[->] (e1.east) -- node[cond, above=1pt] {$Q$ grade-$1$ ECHOs} (r1.west);
  \draw[->] (e0.east) -- node[cond, below=1pt] {$Q$ grade-$0$ ECHOs} (r0.west);
  \draw[->] (e1.south east) .. controls (5.3,-0.5) .. (rm.west);
  \draw[->] (e0.north east) .. controls (5.3,-2.0) .. (rm.west);
  \node[cond] at (5.45,-1.25) {$Q$ ECHOs,\\both grades};
  \draw[->] (esk.east) -- (nr.west);
  \node[cond, anchor=south] at (5.4,-3.64) {no READY by the\\READY deadline};

  \draw[->] (r1.east) -- node[cond, above=1pt] {$Q$ grade-$1$ READYs} (full.west);
  \draw[->] (r0.east) -- node[cond, above=1pt] {$Q$ grade-$0$ READYs} (core.west);
  \draw[->] (rm.east) -- (gopen.west);
  \node[cond, anchor=south] at (7.95,-0.98) {$Q$ READYs,\\not all one grade};
  \draw[->, dashed, black!60] (gopen.east) .. controls (10.2,-1.25) and (10.55,-0.45)
    .. node[cond, pos=0.55, above left=-1pt] {resolver} (full.south west);
  \draw[->, dashed, black!60] (gopen.east) .. controls (10.2,-1.25) and (10.55,-2.05)
    .. (core.north west);

  \draw[->, dashed, black!60] (e1.north) .. controls (4.6,1.0)
    and (10.2,1.0) .. (full.north);
  \node[cond, anchor=south] at (7.6,0.9)
    {grade-$1$ ECHOs from all $n$ parties: \fastseal{}};

  \draw[->] (nr.east) -- (svote.west);
  \node[cond, anchor=south] at (8.1,-3.64) {own \noready{},\\$Q$ \echoskip{}s};
  \draw[->] (svote.east) -- (gskip.west);
  \node[cond, anchor=south] at (10.15,-3.64) {$Q$ \skipvote{}s};

  \node[band, anchor=west, align=left] at (-0.1,-4.95)
    {$Q=\quorum$.  Each party sends one echo-stage and one ready-stage response
     per view; READY-mix is refined once if a grade reaches $Q$ before the
     READY deadline.};
\end{tikzpicture}
\caption{One Vantage view $v$.  Left to right: the proposal, each party's
  echo-stage response, its ready-stage response formed from $Q$ ECHOs, and
  the terminal seal, at most one per view.  Any READY quorum completes the
  view and fixes the core; a single-grade quorum seals it $\gfull$ or
  $\gcore$, and a quorum without a single grade leaves the tip open
  ($\gopen$) for the resolver (\Cref{fig:resolution}).  Two branches are
  Vantage shortcuts around Direct AGB: grade-$1$ ECHOs from all $n$ parties
  fast-seal $\gfull$ one message delay before READYs, and a quorum of
  grounded skip votes seals a crash-only silent view as $\gskip$.
  \Cref{app:primitive} gives the complete rules.}
\Description{Flow diagram with four columns: a proposal branches into
  grade-one, grade-zero, and echo-skip ECHOs; quorums of ECHOs form READY-1,
  READY-0, READY-mix, or no-ready; quorums of READYs select one terminal seal
  per view among commit-full, commit-core, and skip, while a quorum without a
  single grade leaves the intermediate open-tip state for the resolver.}
\label{fig:agb-decision}
\end{figure}

\paragraph{Recovery: open and silent views.}
Direct AGB leaves two cases unsealed.  If READY grades stay mixed, a completed
view is \emph{open}: its core is fixed, but its tip is neither included nor
discarded.  If the proposer is faulty, a view may be \emph{silent}: parties
time out and send refusals, \echoskip{} in place of an ECHO and \noready{} in
place of a READY.  Vantage seals both cases with two mechanisms, again without
transferable evidence.

The first is a shortcut for a view whose proposer merely crashed.  A party
that has itself sent \noready{} and itself counted a quorum of \echoskip{}s
broadcasts a \skipvote{}; we call such a vote \emph{grounded}, since it rests
only on the party's own evidence.  A quorum of grounded votes
\emph{skip-seals} the view as $\gskip$ (\Cref{fig:agb-decision}).  A skip vote
is a persistent stance: a party that has voted to skip never endorses a
non-skip outcome for that view in the resolver below, and vice versa, so a
skip quorum and a conflicting resolver quorum cannot both form.

Every other unsealed view $u$ --- a resolver \emph{target} --- is decided by
its own on-demand instance of Information-Theoretic HotStuff
(IT-HS)~\cite{iths}, started once the data path has advanced three views past
$u$ (\Cref{fig:resolution}).  Candidates are $\gfull(C,T)$, $\gcore(C)$, and
$\gskip$, each proposed only by a party whose own recorded responses for $u$
justify it.  A party endorses (ECHOs) a candidate only if it is consistent
with what it itself sent in view $u$: a party that sent a grade-$1$ READY
never endorses $\gcore$, one that sent grade~$0$ never endorses $\gfull$, and
only a party that sent \noready{} endorses $\gskip$.  Any READY quorum that
could have direct-sealed $u$ shares a correct party with any resolver ECHO
quorum, and that party refuses one of the two; a candidate that conflicts with
a possible direct seal therefore never gathers a quorum.  This local check
stands in for the certificate a signed protocol would attach.  Because it is
local, a party that joins the resolver late cannot repeat it, so before IT-HS
locks on a candidate or carries it into a later resolver view, a quorum of
parties must have witnessed its endorsement quorum first-hand (\backed{},
\Cref{sec:protocol}).  Targets run concurrently with one another and with the
data path, and later proposals carry no resolution state.

The fast seal needs the same kind of guard.  A party that counts grade-$1$
ECHOs from all $n$ parties seals $\gfull$ without waiting for READYs, so every
party that sends a grade-$1$ ECHO installs an \emph{optimistic lock}: it
refuses every conflicting resolver candidate until it has counted $f{+}1$
echo-stage responses that are not grade-$1$ ECHOs for $B_v$.  One such
response would not do, because a Byzantine party may report different grades
to different parties; among $f{+}1$ there is a correct party, whose single
ECHO is the same everywhere, so no party can have counted all $n$ grade-$1$
ECHOs.

\paragraph{Sequencing and output.}
A party \emph{seals} each view at most once, with whichever result arrives
first --- a direct, fast, or skip seal or a resolver decision;
\Cref{sec:protocol} shows that they never disagree.  Sealed views are
\emph{sequenced} in view order: once all earlier views are sealed, a completed
view's core blocks are appended --- even while its own tip is still open ---
and its tip blocks follow if the view seals $\gfull$.  \emph{Output} is local
delivery after retrieval: a party that lacks a sequenced block requests it
from the other parties, and every sealed manifest is backed by first-hand
acknowledgments from a correct party that retains its blocks
(\Cref{sec:model}).  We say that a resolver \emph{decides} a target and that a
party \emph{seals} a view.

\section{Closest Comparisons}\label{sec:closest-comparisons}

\Cref{sec:overview} gave Vantage's fault-free data-path latencies: a tip is
admitted one message delay after publication, and the proposal seals $2\delta$
after its send on the fast path or $3\delta$ on the READY path.  This section
places those numbers beside the closest protocols under one set of
assumptions.  The outcome is that Vantage's tip path reaches the lowest
publication-to-sequencing latency in the comparison, signed protocols
included, while the signature-free baselines spend two delays before a block
can be proposed at all.

We measure from a correct nonleader publishing a block to a different correct
party proposing the object that orders it --- a consensus proposal or a
directed-acyclic-graph (DAG) vertex --- which we call the block's
\emph{carrier}; in \sysname{}, the carrier names a lane prefix that contains
the block.  We write $A$ for publication-to-carrier latency (\emph{admission})
and $L$ for publication-to-sequencing latency.  Every row uses the same
favorable setting: all parties are correct after GST, there is no queueing,
earlier output is complete, and publication is \emph{aligned} with the next
carrier opportunity, so $A$ and $L$ exclude any wait for the protocol's
cadence.  For Vantage this is the steady state of \Cref{cor:tx-latency}:
every earlier view is output and each proposal is sent as soon as its tip is
eligible.  Here $\delta$ is the actual post-GST message delay.
``Signature-free'' means no digital signatures or transferable signature
certificates; every row still assumes authenticated channels.

\Cref{tab:data-path-comparison} has two groups.  Mempool-Simple-IT, in its
faster Opt-RBC variant, and Sailfish++ are the signature-free protocols whose
latencies~\cite{simpleit} derives in the same message-delay terms; both
certify a block, by an
availability-vote quorum or by optimistic reliable broadcast (Opt-RBC), before
a carrier may reference it.  FinWhale (Mysticeti with a fast path), Autobahn,
and BBCA-Chain are signed protocols whose carrier may reference a block one
delay after publication --- a directly received DAG block, an optimistic lane
tip, or a one-trip best-effort-broadcast (BEB) block --- which is the
admission latency Vantage claims without signatures.  Autobahn requires a
proof of availability (PoA) for the tip's parent; BBCA denotes Byzantine
Broadcast with Complete--Adopt.
\begin{table}[t]
\caption{Aligned cross-author data latency in the common setting of
\Cref{sec:closest-comparisons}.  $A$ is publication-to-carrier (admission)
latency and $L$ is publication-to-sequencing latency; lower is better.
Starred values need responses from all $n$ parties.}
\label{tab:data-path-comparison}
\centering
\normalsize
\(
  \text{publish}
  \xrightarrow{\ A\ }
  \text{first legal carrier}
  \longrightarrow
  \text{sequenced at }L
\)
\smallskip

\setlength{\tabcolsep}{4pt}
\renewcommand{\arraystretch}{1.22}
\begin{tabularx}{\linewidth}{@{}L{3.1cm}cYcc@{}}
\toprule
Protocol / path
& \shortstack{Signature-\\free}
& Required before carrier
& \shortstack{Admission\\$A$}
& \shortstack{Aligned total\\$L$} \\
\midrule
Mempool-Simple-IT (Opt-RBC)~\cite{simpleit}
& Yes
& Availability-vote quorum
& $2\delta$
& $5\delta$ \\

Sailfish++ (Opt-RBC)~\cite{sailfishpp,simpleit}
& Yes
& Opt-RBC for every vertex
& $2\delta$
& \shortstack{$5\delta$ for $n{-}f{-}1$ vertices\\$7\delta$ for the remaining $f$} \\

\addlinespace[2pt]
\textbf{\sysname{} (optimistic lane tip)}
& \textbf{Yes}
& \textbf{Directly published lane tip; no PoA}
& $\boldsymbol{\delta}$
& \shortstack{$\boldsymbol{3\delta^{\ast}}$ fast (all-to-all)\\$\boldsymbol{4\delta}$ quorum (READY)} \\
\addlinespace[2pt]
FinWhale (Mysticeti $+$ fast path)~\cite{finwhale,mysticeti}
& No
& Direct receipt of signed DAG block
& $\delta$
& \shortstack{$3\delta$ fast\\$4\delta$ quorum} \\

Autobahn (optimistic lane tip)~\cite{autobahn}
& No
& Current lane tip; parent has PoA
& $\delta$
& \shortstack{$3\delta^{\ast}$ fast (all-to-all)\\$4\delta^{\ast}$ fast (linear)\\$6\delta$ quorum (slow path)} \\

BBCA-Chain~\cite{bbcachain}
& No
& One-trip BEB block
& $\delta$
& $4\delta$ \\

\bottomrule
\end{tabularx}
\end{table}

Starred values are zero-fault fast paths that need responses from all $n$
parties; one non-voter disables them.  FinWhale's fast path is not starred:
it commits on $n-p$ references with $n=3f+2p-1$, so at the displayed $p=1$
it runs at $n=3f{+}1$ and survives one non-voter~\cite{finwhale}.  Sailfish++
sequences $n-f-1$ of each round's nonleader vertices in $5\delta$ and the
remaining $f$ two delays later~\cite{sailfishpp,simpleit}.  Derivations and
protocol-by-protocol qualifications appear in \Cref{app:comparison}; the
table compares latency under these assumptions, not fault tolerance or
overall dominance.

The highlighted row is the lane-tip path, the one-delay admission of
\Cref{sec:introduction}.  The core is the other manifest in the same
proposal: a core entry first waits for its acknowledgment quorum, so it
sequences at $5\delta$ aligned through the fast seal (\Cref{cor:tx-latency}).
The tip row's $3\delta^{\ast}$ assumes that the carrier is sent as soon as the
tip is eligible and that all $n$ parties respond; the Direct AGB READY path
gives $4\delta$.  The two paths race at run time and a party seals with
whichever completes first: the starred value waits for the slowest of all $n$
responses, the READY path for two successive quorum steps, so under
heterogeneous delays the nominally slower path can finish first
(\Cref{sec:decision-anatomy} measures the split in the local $n=20$ study).
Without alignment, a block that misses the current proposal waits at most one
delay for the next, so the tip bound is $4\delta$ through the fast seal
(\Cref{cor:tx-latency}); on the READY path each of these figures is one delay
larger.

The signature-free baselines spend two delays before a carrier: an
availability-vote quorum in Mempool-Simple-IT and Opt-RBC for each Sailfish++
vertex.  Vantage instead admits a directly published lane tip after one
delay, and a receiver that lacks the tip lowers its grade instead of
withholding its ECHO.  The certified baselines buy stronger pre-carrier
availability with those delays: a Vantage tip manifest may remain open and
use the resolver (\Cref{sec:protocol}).

The three signed rows pay differently.  Autobahn also admits a current lane
tip after one delay but requires a PoA for its parent and signed votes; its
displayed $3\delta$ endpoint, like Vantage's, needs all $n$ participants, and
its two-phase slow path takes $6\delta$.  FinWhale's $p=1$ fast path reaches
the same $3\delta$ and survives one non-voter, using signed DAG blocks.
BBCA-Chain's carrier is a one-trip BEB block followed by a three-trip BBCA
decision.  At $n=3f{+}1$, Vantage therefore \emph{matches} the lowest aligned
endpoint in the comparison without signatures; it does not strictly dominate
these protocols or prove a lower bound.

The admission rule also shapes leader load.  A \sysname{} proposal names
lane prefixes by reference; an under-disseminated tip costs its grade while
the core is retained, and missing bytes are repaired from lane authors and
holders, so no party's response waits on data supplied by the proposer.
When a leader instead admits data that lagging replicas must fetch from it
before voting, the missing-data fanout concentrates on that leader
(optimistic all-to-all Autobahn); when the proposal is voted as one
indivisible value, a single under-disseminated block stalls or forfeits the
whole view (BBCA-Chain).  \Cref{sec:leader-relay} measures the relay cost
under a partial-dissemination stress.

\FloatBarrier

\section{Model and Preliminaries}\label{sec:model}

The bounds above are proved in the model fixed here.  \Cref{sec:primitive}
then specifies Direct AGB and its guarantees, and \Cref{sec:protocol} the
resolver and output rules that complete Vantage; the proofs are in
\Cref{app:primitive,app:protocol}.

\paragraph{Parties and faults.}
The protocol runs among a fixed set $\Pi=\{p_1,\dots,p_n\}$ of $n\ge 3f{+}1$
parties. A party is \emph{correct} if it follows the protocol and
\emph{Byzantine} otherwise.  A static, computationally bounded adversary
chooses before the execution a fixed set of at most $f$ parties,
which may deviate arbitrarily under its coordination.  Thus a correct party
is one that is never corrupted during the execution.
As $n\ge 3f{+}1$, at least $n-f\ge 2f{+}1$ parties are correct.
Agreement uses only that at most $f$ parties are ever faulty.  The static
choice is used where $f{+}1$ matching first-hand statements certify that one
correct party holds a block --- availability witnesses and resolver origin
bits --- because retrievability and repair need that party to remain correct
and keep serving the block.  Core entries, backed by $n-f$
acknowledgments, keep $n-2f\ge f{+}1$ correct holders under any $f$
corruptions; the $f{+}1$ threshold does not, so an adversary that corrupts a
holder after it acknowledged could remove the only correct copy.  We do not
treat that case.

\paragraph{Communication.}
Every pair of parties shares a reliable, authenticated point-to-point channel:
the recipient of a message learns the identity of its sender, the adversary can
neither forge nor tamper with messages between correct parties, and every message
sent between correct parties is eventually delivered. Channels are not assumed to
be first-in, first-out (FIFO). Authentication is \emph{not transferable}: that $q$ received $m$ on its
channel from $p$ is not evidence $q$ can relay to a third party. This is the
relevant difference from the signature-based setting, and the reason
non-equivocation must be re-established through quorums.  A concrete link may
use a pairwise message-authentication code (MAC), but protocol messages contain
no transferable vector of recipient-specific MACs: forwarded authentication
tags are never accepted or counted as evidence.  A broadcast includes
its sender: a party delivers its own message to itself immediately and
counts it as a first-hand message from itself.

\paragraph{Timing.}
We assume partial synchrony~\cite{dls}: there is an unknown Global
Stabilization Time (GST) and a known bound $\Delta$ such that every message
between correct parties sent at time $t$ arrives by
$\max(t,\mathrm{GST})+\Delta$. Safety never rests on timing; liveness is
guaranteed after GST. For a particular execution, let $\delta\le\Delta$ denote
its actual upper bound after GST: every correct-to-correct message sent at time
$t$ arrives by $\max(t,\mathrm{GST})+\delta$. The protocol knows $\Delta$ but
not $\delta$; latency statements use $\delta$ to state the execution's actual
delay.  This is an explicit \emph{in-flight-message convention}: even a
correct-to-correct message sent before GST is delivered by
$\mathrm{GST}+\delta$ in the execution-specific analysis.
Local clocks are monotone throughout the execution and measure elapsed real
time without drift after GST; they need not have a common origin.  Thus a
timer armed for duration $\theta$ before GST has at most $\theta$ local time
remaining at GST and expires within at most $\theta$ real time thereafter.
At a local deadline, a party processes every message delivered by that
deadline, and evaluates the positive rules those messages enable, before
taking the timeout rule.  The
protocol uses the constants
\[
  \theta_E=3\Delta,\qquad \theta_R=4\Delta
\]
for a missing echo-stage response and a missing initial ready-stage response --- the
smallest values for which the derivations in
\Cref{thm:agb-live,lem:responsive-pacing} hold as written, tight when
$\delta=\Delta$; they
are not claimed to be protocol lower bounds.  These
constants affect only timer-triggered responses; positive evidence fires a
rule as soon as its local condition holds.

\paragraph{Execution fairness.}
Liveness is stated for fair, non-Zeno executions.  Every finite real-time
interval contains only finitely many protocol events; local time and armed
timers at correct parties advance; every action that remains enabled at a
correct party is eventually taken; and a correct holder eventually serves a
request issued by a correct party.  These assumptions rule out an execution
that performs infinitely many view transitions in finite time or permanently
starves an enabled local action.  They impose no delivery deadline before GST.

\paragraph{Cryptography.}
Beyond the ideal authenticated-channel abstraction, the protocol's sole
cryptographic primitive is a collision-resistant hash function
$\mathcal{H}\colon\{0,1\}^*\to\{0,1\}^{\lambda}$ with security parameter
$\lambda$.  For a post-quantum instantiation, collision resistance is required
against quantum polynomial-time adversaries.  The protocol uses no digital
signatures, public-key infrastructure, or threshold cryptography.  Relative to
the ideal channel abstraction, all guarantees hold except with negligible
probability in $\lambda$, with a hash collision as the only protocol-level
computational failure event.  A concrete MAC-based channel implementation
additionally relies on pairwise-MAC unforgeability; a post-quantum
implementation requires that property against quantum polynomial-time
adversaries as well.  We treat $\hash{x}$ as a succinct, collision-free
reference to $x$.

\paragraph{Instance and wire encoding.}
An execution fixes a session identifier $\mathsf{sid}$, which binds the
membership, fault threshold, and application epoch; a maximum data payload
length $\ell_{\max}$; and a deterministic, state-independent validity
predicate $\textsc{BlockOK}$ for data blocks (state-dependent transaction
checks are performed later by deterministic execution).  Every protocol
message carries $\mathsf{sid}$; a party rejects a message from another
session before storing or counting it.  All messages use one canonical,
prefix-free byte encoding $\mathsf{enc}$ with author-indexed vectors in
increasing author order, and every hash input begins with a distinct fixed
domain tag and $\mathsf{sid}$ (for example, \texttt{data-block} or
\texttt{view-proposal}), so objects of different types cannot share an
encoding.  A message with a noncanonical encoding, more entries than its
type's cap (\Cref{app:primitive}), or a data payload longer than
$\ell_{\max}$ is malformed and is never counted.

\paragraph{Quorums.}
A \emph{quorum} is any set of $\quorum{}$ parties. We use repeatedly that, for
$n\ge 3f{+}1$:
\begin{enumerate}[label=(\roman*),leftmargin=2.2em,topsep=2pt,itemsep=1pt]
  \item every quorum contains at least $n-2f\ge f{+}1$ correct parties, and the
        correct parties always form a quorum; and
  \item any two quorums $Q_1,Q_2$ satisfy
        $\lvert Q_1\cap Q_2\rvert \ge 2(n{-}f)-n = n-2f \ge f{+}1$, hence share at least one
        correct party.
\end{enumerate}

\section{Availability-Graded Broadcast}\label{sec:primitive}

\Cref{sec:overview} introduced the view proposal $B_v=(C_v,T_v)$ and its
graded responses (\Cref{fig:agb-decision}).  This section states the rules of
Direct AGB precisely and separates them from Vantage's proposer policy and
shortcuts.  For a correct proposer, each core entry carries $\quorum$
exact-position acknowledgments (R2 below); the tip manifest is not certified
before proposal.  Every manifest is sent by value in the proved protocol.  Full
lane syntax, publication, acknowledgment, repair, and the value-domain hooks
appear in \Cref{app:primitive}.

Direct AGB separates \emph{completion} from \emph{sealing}.  Completion fixes
one proposal and makes its core irrevocable.  A homogeneous grade-$1$ READY
quorum seals $\gfull(C_v,T_v)$, a homogeneous grade-$0$ quorum seals
$\gcore(C_v)$, and a residual mixed quorum leaves $\gopen(C_v,T_v)$ for the
Vantage composition.  Direct AGB never outputs $\gskip$ and does not promise
totality for a Byzantine proposer.  It is not Feldman--Micali
gradecast~\cite{feldmanmicali}: gradecast grades a receiver's confidence in
one value, whereas AGB fixes a common core and grades only the inclusion of
an optional suffix.

\subsection{Compact protocol}

All statements are broadcast over authenticated channels and counted only
when received directly from distinct authors.  Let $Q=n-f$.  Each party sends
one immutable echo-stage response and one initial ready-stage response per
view; an initial READY-mix may refine once to a homogeneous READY for the same
proposal.  The complete event and direct result follow these four rules.

\begin{enumerate}[label=\textbf{R\arabic*.},leftmargin=2.8em,topsep=3pt,itemsep=3pt]
  \item \textbf{Propose.}  The designated proposer sends one well-formed
        $(C,T)$.  A correct Vantage proposer forms $C$ from directly
        published lane frontiers with $Q$ exact-position ECHO
        acknowledgments and forms $T$ from directly published extensions of
        those frontiers.
  \item \textbf{Echo.}  A receiver fixes the first proposal received directly
        from the proposer.  After the proposal becomes active through formal
        entry or the responsive frontier, the receiver immediately ECHOs
        grade $1$ if core and tip pass their local checks
        (\Cref{fig:agb-decision}, left).  If it
        has not sent that ECHO by its bounded echo fallback deadline, it sends
        grade $0$ when the core passes, or \echoskip{} otherwise.
        If no active well-formed proposal exists by the absolute deadline
        $e_i(v)+3\Delta$, it sends \echoskip{}.  A proposal ECHO also carries
        availability claims outside its immutable identity.  An exact-position
        bit acknowledges the named direct prefix; after chain verification the
        claim also covers its ancestors.  A sparse integer $d>0$ acknowledges
        the directly published prefix ending $d$ heights behind, after the
        anchored ancestry is verified.  Repair can provide bytes but
        cannot create direct provenance or acknowledgments.
  \item \textbf{Ready.}  On $Q$ ECHOs for one proposal, a party broadcasts
        READY-$1$ or READY-$0$ if that
        grade already has $Q$ ECHOs, and READY-mix otherwise.  A provisional
        mix refines once if one grade reaches $Q$ before all ECHO responses or
        the deadline; otherwise the refinement window closes.  If the party
        has sent no READY by $e_i(v)+4\Delta$, it emits \noready{}.
  \item \textbf{Complete and direct-seal.}  Any $Q$ READYs naming one proposal,
        grades ignored and each author counted once, complete it.  A
        homogeneous $Q$ additionally emits the corresponding full or core
        direct result (\Cref{fig:agb-decision}, right).  If the completing state has neither homogeneous
        quorum, the Vantage caller quarantines its carried non-exact-quorum tip
        witnesses.  Late READYs continue to be counted after completion.
\end{enumerate}

Two further rules of a correct Vantage proposer are policy rather than part
of the primitive, and no receiver checks them.  If a direct extension has a
general but not exact acknowledgment quorum, $T$ names the least-height such
extension first, so that a stable exact-position census forms.
Completed-open quarantine drops an author whose carried tip never gathered
$Q$ exact acknowledgments from ordinary later tip manifests, never from the
core, and retries at most one quarantined author per proposal through $T$ on
an exponential proposer-turn schedule (\Cref{sec:discussion}).

The WISH pacemaker is the only formal entry mechanism.  A \wish{} message
carries a view high-watermark.  A party relays that high-watermark after
first-hand support from $f+1$ authors and, after support from $Q$ authors,
enters each missing view up to that high-watermark.  Immediately before
sending either its echo-stage response for view $v-1$ or its initial
ready-stage response for view $v-2$, a party checks whether that send completes
the pair --- its echo-stage response for view $v-1$ together with its initial
ready-stage response for view $v-2$ --- and if so raises its own \wish{}
high-watermark through $v+1$.  A
separate responsive frontier activates consecutive proposals as
soon as they arrive; it is scheduling state, never counted evidence.  After
GST, correct entries to a view lie within $2\delta$, within $\delta$ when all
parties are correct, and one view can delay advancement by at most
$4\Delta+\delta$.  The complete pacemaker and timer proof are in
\Cref{sec:pacemaker}.

Vantage adds one shortcut outside Direct AGB.  Before sending a grade-$1$
ECHO for a proposal $B=(C,T)$, a party installs an optimistic lock on that
full payload.  The lock releases permanently after $f+1$
first-hand nonmatching echo-stage responses.  Counting grade-$1$ ECHOs from all $n$
parties submits $\gfull(C,T)$ to Vantage's $\rstat{try-seal}$ arbiter without
creating completion, a direct result, or a resolution witness.  The ordinary
Direct AGB READY path continues in parallel (\Cref{fig:commit-chain});
\Cref{sec:decision-anatomy} measures which path finishes first in the local
$n=20$ study.

\subsection{Guarantees}

The \emph{pacemaker cutoff} below is the finite view after which the pacemaker
supplies the entry and proposal-timing premises of conditional termination.

\begin{theorem}[Direct AGB and fast-seal summary]\label{thm:main-agb}
Without timing assumptions, no two correct parties complete different
proposals in one view; every direct result is compatible with the unique
completion, and a completed core is valid, author-backed, and permanently
retrievable.  Every correct-proposer view past the pacemaker cutoff completes
at all correct parties when its persistent core checks hold.
In an all-correct responsive suffix, a prefix-extending proposal
fast-seals full within $2\delta$ of its send; the Direct AGB READY path seals
full within $3\delta$.
\end{theorem}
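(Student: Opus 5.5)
The theorem bundles four claims, and I would prove them separately in the order stated: uniqueness of completion, compatibility of direct results, core validity and retrievability, then liveness and latency. The safety claims use only quorum intersection (property (ii) of \Cref{sec:model}) and the rule that each correct party sends one echo-stage and one initial ready-stage response per view. No timing assumption enters until the liveness part.

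\textbf{Uniqueness of completion.} Suppose correct parties complete proposals $B\neq B'$ in view $v$. Each completion rests on $Q$ READYs naming its proposal. By R3, a correct party sends a READY for $B$ only after counting $Q$ first-hand ECHOs for $B$, and a refinement keeps the same proposal. So some correct party counted $Q$ ECHOs for $B$, and some correct party counted $Q$ ECHOs for $B'$. These two quorums share a correct party. By R2 that party fixes the first proposal received from the proposer and sends one immutable echo-stage response, so it cannot have echoed both. Hence $B=B'$.

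\textbf{Compatibility of direct results.} A direct result also rests on a quorum of READYs for the completed proposal, so it names the unique completion. I also need that grade-$1$ and grade-$0$ homogeneous READY quorums cannot coexist, and that a fast seal cannot coexist with a READY-$0$ quorum. A correct READY-$g$ (initial or refined) requires $Q$ ECHOs of grade $g$. Two such ECHO quorums of opposite grades intersect in a correct party, which sent only one immutable ECHO. All $n$ grade-$1$ ECHOs likewise exclude a grade-$0$ ECHO quorum, since any such quorum would contain a correct party that sent grade $1$.

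\textbf{Core validity and retrievability.} A correct party ECHOs at all only after the core passes its local checks. Each named prefix was received directly from its author, or acknowledged by $f{+}1$ parties, and chain-verified. A completing READY quorum therefore contains at least $f{+}1$ correct echoers whose checks passed, which gives validity and authorship. Retrievability comes from the $Q$ exact-position acknowledgments behind each core entry. These leave $n-2f\ge f{+}1$ correct holders under static corruption, and fair execution guarantees they serve repair requests. The obstacle is that a Byzantine proposer's core entries carry no proof of their $Q$ acknowledgments. I would lean on the receiver-side check: a prefix is accepted either by direct receipt or by $f{+}1$ first-hand acknowledgments, so either way at least one correct holder retains it. If the appendix's retrievability definition needs more, I would cite \Cref{app:primitive}.

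\textbf{Liveness and latency.} For a correct proposer past the pacemaker cutoff, I would invoke the pacemaker bound that correct entries lie within $2\delta$, and that the proposal arrives within $\delta$ of the sender's entry. Then every correct party holds an active proposal well before $e_i(v)+3\Delta$, and its persistent core checks pass. It therefore ECHOs grade $1$, or grade $0$ at the fallback, but never \echoskip{}. The $n-f$ correct ECHOs reach everyone, every correct party sends some READY before $e_i(v)+4\Delta$, and $Q$ READYs complete the view at all correct parties.

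For the latency bound in the all-correct responsive suffix, the proposal reaches all parties by $t+\delta$. Prefix-extension makes both core and tip locally verifiable, so all parties ECHO grade $1$ immediately, and all $n$ grade-$1$ ECHOs arrive by $t+2\delta$, fast-sealing full. Those same ECHOs trigger READY-$1$ by $t+2\delta$, so a READY-$1$ quorum arrives by $t+3\delta$.

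\textbf{Main obstacle.} I expect the hardest step to be the timing argument: showing that the $3\Delta$ and $4\Delta$ deadlines never fire before positive evidence arrives, given entry skew and in-flight messages at GST. I would take these timing premises from \Cref{lem:responsive-pacing} and \Cref{sec:pacemaker} rather than rederive them.
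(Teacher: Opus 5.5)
Your safety argument matches the paper's. A correct READY-$g$ sender counted a grade-$g$ ECHO quorum, and two such quorums meet in a correct one-shot echoer. A correct echoer's $\textsc{CoreOK}=\textsc{AuthorOK}$ check then yields the core witness via \Cref{lem:author-witness}. Your fallback argument, not the $Q$-acknowledgment one, is the one that covers a Byzantine proposer.

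The gap is in conditional termination. Let $s$ be the first correct entry. Your stated timing premise is that the proposal arrives within $\delta$ of the proposer's own entry. That gives only invocation by $s+2\delta$ and receipt by $s+3\delta$, and with $\theta_E=3\Delta$, $\theta_R=4\Delta$ the deadlines then do not close. A correct party that entered at $s+2\delta$ and receives the proposal at $s+3\delta$ has its echo window end at $s+3\delta+\Delta$, where it sends grade $0$ if the tip is unverifiable. That ECHO reaches a party that entered at $s$ only at $s+4\delta+\Delta$. This is after that party's \noready{} deadline $s+4\Delta$ whenever $\delta>3\Delta/4$.

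Concretely, take $\delta=\Delta$ and $n=3f+1$. Your premise admits the following execution:
\begin{itemize}
\item $f+1$ correct parties enter at $s$;
\item the other $f$ correct parties, proposer included, enter and propose at $s+2\Delta$;
\item Byzantine authors publish their tip blocks only to the proposer;
\item Byzantine ECHOs are withheld.
\end{itemize}
By $s+4\Delta$ each early party has counted only the $f+1$ early ECHOs and the proposer's, fewer than $\quorum$ once $f\ge 2$. So $f+1$ correct parties send \noready{}, and at most $2f$ proposal-READYs can ever exist. Your claims that the proposal is active ``well before'' $e_i(v)+3\Delta$ and that every correct party sends a READY before $e_i(v)+4\Delta$ therefore do not follow. \Cref{lem:responsive-pacing} cannot supply them, because it assumes every party is correct.

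The missing idea is R1's early WISH trigger: $\proposer(v)$ invokes \rstat{propose} as soon as its $\omega^+$ supports $v$, rather than at its own entry. The wish quorum behind the first correct entry contains $f+1$ correct wishes supporting $v$, sent by $s$, so the proposer invokes by $s+\delta$. Then:
\begin{itemize}
\item receipt is by $s+2\delta$;
\item every echo response is sent by $t+\Delta\le s+2\delta+\Delta\le s+3\Delta\le e_i(v)+\theta_E$;
\item all correct ECHOs arrive by $s+3\delta+\Delta\le s+4\Delta$, before every ready deadline.
\end{itemize}
This is exactly where the constants are tight at $\delta=\Delta$ (\Cref{thm:agb-live} and the paragraph after it).

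A smaller point in the latency part. Prefix-extension is not what makes core and tip verifiable. That holds because all authors are correct and published the named prefixes to everyone before the send. What prefix-extension does is \emph{activate} $B_v$: with every lower proposal sent by $\beta_v$, the responsive frontier reaches $v$ everywhere by $\beta_v+\delta$, so ECHOs fire without formal entry rather than sitting buffered. You also need $e_i(v)\ge\beta_v-\delta$, from the all-correct $\delta$ entry spread, to rule out a fallback window closing before $\beta_v+\delta$.
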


The safety proof is quorum intersection over first-hand, one-shot ECHOs and
READYs.  A READY quorum contains a correct sender backed by an ECHO quorum;
two such quorums intersect in a correct immutable response.  For the
Vantage-only shortcut, the all-$n$ grade-$1$ ECHO set fixes every correct
ECHO, while the optimistic locks exclude a conflicting resolution-bearing
proposal's ECHO quorum in either temporal order.  The full safety,
conditional-termination, and lock-compatibility
proofs are in \Cref{sec:agb-guarantees,lem:fast-seal}.

For a block aligned with the next proposal opportunity, tip admission costs
$\delta$ and the fast seal costs another $2\delta$, giving the $3\delta$ endpoint in
\Cref{tab:data-path-comparison}; the READY endpoint is $4\delta$.  A
core entry needs an anchoring proposal and its ECHO census before a
later proposal can carry that prefix in $C$.  Its aligned endpoint is therefore
$5\delta$ through fast seal, or $6\delta$ when that later proposal follows the
READY path.  With arbitrary arrival phase, the proved tip endpoints are
$4\delta$ in the worst phase and $3.5\delta$ in expectation under the
uniform-phase convention; the corresponding core bounds are $7\delta$ and
$6.5\delta$.  These latencies
do not bound mixed or Byzantine resolution.  Direct AGB uses
$O(n^2)$ logical messages per view --- at most $3n^2+O(n)$ with READY
refinements --- and the proved by-value protocol uses
$O(n^3\lambda)$ bits, excluding lane publication, repair, and the resolution
plane.

\section{Vantage Resolution and Output}\label{sec:protocol}

Direct AGB promises no common outcome for a faulty proposer
(\Cref{sec:primitive}): a completed view can remain open when its READY grades
do not become homogeneous, and a silent or partially disseminated view may not
complete at all.  This section makes the recovery mechanisms sketched in
\Cref{sec:overview} precise (\Cref{fig:resolution}).  The first-result-wins
rule of \Cref{sec:overview} is the \(\rstat{try-seal}\) arbiter, which
combines four compatible submissions: a Direct AGB full/core result, the all-\(n\) fast-seal full result, a
grounded skip result, or a target-local resolver decision.  The
arbiter stores the first submission and treats later compatible submissions as
idempotent.  It is Vantage composition state, not a second Direct AGB input.
Different unresolved views run independent resolver instances, so no global
fallback height or later AGB proposal serializes them.

\begin{figure}[t]
    \centering
\begin{tikzpicture}[
    >={Stealth[length=2mm]},
    x=1.0cm, y=1.1cm,
    box/.style={draw, rounded corners=1pt, fill=black!7,
                font=\scriptsize, align=center, inner sep=2.5pt},
    phase/.style={draw, circle, fill=white, minimum size=2.6mm, inner sep=0pt},
    backed/.style={draw, rounded corners=1pt, fill=black!20,
                   font=\scriptsize, align=center, inner sep=2.5pt},
    final/.style={draw, circle, fill=black!35, minimum size=3mm, inner sep=0pt},
    ann/.style={font=\scriptsize, text=black!65, align=center},
]
  \node[box] (agb)  at (0,0)    {AGB target $u$};
  \node[box] (open) at (2.35,0) {open or\\silent};
  \draw[->] (agb) -- (open);

  \node[box] (skip) at (5.4,1.3) {$Q$ grounded\\\skipvote{}s};
  \node[final] (skipdone) at (7.6,1.3) {};
  \node[ann, anchor=west] at (7.85,1.3) {$\gskip$};
  \draw[->] (open.north) |- node[ann, pos=0.75, above] {crash-only silent view} (skip.west);
  \draw[->] (skip) -- (skipdone);

  \node[box] (cand)    at (4.75,0) {candidates:\\full, core, skip};
  \node[box] (wish)    at (7.35,0) {WISH quorum:\\enter view $r$};
  \node[box] (propose) at (9.7,0)  {PROPOSE $R$};
  \draw[->] (open.east) -- node[ann, below] {otherwise} (cand.west);
  \draw[->] (cand) -- (wish);
  \draw[->] (wish) -- (propose);

  \node[phase] (echo) at (0.3,-1.9) {};
  \node[ann, below=1pt of echo] {ECHO};
  \node[backed] (witness) at (3.3,-1.9) {$Q$ witnesses:\\\backed{}};
  \draw[->] (propose.south) -- ++(0,-0.5) -| (echo.north);
  \draw[->] (echo) -- node[ann, above] {$Q$ consistent ECHOs} (witness);
  \node[phase] (k1)   at (5.3,-1.9) {}; \node[ann, below=1pt of k1]   {KEY1};
  \node[phase] (k2)   at (6.5,-1.9) {}; \node[ann, below=1pt of k2]   {KEY2};
  \node[phase] (k3)   at (7.7,-1.9) {}; \node[ann, below=1pt of k3]   {KEY3};
  \node[phase] (lock) at (8.9,-1.9) {}; \node[ann, below=1pt of lock] {LOCK};
  \node[final] (done) at (10.3,-1.9) {}; \node[ann, below=1pt of done] {$Q$ DONE};
  \draw[->] (witness) -- (k1);
  \draw[->] (k1) -- (k2);
  \draw[->] (k2) -- (k3);
  \draw[->] (k3) -- (lock);
  \draw[->] (lock) -- (done);
  \node[ann, anchor=west] at (10.6,-1.9) {$\rstat{seal}(u)\to X$};

  \draw[->, black!55] (-0.7,-3.05) -- (12.3,-3.05);
  \node[ann, anchor=west] at (-0.7,-3.33)
    {later AGB views and other target-local resolvers continue concurrently};
\end{tikzpicture}
    \caption{Resolution mechanisms for a silent or open target.  A clean
    crash-only view can skip after grounded post-ready votes.  Otherwise the
    target's own WISH/IT-HS instance turns a locally checked fresh value into a
    stable backed value before KEY1 and decides it after the remaining IT-HS
    phases.  Other target-local resolvers and later data-plane views continue in
    parallel, but ordered output cannot cross an unresolved cursor position.}
    \label{fig:resolution}
    \Description{A silent view can use a grounded skip-vote shortcut or an
    independent per-target WISH and IT-HS agreement instance.}
\end{figure}

The reusable resolver contract is per target: all correct calls choose one
exact outcome; a full, core, or skip call permanently excludes every future
incompatible READY quorum; a non-skip call supplies author-backed manifests;
and every correct party eventually obtains the chosen result.  Direct AGB
composed with any service satisfying these properties has one compatible
terminal outcome, and a completed view can never resolve as skip.  Vantage
realizes the contract with the guards and agreement instance below;
the formal resolver properties and generic composition theorem are in
\Cref{sec:resolver-contract}.

The local data-plane proposal horizon is the larger of the next responsive
proposal opportunity and the WISH high-watermark.  Once it reaches \(u+3\), a party scans its
first-hand target state.  A candidate needs \(Q=n-f\) ready-stage responses.
A full candidate additionally needs \(f+1\) matching grade-\(1\) ECHOs; a
core candidate needs \(Q\) initial ready-stage responses whose first counted
form is not grade \(1\) and
\(f+1\) matching ECHOs of either grade; skip needs \(Q\) \noready{}
responses.  These finite candidate sets only grow.  A nonempty set at an unsealed
party starts a typed per-target \rwish{} exchange: \(f+1\) high-watermarks amplify the own
WISH and \(Q\) enter the resolver view; its round-robin primary proposes on
\(Q\) accepted IT-HS key suggestions (\Cref{sec:commit-rule}).  An initial
WISH is retransmitted until entry.  Without local evidence, remote traffic from fewer than $f+1$ parties
cannot allocate a target instance.  A party that already holds a compatible terminal
seal nevertheless remains a participant: one peer WISH activates it and makes
it broadcast its own initial WISH.  It fresh-ECHOs only a resolver value whose
submitted outcome equals that seal.
This prevents a selectively
delivered local seal from removing a correct party needed by the others.

Every target has a persistent stance
\(z_i(u)\in\{\mathsf{free},\mathsf{non\mbox{-}skip},
\mathsf{skip\mbox{-}voted}\}\).  Before a fresh non-skip resolver ECHO, a
correct party checks its final target echo- and ready-stage history, required data, and
active optimistic lock, then atomically claims the non-skip stance.  Before
sending \skipvote{}, it instead claims skip-voted; the sends are mutually
exclusive in either order.  The fresh ECHO carries origin \(1\) for full only
after a matching grade-\(1\) target ECHO, and for core after a matching target
ECHO of either grade.  A fresh non-skip ECHO quorum is usable only with at
least \(f+1\) origin-\(1\) senders, preserving publication provenance.
Every resolver ECHO also binds the exact proposal key \(q\): \(q=0\) denotes
fresh validation and \(q>0\) denotes a \backed{} value carried from an earlier
resolver view.  ECHO quorums match \((u,r,q,x)\), so fresh and carried ECHOs
for the same value cannot be merged.

After a usable fresh ECHO quorum, a correct party broadcasts one first-hand
witness for that target, resolver view, and value.  It relays its own witness
after \(f+1\) matching witnesses; \(Q\) make the value \backed{}.  The
causally first correct witness must follow a genuine fresh ECHO quorum, while
the relay makes \backed{} eventually common.  A WISH sender also receives the
responder's current own WISH and, once per coordinate, its applicable retained
resolver sends through the requested view plus one, together with its DONE.
This send-upon-join rule closes races in which phase traffic arrived before
target activation.  Only a \backed{} value may enter KEY1
or be re-proposed after a resolver view change (\Cref{fig:resolution}).  Full
guards, the IT-HS
\textsc{AcceptKey}/\textsc{OpenLock} rules, and the proof are in
\Cref{app:protocol,app:commit-rule}; \Cref{alg:main-composition} lists the
rules in event form.

\begin{algorithm*}[t]
\SetAlgoSkip{}
\caption{Compact Vantage composition at party \(p_i\); \(Q=n-f\)}
\label{alg:main-composition}
\scriptsize\linespread{0.70}\selectfont
\Comment{\(v\): data-plane view; \(u\): unresolved target view; \(r\):
  target-local resolver view; \(\hash{R}\): digest of a resolver value
  \(R\); \(X\): a submitted outcome; \(\backed{}(u,R)\)
  abbreviates \(\backed{}_i(u,\hash{R})\).}
\Upon{genesis}{
  enter view \(1\) and broadcast \(\langle\wish,2\rangle\); every view then
  runs Direct AGB (R1--R4, \Cref{sec:primitive}) with the wrappers below
}
\Upon{about to send the echo-stage response for view \(v-1\) or the initial
  ready-stage response for view \(v-2\), whichever completes that pair}{
  raise the own \wish{} high-watermark through \(v+1\) and piggyback it
}
\Upon{\(f+1\) direct \wish{} high-watermarks support \(x\) above the own wish}{
  persist and broadcast the own \wish{} high-watermark through \(x\)
}
\Upon{\(Q\) direct \wish{} high-watermarks support \(x\)}{
  invoke \(\rstat{enter}(v)\) for every locally missing \(v\le x\), in order
}
\Upon{about to send a grade-\(1\) ECHO for proposal \(B\) in view \(v\)}{
  install the optimistic lock on \(B\); it releases once \(f+1\) counted
  echo-stage responses for \(v\) are not grade-\(1\) ECHOs for \(B\)
}
\Upon{Direct AGB emits \(\rstat{complete}(v)\to(C_v,T_v)\)}{
  store the irrevocable core; run \textsc{OutputCursor}
}
\Upon{Direct AGB emits \(\rstat{direct-seal}(v)\to X\), grade-\(1\) ECHOs for
  one proposal \(B\) are counted from all \(n\) parties, or \(Q\)
  \(\langle\skipvote,v\rangle\) statements are counted}{
  submit \(X\), \(\gfull(B)\), or \(\gskip\), respectively, to
  \(\rstat{try-seal}(v,\cdot)\)
}
\Upon{own \noready{} and \(Q\) direct \echoskip{} for target \(u\), with
  \(z_i(u)=\mathsf{free}\) and no non-skip seal}{
  atomically persist \(z_i(u)\gets\mathsf{skip\mbox{-}voted}\) and broadcast
  \(\langle\skipvote,u\rangle\)
}
\Upon{the proposal horizon reaches \(u+3\) and local AGB evidence justifies
  a first candidate for unresolved \(u\)}{
  allocate the target instance; broadcast \(\langle\rwish,u,1\rangle\) and
  retransmit it every \(5\Delta\) until entry
}
\Upon{resolver WISHes for an unallocated target \(u\) arrive from \(f+1\)
  parties, or one arrives while \(p_i\) holds a terminal seal for \(u\)}{
  allocate the instance; in the second case also broadcast
  \(\langle\rwish,u,1\rangle\)
    \Comment*[r]{the resolver rules below act only on allocated targets}
}
\Upon{\(f+1\) resolver WISHes support \(r\) above the own WISH}{
  raise and broadcast the own WISH through \(r\)
}
\Upon{a resolver WISH \((u,r)\) from \(p_j\)}{
  send \(p_j\) the current own WISH; replay once each applicable own
  SUGGEST, PROOF, PROPOSE, phase statement, and witness through \(r+1\), plus
  the own DONE, if any
}
\Upon{\(Q\) resolver WISHes support \(r\)}{
  enter each missing target-local view through \(r\); send the IT-HS
  suggestion/proof; arm the \(5\Delta\) no-proposal and \(11\Delta\)
  full-view timers
}
\Upon{the no-proposal timer expires without a primary proposal, or the
  full-view timer expires before a decision}{
  raise and broadcast the own WISH through \(r+1\)
}
\Comment{Resolver statements are counted through one view above the own
  WISH.  The ECHO send and the ECHO, KEY, and LOCK quorum rules act only
  while their view is current; a quorum already counted at entry fires on
  entry.  Witness, \backed{}, WISH, and DONE relay and decision are
  view-independent.}
\Upon{the primary has \(Q\) accepted suggestions}{
  re-propose the value of the maximum positive key, or advance its persistent
  cursor and propose the next locally justified fresh candidate
}
\Upon{the first proposal of the primary of \((u,r)\), for \(R\) with key
  \(q\)}{
  store it and stop the no-proposal timer; once \(r\) is current and the
  proposal is lock-compatible (rechecked as proofs, backing, target data, or
  the current view change): if fresh (\(q=0\)), require the
  target-history/data/stance checks and attach the origin bit; if carried
  (\(q>0\)), require \(\textsc{Backed}(u,R)\); broadcast the resolver ECHO
  for \((u,r,q,\hash{R})\)
}
\Upon{\(Q\) ECHOs matching \((u,r,0,\hash{R})\), and \(f+1\)
  origin-1 ECHOs if \(R\) is non-skip}{
  broadcast \(\langle\reswit,u,r,\hash{R},R\rangle\)
}
\Upon{\(f+1\) matching witnesses}{relay the own witness once}
\Upon{\(Q\) matching witnesses}{latch \(\textsc{Backed}(u,R)\)}
\Upon{\(Q\) ECHOs matching the stored proposal \((u,r,q,\hash{R})\), and
  \(\textsc{Backed}(u,R)\) if \(q=0\)}{
  update the KEY1 state and broadcast KEY1
}
\Upon{successive \(Q\) KEY1, KEY2, KEY3, and LOCK statements}{
  update the IT-HS keys/lock and emit the next phase, ending with DONE
}
\Upon{\(f+1\) matching DONEs}{relay the own DONE once}
\Upon{\(Q\) matching DONEs for \(R\)}{
  decide \(R\) for \(u\) and submit its outcome to \(\rstat{try-seal}(u,\cdot)\)
}
\Fn{\(\rstat{try-seal}(u,X)\)}{
  \If{\(u\) has no terminal outcome}{persist \(X\), authorize its named lane
  roots, fire \(\rstat{seal}(u)\to X\), and run \textsc{OutputCursor}}
}
\Proc{\textsc{OutputCursor}$()$}{
  in view order: stop at a view that is neither completed nor sealed; at a
  completed or non-skip-sealed view, retrieve its lane prefixes and emit the
  core once; stop while the tip is open; after a full, core, or skip seal
  append the tip, omit it, or emit nothing, then advance; rerun when named
  lane prefixes arrive
}
\end{algorithm*}

\subsection{Concurrent target-local resolver}\label{sec:commit-rule}

Resolver views are separate from data-plane views.  For target $u$, view~1's
primary is the committee member immediately after $\proposer(u)$; later views
continue round robin.  A party acts on a resolver view's proposal, ECHO, KEY,
and LOCK quorums only while that view is current (a quorum counted before
entry fires on entry); witness, \backed{}, WISH, and DONE relay and decision are
view-independent.  On entry, parties send the primary their IT-HS KEY3
suggestions and broadcast KEY1 lock-opening proofs.  A positive suggestion
attaches its bounded value and must satisfy \textsc{AcceptKey} and
\textsc{Backed}.  The primary re-proposes a maximum positive key; when all
accepted keys are zero, it advances a persistent target-local cursor and
proposes the next locally justified candidate.  A $5\Delta$ timer changes view
only if no proposal arrived, avoiding the full timeout for a silent primary;
after a proposal, the $11\Delta$ IT-HS view timer remains.

The \textsc{Backed} step makes external validity stable.  Checking only a
local $f+1$ subset would find a correct origin but would not make the value
stable: the Byzantine senders among those witnesses may never repeat them, so
a party that joins later could not reproduce the check.  In contrast, a correct
party that counts $Q$ witnesses counted at least $n-2f\ge f+1$ correct
senders.  Their broadcasts make every correct party relay and eventually
count $Q$.  A carried value can therefore be validated from
\textsc{Backed}, rather than by rerunning a target-local predicate that may
differ across parties.  The IT-HS key and lock argument then gives agreement
and view-change safety.  \Cref{app:commit-rule} makes that argument
self-contained: same-view uniqueness and persistent-lock safety prove
agreement, while persistent KEY2 and KEY1 histories prove positive-key
acceptance and lock opening.

Across $R$ entered resolver views for one target, the protocol and at-most-once
catch-up use $O(Rn^2)$ distinct logical messages and $O(Rn^3\lambda)$ bits in
the proved by-value encoding; retransmitting the same initial WISH before entry
adds one all-to-all send per timer expiration.  The costs of $k$ overlapping
targets add.  There is no global resolution queue, but
we prove neither a bound on simultaneously active targets nor a constant
general target latency; \Cref{sec:resolver-evaluation} measures the resolver
under a sustained mixed-open attack, and \Cref{sec:discussion} states what that
leaves open.

\begin{theorem}[Target-local resolver summary]\label{thm:main-direct-resolver}
For each target, no two correct parties decide different resolver outcomes.
If one correct party decides $R$, every correct party eventually either decides
$R$ or already holds its compatible terminal seal.  Under partial
synchrony, every unresolved target eventually obtains one common valid full,
core, or skip outcome at all correct parties.  Every non-skip decision is author-backed and
future-compatible with Direct AGB.
\end{theorem}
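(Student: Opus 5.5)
The plan is to prove the four clauses of \Cref{thm:main-direct-resolver} in order: agreement, totality of a decision, termination after GST, and validity/compatibility. Each rests on the IT-HS skeleton of \cite{iths}. The one new ingredient I add is that the external-validity predicate \textsc{Backed} is stable across parties and over time.

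\textbf{Step 1: stability of \backed{}.} Suppose a correct party latches $\textsc{Backed}(u,R)$. It has counted $Q$ witnesses, so at least $n-2f\ge f{+}1$ of them are correct. Every correct party therefore eventually counts $f{+}1$, relays its own witness, and so eventually counts $Q$. Two facts are needed about the causally first correct witness. First, it followed a genuine fresh ECHO quorum $(u,r,0,\hash{R})$. Second, that quorum contains $f{+}1$ origin-$1$ senders when $R$ is non-skip.

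\textbf{Step 2: agreement.} With Step 1, carried values are validated by a predicate every correct party eventually agrees on. I then replay the IT-HS argument verbatim. Same-view uniqueness comes from quorum intersection on one-shot, key-tagged ECHOs $(u,r,q,x)$: two KEY1 quorums for different values in one resolver view would share a correct sender of two ECHOs. Cross-view safety comes from the KEY1/KEY2/KEY3/LOCK persistence. If $Q$ DONEs for $R$ exist in view $r$, then at least $f{+}1$ correct parties hold LOCK and KEY3 on $R$. Any later primary that collects $Q$ suggestions then sees a maximal positive key for $R$. A conflicting value cannot open the lock, because \textsc{OpenLock} needs a KEY1 proof of higher view that could only name $R$.

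\textbf{Step 3: totality.} A decision requires $Q$ DONEs, so at least $f{+}1$ of them come from correct parties. Their DONEs are delivered everywhere and trigger the $f{+}1$ relay rule, so every correct party reaches $Q$ and decides $R$. The exception is a party that already holds a terminal seal. It still participates by the send-upon-join rule, and its fresh ECHO matches only its seal, which is compatible by Step 4.

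\textbf{Step 4: validity and compatibility.} A non-skip decision was \backed{}, so by Step 1 it had $f{+}1$ origin-$1$ fresh ECHOs. Some correct party among them had a matching target ECHO with direct provenance. This gives author-backed manifests via the retrievability clause of \Cref{thm:main-agb}. For future compatibility, let $Q_R$ be any READY quorum that could direct-seal $u$ with grade $g$, and let $Q_E$ be the fresh ECHO quorum behind the decided value. The two share a correct party. That party's history forbids endorsing the incompatible candidate: a grade-$1$ READY excludes core, a grade-$0$ READY excludes full, and any READY excludes skip. For the fast seal, the optimistic lock argument of \Cref{lem:fast-seal} applies. Skip decisions are separated from completions because the skip stance and the non-skip stance are mutually exclusive and are claimed atomically.

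\textbf{Step 5: termination, the main obstacle.} After GST, with proposal horizon past $u{+}3$, I must show that some correct party's candidate set becomes nonempty and that $f{+}1$ correct parties allocate the target. If one party allocates, its retransmitted \rwish{} eventually reaches $f{+}1$ parties. I would argue this from the AGB ready-stage deadlines: every correct party sends a ready-stage response or \noready{}, so $Q$ ready-stage responses exist. A case split on their grades then yields a full, core, or skip candidate at some correct party. The delicate part is showing that a justified candidate also passes the fresh ECHO checks of $Q$ correct parties under a correct primary. For this I would use the $5\Delta$ and $11\Delta$ timers to reach a view with a correct primary and synchronized entry. I would also use the persistent cursor, which cycles through all locally justified candidates, to ensure that one of them is endorsable by every correct party given their recorded stances and locks. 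The hard subcase is mixed histories with released or unreleased optimistic locks. There I would argue by pigeonhole that the candidate matching the majority grade among correct ECHOs is never vetoed.
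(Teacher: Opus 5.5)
\paragraph{Verdict.} Your Steps~1--4 follow the paper's route: stable backing, same-view uniqueness with persistent-lock safety, the $f{+}1$/$Q$ DONE relay, and the fresh-ECHO/READY intersection. Two small points are missing there:
\begin{itemize}
\item Compatibility is \emph{future}-closed only because F1 requires a fresh ECHOer's own initial READY state to be closed already.
\item A witness or DONE that reaches a party before it has allocated the target is discarded. Steps~1 and~3 therefore also need that party to be activated and to collect replayed traffic through its own \rwish{}.
\end{itemize}
The genuine gap is Step~5, i.e.\ clause D4, which you leave as a plan. Its activation step is also wrong as written. A party with no local candidate and no terminal seal allocates the target only after \rwish{} messages from $f{+}1$ \emph{distinct} senders. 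So one party's retransmitted \rwish{} reaching $f{+}1$ parties allocates nobody. Without Byzantine help, the entry quorum has to come from correct parties that derive a candidate themselves.

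\paragraph{The missing candidate-convergence fact.} You need one fixed value $R^*$ with three properties:
\begin{itemize}
\item it eventually lies in \emph{every} unresolved correct party's candidate set;
\item if non-skip, it has $f{+}1$ correct origin-$1$ supporters;
\item it passes F1--F5 at every correct party.
\end{itemize}
The paper obtains it as follows. Let $b\le f$ be the actual number of Byzantine parties and $G_P$ the correct grade-$1$ ECHOers of payload $P=(C,T)$. Call $P$ large if $|G_P|\ge n-f-b$; at most one payload is large, since $2(n-f-b)>n-b$.
\begin{itemize}
\item If $P$ is large, $R^*=R_u^{\mathrm{full}}(C,T)$. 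A correct grade-$0$ or different-payload READY, or a correct \skipvote{}, would need $n-f-b$ correct supporters disjoint from $G_P$, which is impossible. Every lock on another payload releases.
\item Otherwise no correct grade-$1$ READY exists and every lock releases. Then $R^*$ is the core value of the unique READY payload if some correct proposal-READY exists; \Cref{lem:skip-seal}(i) rules out correct skip votes in that case.
\item If instead every correct ready-stage response is \noready{}, $R^*$ is skip.
\end{itemize}

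\paragraph{Why a majority of ECHO grades fails.} The core/skip choice turns on READY-stage history, not on a majority of ECHO grades. Take $n=4$, $f=1$. The Byzantine proposer reaches only one correct party, which ECHOs with grade~$0$, and sends \echoskip{} to the other two. Then:
\begin{itemize}
\item every correct party sends \noready{}, since no proposal-READY quorum can form;
\item the two \echoskip{} parties skip-vote;
\item their skip-voted stance makes F5 reject the core value that a majority-of-grades rule points to;
\item that value can therefore never gather a fresh ECHO quorum, while skip is accepted everywhere.
\end{itemize}

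\paragraph{Closing D4 once $R^*$ exists.} The $Q$ correct parties supply the view-$1$ \rwish{} quorum, with sealed parties joining on a peer \rwish{}. An unresolved correct primary's persistent cursor eventually proposes $R^*$. If a fresh value reaches KEY3 first, you are in the positive-key case. That case needs the liveness half of IT-HS, which your Step~2 treats only from the safety side: correct suggestions stay acceptable, and \textsc{OpenLock} is eventually satisfied against every conflicting correct lock (\Cref{lem:resolver-key-support,lem:resolver-open-support}). Only then does \Cref{lem:resolver-timing} yield a decision.
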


\subsection{End-to-end guarantees}\label{sec:correctness}

\begin{theorem}[Vantage correctness]\label{thm:main-correctness}
Safety holds without timing assumptions: every correct party seals each view
at most once, no two correct parties seal different outcomes, and their output
sequences are always prefix-comparable.  Every output block is valid,
published with its lane prefix by its encoded author to a correct party,
permanently retrievable, and obtained before local output.  Under partial
synchrony, the fixed static fault set, fair repair, and the non-Zeno execution
assumption, every view is eventually sealed, every finite common output prefix
is eventually produced everywhere, and every valid block of a correct author
is eventually included.
\end{theorem}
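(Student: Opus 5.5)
The plan is to assemble the end-to-end theorem from \Cref{thm:main-agb}, \Cref{thm:main-direct-resolver}, and the resolver-contract composition of \Cref{sec:resolver-contract}. The proof splits into three parts: per-view agreement on a single terminal outcome, output consistency with provenance, and liveness.

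\emph{Per-view safety.} At-most-once sealing is immediate from \(\rstat{try-seal}\): it persists only the first submission for a view. For cross-party agreement, I would take the four submission sources in turn (direct seal, fast seal, skip seal, resolver decision) and show that every pair is compatible.
\begin{enumerate}
\item Two direct seals agree because \Cref{thm:main-agb} gives a unique completion. A full and a core direct seal cannot coexist, since two homogeneous READY quorums of opposite grade would share a correct one-shot READY.
\item A fast seal against a direct core seal: the all-\(n\) grade-\(1\) ECHO set fixes every correct ECHO as grade \(1\). A grade-\(0\) READY quorum would then need \(Q\) grade-\(0\) ECHOs, which contain a correct ECHO, a contradiction.
\item A fast seal against a conflicting resolver decision: every correct grade-\(1\) ECHO sender keeps its optimistic lock. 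The lock can only be released by \(f{+}1\) non-matching echo-stage responses, which would include a correct one, and no such correct response exists. The locked correct parties therefore refuse the conflicting fresh ECHO quorum.
\item A skip seal against any non-skip outcome: the persistent stance \(z_i(u)\) makes skip votes and non-skip resolver ECHOs mutually exclusive at each correct party. Quorum intersection then excludes both. A skip seal against a completion is excluded because a grounded skip vote needs the voter's own \noready{}, and \(Q\) \noready{}s intersect any READY quorum in a correct party.
\item Resolver decisions against each other, and against direct results, follow from \Cref{thm:main-direct-resolver}, which gives agreement and future-compatibility with Direct AGB.
\end{enumerate}
Together these give one common outcome per view.

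\emph{Output.} Prefix-comparability follows by induction on the cursor position, because \textsc{OutputCursor} is a deterministic function of the sequence of per-view results. Each view contributes its irrevocable core (common by completion uniqueness) and then a tip decision (common by per-view agreement). One subtlety needs care: a core is emitted while its tip is still open. I would use the fact that the core is fixed at completion and never changes afterward, and that the skip outcome is incompatible with completion, so the emission is never retracted. The block properties come from the sources of the manifests:
\begin{itemize}
\item Core blocks are valid, author-backed, and retrievable by \Cref{thm:main-agb}; the \(n-f\) acknowledgments leave \(f{+}1\) correct holders.
\item Tip blocks in a full outcome carry \(f{+}1\) matching grade-\(1\) ECHOs, or \(f{+}1\) origin-\(1\) resolver ECHOs. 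Under the static fault set this yields a correct direct holder, which gives provenance, and fair repair gives retrieval before output.
\end{itemize}

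\emph{Liveness.} Past the pacemaker cutoff, a correct-proposer view completes by \Cref{thm:main-agb}. Every other view (open, silent, or not completed) either skip-seals or becomes a target once the horizon reaches \(u+3\). I must show that the horizon keeps advancing, using the WISH pacemaker bound on per-view delay from \Cref{sec:pacemaker} together with the non-Zeno assumption. I must also show that some correct party holds a nonempty candidate set, so that the resolver is allocated; \Cref{thm:main-direct-resolver} then seals the target everywhere. I expect this candidate-existence step to be the main obstacle. My approach is a case split on the ready-stage responses at correct parties:
\begin{itemize}
\item If \(Q\) of them are \noready{}, the skip candidate exists.
\item Otherwise at least one correct READY exists. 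READYs from correct parties are eventually delivered, so every correct party eventually counts \(Q\) ready-stage responses. From these it must extract a full candidate (\(f{+}1\) grade-\(1\) ECHOs) or a core candidate, and I would argue that the ECHO quorum behind any correct READY supplies one of the two.
\end{itemize}
Inclusion of correct-author blocks then comes from two facts: after GST correct proposers reappear round-robin, and the quarantine and retry policy only removes faulty tips.
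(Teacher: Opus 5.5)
Your per-view safety and output-consistency arguments follow the paper's route: pairwise compatibility at the \rstat{try-seal} arbiter, using Direct AGB agreement, the optimistic lock, the persistent stance and the resolver contract. The gap is in the last claim, that every valid block $b$ of a correct author is eventually output.

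\textbf{The quarantine premise is false.} You assume quarantine ``only removes faulty tips.'' In fact, when a view completes with mixed grades, each correct party quarantines every non-exact-$\quorum$ tip in it. That tip can be a correct author's genuine but not yet confirmed block. It can also be a fabricated reference that a Byzantine proposer attributed to a correct author. A fabricated reference never becomes exact-$\quorum$-available, so that correct author stays quarantined and is re-probed only after $1,2,4,\ldots$ own proposer turns. What the argument needs is that this probe schedule is infinite, that earliest-due selection is fair, and that eligibility is tested on the proposer's real directly published lane.

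\textbf{Correct-proposer turns alone do not output $b$.} Even infinitely many correct-proposer turns naming $b$ in $T$ are not enough. A Byzantine author can publish a fresh tip only to that proposer and so spoil the tip manifest at every turn. Each such view can then seal $\gcore$ homogeneously and drop $b$, and a homogeneous seal creates no quarantine, so the spoiling can repeat indefinitely.

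\textbf{The missing idea is that inclusion must go through the core.}
\begin{itemize}
\item Naming $b$'s coordinate in a tip makes correct ECHOs acknowledge it.
\item A correct proposer may move that coordinate into $C$ only once it is exact-$\quorum$-available.
\item \Cref{thm:agb-live} completes a post-cutoff correct-proposer view regardless of its tip grades. That core is therefore irrevocable and is output once the view seals.
\end{itemize}

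\textbf{You also do not address continuous publication.} The newest tip keeps moving, and integer-distance claims give only general $\quorum$-availability. So no single coordinate need ever collect $\quorum$ exact-position bits. The paper handles this with the stable confirmation target: the least-height prefix that is generally but not exactly $\quorum$-available, which correct proposers keep naming until exact-position ECHOs confirm it. It then inducts over the finitely many coordinates up to $k_b$.

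\textbf{A smaller point on sealing liveness.} The candidate-existence step you flag is already part of the resolver's totality (D4). What that proof needs is stronger than existence at one correct party: it needs one value that every correct party derives and accepts under F1--F5.
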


The proof separates compatibility across sealing mechanisms from totality.  Quorum
intersection makes direct results agree; optimistic locks exclude conflicts
between fast-seal and resolver submissions; persistent stances exclude a skip-vote quorum
from any fresh ECHO quorum for a non-skip resolver value.  Stable backing and
IT-HS make each target's resolver choice identical at all correct parties;
targets need no relative order because the prefix cursor already processes
data-plane views in order.
For a clean crash-only silent view whose first correct entry is $s\ge GST$ and
whose correct entries finish by $s+2\delta$, all correct parties skip-seal by
$s+4\Delta+3\delta$; simultaneous entry gives $4\Delta+\delta$.  This is a
per-view bound, not a bound for a burst of failures.
\Cref{sec:resolver-evaluation} measures both a transient crash and the
mixed-open case.  Full statements and proofs appear in \Cref{app:correctness}.

\section{Implementation and Evaluation}\label{sec:evaluation}

\subsection{Implementation and methodology}\label{sec:implementation}\label{sec:methodology}

We implement Vantage in Rust.  The implementation is public at
\url{https://github.com/polinikita/vantage-bft}, and the benchmark harness
with its configurations and recorded campaigns at
\url{https://github.com/polinikita/wan-bench}.
The binary is derived from the Autobahn artifact~\cite{autobahn}.  The two
Autobahn variants (optimistic all-to-all and seamless) extend that
artifact's own protocol implementation, and Vantage and the two Simple-IT
variants (Bracha and optimistic reliable broadcast) are added as protocol
modules, so all five share one Transmission Control Protocol (TCP)
transport, batching, worker, storage, client, and metrics stack.
A deterministic task handles the protocol while networking and storage run on
separate tasks.  Bluestreak and Sailfish++ run from the Starfish
artifact~\cite{starfish} at \url{https://github.com/iotaledger/starfish}.  All campaigns use canonical one-byte committee
identifiers (compact IDs).  A matched check at $n=30$ finds that the inactive
target-local resolver leaves throughput, latency, and CPU unchanged
(\Cref{app:resolver-load}).

The shared transport realizes the model's authenticated channels with a
pairwise keyed-hash tag on each frame.  Tags are verified on arrival and never
become protocol evidence.  The current implementation also checks that counted
sender and proposer fields match the authenticated connection identity.  Every
evaluated workload emits identity-consistent fields, so this validation does
not change the protocol path measured by the retained campaigns.  The results
include channel-authentication cost; \Cref{sec:discussion} reports a separate
cost check.

Unless stated otherwise, cloud runs place one \texttt{c5d.2xlarge} validator in
the Amazon Web Services (AWS) \texttt{eu-west-1a} availability zone and inject
the ten-region round-trip time (RTT) matrix of \Cref{tab:rtt-matrix} over
private addresses.  The netem rules add delay only and cap no bandwidth, so
the fleet keeps its single-zone datacenter bandwidth: the throughput results
combine WAN latency with datacenter bandwidth.  All shared-binary variants
use the same fleet.  Sailfish++
also runs there; Bluestreak uses \texttt{m5d.2xlarge} because it exhausts memory
on the shared instance type before reaching its highest accepted load
(\Cref{app:baseline-fleet}).  Q4 and Q5 are controlled native-Linux studies.

Transactions are random 512-byte values submitted open loop.  Each cloud run
warms up for 30 seconds and measures for 120 seconds.  A transaction is
\emph{materialized} at a validator when its block is sequenced in that
validator's output and the block's bytes are held locally.  We report the
median validator's submission-to-materialization latency (end-to-end
latency), central processing unit (CPU) use, and outbound framed bytes per
sequenced byte.  Cloud cells are
medians across the available repetitions of up to three fresh-fleet campaigns.
When measured throughput is within roughly one percent of offered load, we say
that the protocol sustains the load: small over- and undershoots reflect work
crossing the measurement-window boundary.  The benchmarking harness, exact
hardware, repetition counts, and metric definitions appear in
\Cref{app:evaluation}; the harness repository contains the corresponding
configurations and records.

The implementation uses digest naming, positional acknowledgment bitmaps, and
transport batching.  These encoding choices do not change the protocol rules.
The common encoding uses $O(n^2\lambda+n^3)$ control bits; sparse claims and
by-value recovery remain within the proved $O(n^3\lambda)$ worst case
(\Cref{sec:optimizations}).

Q1 (\Cref{sec:committee-scaling}) and Q2 (\Cref{sec:throughput-sweep}) test
the end-to-end claim of \Cref{sec:introduction} --- lowest p50 latency at
every accepted load --- across committee sizes at low load and under
increasing load at $n=100$; in both, no view uses a fallback decision.  Q3
(\Cref{sec:leader-relay}) measures the leader-relay cost that
\Cref{sec:closest-comparisons} attributes to admitting data a leader must
serve.  Q4 (\Cref{sec:decision-anatomy}) follows one fault-free view, checking
the stage order that the $3\delta$/$4\delta$ endpoints of
\Cref{tab:data-path-comparison} assume and measuring the fast/READY split.  Q5
(\Cref{sec:resolver-evaluation}) asks whether the recovery path of
\Cref{sec:protocol} --- the grounded skip shortcut for crashed proposers and
the target-local resolver for mixed-open views --- keeps pace with crash and
mixed-open faults.

\subsection{Q1: Fault-free committee scaling}\label{sec:committee-scaling}

We run seven protocol variants at $n\in\{10,20,50,100\}$ under a fixed aggregate
100 tx/s.  All variants sustain approximately the offered load.

\begin{figure}[t]
  \centering
  \begin{tikzpicture}
\begin{groupplot}[
  group style={group size=1 by 3, vertical sep=0.40cm},
  width=0.82\textwidth,
  height=0.23\textwidth,
  xmode=log,
  log basis x=10,
  xmin=8.5,
  xmax=118,
  xtick={10,20,50,100},
  xticklabels={10,20,50,100},
  grid=major,
  grid style={black!10},
  tick label style={font=\scriptsize},
  label style={font=\small},
  title style={font=\small},
  legend columns=4,
  legend style={font=\scriptsize,draw=none,fill=none,/tikz/every even column/.append style={column sep=0.45cm}},
  every axis plot/.append style={line width=0.95pt,mark size=1.7pt},
]
\nextgroupplot[
  legend to name=committee-scaling-legend,
  ylabel={Latency (ms)},
  ymin=400,
  ymax=850,
  xticklabels={},
]
  \addplot[blue!75!black,solid,mark=*] table[x=n,y=vlat] {figures/committee-scaling.dat};
  \addlegendentry{Vantage}
  \addplot[orange!90!black,dashed,mark=square*] table[x=n,y=aolat] {figures/committee-scaling.dat};
  \addlegendentry{Autobahn all-to-all}
  \addplot[red!75!black,dashdotted,mark=triangle*] table[x=n,y=aslat] {figures/committee-scaling.dat};
  \addlegendentry{Autobahn seamless}
  \addplot[teal!70!black,densely dotted,mark=diamond*] table[x=n,y=solat] {figures/committee-scaling.dat};
  \addlegendentry{Simple-IT Opt-RBC}
  \addplot[purple!75!black,dashed,mark=pentagon*] table[x=n,y=sblat] {figures/committee-scaling.dat};
  \addlegendentry{Simple-IT Bracha}
  \addplot[black,solid,mark=x] table[x=n,y=bllat] {figures/committee-scaling.dat};
  \addlegendentry{Bluestreak}
  \addplot[gray!70!black,dashdotted,mark=+] table[x=n,y=sflat] {figures/committee-scaling.dat};
  \addlegendentry{Sailfish++}

\nextgroupplot[
  ylabel={CPU (cores)},
  ymode=log,
  log basis y=10,
  ymin=0.02,
  ymax=3.2,
  xticklabels={},
]
  \addplot[blue!75!black,solid,mark=*] table[x=n,y=vcpu] {figures/committee-scaling.dat};
  \addplot[orange!90!black,dashed,mark=square*] table[x=n,y=aocpu] {figures/committee-scaling.dat};
  \addplot[red!75!black,dashdotted,mark=triangle*] table[x=n,y=ascpu] {figures/committee-scaling.dat};
  \addplot[teal!70!black,densely dotted,mark=diamond*] table[x=n,y=socpu] {figures/committee-scaling.dat};
  \addplot[purple!75!black,dashed,mark=pentagon*] table[x=n,y=sbcpu] {figures/committee-scaling.dat};
  \addplot[black,solid,mark=x] table[x=n,y=blcpu] {figures/committee-scaling.dat};
  \addplot[gray!70!black,dashdotted,mark=+] table[x=n,y=sfcpu] {figures/committee-scaling.dat};

\nextgroupplot[
  ylabel={Wire (bytes/byte)},
  xlabel={Committee size},
  ymode=log,
  log basis y=10,
  ymin=1.4,
  ymax=450,
]
  \addplot[blue!75!black,solid,mark=*,unbounded coords=discard]
    table[x=n,y=vwire] {figures/committee-scaling.dat};
  \addplot[orange!90!black,dashed,mark=square*] table[x=n,y=aowire] {figures/committee-scaling.dat};
  \addplot[red!75!black,dashdotted,mark=triangle*] table[x=n,y=aswire] {figures/committee-scaling.dat};
  \addplot[teal!70!black,densely dotted,mark=diamond*] table[x=n,y=sowire] {figures/committee-scaling.dat};
  \addplot[purple!75!black,dashed,mark=pentagon*] table[x=n,y=sbwire] {figures/committee-scaling.dat};
  \addplot[black,solid,mark=x] table[x=n,y=blwire] {figures/committee-scaling.dat};
  \addplot[gray!70!black,dashdotted,mark=+] table[x=n,y=sfwire] {figures/committee-scaling.dat};
\end{groupplot}
\node at ([yshift=0.95cm]group c1r1.north) {\ref{committee-scaling-legend}};
\end{tikzpicture}
  \caption{Fault-free committee scaling at 100 transactions per second.
  Most cells are medians across three fresh-fleet repetitions; a few baseline
  cells use fewer.  The $n{=}100$ cell joins from the throughput
  campaign's 100~tx/s point, and instance types are listed
  in \Cref{app:baseline-fleet}.  Wire cost is bytes sent per sequenced
  byte.  The committee axis and the processor and wire panels are
  logarithmic.}
  \Description{Three plots compare latency, processor use, and wire efficiency
  as the committee grows from 10 to 100.  Vantage has the lowest median
  latency and Bluestreak the lowest processor and wire cost.}
  \label{fig:committee-scaling}
\end{figure}

Vantage has the lowest p50 and p99 latency at every committee size
(\Cref{fig:committee-scaling,tab:latency-percentiles}); its p50 stays near
0.45\,s as the committee
grows from 10 to 100.  At $n=100$ it is about 30\,ms faster than Bluestreak at
both p50 and p99, while using roughly three times its CPU and 1.4 times its wire
cost.  Against optimistic all-to-all Autobahn, Vantage is about 0.1\,s faster,
uses about half the CPU, and sends roughly one fourteenth as many bytes.  The
wire difference comes mainly from the per-lane availability proofs in each
Autobahn consensus cut.

Progress cadence helps explain the latency ordering.  At $n=100$, Vantage
advances through about 14 proposal views per second, compared with about 10 DAG
rounds for Bluestreak, about four for Sailfish++, and about eight consensus
slots for each Autobahn variant (\Cref{app:cadence}).  These units are not equivalent work;
they identify only the next opportunity for a fresh data reference to enter
consensus.  More frequent opportunities shorten that initial wait, while
admission rules and the later commit path account for the remaining difference.
The pipelined cadence is the run-time form of the one-delay proposal spacing in
\Cref{fig:commit-chain}.
Summed over validators and repetitions, the all-$n$ fast seal supplies
67--70\% of Vantage's local seals at $n\in\{10,20,50\}$ and 72--73\% at
$n=100$; the rest are Direct AGB READY seals (\Cref{sec:primitive}).

\subsection{Q2: Throughput scaling at \texorpdfstring{$n=100$}{n=100}}
\label{sec:throughput-sweep}

\begin{figure}[t]
  \centering
  \begin{tikzpicture}
\begin{groupplot}[
  group style={group size=1 by 3, vertical sep=0.38cm},
  width=0.82\textwidth,
  height=0.20\textwidth,
  xmin=-5,
  xmax=285,
  xtick={0,50,100,150,200,250},
  grid=major,
  grid style={black!10},
  tick label style={font=\scriptsize},
  label style={font=\small},
  legend columns=4,
  legend style={font=\scriptsize,draw=none,fill=none,
    /tikz/every even column/.append style={column sep=0.35cm}},
  every axis plot/.append style={line width=0.95pt,mark size=1.7pt},
  unbounded coords=jump,
]
\nextgroupplot[
  legend to name=throughput-sweep-legend,
  ylabel={CPU (cores)},
  ymin=0,
  ymax=5.2,
  ytick={0,1,2,3,4,5},
  xticklabels={},
]
  \addplot[blue!75!black,solid,mark=*] table[x=vtps,y=vcpu] {figures/throughput-sweep.dat};
  \addlegendentry{Vantage}
  \addplot[orange!90!black,dashed,mark=square*] table[x=aotps,y=aocpu] {figures/throughput-sweep.dat};
  \addlegendentry{Autobahn all-to-all}
  \addplot[red!75!black,dashdotted,mark=triangle*] table[x=astps,y=ascpu] {figures/throughput-sweep.dat};
  \addlegendentry{Autobahn seamless}
  \addplot[teal!70!black,densely dotted,mark=diamond*] table[x=sotps,y=socpu] {figures/throughput-sweep.dat};
  \addlegendentry{Simple-IT Opt-RBC}
  \addplot[purple!75!black,dashed,mark=pentagon*] table[x=sbtps,y=sbcpu] {figures/throughput-sweep.dat};
  \addlegendentry{Simple-IT Bracha}
  \addplot[black,solid,mark=x] table[x=bltps,y=blcpu] {figures/throughput-sweep.dat};
  \addlegendentry{Bluestreak}
  \addplot[gray!70!black,dashdotted,mark=+] table[x=sftps,y=sfcpu] {figures/throughput-sweep.dat};
  \addlegendentry{Sailfish++}

\nextgroupplot[
  ylabel={Wire (bytes/byte)},
  ymode=log,
  log basis y=10,
  ymin=0.9,
  ymax=450,
  ytick={1,3,10,30,100,300},
  yticklabels={1,3,10,30,100,300},
  xticklabels={},
]
  \addplot[blue!75!black,solid,mark=*] table[x=vtps,y=veff] {figures/throughput-sweep.dat};
  \addplot[orange!90!black,dashed,mark=square*] table[x=aotps,y=aoeff] {figures/throughput-sweep.dat};
  \addplot[red!75!black,dashdotted,mark=triangle*] table[x=astps,y=aseff] {figures/throughput-sweep.dat};
  \addplot[teal!70!black,densely dotted,mark=diamond*] table[x=sotps,y=soeff] {figures/throughput-sweep.dat};
  \addplot[purple!75!black,dashed,mark=pentagon*] table[x=sbtps,y=sbeff] {figures/throughput-sweep.dat};
  \addplot[black,solid,mark=x] table[x=bltps,y=bleff] {figures/throughput-sweep.dat};
  \addplot[gray!70!black,dashdotted,mark=+] table[x=sftps,y=sfeff] {figures/throughput-sweep.dat};

\nextgroupplot[
  height=0.27\textwidth,
  ylabel={\shortstack{End-to-end\\latency (ms)}},
  xlabel={Sequenced throughput (k tx/s)},
  ymin=400,
  ymax=1500,
  ytick={400,600,800,1000,1200,1400},
]
  \addplot[blue!75!black,solid,mark=*] table[x=vtps,y=vlat] {figures/throughput-sweep.dat};
  \addplot[orange!90!black,dashed,mark=square*] table[x=aotps,y=aolat] {figures/throughput-sweep.dat};
  \addplot[red!75!black,dashdotted,mark=triangle*] table[x=astps,y=aslat] {figures/throughput-sweep.dat};
  \addplot[teal!70!black,densely dotted,mark=diamond*] table[x=sotps,y=solat] {figures/throughput-sweep.dat};
  \addplot[purple!75!black,dashed,mark=pentagon*] table[x=sbtps,y=sblat] {figures/throughput-sweep.dat};
  \addplot[black,solid,mark=x] table[x=bltps,y=bllat] {figures/throughput-sweep.dat};
  \addplot[gray!70!black,dashdotted,mark=+] table[x=sftps,y=sflat] {figures/throughput-sweep.dat};
\end{groupplot}
\node at ([yshift=0.95cm]group c1r1.north) {\ref{throughput-sweep-legend}};
\end{tikzpicture}
  \caption{Throughput sweep at $n=100$.  Cells are medians across campaign
  repetitions on fresh fleets, on the fleets of \Cref{app:baseline-fleet}.
  The x-axis is measured sequenced throughput; end-to-end latency is
  submission-to-materialization p50.  The last point of each series is its
  highest accepted load; the next offered rung overloads and is not
  plotted.}
  \Description{CPU, wire efficiency, and latency are plotted against sequenced
  throughput.  Vantage retains the lowest p50 latency, with wire cost inside
  the band the accepted variants converge to at high loads; Bluestreak uses
  the least processor time.}
  \label{fig:throughput-sweep}
\end{figure}

We increase offered load from 100 to 300,000 tx/s at $n=100$.  A point is
accepted when median sequenced throughput reaches at least 95\% of offered
load; the first lower point is the
overload endpoint, reported in the text and not plotted.

Vantage has the lowest p50 latency at every accepted offered rate
(\Cref{fig:throughput-sweep}), remaining between about 0.45 and 0.51\,s.  It sustains 250k offered tx/s at about 0.49\,s
p50; at its highest accepted point it sequences about 268k tx/s at
0.51\,s before overloading at 300k.  Optimistic Autobahn reaches a similar
throughput, about 272k tx/s, but its p50 exceeds 1.1\,s.  Bluestreak reaches
about 240k tx/s at 0.54\,s before overloading, and Sailfish++ overloads at a
lower offered load.  Seamless Autobahn and both Simple-IT variants also accept
the 275k point and overload at 300k.  The shared-binary variants thus reach
nearly the same accepted peak, 263--272k, and fail at the same offered load
despite different consensus logic; we attribute this endpoint to the shared
transport path on this instance type rather than to the protocols, so
throughput parity among them is expected and latency is the discriminating
measurement.

Resource cost separates the designs.  Bluestreak uses the least CPU at every
accepted point.  It also sends the fewest bytes at low load; from 150k upward,
the accepted variants converge toward one wire byte per sequenced byte, with
only small differences among them.

The tail follows the same ordering.  Vantage stays near 0.65\,s p99 from 150k
through 250k.  Over the same interval Bluestreak rises from about 0.8 to
1.5\,s, while optimistic Autobahn reaches about 3.3\,s at its highest accepted
point (\Cref{tab:latency-percentiles}).  These are deployment measurements,
not protocol capacity bounds.

Across the three repetitions, the all-$n$ fast seal supplies 72--76\% of
Vantage's local seals at every accepted load through 250k and 41--45\% at
the 275k point; the remainder are Direct AGB READY seals, and no view in
these runs used a skip or resolver seal.

\subsection{Q3: What does optimistic leader relay cost?}
\label{sec:leader-relay}

We isolate the work created when a leader admits data before it is widely
disseminated, the leader-load effect of \Cref{sec:closest-comparisons}.  At
$n=20$ and $f=6$, every author receives the same share of a
uniform offered load $R$.  Each of six Byzantine publishers creates one batch
per $\Delta=200$\,ms, publishes it to five correct holders, keeps its header
visible, and refuses repair.  Including the publisher, each attacked lane has
six direct holders, one short of the $f+1=7$ availability threshold.  Correct
holders serve normally, and the groups cover every correct consensus leader.
There is no equivocation, invalid proposal, or extra load; correct authors
therefore supply the reachable $0.7R$ workload.

We compare optimistic all-to-all Autobahn, Vantage, and Simple-IT Opt-RBC.
Seamless Autobahn is outside this test because sub-threshold tips never enter
its certified path.  Optimistic Autobahn admits a locally held tip and makes
the leader serve missing bytes before replicas vote~\cite{autobahn}.  Vantage
can reject that tip while retaining its proposal core, whereas Simple-IT waits
for its availability path before admission.  The figure therefore separates
correct-author goodput from the attacked-lane data that Autobahn additionally
orders.

\begin{figure}[t]
  \centering
  \begin{tikzpicture}
\begin{groupplot}[
  group style={group size=3 by 1, horizontal sep=1.0cm},
  width=0.27\textwidth,
  height=0.29\textwidth,
  xmode=log,
  log basis x=10,
  xmin=80,
  xmax=240000,
  xtick={100,1000,10000,200000},
  xticklabels={100,1k,10k,200k},
  x tick label style={font=\scriptsize,rotate=20,anchor=east},
  grid=major,
  grid style={black!10},
  tick label style={font=\scriptsize},
  label style={font=\scriptsize},
  title style={font=\small},
  every axis plot/.append style={line width=0.95pt,mark size=1.7pt},
  legend columns=3,
  legend style={font=\scriptsize,draw=none,fill=none,
    /tikz/every even column/.append style={column sep=0.30cm}},
]
\nextgroupplot[
  title={(a) Correct-author goodput},
  ylabel={Goodput (\%)},
  xlabel={Offered tx/s},
  ymin=97,
  ymax=103,
  ytick={98,100,102},
  legend to name=leader-relay-legend,
]
  \addplot[blue!75!black,solid,mark=*,
    error bars/.cd,y dir=both,y explicit]
    table[x=offered,y=vgood,y error minus=vgoodm,y error plus=vgoodp] {figures/leader-relay.dat};
  \addlegendentry{Vantage}
  \addplot[orange!90!black,solid,mark=square*,
    error bars/.cd,y dir=both,y explicit]
    table[x=offered,y=optgood,y error minus=optgoodm,y error plus=optgoodp] {figures/leader-relay.dat};
  \addlegendentry{Autobahn all-to-all}
  \addplot[teal!70!black,solid,mark=triangle*,
    error bars/.cd,y dir=both,y explicit]
    table[x=offered,y=sgood,y error minus=sgoodm,y error plus=sgoodp] {figures/leader-relay.dat};
  \addlegendentry{Simple-IT Opt-RBC}

\nextgroupplot[
  title={(b) Materialized latency},
  xlabel={Offered tx/s},
  ylabel={p50 latency (s)},
  ymode=log,
  log basis y=10,
  ymin=0.35,
  ymax=4,
  ytick={0.5,1,2,3},
  yticklabels={0.5,1,2,3},
]
  \addplot[blue!75!black,solid,mark=*,
    error bars/.cd,y dir=both,y explicit]
    table[x=offered,y=vp50,y error minus=vp50m,y error plus=vp50p] {figures/leader-relay.dat};
  \addplot[orange!90!black,solid,mark=square*,
    error bars/.cd,y dir=both,y explicit]
    table[x=offered,y=opp50,y error minus=opp50m,y error plus=opp50p] {figures/leader-relay.dat};
  \addplot[teal!70!black,solid,mark=triangle*,
    error bars/.cd,y dir=both,y explicit]
    table[x=offered,y=sp50,y error minus=sp50m,y error plus=sp50p] {figures/leader-relay.dat};

\nextgroupplot[
  title={(c) Wire cost},
  xlabel={Offered tx/s},
  ymode=log,
  log basis y=10,
  ymin=1,
  ymax=40,
  ytick={1,2,5,10,20},
  yticklabels={1,2,5,10,20},
  ylabel={Wire (bytes/byte)},
]
  \addplot[blue!75!black,solid,mark=*]
    table[x=offered,y=vwire] {figures/leader-relay.dat};
  \addplot[orange!90!black,solid,mark=square*]
    table[x=offered,y=optwire] {figures/leader-relay.dat};
  \addplot[teal!70!black,solid,mark=triangle*]
    table[x=offered,y=swire] {figures/leader-relay.dat};
\end{groupplot}
\node at ([yshift=0.82cm]group c2r1.north) {\ref{leader-relay-legend}};
\end{tikzpicture}
  \caption{Leader-relay stress at $n=20,f=6$; each point is the median of
  up to three campaign repetitions, and whiskers
  span the repetitions.  Panel (a) reports correct-author throughput as a
  fraction of the reachable $0.7R$ load; points scatter within about a percent
  of 100\% because work in flight at the window boundary is counted on one
  side of it or the other.  Panel (b) shows materialization p50 in seconds;
  panel (c) shows wire bytes per sequenced byte.}
  \Description{Three plots compare correct-author goodput, materialization
  latency, and wire cost as offered load rises.  All protocols retain
  correct-author goodput at every offered load; optimistic Autobahn commits the
  attacked-lane payload and its latency rises sharply, while Vantage and
  Simple-IT reject that data and remain flat.}
  \label{fig:leader-relay}
\end{figure}

All three protocols sustain the full correct-author workload at every offered
rate (\Cref{fig:leader-relay}).  At 200k tx/s this is 140k correct-author tx/s.  Optimistic Autobahn also
orders nearly all attacked-lane data because a leader that admitted those tips
must relay them; Vantage and Simple-IT order none of it.  Autobahn's p50 reaches
about 2.3\,s, compared with 0.46\,s for Vantage and 0.67\,s for Simple-IT, and
it uses roughly twice their CPU.  The result exposes an admission-policy
tradeoff rather than a loss of correct-author goodput.

The tail shows where each policy pays (\Cref{tab:latency-percentiles}).
Simple-IT remains near 0.86\,s p99 because unavailable data never enters
consensus.  Vantage's median changes little, but its p99 rises to roughly
1.2--1.4\,s as a rejected tip returns through a later core --- the
tip-spoiling cost discussed in \Cref{sec:discussion}.  Optimistic
Autobahn's median and tail both grow because the admitted tips must be relayed;
at 200k its p99 is about 5.7\,s.  These are measurements of this withholding
profile, not protocol capacity bounds.

\subsection{Q4: Inside one Vantage view}\label{sec:decision-anatomy}

\begin{figure}[t]
  \centering
\begin{tikzpicture}[
    x=0.026cm, y=1cm,
    >={Stealth[length=2mm]},
    ev/.style={draw, circle, fill=black!35, minimum size=2.6mm, inner sep=0pt},
    lab/.style={font=\scriptsize, align=center, inner sep=1pt},
    lead/.style={black!45, line width=0.4pt},
    tick/.style={font=\scriptsize, text=black!60},
]
  \draw[->, black!60] (-10,0) -- (515,0);
  \foreach \t/\l in {0/0,100/100,200/200,300/300,400/400,500/500 ms} {
    \draw[black!60] (\t,0.07) -- (\t,-0.07);
    \node[tick, anchor=north] at (\t,-0.1) {\l};
  }
  \foreach \x/\lvl in {0/0.45,237/0.45,62/1.1,181/1.1,360/1.1,444/0.45}
    \draw[lead] (\x,0.13) -- (\x,\lvl);
  \node[ev] at (0,0) {};   \node[lab, anchor=south] at (0,0.47) {transaction\\submitted};
  \node[ev] at (62,0) {};  \node[lab, anchor=south] at (62,1.12) {block published\\by its author};
  \node[ev] at (181,0) {}; \node[lab, anchor=south] at (181,1.12) {proposal names\\the block as tip};
  \node[ev] at (237,0) {}; \node[lab, anchor=south] at (237,0.47) {author sees the\\covering proposal};
  \node[ev] at (360,0) {}; \node[lab, anchor=south] at (360,1.12) {exact-$Q$\\available};
  \node[ev] at (444,0) {}; \node[lab, anchor=south] at (444,0.47) {materialized};
  \draw[|-|, black!60] (237,-0.8) -- (420,-0.8)
    node[midway, below=1pt, lab, text=black!65]
    {183 ms from proposal receipt to full seal (own clock)};
\end{tikzpicture}
  \caption{One transaction's path in the local $n=20$ study at 100~tx/s:
  p50 stage latencies chained on one axis from submission.  Publication
  follows submission by about 62~ms
  (blocks are published every 100~ms); the send of the first proposal naming
  the block as tip (119~ms), the author's receipt of that proposal (175~ms),
  and exact-$Q$ availability (298~ms) are measured from publication;
  materialization is the direct
  submission-to-materialization median, 444~ms.  p99 values are in
  \Cref{app:decision-anatomy}.}
  \Description{A single timeline in milliseconds after submission with marks
  at submission, block publication, the send of the first proposal naming the
  block as tip, the author's receipt of that proposal, exact-quorum
  availability, and materialization, and a bracket for the
  proposal-receipt-to-full-seal interval.}
  \label{fig:vantage-decision-breakdown}
\end{figure}

To check directly that a proposal can cover a block before the block is
exact-$Q$-available, we run a local $n=20$ study at 100~tx/s with the same
RTT matrix injected in process (\Cref{app:decision-anatomy}).  Each of three
60-second runs materializes
all 6,000 submitted transactions after drain.  Every data-plane outcome is
full, and no resolver message or resolver seal appears.

\Cref{fig:vantage-decision-breakdown} chains the stage medians on one
timeline.  A transaction's block is published about 62~ms after submission.
The first proposal that names the block as tip is sent 119~ms after
publication --- all validators of this single-process run share one clock, so
the send is measured, not inferred --- and the author sees that proposal
175~ms after publication, before the block becomes exact-$Q$-available at
298~ms; submission to materialization measures 444~ms directly.  Both waits
follow from two cadences.  Authors publish a lane block every 100~ms (the
header delay, \Cref{app:methodology}), so a transaction's batch waits about
50~ms on average for the next block --- the measured 51~ms from first digest
to publication.  Proposals follow one another about every 70~ms, one one-way
delay apart (\Cref{fig:commit-chain}; about 14 views per second in the
$n=100$ runs, \Cref{app:cadence}), so a block that has reached the proposer
waits for the next one: the measured 119~ms is the 72~ms median one-way delay
plus a 47~ms wait.  Proposal receipt to local full seal takes 183~ms.

The proved aligned endpoints --- $3\delta$ fast or $4\delta$ READY for a
tip, $5\delta$ or $6\delta$ for a core entry (\Cref{tab:data-path-comparison},
\Cref{sec:primitive}) --- are not converted into measured milliseconds: under
the RTT matrix $\delta$ differs per pair, and the measured intervals average
over arrival phase.

In the representative matrix run (\Cref{fig:vantage-traffic-shares}), payload
and lane traffic account for about half of typed bytes, while AGB contributes about 40\% of bytes and 60\% of
logical messages.  Sequence state and primary--worker traffic make up nearly
all of the remainder.  The inactive target-local resolver contributes no bytes
or messages.

In a matched network-delay control (\Cref{fig:vantage-rtt-shape}), the
all-$n$ fast-seal shortcut supplies about 70\% of local seals under the
heterogeneous matrix and every observed seal under a uniform RTT; all outcomes
in both experiments are full.  This is the run-time race behind the two
\sysname{} lines in \Cref{tab:data-path-comparison}: fast-seal waits for the
slowest of all $n$ ECHOs while the $4\delta$ READY path waits only for
quorums.  Heterogeneous delays therefore route the remaining 30\% through
READY, while uniform delays let all-to-all finish first.

\begin{figure}[t]
  \centering
  \includegraphics[width=\linewidth]{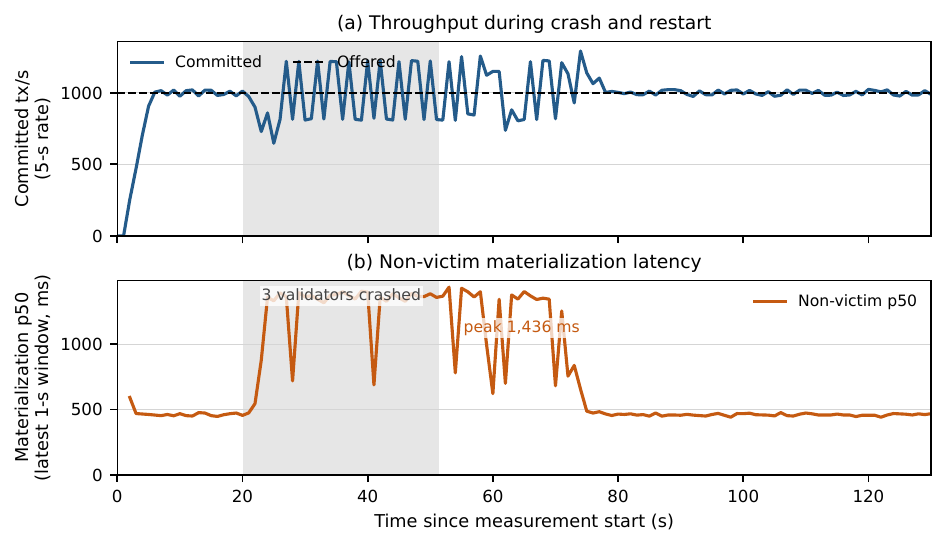}
  \caption{Pure transient crash at $n=10,f=3$ and 1,000 offered tx/s.
  The gray interval is the roughly 30-second outage.  Both curves are
  evaluated every second over the seven non-victims: panel (a) is their median
  trailing-5-second committed rate (sequenced transactions, as the axes
  label it), and panel (b) is their median
  latest-1-second materialization p50.  This is one diagnostic run, not a
  confidence interval.}
  \Description{Two time-series plots show the committed rate and one-second
  materialization p50 at the seven non-victims while three other validators are
  killed for about 30 seconds and restarted.  Throughput remains near the
  offered rate; latency rises during the outage and returns to its pre-crash level.}
  \label{fig:transient-crash}
\end{figure}

\subsection{Q5: Can recovery keep up?}\label{sec:resolver-evaluation}

\begin{figure}[t]
  \centering
  \includegraphics[width=\linewidth]{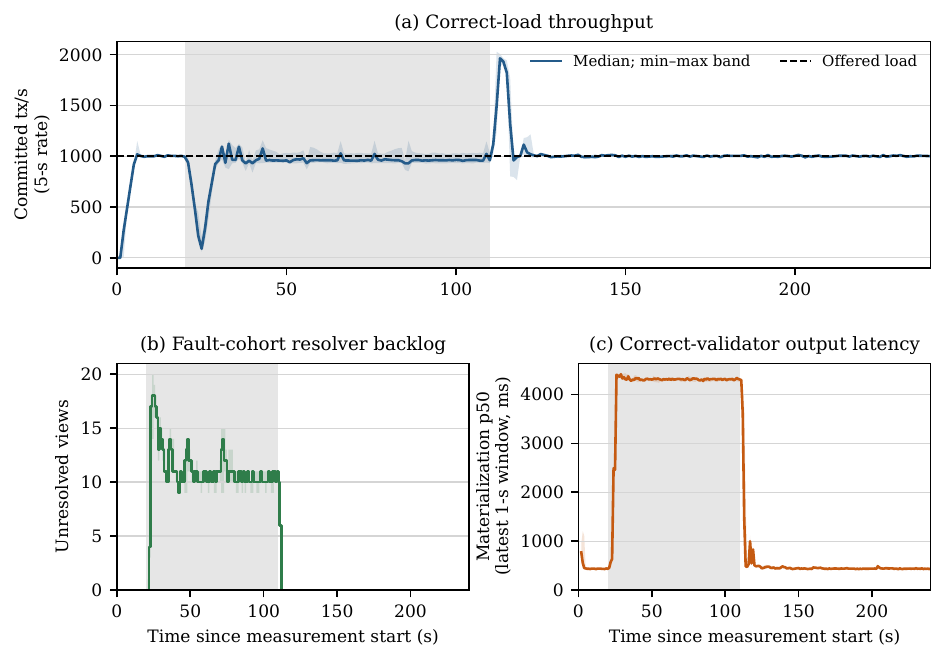}
  \caption{Sustained mixed-open recovery at $n=10,f=3$ over three fresh
  committees.  Each run first takes the median over the seven correct
  validators; curves then show the repetition median and min--max band.  The
  gray interval is Byzantine injection.  Panel (a) is the trailing-5-second
  correct-load committed rate, evaluated every second; panel (b) is the
  unresolved injected-target count; and panel (c) is the latest-1-second
  materialization p50.}
  \Description{Three time-series plots show correct-load throughput, the
  resolver backlog, and one-second p50 latency during a 90-second
  mixed-open attack.  The backlog remains around ten during injection and
  drains when injection stops; latency rises during injection and then returns
  to its earlier level.}
  \label{fig:direct-resolver-q5}
\end{figure}

We first isolate a pure transient crash at $n=10,f=3$.  The counted
1,000~tx/s load is placed on the seven non-victims; three otherwise-correct
validators are killed for about 30 seconds and then restarted with their
local state intact.  The native-Linux Docker deployment uses the same ten-region
netem matrix and $\Delta=200$~ms as the cloud campaigns.

As \Cref{fig:transient-crash} shows, the seven non-victims sustain the offered
1,000~tx/s before, during, and after the outage.  Their latest-one-second p50 is
about 0.46\,s before the crash, peaks near 1.4\,s shortly after restart, and
returns to about 0.46\,s.  Transactions delayed during the outage enter the
histogram only when they materialize, which is why the peak can follow restart;
the curve is a one-second measurement, not a cumulative or one-minute value.
Views whose proposer is a crashed victim are silent and take the skip shortcut
of \Cref{sec:protocol}.

\FloatBarrier

The harder test creates mixed-open targets continuously.  At $n=10,f=3$,
each of three Byzantine publishers proposes its own partially disseminated tip
and suppresses all of its protocol responses from seconds 20--110.  Correct
publishers offer the counted 1,000~tx/s; the faulty publishers add 600
uncounted tx/s.  We repeat the 240-second run on three fresh committees.

In \Cref{fig:direct-resolver-q5}, every injected mixed-open target seals at
every correct validator.  Targets arrive and seal at about four per second;
the unresolved backlog stays around ten rather than growing, then drains in
about two seconds after injection stops because target instances progress
concurrently.  Correct-load throughput sustains the offered 1,000~tx/s.
Targets seal everywhere in about 2.8\,s at the median and about 4\,s at the
slow end.  The latest-one-second materialization p50 rises from about 0.44\,s
to 4.3\,s while the attack is active and returns afterward.

The timing follows the resolver path.  A target activates when the proposal
horizon reaches $u+3$ (\Cref{sec:protocol}; \Cref{fig:resolution}).  Under this fault pattern it may then encounter up to
two silent resolver primaries, each consuming the $5\Delta=1$\,s no-proposal
timeout, before a correct-primary agreement completes.  The $11\Delta$
full-view timer bounds an unsuccessful proposed resolver view; a correct
primary can finish without waiting for that deadline, so it is not the main
delay in this experiment.  Because output remains in view order, a slow open
target also delays later transactions at the output cursor.  Faster activation,
faulty-primary turnover, or correct-primary agreement could reduce this
latency, but the present traces do not separate their individual costs.

These measurements answer the mechanism question for this attack: the
resolver keeps pace with the observed mixed-open arrival rate and drains the
remaining targets after the fault stops.  They do not bound an arbitrary
Byzantine arrival process, the number of simultaneously active targets, or a
target whose resolver repeatedly selects faulty primaries.

\FloatBarrier

\section{Discussion and Limitations}\label{sec:discussion}

\paragraph{Experimental scope.}
The cloud series report medians across up to three fresh-fleet repetitions; a
few baseline cells rest on fewer, and the transient-crash trace is one run.
The controlled deployment places validators in one AWS availability zone and
uses netem to reproduce a ten-region RTT matrix.  It therefore isolates latency
shape without modeling Internet loss or jitter.  The fault experiments (\Cref{sec:leader-relay,sec:resolver-evaluation}) cover
partial publication with repair refusal, transient crash/restart, and sustained
mixed-open views.

\paragraph{The cost of channel authentication.}
Our model assumes a reliable authenticated link for every pair, concretely a
pairwise MAC (\Cref{sec:model}); pairwise MACs in place of signatures go back
to Practical Byzantine Fault Tolerance (PBFT)~\cite{pbft}.  The transport implements it and our runs
keep it enabled (\Cref{sec:implementation}), so the reported numbers include
its cost.  That cost is small and bytes-proportional rather than
round-trip-bound.  In a single-host A/B --- ten validators in one process,
1000~tx/s, 512~B transactions, the ten-region RTT matrix, 180~s, two runs per
arm --- both configurations sustain the offered load at about 0.44\,s p50,
while authentication adds roughly 9\% CPU.  Its wire share depends on frame
size because batching amortizes one 16-byte tag: tags account for about 10\%
of the small frames at 100~tx/s and below 0.1\% of framed bytes at saturation.
The authenticated $n=100$ campaigns report no verification failure.

\paragraph{Resource contention and active targets.}
The Direct AGB homogeneous path does not logically wait for a resolver.  Each
unresolved target runs an independent instance, so different targets can
advance at network pace rather than waiting for one global service slot.  They
still share processor and network capacity with the data path, and an open tip
stops ordered output at its cursor position.  With $k$ active targets, the
resolver's message, wire, timer, and retained-state costs multiply by $k$.
Completed-open quarantine limits retries by correct proposers; it constrains
neither Byzantine proposers nor the rate at which they create target evidence.
The Q5 stress (\Cref{sec:resolver-evaluation}) shows matching arrival and seal rates for one $n=10$ attack, not
a bound under arbitrary arrivals or a saturation result.  A fixed target is
eventually selected and decided, but Byzantine primaries can consume resolver
views, so no constant general failed-view latency is claimed; only the clean
crash-only case of \Cref{cor:crash-skip} has a direct bound.

\paragraph{Formal retention.}
The protocol model retains the lane, target, and resolver evidence used by its
rules indefinitely.  Bounded-state recovery and reclamation are implementation
concerns outside the protocol and theorem statements.

\paragraph{Tip spoiling.}
One unverifiable entry makes the entire optional tip manifest fail a
receiver's local check, so a faulty author can prevent other authors' fresh
blocks in that manifest from being included.  The faulty author publishes its
lane-tip block only to the correct proposer and withholds matching
acknowledgments from every other correct party.  Repair may supply the bytes;
it cannot recreate direct proof of authorship.  Multimmit calls the same
withholding easy spoiling in Raptr's prefix voting~\cite{multimmit,raptr}.

A correct proposer's completion and core survive it.  For every $n\ge 3f{+}1$
with $f\ge1$, the proposer is the only correct party that can send grade $1$ for the resulting tip
manifest, so even if all $f$ Byzantine parties also send grade $1$, at most
$1+f<2f+1\le n-f$ grade-$1$ ECHOs exist: neither the all-$n$ fast-seal shortcut nor a
READY quorum of $\quorum$ grade-$1$ statements can form.  A party whose ECHO
quorum mixes grades sends READY-mix immediately, and ECHO and READY quorums
count both grades, so ordinary completion remains available.  If grade $0$
later reaches $\quorum$, each provisional sender refines to READY-$0$ and the
Direct AGB READY path seals $\gcore$.  Neither case treats repaired possession
as proof of authorship.  \Cref{sec:leader-relay} measures the cost: Vantage's
median changes little while its p99 rises to 1.2--1.4\,s as the rejected tip
returns through a later core.

A fully Byzantine schedule can still withhold enough ECHOs that neither grade
reaches $\quorum$ before the refinement window closes.  The residual READY-mix
then leaves the view $\gopen$: completed, core fixed, tip undecided.
Completed-open quarantine stops a correct proposer from reopening the same
lane on every turn --- it drops from ordinary later tip manifests any author
whose carried tip entry never gathered $\quorum$ acknowledgments naming that
exact prefix, then re-probes that author through $T$ after $1,2,4,\ldots$ of
its own proposer opportunities.  \Cref{sec:agb-interface} gives the selection
rule, the fabricated-witness case, and the per-author state cost.  The
discipline is caller-side and no receiver can check it.  Byzantine proposers
ignore it and keep creating residual open views on their own turns.

\paragraph{Reuse boundary.}
Direct AGB applies when one committee handles a value with two parts: a stable
prefix and an optional suffix that can be omitted safely.  Two candidates
elsewhere are a shared mempool and a checkpoint with a speculative log tail.
Each such use has to define its own eligibility rules and prove availability
for them; this paper proves only the author-lane instance.  AGB alone is not
reliable broadcast, not a cross-committee availability certificate, and not
consensus.  Vantage supplies the resolver, deterministic order, repair, and
input fairness around it (\Cref{cor:resolved-agb-log}).

\paragraph{Why the skip shortcut is narrow.}
Relaying a skip vote after $f+1$ votes is unsafe once a party has echoed a
non-skip proposal for that target.  Requiring a free stance removes the
conflict but can leave too few parties to complete the relay.  So the grounded
shortcut --- one vote per party, sent only after that party's own \noready{}
and a first-hand quorum of \echoskip{} --- covers a clean crash-only silent
view and nothing else.  Every other failed view goes through its target-local
resolver.

\paragraph{Provenance boundary.}
The provenance guarantee concerns validator-authored blocks: an output block
under index $i$ was published by $p_i$ to a correct party.  A Byzantine author
may still invent payload bytes or publish forks.  Authenticating external
clients is an application concern, and state-dependent invalid or replayed
transactions must be handled deterministically during execution.  None of that
weakens common block ordering.  None of it is a client-level no-creation
guarantee either.

\paragraph{Per-coordinate grades.}
One grade bit per tip entry would let quorum intersection select an outcome
for each two-valued coordinate, produce $(C,S)$ for some $S\subseteq T$, and
confine simple withholding to one author's lane.  The resolver stays:
Byzantine ECHOs can still make a coordinate mixed, and the optimistic lock and
resolution checks would need a new per-coordinate proof, which this paper does
not have.  The multi-height version is a monotone bitmap; unlike Multimmit's
rank rule, the individual questions remain two-valued at $n=3f{+}1$.

\section{Conclusion}\label{sec:conclusion}
A signature-free data path can carry another author's block one message delay
after receiving it directly, with no certificate for that block and no
availability vote completed beforehand.  Vantage splits each proposal into a
quorum-available core manifest and an optimistic tip manifest.  Completion
fixes the core; the first-hand responses decide whether the tip travels with
it.  When all $n$ parties return a grade-$1$ ECHO, a proposal seals two
message delays after its send, which puts a block published
in time for that proposal into the ordered log three delays after publication.
Only two paths in our survey reach that endpoint --- Autobahn's all-to-all fast
path and FinWhale's $p=1$ path (\Cref{tab:data-path-comparison}) --- and both
buy it with signed votes and a PKI where Vantage uses authenticated channels
and hashes.

Safety holds without timing assumptions: correct outputs never diverge, and
every output block is valid, author-backed, and retrievable.  After GST every
valid block of a correct author is eventually included, and a clean crash-only
silent view is skipped without depending on the resolver.  At $n=100$ on the emulated
ten-region WAN, Vantage holds the lowest p50 latency at every accepted load
among the compared protocol variants and sustains 250k tx/s at about 0.5\,s
p50, with a highest accepted throughput of about 268k tx/s
(\Cref{sec:throughput-sweep}).  In the mixed-open stress
(\Cref{sec:resolver-evaluation}), every injected mixed-open target seals, the resolver
keeps pace with this attack, and the backlog does not grow, although interval
p50 rises to about 4.3\,s while the attack is active.  This is not a general
rate or latency bound: active targets
share finite resources and faulty primaries can consume resolver views
(\Cref{sec:discussion}).

\begin{acks}
Large language models from Anthropic and OpenAI were used under the
author's direction for code development, for adversarial auditing of the
proofs, and for editing and restructuring the manuscript's prose.  The
protocol design, the proofs, and all claims are the author's, and the
author is responsible for all content.
\end{acks}

\bibliographystyle{ACM-Reference-Format}
\bibliography{signature-free}

\clearpage
\appendix

\section{Comparison Details}\label{app:comparison}

This appendix expands the normalized comparison of
\Cref{tab:data-path-comparison}.  The calculations use the same aligned
cross-author setting and do not turn favorable endpoints into general
fault-path bounds.

\paragraph{Mempool-Simple-IT}
A leader names a batch only after collecting $2f+1$ availability votes, and a
receiver checks $f+1$ locally~\cite{simpleit}.  This two-delay gate, two-delay
Opt-RBC, and one commit-vote delay give $5\delta$ under the roughly $5n/6$
optimistic threshold.  It certifies availability before proposal; Vantage
admits a directly published tip after one delay but may leave it open.

\paragraph{Sailfish++.}
Opt-RBC makes each nonleader vertex referenceable after two delays.  Then
$n-f-1$ ordinary correct nonleaders sequence in $5\delta$ and the remaining
$f$ late ones in $7\delta$---two classes, not a range~\cite{sailfishpp,simpleit}.
This supplies per-vertex RBC; Vantage's graded tip instead gives the
$3\delta$ all-to-all or $4\delta$ READY endpoint and may need target-local
resolution.

\paragraph{Autobahn.}
The leader may reference a directly received current lane tip after one delay
while its parent remains PoA-certified~\cite{autobahn}.  Linear fast consensus
adds three delays, giving $4\delta$; the two-phase slow path adds five,
giving $6\delta$~\cite{autobahn}.  On the sketched all-to-all path, Prepare
is the carrier; one delay spreads every vote and another lets each party form
the all-$n$ fast certificate locally.  Thus the derived total is
$\delta+2\delta=3\delta$, without a final certificate broadcast.  The paper
does not state this endpoint; like Vantage's all-$n$ fast-seal shortcut, one
non-voter disables it.
The optimistic tip is outside Autobahn's seamlessness guarantee: a missing
receiver fetches it and withholds its vote, while weak/strong refinement still
needs $f+1$ strong assertions in the PrepareQC.  Vantage lowers the grade
without suppressing ECHO or READY, but an open tip has no constant resolver
bound.

\paragraph{FinWhale, Mysticeti, and compressed variants.}
Cordial Miners begins this best-effort-DAG line~\cite{cordialminers}; Mysticeti
pipelines leaders~\cite{mysticeti}; Starfish supplies the live Push pacemaker
and multisigned Mysticeti-L~\cite{starfish}; Bluestreak compresses with
ordinary signatures~\cite{bluestreak}; and FinWhale adds a two-round fast
path~\cite{finwhale}.  Cross-author admission costs one delay, after which
Mysticeti's three-message decision gives $4\delta$.  FinWhale commits a
round-$r$ leader after $n-p$ round-$(r+1)$ references, where
$n=3f+2p-1$ and $1\le p\le f$; creating and receiving them gives $3\delta$.
At $p=1$, $n=3f{+}1$ and one non-voter is tolerated.  Its adopted Push pacemaker
bounds a round by $2\Delta+2\delta$ after GST.

Bluestreak sends one advance and block per party per round; Vantage sends one
ECHO and initial READY per party per view plus a proposal.  Both favorable
patterns are $2n^2+O(n)$; READY refinement raises Vantage's general bound to
$3n^2+O(n)$.  Bluestreak's happy case is $O(n^2\lambda+nP)$; Vantage's
positional bitmaps give $O(n^2\lambda+n^3)$ aligned, while its proved by-value
and ragged/batch worst case remains $O(n^3\lambda)$.

\paragraph{BBCA-Chain.}
One BEB trip plus the referencing leader's three-trip BBCA gives the
$4\delta$ favorable nonleader path~\cite{bbcachain}.  Timely correct-leader
BBCA tolerates $f$ faults; Vantage's two-delay carrier-to-seal shortcut assumes
all parties correct, and its READY path takes three delays.  BBCA-Chain waits
for the preceding BBCA and recovers through signed complete, adopt, or no-adopt
evidence.  Vantage overlaps views with an at-most-$\delta$ fresh-opportunity
gap but has no general recovery bound.  The $4\delta$ excludes BBCA-Chain's
optional blob-availability layer.

BBCA-Chain's view-advancement value below is likewise derived rather than
stated.  The general bound is $\Delta_\Gamma+\delta$: by
$\Delta_\Gamma$ after synchronized entry, every lagging correct party has
completed or adopted, or has probed and sent a new-view block; one actual
delay then delivers either a complete/adopt certificate or $2f+1$ correct
\textsc{noadopt} blocks everywhere.  The paper requires
$\Delta_\Gamma\ge\Delta_{\mathrm{sync}}+3\Delta$ and proves
$\Delta_{\mathrm{sync}}=\Delta$, so the minimum admissible timer gives the
reported $4\Delta+\delta$.

\paragraph{View advancement.}
\Cref{tab:data-path-comparison} omits per-view advancement, recorded here.
In a view-based protocol, once every correct party has entered view $v$ or
higher, the bound is the additional time until every correct party has
entered a view strictly higher than $v$, following the definition
of~\cite{simpleit}; for a round-based DAG it is the maximum time a correct
party can remain in one round.  One failed view or round therefore delays
progression by at most the bound once it applies, each view or round in a run
of failures incurs its own delay, and the bound says nothing about resolving
the failed value or draining a recovery queue.  The values are
$8\Delta+4\delta$ for Mempool-Simple-IT, $8\Delta+5\delta$ for Sailfish++
after a correct leader, and $10\Delta+\delta$ for Autobahn, all
from~\cite{simpleit}; $4\Delta+\delta$ for Vantage (\Cref{lem:view-duration});
$2\Delta+2\delta$ for FinWhale, the worst-case round duration proved for the
Push pacemaker with leader timeout $2\Delta$~\cite{starfish}, which FinWhale
adopts, while Mysticeti alone states no such bound; and $4\Delta+\delta$ for
BBCA-Chain, derived above.  Vantage's target-local resolver may take longer
than view advancement and has no fixed general bound (\Cref{sec:discussion}).

\FloatBarrier

\section{Direct AGB: Full Specification and Proofs}\label{app:primitive}

\sysname{} runs one pipelined AGB instance per view.  Its round-robin
proposer $\proposer(v)=p_{1+((v-1)\bmod n)}$ broadcasts
$B_v=(C_v,T_v)$.  The core $C_v$ and tip $T_v$ are author-indexed lane-frontier
manifests, encoded in increasing author order with at most one reference
$(k,h)$ per author in each (\Cref{fig:dissemination}); we omit the coordinate
and height when clear.  These fields are self-asserted and do not alone prove
authorship.  A correct proposer places an entry in its exact-$Q$-available core
only after receiving its valid lane prefix directly from the encoded author
and counting $\quorum$ exact-position author-bound acknowledgments.  For a faulty proposer,
``quorum-available'' is not certified; R2 instead checks that every
output-relevant entry is author-backed, which suffices for the retrievability
theorem below.  Hashes are distinct within the proposal and disjoint across
$C_v,T_v$; ordered output removes any cross-view repetition.
The proposal identifier $\hash{B_v}$ abbreviates
$\mathcal H(\texttt{view-proposal}\mathbin\Vert
\mathsf{enc}(\mathsf{sid},v,C_v,T_v))$ and therefore binds the view and the
manifests.

At its abstraction boundary, Direct AGB is parameterized by a deterministic
syntactic predicate $\textsc{Formed}_v(C,T)$ and local predicates
$\textsc{CoreOK}_i(C)$ and $\textsc{TipOK}_i(C,T)$.  The value domain also
supplies a stable item identity and a finite,
deterministic item sequence $\mathsf{Items}(M)$ for each formed manifest.  For every
prior-item set $D$, $\mathsf{Core}_D(C)$ is $\mathsf{Items}(C)$ after removing
identities in $D$ and repeated identities, and $\mathsf{Full}_D(C,T)$ appends
the likewise deduplicated items of $\mathsf{Items}(T)$.  Thus the core expansion
is a prefix of the full expansion, and every full item comes from $C$ or $T$.
Write $\textsc{Witnessed}(M)$ for the value-domain guarantee that every item in
$\mathsf{Items}(M)$ is externally valid and permanently retrievable.  A conforming eligibility
instantiation must prove
\[
  \textsc{CoreOK}_i(C)\Rightarrow\textsc{Witnessed}(C),\qquad
  \textsc{TipOK}_i(C,T)\Rightarrow\textsc{Witnessed}(T)
\]
whenever $p_i$ is correct.  Agreement uses only locality and determinism of
these predicates; availability inherits the displayed witness implications.

For conditional termination, $\textsc{CoreOK}$ and $\textsc{TipOK}$ are
persistent for a fixed proposal: once true at a correct
party, they remain true.  A conforming value domain also provides authorization
and fair repair: once a correct party authorizes a witnessed manifest or its
root, it eventually authorizes and obtains every item in that manifest's
deterministic expansion in every fair execution.

Vantage instantiates $\textsc{CoreOK}_i(C)$ as $\textsc{AuthorOK}_i(C)$ and
$\textsc{TipOK}_i(C,T)$ as
$\textsc{AuthorOK}_i(T)\wedge\textsc{TipAnchored}_i(C,T)$.  Under this
instantiation, $\textsc{Witnessed}(M)$ means that every named valid lane prefix was
published by its encoded author to a correct permanent holder;
\Cref{lem:availability-witness,lem:author-witness} prove the witness
implications.  The author-indexed syntax in \Cref{sec:agb-interface}
instantiates $\textsc{Formed}$, and the lane-chain traversal in
\Cref{sec:protocol} instantiates the expansion contract.  The standalone
safety arguments use only this contract and the initial-response and
same-proposal refinement rules.

\subsection{Data Dissemination}\label{sec:agb-diss}

Payloads are disseminated in data blocks below AGB.  An \emph{author lane} is
the author-indexed publication structure formed by one party's data blocks.  A
height-$k$ block of party $p_i$ has the canonical form
$b_i^k=(\mathsf{sid},i,k,h_i^{k-1},x)$, where $|x|\le\ell_{\max}$ and
$h_i^{k-1}$ is the predecessor hash (the session fixes the shared height-$0$
genesis).  Its identifier is
$h_i^k=\mathcal H(\texttt{data-block}\mathbin\Vert
\mathsf{enc}(b_i^k))$, abbreviated $\hash{b_i^k}$.  A \emph{valid lane prefix}
is a linear hash chain whose blocks share one author index, have consecutive
heights and matching predecessor hashes back to genesis, and satisfy
$\textsc{BlockOK}$.  A correct author's valid lane prefixes are mutually
comparable; a Byzantine author may publish forked lane prefixes.  A correct
party creates only valid extensions of its current lane frontier.  To publish
a new block it sends $\langle\rstat{publish},b_i^k\rangle$ to all parties.  A
recipient treats this as author publication only when the authenticated
channel sender and the block's encoded author are both $p_i$.

A hash $h=\hash{b_i^k}$ names both the block and, through the predecessor-hash
chain, its valid lane prefix $b_i^1,\ldots,b_i^k$.  A reference $(i,k,h)$ in a
well-formed fixed proposal, a local completion or seal, or a well-formed
resolver value under local validation authorizes repair of its root block,
as does the anchor of a counted availability claim (below).
After obtaining a block whose hash, session, encoded author, height, and
$\textsc{BlockOK}$ check match an authorized reference, the requester also
authorizes its predecessor reference; this process continues recursively to
genesis.  For each authorized missing hash, the party requests the block from
all parties.  A holder answers with a distinctly typed
$\langle\rstat{serve},h,b\rangle$ repair message.  A canonical, correct-session,
size-bounded, $\textsc{BlockOK}$ body whose hash is $h$ is cached even if its
encoded author or height does not match the currently authorized coordinate;
only an exact coordinate match advances the predecessor walk.  This prevents a
false coordinate from consuming the sole useful response for a later genuine
reference.  A reply that fails the hash or body checks is ignored and never
marks the hash as obtained.  No repair message establishes publication
provenance, even when its sender equals the encoded author; if the corresponding
\rstat{publish} message arrives later, it may upgrade the cached bytes.

A correct party records the first pending request for each requester--hash pair
even when it does not yet hold the block, and answers once if it later retains
that block.  A correct requester sends at most one request for a given
authorized hash to each peer, and a correct holder answers at most once for
each requester--hash pair.  Reliable delivery and retained pending requests
make retransmission unnecessary; the rule also bounds correct-party traffic
induced by repeated Byzantine requests.

For $h=h_a^k$, let $\textsc{DirectPub}_j(a,k,h)$ mean that $p_j$ holds the
valid lane prefix through $h$ and received every block in it through
\rstat{publish} messages directly from $p_a$.  The party retains this prefix
and advances its local direct frontier for $a$, but this event sends no
periodic protocol message.

For a formed proposal $B$, let
$R(B)=(r_1,\ldots,r_m)$ be the canonical ordered concatenation of its core and
tip lane references.  A proposal ECHO from $p_i$ carries
an authenticated availability claim $A_i(B)$ outside the immutable ECHO
identity.  For $r_j=(a,k,h)$, its exact-position bit may be set only when
$\textsc{DirectPub}_i(a,k,h)$ holds.  Otherwise it may contain one integer
$(j,d)$ with $0<d<k$ only when the predecessor walk from $h$ identifies the
unique ancestor $r'=(a,k-d,h')$ and $\textsc{DirectPub}_i(r')$ holds.  Thus the
integer means ``I directly received this exact ancestor of the named tip,''
not merely ``I hold some block at height $k-d$.''  A correct sender is silent
at that position if it cannot verify the ancestry or its direct provenance.

After verifying the canonical proposal body, a receiver treats an exact bit
immediately as the logical first-hand acknowledgment
$\langle\rstat{ack},a,k,h\rangle_i$.  Since \textsc{DirectPub} is a prefix
predicate, once the predecessor chain is verified the same claim also
acknowledges every derived ancestor in that prefix.  For $(j,d)$ the receiver
first obtains and verifies the named tip's predecessor chain, derives the exact
$h'$, and only then counts the acknowledgments for $(a,k-d,h')$ and its derived
ancestors; until that walk succeeds, the claim remains pending.  Repeated
ECHOs or repeated coordinates never count the same sender twice for one exact
tuple.  A receiver processes the availability field only on that sender's
first accepted echo-stage response for the view; later envelope variants do
not create additional claims.  A Byzantine ECHO author may
fabricate its own fields, but contributes at most one counted sender to any
exact tuple.  Hence a
correct acknowledgment still asserts validity, possession, and direct receipt
from the encoded author, and every counting party attributes it only to the
authenticated ECHO sender.

\begin{definition}[Availability]\label{def:available}
A lane reference $(a,k,h)$ is \emph{$q$-available} at a party if it
has received $\langle\rstat{ack},a,k,h\rangle$ first-hand from at least $q$
distinct parties.
\end{definition}

Call a reference \emph{exact-$\quorum$-available} at a party when its
$\quorum$ counted acknowledgments all came from exact-position bits for that
reference.  Unlike an integer claim, an exact bit needs no predecessor walk at
another receiver.  Correct Vantage proposers use this stronger, locally
checkable condition for core promotion; ordinary availability and
$\textsc{AuthorOK}$ count either encoding.

A valid lane prefix obtained only through repair is held and served but never
acknowledged.  Hash links prove content, not authorship: anyone can construct a
well-linked block carrying another party's index.  Authenticated
\rstat{publish} receipt establishes authorship locally, while $f{+}1$
author-bound acknowledgments establish that some correct recipient observed
such a publication.  Consequently, a block that $p_a$ never publishes to a
correct party gathers at most $f$ acknowledgments for $(a,k,h)$.

A reference is \emph{locally available} at a party when its valid lane prefix is
held by that party or it is $(f{+}1)$-available there; a manifest is locally
available when every entry in it is.  Define the separate provenance gate
\[
\begin{gathered}
  \textsc{AuthorOK}_j(a,k,h)\\[-2pt]
  \Longleftrightarrow\ \textsc{DirectPub}_j(a,k,h)\\[-2pt]
  {}\lor\ \bigl((a,k,h)\text{ is $(f{+}1)$-available at }p_j\bigr).
\end{gathered}
\]
For a manifest, $\textsc{AuthorOK}_j$ means that the predicate holds for every
entry at its encoded author coordinate.  $\textsc{AuthorOK}$ implies local availability;
the converse is false for a fetched lane prefix.  \emph{Retention rule:} a correct party
retains every valid lane prefix it acknowledges; every valid lane prefix whose local
holding it relies on when a protocol rule checks local availability,
$\textsc{AuthorOK}$, or tip anchoring; and every valid lane prefix it obtains by request for
such a check --- even when the prompting window has closed by the time the lane
prefix arrives.  It also retains an anchoring chain whose verified ancestry it
uses to emit or resolve an integer ACK claim.  Hash-correct bodies received during an authorized repair are
also cached so that a later exact coordinate can consume them.  Retained lane
prefixes are held and served indefinitely; local
sealing or output of the naming views is no discard signal, since another
correct party's seal and output may lag arbitrarily.  Bounded-state reclamation
is outside the model.
We call a reference \emph{retrievable} when some correct party permanently
retains its valid lane prefix.  In a fair execution, the repair rule below makes
every correct party that authorizes the exact reference eventually obtain it.

\Cref{alg:data-dissemination} gives the corresponding event handlers; the
prose above is normative.
\begin{algorithm*}[tbp]
\SetAlgoSkip{}
\caption{Authenticated data publication and repair at party $p_i$}
\label{alg:data-dissemination}
\footnotesize
\Comment{A stored block may carry a direct-publication mark, a repair mark,
  or both; only the first can establish provenance or an ECHO ACK claim.}
\Glob{}{
  maps of cached blocks and direct-publication marks;
  sets $\mathit{authorized}$, $\mathit{requested}$, $\mathit{pendingReq}$,
  $\mathit{retained}$, and $\mathit{answered}$;
  direct lane frontiers; per-tuple ECHO claim counts and pending exact and
  integer ACK claims
}

\Upon{the local application creates the next valid block $b_i^k$}{
  store $b_i^k$; process it as a direct publication from $p_i$;
  \textbf{broadcast} $\langle\rstat{publish},b_i^k\rangle$
}

\Upon{a direct $\langle\rstat{publish},b\rangle$ from $p_a$ whose encoded
  author is $a$}{
  \If{$b=(\mathsf{sid},a,k,h',x)$ is canonical, $|x|\le\ell_{\max}$,
    and $\textsc{BlockOK}(b)$}{
    $h\gets\hash{b}$; store $b$ and mark $(a,k,h)$ directly published by $p_a$
      \Comment*[r]{may upgrade repaired bytes}
  }
}

\Fn{\textsc{DirectPub}$_i(a,k,h)$}{
  \KwRet{$p_i$ holds the valid lane prefix through $h$ and every block in it
    carries a direct-publication mark from $p_a$}
}

\Upon{$\textsc{DirectPub}_i(a,k,h)$ first becomes true}{
  retain the valid lane prefix through $h$;
  advance the direct frontier of $a$ to $(a,k,h)$ if it extends that frontier
}

\Fn{\textsc{AckClaims}$_i(B)$}{
  let $R(B)=(r_1,\ldots,r_m)$ be the canonical lane-reference vector;
  initialize an $m$-bit exact vector and an empty sparse integer map;
  \ForEach{$r_j=(a,k,h)$}{
    \eIf{$\textsc{DirectPub}_i(a,k,h)$}{set exact bit $j$}{
      \If{the predecessor walk from $h$ verifies a greatest directly
        published ancestor $(a,k-d,h')$ with $0<d<k$}{
        add $(j,d)$ to the sparse map
      }
    }
  }
  \KwRet{the exact vector and sparse map}
}

\Upon{the first accepted authenticated proposal ECHO from $p_s$ with claim
  $A_s(B)$ and a formed canonical body $B$}{
  \ForEach{exact bit $j$ of $A_s(B)$}{
    count $p_s$ for $r_j$; record the pending claim $(s,j)$ with claimed
    height $k$; \textsc{Authorize}$(r_j)$
  }
  \ForEach{sparse entry $(j,d)$ of $A_s(B)$}{
    record the pending claim $(s,j,d)$ with claimed height $k-d$;
    \textsc{Authorize}$(r_j)$
  }
}

\Upon{a pending claim by $p_s$ on anchor $r_j=(a,k,h)$ with claimed height
  $k^\ast$, and a verified ancestor $(a,k',h')$ of $r_j$ with $k'\le k^\ast$}{
  count $p_s$ for $(a,k',h')$ unless already counted for that tuple
    \Comment*[r]{state rule; also fires when the chain is already cached}
}

\Proc{\textsc{Authorize}$(a,k,h)$}{
  add $(a,k,h)$ to $\mathit{authorized}$ if absent\;
  \eIf{a cached $b=(\mathsf{sid},a,k,h',x)$ satisfies $\hash{b}=h$,
    $\textsc{BlockOK}(b)$, the size bound, and either $k>1$ or $h'$ is the
    session genesis hash}{
    \If{$k>1$}{\textsc{Authorize}$(a,k-1,h')$}
  }{
    \ForEach{$p_j\in\Pi$ with $(j,h)\notin\mathit{requested}$}{
      add $(j,h)$ to $\mathit{requested}$;
      send $\langle\rstat{request},h\rangle$ to $p_j$
    }
  }
}

\Upon{$\langle\rstat{serve},h,b\rangle$ arrives for a requested hash $h$}{
  \If{$b=(\mathsf{sid},a,k,h',x)$ is canonical, $\hash{b}=h$,
    $|x|\le\ell_{\max}$, and $\textsc{BlockOK}(b)$}{
    cache $b$ as repaired data, without a direct-publication mark
      \Comment*[r]{coordinate-independent cache}
  }
  \Comment*[r]{a failed hash/body check changes no state}
}

\Upon{a cached $b=(\mathsf{sid},a,k,h',x)$ matches an authorized
  $(a,k,\hash{b})$ and either $k>1$ or $h'$ is the session genesis hash}{
  \textsc{Authorize}$(a,k,\hash{b})$
  \Comment*[r]{fires after either PUBLISH or SERVE; continues the walk}
}

\Upon{an authorized predecessor walk verifies a valid lane prefix through genesis}{
  retain every block in that lane prefix; invoke \textsc{TryServe} on their hashes
}

\Upon{a direct $\langle\rstat{request},h\rangle$ from $p_j$}{
  add $(j,h)$ to $\mathit{pendingReq}$ if $(j,h)\notin\mathit{answered}$;
  \textsc{TryServe}$(h)$
}

\Upon{a cached block with hash $h$ becomes retained}{
  \textsc{TryServe}$(h)$
}

\Proc{\textsc{TryServe}$(h)$}{
  \If{a retained block $b$ satisfies $\hash{b}=h$}{
    \ForEach{$(j,h)\in\mathit{pendingReq}\setminus\mathit{answered}$}{
      add $(j,h)$ to $\mathit{answered}$; remove it from $\mathit{pendingReq}$;
      send $\langle\rstat{serve},h,b\rangle$ to $p_j$
    }
  }
}

\Fn{\textsc{AuthorOK}$_i(a,k,h)$}{
  \KwRet{$\textsc{DirectPub}_i(a,k,h)$ or at least $f+1$ counted
    first-hand ECHO claims for $(a,k,h)$}
}
\end{algorithm*}

\begin{lemma}[Availability witnesses]\label{lem:availability-witness}
For every reference $(a,k,h)$:
\begin{enumerate}[label=\textup{(\roman*)},leftmargin=2.2em,topsep=2pt,itemsep=2pt]
  \item if a correct party relies on it in a local-availability,
        $\textsc{AuthorOK}$, or tip-anchoring check, then the reference is
        retrievable;
  \item in every fair execution, every correct party that authorizes a
        retrievable reference eventually obtains its valid lane prefix;
  \item if it is $(\quorum)$-available at one correct party, then it eventually
        becomes $(f{+}1)$-available at every correct party.
  \item if it is exact-$\quorum$-available at a correct party by time
        $\beta\ge\mathrm{GST}$, then it is $(f{+}1)$-available at every correct
        party by $\beta+\delta$.
\end{enumerate}
\end{lemma}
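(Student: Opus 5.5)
I would prove the four parts in order, each from the retention rule, the repair rules of \Cref{alg:data-dissemination}, and the quorum facts of \Cref{sec:model}. The one structural point used throughout is this: a correct party sends an exact-position bit or integer claim for $(a,k,h)$ only when \textsc{DirectPub} holds for that tuple or for its verified ancestor anchor. At that moment the retention rule makes the party retain the valid lane prefix through $h$ forever.

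\textbf{Part (i).} Split on which disjunct the relying correct party $p_j$ used.

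\emph{Local holding or \textsc{DirectPub}.} Here $p_j$ itself holds the valid lane prefix. The retention rule states that every prefix whose local holding a check relies on, or that is obtained by request for such a check, is retained indefinitely. Since $p_j$ is correct throughout (static adversary), the reference is retrievable.

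\emph{$(f{+}1)$-availability.} Here $p_j$ counted first-hand acknowledgments from $f{+}1$ distinct authenticated ECHO senders. At least one sender $p_s$ is correct.
\begin{itemize}
\item An exact bit from $p_s$ means $\textsc{DirectPub}_s(a,k,h)$.
\item An integer claim is counted for $(a,k,h)$ only as a verified ancestor of $p_s$'s directly received anchor. \textsc{DirectPub} is a prefix predicate, so $p_s$ directly holds the prefix through $h$.
\end{itemize}
In either case $p_s$ retains it. Tip anchoring reduces to one of these two cases, because it inspects held chains.

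\textbf{Part (ii).} Take a correct retainer $p_c$ and a correct party $p_i$ that authorizes $(a,k,h)$. Then \textsc{Authorize} either finds a cached matching block or sends a \rstat{request} to every party, including $p_c$. Next:
\begin{itemize}
\item $p_c$ records the request in $\mathit{pendingReq}$ and answers via \textsc{TryServe} once it retains the block.
\item Reliable channels and fair service deliver the \rstat{serve}, and its hash and body checks pass.
\item The cached-match rule then authorizes the predecessor, and the walk repeats.
\end{itemize}
Each predecessor is also retained by $p_c$, since $p_c$ retains the whole prefix. By induction on $k$, the walk therefore reaches genesis after finitely many steps. Collision resistance fixes each predecessor uniquely. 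The point to handle carefully is that caching is coordinate-independent, so a Byzantine reply can neither block nor poison the walk. A bad hash changes no state. A correct-hash body with a wrong coordinate is merely cached, and the honest body is the same bytes anyway.

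\textbf{Parts (iii) and (iv).} Let the reference be $\quorum$-available at a correct $p$. By quorum fact (i), at least $n-2f\ge f{+}1$ of those senders are correct. Each correct sender's claim is carried in its first echo-stage response for that view, which it broadcasts to all parties.

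For (iii), every correct party eventually receives these ECHOs. It counts exact bits immediately. It counts integer claims once its predecessor walk verifies the ancestor, which eventually happens by (ii), since the anchor is authorized on receipt and retrievable by (i). So every correct party reaches $(f{+}1)$-availability.

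For (iv), the acknowledgments are exact bits, and an exact bit counts immediately after the canonical proposal body is verified, with no walk. The correct senders' ECHOs were sent by $\beta$. By the in-flight convention they arrive everywhere by $\max(\text{send},\mathrm{GST})+\delta\le\beta+\delta$.

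\textbf{Main obstacle.} I expect the hardest step to be the timing subtlety in (iv). Counting an exact bit requires the receiver to have the proposal body $B$ that the claim is positioned against. I would close this by noting that the proposal is sent by value inside the ECHO context (\Cref{sec:primitive}, ``every manifest is sent by value''). The claim's positions are therefore interpretable on arrival of the ECHO itself. I would check this against the ECHO encoding in \Cref{sec:agb-interface}.
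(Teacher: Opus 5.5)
Your proposal is correct and follows the paper's proof part by part: retention plus one correct acknowledger in (i), requests to all parties with induction on height and collision resistance in (ii), the $n-2f\ge f+1$ correct claimants reaching every party by broadcast in (iii), and in (iv) exact bits that need no anchor walk together with the in-flight convention (your by-value-ECHO answer to the body-interpretation issue is right). One citation needs fixing in (iii): the anchor chain is not retrievable ``by (i)'', because emitting an integer claim is not one of the three checks listed there; it is retrievable by the retention rule's separate clause that a correct claim author keeps the anchoring chain it used, which is exactly what the paper cites --- and the same walk-plus-(ii) argument also covers acknowledgments that $p$ derived for $(a,k,h)$ from exact bits placed at a descendant position, which you treat as counted immediately.
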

\begin{proof}
For (i), if the checking party holds the lane prefix, the retention rule applies;
otherwise it counted $f{+}1$ acknowledgments, including one from a correct
party that holds and retains it.  For (ii), requests go to every party and a
correct holder answers.  Starting at $h$, each verified block authorizes its
predecessor, so induction on height retrieves the lane prefix through genesis;
collision resistance makes each requested hash unambiguous.  For (iii), at
least $n-2f\ge f{+}1$ counted acknowledgments are from correct parties, and their
claim-bearing ECHOs eventually reach every correct party.  A counted claim
authorizes its anchor's predecessor walk at every receiver; the correct claim
author retains that chain, so fair repair eventually derives the same exact
tuples everywhere.  For (iv), at least $n-2f\ge f{+}1$ of the exact
claims counted by time $\beta$ came from correct ECHO authors.  Their formed
claim-bearing ECHOs were already broadcast and arrive everywhere by
$\beta+\delta$ under the in-flight convention; exact positions require no
anchor repair.
\end{proof}

\begin{lemma}[Author witness]\label{lem:author-witness}
If a correct party satisfies $\textsc{AuthorOK}_j(a,k,h)$, then some correct
party permanently holds the valid lane prefix ending at $h$ and received every
block in that lane prefix through \rstat{publish} messages directly from $p_a$.
In particular, every block in the lane prefix satisfies $\textsc{BlockOK}$ and
$p_a$ published every such block to a correct party.
\end{lemma}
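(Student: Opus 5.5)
The proof is a case split on the two disjuncts of $\textsc{AuthorOK}_j(a,k,h)$. In each case we exhibit a correct party $p_c$ with $\textsc{DirectPub}_c(a,k,h)$ that permanently retains the prefix, and then read off the two consequences.

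\emph{Case 1: $\textsc{DirectPub}_j(a,k,h)$ holds at the correct checking party $p_j$.} Take $p_c=p_j$. By definition, $p_j$ holds the valid lane prefix through $h$ and received every block in it through \rstat{publish} directly from $p_a$. The handler ``upon $\textsc{DirectPub}_i(a,k,h)$ first becomes true'' in \Cref{alg:data-dissemination} retains that prefix. The retention rule also retains every prefix relied on in an $\textsc{AuthorOK}$ check, and retained prefixes are held indefinitely. So $p_j$ holds the prefix permanently.

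\emph{Case 2: $(a,k,h)$ is $(f{+}1)$-available at $p_j$.} Among the $f{+}1$ distinct counted senders, at most $f$ are Byzantine under the static fault set, so at least one sender $p_c$ is correct. The counting rules attribute each acknowledgment only to the authenticated ECHO sender, once per exact tuple. So $p_c$ really sent a claim that yields $\langle\rstat{ack},a,k,h\rangle_c$. This claim is either an exact-position bit on $(a,k,h)$ itself, an exact bit on a descendant anchor from which $(a,k,h)$ is a derived ancestor, or a sparse integer $(j',d)$ whose verified walk lands on $(a,k,h)$ or on one of its descendants.

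The main obstacle is showing that every such claim form, sent by a correct party, implies $\textsc{DirectPub}_c(a,k,h)$. I would argue three points. First, a correct sender sets an exact bit only when $\textsc{DirectPub}$ holds at the anchor. Second, it emits $(j',d)$ only when $\textsc{DirectPub}$ holds at the identified ancestor. Third, $\textsc{DirectPub}$ is a prefix predicate: holding and direct receipt of a prefix imply the same for every ancestor in it. It remains to confirm that the tuple derived by the receiver coincides with $p_c$'s tuple. This follows because the receiver derives ancestors by a hash-verified predecessor walk, and by collision resistance the ancestor of $h$ at each height is unique, so it matches the chain $p_c$ verified. Finally, $p_c$ retains every prefix it acknowledges (retention rule), so it holds the prefix permanently.

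In both cases we have a correct $p_c$ with $\textsc{DirectPub}_c(a,k,h)$ that permanently retains the prefix, which proves the main claim. For the ``in particular'' part: each block in the prefix was stored under a direct-publication mark. The \rstat{publish} handler stores a block only after checking canonicity, the size bound and $\textsc{BlockOK}$, and only when the channel sender and the encoded author are both $p_a$. Hence every block satisfies $\textsc{BlockOK}$ and was published by $p_a$ to the correct party $p_c$. The only computational failure event is a hash collision, which is excluded with all but negligible probability.
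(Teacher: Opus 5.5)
Your proof is correct and follows the paper's argument: the same case split on the two disjuncts of $\textsc{AuthorOK}$, with the second case resting on a correct acknowledger among the $f{+}1$ whose acknowledgment presupposes its own $\textsc{DirectPub}$, plus the permanence of the retention rule. Your extra treatment of integer claims, derived ancestors, and collision resistance spells out details that the paper's two-line proof leaves implicit.
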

\begin{proof}
The claim is immediate in the $\textsc{DirectPub}$ case.  Otherwise, among the
$f{+}1$ acknowledgers counted by $p_j$ at least one is correct.  A correct
party contributes that logical author-bound acknowledgment in its ECHO only
after its own $\textsc{DirectPub}$ predicate holds, and the retention rule is
permanent.
\end{proof}

A count of $\quorum$ acknowledgments is therefore local evidence that at
least $n-2f\ge f{+}1$ correct parties hold the directly published valid lane prefix and,
by \Cref{lem:availability-witness}(iii), that both the $(f{+}1)$-availability and $\textsc{AuthorOK}$
conditions used for a core entry below eventually hold at every correct party.

\subsection{Interface}\label{sec:agb-interface}

The direct interface has one sender input, two idempotent environment events,
one completion event that makes the core irrevocable, and an optional
direct-result event:
\begin{itemize}[leftmargin=1.4em,topsep=2pt,itemsep=2pt]
  \item $\rstat{propose}(v,C,T)$ --- invoked once by $\proposer(v)$; a
        correct caller supplies a value satisfying $\textsc{Formed}_v(C,T)$.
        In Vantage, a correct proposer places
        in $C$ only
        references $(a,k,h)$ for which
        $\textsc{DirectPub}_i(a,k,h)$ holds and the reference is
        exact-$\quorum$-available there, and in $T$ only references for which
        $\textsc{DirectPub}_i(a,k,h)$ holds.  Concretely, it lists in $C$, for
        each author, the newest such exact-$\quorum$-available reference.  For
        a non-quarantined author, $T$ contains the least-height generally
        $\quorum$-available but not exact-$\quorum$-available directly
        published prefix strictly above $C$, when one exists, and otherwise
        the newest directly published lane tip whose valid lane prefix
        strictly contains the author's $C$ entry --- an author with no $C$
        entry is anchored at genesis, so any directly published tip qualifies --- at most
        one $C$ and one $T$ entry per
        author.  Inclusion is thus deterministic in the proposer's first-hand
        state.  Here \emph{newest} means greatest verified lane height, with
        the lexicographically smallest hash breaking an equal-height tie.  If
        an author forked its lane, the $C$ entry pins the kept branch below
        itself (the same height/hash order resolves locally visible forks; a
        forking author is Byzantine), and a
        directly published lane tip on any other branch is never included.  Authors
        are processed in a canonical order, skipping any hash already listed
        under an earlier index, so all entries are globally distinct and no
        hash appears in both manifests.
        A correct Vantage proposer also keeps one local quarantine witness and
        two scalar retry counters per author.  Ordinary inclusion and an
        initial READY-mix do not fill this slot.  When a READY quorum first
        completes a proposal while neither READY grade has quorum, the party
        stores each tip reference that is not exact-$\quorum$-available in its
        author's empty slot.  It omits $a$ from ordinary later $T$ manifests
        until the witness becomes exact-$\quorum$-available or a containing
        prefix becomes its current $C$ candidate, then clears the slot.

        Number this party's own proposer opportunities.  A new witness is due
        at the next opportunity; after an actual retry, its successive gaps are
        $1,2,4,\ldots$ opportunities.  In one proposal the party selects at
        most one due quarantined author with a directly published prefix above
        the current core, choosing least due opportunity and then author order.
        If the author has a generally $\quorum$-available but not exact-$\quorum$-available
        prefix above the core, it puts the least-height such prefix in $T$ as a
        stable confirmation target; otherwise it puts the newest directly
        published tip there.  The same confirmation preference applies to
        ordinary, non-quarantined $T$ selection.  A due author
        without such a directly published tip is skipped without obstructing
        another eligible author or advancing its retry counter.  Re-completion does not replace a
        nonempty slot or reset its backoff.  This caller-side discipline is not
        part of $\textsc{Formed}$ and cannot constrain a Byzantine proposer.
        It never activates on a homogeneous completing quorum.  Proposal ECHOs
        eventually clear any slot whose witness a correct author actually
        published.  Even a fabricated witness cannot censor a correct author's
        lane: later due probes use that author's real directly published
        prefixes, and a general claim quorum freezes the least unconfirmed
        coordinate until exact-position ECHOs confirm it, without resetting
        the witness or its backoff.
        Over-inclusion elsewhere is harmless since sequencing outputs every
        data block once (\Cref{sec:protocol}).  Each manifest thus has at
        most $n$ entries.  Vantage instantiates $\textsc{Formed}$ by
        requiring that each of $C$ and $T$ list at most one entry
        per author, all hashes across $C$ and $T$ are distinct, and every
        reference has a canonical positive height and hash.
        Correct parties never echo a malformed proposal
        (R2), so the per-author entry bound --- and with it the $n$-entry
        bound --- holds for every echoed proposal,
        Byzantine ones included.  These are syntactic checks: when bytes or
        acknowledgments are available, $\textsc{AuthorOK}_i(a,k,h)$ can hold
        only for a valid lane prefix whose encoded author and height are exactly
        the reference coordinate $(a,k)$.
  \item $\rstat{activate}(v)$ --- permits the fixed proposal's positive echo
        gate to fire.  In Vantage this local event occurs when authenticated
        receipt advances the contiguous responsive proposal frontier through
        $v$; formal entry activates separately through $\rstat{enter}(v)$.
  \item $\rstat{enter}(v)$ --- records the local entry time, activates the fixed
        proposal if present, and arms the echo and ready fallback deadlines.
        Vantage generates this local event with the WISH pacemaker of
        \Cref{sec:pacemaker}.  Neither environment event is a counted statement
        or transferable evidence.
  \item $\rstat{complete}(v)\to B_v$ --- emitted when the first
        proposal-ready quorum is counted; it fixes $B_v=(C,T)$ and makes
        $C$ an irrevocable prefix of the view's eventual non-skip
        output.  A completed but unsealed view is $\gopen(C,T)$.
  \item $\rstat{direct-seal}(v)\to X$ --- the optional direct output, where $X$
        is $\gfull(C,T)$ or $\gcore(C)$.  It fires on the first homogeneous
        ready quorum of the corresponding grade.  The mixed grade is
        nonterminal and $\gskip$ is not a Direct AGB output.
\end{itemize}

Direct AGB never reads resolver state.  Vantage connects its
$\rstat{direct-seal}$ output to a caller-owned $\rstat{try-seal}(v,X)$ arbiter.
Every direct result, Vantage fast or skip result, and resolver outcome is
submitted to it.  While empty, the
arbiter accepts the first submission, stores $X$, and emits the terminal
$\rstat{seal}(v)\to X$ event; afterward it retains that outcome and ignores
later submissions.  Compatibility ensures that two submissions, if both occur,
name the same outcome.  This output multiplexer belongs to the composition, not
to Direct AGB.  The lifecycle vocabulary of \Cref{sec:overview} applies
throughout: the arbiter is what turns a direct, fast, skip, or resolver result
into the terminal \emph{seal} event.

\subsection{Vantage Pacemaker and Activation}\label{sec:pacemaker}

The following service is Vantage's concrete implementation of Direct AGB's
$\rstat{enter}$ and $\rstat{activate}$ environment events.  It is not used by
the asynchronous safety proof of \Cref{thm:agb-safety}.

The pacemaker uses first-hand $\langle\wish,x\rangle$ messages, following the
bounded-space high-watermark construction of Bravo, Chockler, and
Gotsman~\cite{bravo2022liveness}.  Party $p_i$ stores, for every author $p_j$,
the largest directly received wish $\omega_i[j]$.  Let $\omega_i^+$ and
$\omega_i^Q$ be the $(f+1)$st and $(\quorum)$th largest entries, respectively.
When $\omega_i^+$ exceeds $p_i$'s own largest wish, it broadcasts a wish for
$\omega_i^+$; independently, whenever $\omega_i^Q$ increases it raises its
entry target to $\omega_i^Q$.  Wishes are
high-watermarks, and the two updates are independent: a wish for $x$ supports
every view at most $x$, and target advancement never waits for
$\omega_i^Q=\omega_i^+$.  On a target $x$, the party records entry to every
missing view
through $x$ immediately and in order.

Initially $\omega_i[j]=0$ for every $j$, the largest entered view is $0$, and
the own-wish high-watermark is $0$.  Genesis enters view~$1$, after which
$p_i$ sets $\omega_i[i]=2$, records own wish~$2$, and broadcasts
$\langle\wish,2\rangle$ (self-delivery is immediate).  Upon receiving
$\langle\wish,x\rangle$ first-hand from $p_j$, including as a piggybacked
high-watermark, $p_i$ sets
$\omega_i[j]\gets\max\{\omega_i[j],x\}$ and recomputes both order statistics.
It first performs any enabled $f{+}1$ amplification, updating
$\omega_i[i]$ and its own-wish high-watermark when it broadcasts and
recomputing both order statistics after that update.  It then records entry
through any newly supported quorum target.  Stale wishes cause no transition.

Write $E_i(u)$ for $p_i$'s unique echo-stage response and $R_i(u)$ for its
initial ready-stage response in view $u$.  The local trigger for wishing follows Starfish's
two-response progress condition~\cite{starfishs}: before emitting whichever of
$E_i(v-1)$ and $R_i(v-2)$ completes the pair, $p_i$ raises its own wish to at
least $v+1$.  Every response piggybacks the sender's current own-wish
high-watermark outside its counted response identity.  An optional READY-mix
refinement carries the same field but does not run this progress trigger a
second time.
Amplification may send a wish standalone.  A wish only schedules views; it is
never an echo, a ready, an availability acknowledgment, an origin bit, or a
resolution justification.

Let $e_i(v)$ be the time $p_i$ enters view $v$.  Entry is strictly increasing
locally.  Recording entry to $v$ arms both fallback schedulers for $v$;
positive responses may already have fired through the responsive path, and
pending responses may finish after later entries.  R2 and R3 close them by
$e_i(v)+\theta_E$ and $e_i(v)+\theta_R$, respectively, in post-GST
drift-free time, unless their positive
gates fire earlier.  Thus entry never waits for an older response window.

In parallel, $p_i$ maintains a \emph{responsive proposal frontier} $a_i$,
initially $0$.  Formal entry to view $v$ raises $a_i$ to at least $v-1$.
For each view, $p_i$ fixes the first proposal received directly from its
designated proposer and ignores later versions.  When the fixed proposal for
$a_i+1$ is well formed, it advances $a_i$ by one; buffered consecutive fixed
proposals are processed in increasing view order.  A proposal that advances
the frontier may activate its positive response path before formal entry, but
it never causes formal entry and is not evidence for any quorum.  Thus WISH
remains the only recovery and synchronization mechanism, while a contiguous
stream of first-hand proposals supplies a one-hop responsive path.  As in the
rest of the unbounded-execution model, parties retain future per-view messages;
the non-Zeno convention makes the retained set finite at every finite time.

The designated proposer also uses WISH evidence directly: R1 invokes
$\rstat{propose}(v,\cdot)$ already when $\omega_i^+$ first supports $v$ at
$p_i=\proposer(v)$, one amplification hop before the entry quorum threshold
$\omega_i^Q$.  The early proposal is passive --- receivers fix it and
authorize repair of its references, but no response fires before
activation --- and it cannot outrun genuine progress.  For $v\ge3$, by
induction over raise events, the earliest correct wish supporting $v$ is an
own raise of the two-response trigger, by whose local step the sender has
emitted its echo-stage response for a view at least $v-2$; an
amplification raise to $\omega^+\ge v$ counts a wish of at least one
correct author with a strictly earlier send and inherits the same
provenance.  Genesis wishes support only views up to~$2$.  Since
$\omega_i^+\ge v$
counts $f{+}1$ authors, at least one correct, the trigger certifies that
progress.  Its payoff appears in the fallback analysis: the wish quorum
behind the first correct entry to $v$ contains $f{+}1$ correct wishes
supporting $v$, which reach $\proposer(v)$ within one delay of that entry,
so the proposer invokes within $\delta$ of the first entry rather than
within the $2\delta$ entry spread (\Cref{thm:agb-live}).  A smaller entry
spread would require a smaller entry threshold: the $f{+}1$ correct wishes
guaranteed behind the first entry reach every correct party within one delay, but
completing a $\quorum$ quorum from them costs the amplification hop, since
the up to $f$ Byzantine wishes inside the first quorum reach only their chosen
recipient.  An entry threshold that may contain fewer than $f{+}1$ correct
wishes would instead let the adversary drive a single correct party's
deadlines arbitrarily far ahead, spending its one-shot refusals.  Signed and certified pacemakers escape
this dilemma by relaying transferable evidence; the proposer-side trigger
recovers the lost delay exactly where the liveness analysis needs it,
because proposing early is harmless where entering early is not.

\begin{lemma}[Pacemaker progress and synchronization]
\label{lem:pacemaker}
\leavevmode
\begin{enumerate}[label=\textup{(\alph*)},leftmargin=2.0em,topsep=2pt,itemsep=3pt]
  \item Every correct party eventually enters every view and emits one
        echo-stage and one initial ready-stage response for every view; a
        provisional READY-mix may additionally refine once.
  \item There is a finite view $V$ such that, for every $v\ge V$, if the first
        correct entry to $v$ occurs at time $s_v$, then every correct party
        enters $v$ by $s_v+2\delta$.
  \item If every party is correct, the same cutoff can be chosen so that every
        correct party enters each $v\ge V$ by $s_v+\delta$.
\end{enumerate}
\end{lemma}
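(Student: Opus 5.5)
For (a) I would argue by contradiction on a maximal stuck view, combining the two-response trigger with the bounded fallback deadlines. Suppose some correct party never enters some view, and let $w$ be the least view that is not entered by all correct parties. Every view $u<w$ is then entered by every correct party, and by R2 and R3 each correct party sends $E_i(u)$ by $e_i(u)+\theta_E$ and $R_i(u)$ by $e_i(u)+\theta_R$. Under execution fairness and eventual timer expiry after GST (drift-free clocks, monotone time) these deadlines are reached, so the responses are sent. The pair $E_i(w-1),R_i(w-2)$ is therefore eventually completed at every correct party. The trigger then raises its own wish to at least $w+1\ge w$, and this wish is piggybacked and broadcast. Reliable channels deliver these $n-f$ correct wishes everywhere, so $\omega^Q\ge w$ at every correct party and each records entry through $w$, which is a contradiction. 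The base cases $w\le 2$ are covered by the genesis wish~$2$ from all correct parties. Uniqueness of responses and the single refinement come directly from R2 and R3 together with the persisted one-shot state, so once entry is established, emission of responses follows from the deadlines.

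For (b) I would first observe that entry to $v$ at the first correct party $p_i$ at time $s_v$ requires $\omega_i^Q\ge v$. That quorum contains at least $n-2f\ge f+1$ correct wishes supporting $v$. Choose $V$ so that these wishes are sent at or after GST, or treat them by the in-flight convention. The views whose first correct entry precedes GST form a finite set by non-Zeno, and part (a) bounds their number, so $V$ can be taken one past the largest such view. These $f+1$ correct wishes reach every correct party by $s_v+\delta$, which gives $\omega^+\ge v$ and triggers amplification. Every correct party then broadcasts a wish for $v$, all $n-f$ such wishes arrive by $s_v+2\delta$, and so $\omega^Q\ge v$ everywhere. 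Because wishes are high-watermarks, a party that has already entered is unaffected, and the monotone in-order entry rule records entry through $v$.

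For (c), when all parties are correct, the quorum behind $p_i$'s entry consists of $n-f$ correct wishes that were already sent before $s_v$. Each of them reaches every party by $s_v+\delta$, so every party counts $\omega^Q\ge v$ directly without the amplification hop. The cutoff $V$ is the same as in (b).

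The main obstacle is the choice of $V$ in (b). I need every wish relevant to views $\ge V$ to be covered by the $\max(t,\mathrm{GST})+\delta$ bound measured from $s_v$ rather than from GST. The in-flight convention only gives $\mathrm{GST}+\delta$. I would therefore pick $V$ so that $s_V\ge\mathrm{GST}$, which exists by (a) and non-Zeno, and then note that wishes sent before GST arrive by $\mathrm{GST}+\delta\le s_v+\delta$.
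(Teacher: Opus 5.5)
Your proof is correct and follows the paper's argument. Part (a) is the same induction on entry, phrased as a least counterexample, driven by the fallback deadlines and the two-response wish trigger; parts (b)--(c) use the same $f{+}1$ correct wishes inside the first entrant's quorum, with one amplification hop in (b) and none in (c). The only difference is the cutoff: you take $V$ just past the last view whose first correct entry precedes GST, which suffices for this lemma, whereas the paper takes $V$ beyond every view entered, proposed, responded to, or wished for before GST, a stronger choice it reuses later to guarantee post-GST proposal sends (\Cref{thm:agb-live}, \Cref{lem:responsive-pacing}).
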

\begin{proof}
For (a), all correct parties enter view~$1$ from genesis; responses for
nonpositive views are fixed, and every correct party initially broadcasts
$\langle\wish,2\rangle$.  Suppose all correct parties eventually enter $v$.
The deadlines for views $v-1$ and $v-2$ force every correct party to emit both
responses, thereby broadcasting a wish of at least $v+1$.  Every correct party
therefore receives a wish quorum for $v+1$ and enters every missing view
through $v+1$.  Induction proves entry to all views; the stage deadlines then
prove emission of every response.

For (b), assume the standard non-Zeno execution convention and choose $V$
larger than every view entered, proposed, responded to, or wished for by a
correct party before GST; this is a finite prefix.  Fix $v\ge V$ and its first
correct entry time $s=s_v\ge\mathrm{GST}$.
Let $x\ge v$ be the target whose traversal caused that entry.  The entering
party counted a wish quorum supporting $x$, including at least $f{+}1$ correct
authors.  Their wishes reach every correct party by $s+\delta$, causing every
correct party to broadcast a wish supporting $x$ if it has not already done
so.  These wishes arrive everywhere by $s+2\delta$, making every $\omega_i^Q$ support
$x$.  Sequential traversal therefore enters $v$ by that time, even if the
target is higher.

For (c), every author in the wish quorum counted at time $s_v$ is correct.
Those same wishes reach every correct party by $s_v+\delta$, directly giving
each of them a wish quorum for the same target.  No amplification hop is
needed.
\end{proof}

The synchronizer also bounds how long a view can hold every correct party.

\begin{lemma}[Eventual worst-case view duration]\label{lem:view-duration}
Let $v$ be a view whose $t_1$ --- the earliest time at which every correct
party has entered view $v$ or higher --- satisfies $t_1\ge\mathrm{GST}$.
Then every correct party enters a view greater than $v$ by
$t_1+\theta_R+\delta=t_1+4\Delta+\delta$.  Vantage's eventual worst-case
view duration, in the sense of~\cite{simpleit}, is therefore
$4\Delta+\delta$.
\end{lemma}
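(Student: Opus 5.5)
The plan is to follow the two-response WISH trigger forward from $t_1$. The goal is to show that, by $t_1+\theta_R$, every correct party has emitted a pair of responses that raises its own wish beyond $v$. From then on, one message delay gives every correct party a wish quorum for a view above $v$. All times below are post-GST, and local deadlines are drift-free because $t_1\ge\mathrm{GST}$.

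\textbf{Step 1 (responses are forced by $t_1+\theta_R$).} Fix a correct party $p_i$. By $t_1$ it has entered some view $w\ge v$. Because traversal records entries in order, it has entered every view up to $w$, and in particular $v$ and $v-1$, each no later than $t_1$.
\begin{itemize}
  \item By R2, its echo-stage response $E_i(v)$ is sent by $e_i(v)+\theta_E\le t_1+3\Delta$.
  \item By R3, its initial ready-stage response $R_i(v-1)$ is sent by $e_i(v-1)+\theta_R\le t_1+4\Delta$.
\end{itemize}
So the pair $\bigl(E_i(v),R_i(v-1)\bigr)$ is complete by $t_1+\theta_R$. Immediately before emitting whichever member completes it, the trigger raises $p_i$'s own wish to at least $v+2$, which is in particular strictly above $v$. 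That wish is piggybacked on the response, and the broadcast includes the sender. Timers armed before $t_1$ only shorten these bounds, by the timing convention.

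There is one edge case: $p_i$ may already have raised its wish past $v+1$ through amplification, or have entered a higher view by other means. Neither hurts the argument, because its own wish is monotone and a higher wish still supports $v+1$. So every correct party's wish supports $v+1$ by $t_1+4\Delta$.

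\textbf{Step 2 (one delay to entry).} The correct parties form a quorum. Their wishes supporting $v+1$ are all sent by $t_1+4\Delta$, so they arrive at every correct party by $t_1+4\Delta+\delta$. At that point $\omega_i^Q\ge v+1$ at every correct $p_i$, and the quorum rule enters every missing view through $v+1$. This gives the bound $t_1+4\Delta+\delta$, and the ``therefore'' clause is then just the definition of view duration in~\cite{simpleit}.

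\textbf{Main obstacle.} The delicate point is Step 1's claim that each correct party's response times are anchored to entries no later than $t_1$. A party might reach view $v$ only by traversal as part of a jump to a higher target. I would argue from the in-order recording of entries that $e_i(v-1)\le e_i(v)\le t_1$ whenever $p_i$ is in a view $\ge v$ at $t_1$. I would also check that a party entering $v$ exactly at $t_1$ still fires the deadline-driven responses; it does, since the fallback rules fire unconditionally at their deadlines once entry is recorded.

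A second check is whether $E_i(v)$ might have been sent before entry through the responsive frontier. In that case it was sent earlier, which only helps.
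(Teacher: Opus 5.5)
Your proof is correct and follows the paper's argument: the R2 and R3 fallback deadlines force a two-response progress pair by $t_1+\theta_R$, the trigger raises every correct wish past $v$, and the $n-f$ correct wishes then form an entry quorum everywhere one delay later without an amplification hop. The differences are cosmetic. You use the pair $(E_i(v),R_i(v-1))$, which raises the wish to $v+2$, where the paper uses $(E_i(v-1),R_i(v-2))$, raising it to $v+1$; both are valid. The paper states the timer bound as $\max\{e_i(\cdot),\mathrm{GST}\}+\theta$, since entries may precede GST, rather than your ``timers armed before $t_1$'' phrasing. The paper also treats explicitly the small-$v$ case where a pair member is a fixed genesis response, which for your pair is $v=1$ with $R_i(0)$.
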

\begin{proof}
Fix a correct party $p_i$.  By $t_1$ it has recorded entry, in order, to
every view through $v$, so $e_i(v-2)\le e_i(v-1)\le e_i(v)\le t_1$; the
entries themselves may precede GST.  The fallback rules close its
echo-stage response for $v-1$ by $\max\{e_i(v-1),\mathrm{GST}\}+\theta_E$
and its initial ready-stage response for $v-2$ by
$\max\{e_i(v-2),\mathrm{GST}\}+\theta_R$: a timer still armed at GST has at
most its full duration remaining once clocks are drift-free.  Hence the
progress pair for $v-1$ completes by $t_1+\theta_R$; for $v=2$ the
  initial ready-stage member of the pair is a fixed genesis response, so the pair
completes with the echo-stage emission alone, and for $v=1$ both members
are fixed and the genesis wish already supports~$2$.  Before emitting the
completing response, $p_i$ raises its
own wish to at least $v+1$ unless its own wish already supports $v+1$; in
either case $p_i$ has broadcast a wish supporting $v+1$ by $t_1+\theta_R$.
The correct parties' wishes supporting $v+1$ --- at least $n{-}f$ of
them --- reach every correct party by
$t_1+\theta_R+\delta$ under the in-flight convention, make $\omega^Q$
support $v+1$ there without an
amplification hop, and sequential traversal enters every view through
$v+1$.
\end{proof}

\subsection{Protocol}\label{sec:agb-rules}

All statements below are broadcast to all parties over authenticated channels.
A party sends at most one echo-stage statement and one \emph{initial}
ready-stage statement for a view.  Echo-stage statements are either proposal
echoes or \echoskip{} and are immutable.  The initial ready-stage statement is
proposal-ready or \noready{}; \noready{} and homogeneous proposal-ready are
final, whereas an initial READY-mix may be followed by at most one
same-proposal homogeneous refinement.  A party counts a statement only after
receiving it directly from its author.  It counts the first ready-stage
statement from an author once toward the stage-response and proposal-ready
censuses.  If that first statement is READY-mix for $B$, it additionally
accepts one later READY-$g$ for the same $B$, $g\in\{0,1\}$, into the
grade-$g$ tally without counting the author again; every other later or
conflicting statement is ignored.  If the homogeneous statement is the first
countable version from that author --- whether due to network reordering or
because digest-named statements awaited body validation --- it is the final
initial statement and a later READY-mix is ignored.  Where READY-mix was counted first, it remains historical
non-grade-$1$ evidence for resolver candidate selection even after refinement.
The sender's own responses are delivered and counted immediately under the
self-delivery convention of \Cref{sec:model}.

In the Vantage wrapper, responses for views $v'\le0$ are fixed genesis responses
treated as received from every party.  Genesis enters every correct party into
view~$1$, raises its responsive proposal frontier to $0$, and makes it broadcast
$\langle\wish,2\rangle$.  Entry, frontier state, and WISH messages are not
evidence for a broadcast instance; safety holds for arbitrary entry and
timeout timings.

\begin{enumerate}[label=\textbf{R\arabic*.},leftmargin=2.8em,topsep=3pt,itemsep=3pt]
  \item \textbf{Propose.}  On its first sender input
        $\rstat{propose}(v,C,T)$, party $p_i=\proposer(v)$ sends
        $\langle\rstat{propose},v,C,T\rangle$ once.  Vantage invokes this input
        when its responsive frontier first satisfies $a_i\ge v-1$ or its wish
        statistic first satisfies $\omega_i^+\ge v$ (\Cref{sec:pacemaker}),
        and constructs
        $(C,T)$ as in \Cref{sec:agb-interface}.  In particular, formal entry
        to $v$ raises that frontier to $v-1$, so a correct Vantage proposer
        invokes the input no later than its own entry; receipt of the preceding
        direct proposal or early wish evidence may make it invoke earlier.
  \item \textbf{Echo stage.}  Proposal messages may arrive before or after a
        party enters the proposal's view.  The party fixes the first view-$v$
        proposal received directly from $\proposer(v)$ and ignores every later
        version.  It checks $\textsc{Formed}_v(C,T)$; a malformed fixed
        proposal is not echoable.  Otherwise the party stores $B_v=(C,T)$ and
        invokes the value domain's authorization and repair hooks for $C$ and
        $T$.  In Vantage, a malformed proposal also cannot advance the
        responsive frontier, and authorization requests from all parties every
        missing block of each named lane prefix.

        A stored proposal becomes \emph{active} after the local environment
        invokes $\rstat{activate}(v)$ or $\rstat{enter}(v)$.  In Vantage,
        processing the fixed proposal invokes $\rstat{activate}(v)$ exactly when
        it advances the responsive frontier to $v$; a buffered proposal can
        therefore activate when the missing contiguous prefix arrives.  From
        activation onward,
        while the echo-stage response is pending, the party continuously
        evaluates the positive gate:
        $\textsc{CoreOK}_i(C)$ and $\textsc{TipOK}_i(C,T)$ both hold.  In
        Vantage, the tip predicate means
        $\textsc{AuthorOK}_i$ holds for every entry of $T$ and the
        \emph{tip-anchoring check} holds: for every author index present in
        both manifests, the party holds the tip entry's full lane prefix and
        that lane prefix strictly contains the author's $C$ entry.  This is a pure local hash
        walk, so counted acknowledgments alone never substitute for a paired
        tip.  When the gate first holds, the party broadcasts
        $\langle\rstat{echo},v,C,T,1\rangle$, even if formal entry has not yet
        occurred.

        Formal entry to $v$ additionally arms the fallback scheduler.  If the
        response is still pending and a well-formed proposal is fixed, let
        $t$ be the later of entry and its direct
        receipt.  The fallback window is
        $[t,\,\min\{t+\Delta,e_i(v)+\theta_E\}]$.  If no grade-$1$ echo has
        fired by its end, the party broadcasts
        $\langle\rstat{echo},v,C,T,0\rangle$ if $\textsc{CoreOK}_i(C)$ holds, and
        $\langle\echoskip,v\rangle$ otherwise.  If no echoable proposal becomes
        active by the absolute fallback deadline $e_i(v)+\theta_E$, the party
        instead broadcasts $\langle\echoskip,v\rangle$.  A well-formed
        proposal delivered
        after that deadline can no longer be echoed; it is still fixed and
        its references are authorized for repair, and in Vantage the
        responsive frontier processes it.  No later echo-stage
        statement for $v$ is allowed; the pre-entry and post-entry routes share
        this single immutable response state.
  \item \textbf{Ready stage.}  Entry to view $v$ arms its ready-stage refusal
        deadline.  Independently of entry, while the initial response is pending the
        party continuously evaluates the first-hand echoes it has counted.
        As soon as a quorum of proposal
        echoes names a common view proposal $B_v=(C,T)$, it broadcasts the
        proposal-ready statement $\langle\rstat{ready},v,C,T,g\rangle$,
        where, over all echoes counted at emission, the grade $g$ is $1$ if it has counted $\quorum$ grade-$1$
        proposal echoes for $B_v$, $0$ if it has counted $\quorum$ grade-$0$
        proposal echoes for $B_v$, and the mixed grade $\mathsf{mix}$
        otherwise.  Grades $0$ and $1$ finalize the sender's ready stage.  A
        mixed grade is provisional: while fewer than all $n$ echo-stage
        responses have been counted and the ready deadline has not fired, the
        sender continues evaluating echoes for the same $B_v$.  If one grade
        reaches $\quorum$, it broadcasts one READY-$g$ refinement for $B_v$
        and finalizes.  Counting all $n$ echo-stage responses without such a
        quorum, or reaching $e_i(v)+\theta_R$, instead finalizes the mixed
        response without another statement; deadline ties process delivered
        messages first.  The refinement is not a second initial response and
        does not invoke the WISH progress hook.  The Vantage wrapper does not
        quarantine on this provisional response.  If formal entry has occurred and
        no initial proposal-ready has fired by $e_i(v)+\theta_R$, it
        broadcasts the refusal $\langle\noready,v\rangle$.  Wishes and entry
        notifications never satisfy the echo gate or count toward a ready
        quorum.
  \item \textbf{Completion and direct result.}  Upon counting a ready quorum
        for a common $B_v=(C,T)$, disregarding grades and counting each
        author once, record
        $\rstat{complete}(v)\to B_v$.  On the first homogeneous ready quorum,
        emit $\rstat{direct-seal}(v)\to\gfull(C,T)$ for grade $1$ or
        $\rstat{direct-seal}(v)\to\gcore(C)$ for grade $0$.
        A completed view with no homogeneous ready quorum remains
        $\gopen(C,T)$ at the Direct AGB boundary.  Vantage may close it through
        the resolver adapter of \Cref{app:protocol}.
        In the same local transition that first completes a proposal, if
        neither grade has a ready quorum, the Vantage wrapper applies the
        quarantine update of \Cref{sec:agb-interface} to its non-exact-quorum tip
        entries.  If a homogeneous quorum completes the proposal, completion
        and direct seal occur atomically and no quarantine is created.
        Ready counting continues after completion, so a late-arriving ready
        that closes a homogeneous quorum produces a direct result.  Vantage
        submits every such result to its caller-owned $\rstat{try-seal}$ arbiter;
        a compatible result arriving after a terminal Vantage seal is ignored
        by that arbiter.
        Neither completion nor the direct result waits for formal entry.  When a
        homogeneous quorum is the first completing quorum, completion and the
        corresponding $\rstat{direct-seal}$ event occur in the same local step.
\end{enumerate}

Vantage adds the following caller-side rule without changing Direct AGB.  For
a proposal $B=(C,T)$, a \emph{matching response} is a grade-$1$ proposal echo
for exactly $B$.  Immediately before a correct
party sends a matching response, it records an optimistic lock $L_i(v,B)$.
The lock is \emph{active} exactly while the party has counted fewer than
$f{+}1$ first-hand echo-stage responses for $v$ that are not matching responses
for $B$.  Thus a lock may be born inactive, and once inactive it never becomes
active again.  Upon counting matching responses from all $n$ parties, the
wrapper emits
\[
  \fastseal(v)\to\gfull(C,T)
\]
once and submits that result to $\rstat{try-seal}$.  It does not fire
$\rstat{complete}$ or $\rstat{direct-seal}$, expose a core separately, or create
a resolution witness; R3 and R4 continue normally.  The resolver-side use of
$L_i$ is defined in \Cref{sec:protocol}.

The interface and R1--R4 above define Direct AGB.  Each caller initializes
$\mathit{quarantine}_i[a]=\bot$, its two retry counters to zero, and
$\mathit{ownProposalTurn}_i=0$ once per session; this caller state persists
across views.  \Cref{alg:agb} gives
Vantage's concrete realization in event form; its frontier, WISH, manifest
construction, and $\rstat{try-seal}$ lines are caller-side wrappers rather than
additional Direct AGB requirements.  \Cref{alg:fastseal} separately gives the
Vantage-only fast- and skip-seal and lock-release hooks so they are not
mistaken for Direct AGB rules.
\begin{algorithm*}[tp]
\SetAlgoSkip{}
\caption{Direct Availability-Graded Broadcast as instantiated by Vantage for
  view $v$ at party $p_i$ (R1--R4; $Q=n-f$)}
\label{alg:agb}
\scriptsize\linespread{0.74}\selectfont
\Glob{}{
  $\mathit{fixed}_v\gets\bot$;
  $\mathit{echo}_v\gets\mathsf{pending}$;
  $\mathit{ready}_v\gets\mathsf{pending}$
    \Comment*[r]{later $\mathsf{provisional}(B)$ or $\mathsf{final}$}
  $\mathit{completed}_v,\mathit{directed}_v\gets\bot$\;
  $\mathit{active}_v,\mathit{activated}_v,\mathit{proposed}_v
    \gets\mathsf{false}$
    \Comment*[r]{$\mathit{active}_v$: activated with a well-formed fixed proposal}
  $e_i(v),\rho_i(v)\gets\bot$;
  $\theta_E=3\Delta$;
  $\theta_R=4\Delta$;
  caller state (\Cref{sec:agb-interface,sec:pacemaker}): responsive frontier
    $a_i$, wish statistic $\omega_i^+$, $\mathit{ownProposalTurn}_i$, and
    per-author quarantine slots with $(\mathit{probeAttempt},\mathit{probeDue})$
}

\Upon{$p_i=\proposer(v)$, not $\mathit{proposed}_v$, and either
  $a_i\ge v-1$ or $\omega_i^+\ge v$}{
  $\mathit{proposed}_v\gets\mathsf{true}$;
  $\mathit{ownProposalTurn}_i\gets\mathit{ownProposalTurn}_i+1$;
  clear quarantine slots that are exact-$Q$-available or core-contained;
  $(C,T)\gets\textsc{FormManifests}_i(\mathit{ownProposalTurn}_i)$
    \Comment*[r]{exact-$Q$ core; stable tips; at most one due probe}
  for the selected probe, if any: $\mathit{probeAttempt}\gets\mathit{probeAttempt}+1$
    and $\mathit{probeDue}\gets\mathit{ownProposalTurn}_i+2^{\mathit{probeAttempt}-1}$\;
  \textbf{broadcast} $\langle\rstat{propose},v,C,T\rangle$
    \Comment*[r]{R1}
}

\Upon{local $\rstat{enter}(v)$ or $\rstat{activate}(v)$ from the Vantage
  pacemaker/frontier wrapper}{
  \If{this is $\rstat{enter}(v)$ and $e_i(v)=\bot$}{
    $e_i(v)\gets\mathsf{now}$; $\mathit{activated}_v\gets\mathsf{true}$;
      raise $a_i$ to at least $v-1$;
    arm deadlines $e_i(v)+\theta_E$ and $e_i(v)+\theta_R$
  }
  \If{this is $\rstat{activate}(v)$}{
    $\mathit{activated}_v\gets\mathsf{true}$
  }
  \If{$\mathit{activated}_v$ and $\mathit{fixed}_v=(C,T)$}{
    $\mathit{active}_v\gets\mathsf{true}$
  }
}

\Upon{$\mathit{fixed}_{a_i+1}$ is a well-formed proposal}{
  $a_i\gets a_i+1$; invoke $\rstat{activate}(a_i)$
    \Comment*[r]{responsive frontier; repeats through buffered proposals}
}

\Upon{the first direct $\langle\rstat{propose},v,C,T\rangle$ from
  $\proposer(v)$ while $\mathit{fixed}_v=\bot$}{
  $\rho_i(v)\gets\mathsf{now}$\;
  \eIf{$(C,T)$ is malformed}{
    $\mathit{fixed}_v\gets\mathsf{reject}$
      \Comment*[r]{later proposer versions remain ignored}
  }{
    $\mathit{fixed}_v\gets(C,T)$; store $B_v$;
    invoke \textsc{Authorize} for each lane prefix named by $C$ or $T$
      \Comment*[r]{typed repair; never provenance}
    \lIf{$\mathit{activated}_v$}{$\mathit{active}_v\gets\mathsf{true}$}
  }
}

\Upon{$\mathit{echo}_v$ is pending and $\mathit{active}_v$,
  evaluated whenever local state changes}{
  let $(C,T)\gets\mathit{fixed}_v$\;
  \If{$\textsc{CoreOK}_i(C)$ and $\textsc{TipOK}_i(C,T)$}{
    $\mathit{echo}_v\gets\mathsf{sent}$
      \Comment*[r]{one immutable echo-stage response}
    $\textsc{Respond}(\langle\rstat{echo},v,C,T,1\rangle)$
      \Comment*[r]{R2 positive path}
  }
}

\Upon{$e_i(v)\ne\bot$, $\mathit{fixed}_v=(C,T)$, $\mathit{echo}_v$ is pending,
  and the fallback deadline
  $d_i^E(v)=\min\{\max\{e_i(v),\rho_i(v)\}+\Delta,\,e_i(v)+\theta_E\}$
  passes}{
  $\mathit{echo}_v\gets\mathsf{sent}$;
  \eIf{$\textsc{CoreOK}_i(C)$}{
    $\textsc{Respond}(\langle\rstat{echo},v,C,T,0\rangle)$
  }{
    $\textsc{Respond}(\langle\echoskip,v\rangle)$
  }
}

\Upon{$e_i(v)\ne\bot$, the deadline $e_i(v)+\theta_E$ passes, $\mathit{echo}_v$
  is pending, and no active well-formed fixed proposal exists}{
  $\mathit{echo}_v\gets\mathsf{sent}$;
  $\textsc{Respond}(\langle\echoskip,v\rangle)$
    \Comment*[r]{no active echoable proposal}
}

\Upon{the counted echo set changes and $\mathit{ready}_v$ is pending}{
  \If{some $B=(C,T)$ has $Q$ proposal echoes}{
    $g\gets 1$ if $Q$ counted echoes for $B$ have grade $1$;
      $0$ if $Q$ have grade $0$; otherwise $\mathsf{mix}$\;
    $\mathit{ready}_v\gets\mathsf{provisional}(B)$ if
      $g=\mathsf{mix}$, else $\mathsf{final}$;
    $\textsc{Respond}(\langle\rstat{ready},v,C,T,g\rangle)$
      \Comment*[r]{R3; no entry or own-echo guard}
  }
}

\Upon{the counted echo set changes and
  $\mathit{ready}_v=\mathsf{provisional}(B)$}{
  \If{$Q$ counted echoes for $B$ have one grade $g\in\{0,1\}$}{
    $\mathit{ready}_v\gets\mathsf{final}$;
    \textbf{broadcast} $\langle\rstat{ready},v,C,T,g\rangle$ with the own
      \wish{} high-watermark piggybacked
      \Comment*[r]{same-value homogeneous refinement; no second WISH hook}
  }
  \lIf{$\mathit{ready}_v$ is still provisional and all $n$ echo-stage
    responses are counted}{$\mathit{ready}_v\gets\mathsf{final}$}
}

\Upon{$e_i(v)\ne\bot$ and the deadline $e_i(v)+\theta_R$ passes}{
  \If{$\mathit{ready}_v=\mathsf{provisional}(B)$}{
    $\mathit{ready}_v\gets\mathsf{final}$
      \Comment*[r]{finalize the mixed response locally; no second statement}
  }
  \If{$\mathit{ready}_v=\mathsf{pending}$}{
    $\mathit{ready}_v\gets\mathsf{final}$;
    $\textsc{Respond}(\langle\noready,v\rangle)$
  }
}

\Upon{the counted ready set changes}{
  count each author once; a same-$B$ refinement after READY-mix changes only
    that author's grade tally\;
  \ForEach{$B=(C,T)$ named by counted proposal-ready statements}{
    \If{$\mathit{completed}_v=\bot$ and at least $Q$ name $B$}{
      $\mathit{completed}_v\gets B$; store $B$; authorize lane prefixes named
        by $C$;
      \If{fewer than $Q$ grade-$1$ and fewer than $Q$ grade-$0$ READYs name $B$}{
        put each non-exact-$Q$ tip in its empty quarantine slot and set its
          $(\mathit{probeAttempt},\mathit{probeDue})\gets(0,
          \mathit{ownProposalTurn}_i+1)$
      }
      fire $\rstat{complete}(v)\to B$
        \Comment*[r]{R4; makes the core irrevocable; the composition runs the cursor}
    }
    \If{$\mathit{directed}_v=\bot$ and at least $Q$ grade-$1$ readies name $B$}{
      $\mathit{directed}_v\gets\gfull(C,T)$;
      fire $\rstat{direct-seal}(v)\to\gfull(C,T)$ and submit it to
        the caller's $\rstat{try-seal}$
    }
    \ElseIf{$\mathit{directed}_v=\bot$ and at least $Q$ grade-$0$ readies name $B$}{
      $\mathit{directed}_v\gets\gcore(C)$;
      fire $\rstat{direct-seal}(v)\to\gcore(C)$ and submit it to
        the caller's $\rstat{try-seal}$
    }
  }
}

\Fn{\textsc{Respond}$(m)$}{
  attach $\textsc{AckClaims}_i(B)$ outside the identity of any proposal ECHO;
  apply \Cref{sec:pacemaker}'s \wish{} hook and piggyback the own \wish{}
    high-watermark outside $m$'s identity; \textbf{broadcast} $m$
}

\end{algorithm*}

\begin{algorithm}[tp]
\SetAlgoSkip{}
\caption{Vantage's fast- and skip-seal wrappers and resolution stances at $p_i$}
\label{alg:fastseal}
\footnotesize
\Comment{Hooks into Direct AGB; $Q=n-f$; count only the first direct response
  per author}
\Glob{}{
  $\mathit{fastLock}_v\gets\bot$;
  $\mathit{fastReleased}_v,\mathit{fasted}_v\gets\mathsf{false}$;
  for every target $u$: the persistent stance $z_i(u)\gets\mathsf{free}$ and
  $\mathit{skipSent}_u,\mathit{skipSubmitted}_u\gets\mathsf{false}$
}

\Upon{about to emit grade-$1$ ECHO for $B=(C,T)$ in view $v$}{
  $\mathit{fastLock}_v\gets(C,T)$\;
  $\mathit{fastReleased}_v\gets\mathsf{true}$ iff already $f+1$ counted
    echo-stage responses are not matching responses for $B$\;
  emit the Direct AGB ECHO
    \Comment*[r]{the lock is recorded first}
}

\Upon{the counted view-$v$ echo set changes}{
  \If{$\mathit{fastLock}_v=(C,T)$ and $f+1$ counted responses are not
    matching responses for $(C,T)$}{
    $\mathit{fastReleased}_v\gets\mathsf{true}$
      \Comment*[r]{the lock is active iff $\mathit{fastLock}_v\ne\bot$ and not $\mathit{fastReleased}_v$}
  }
  \If{not $\mathit{fasted}_v$ and all $n$ responses are matching responses
    for some $(C,T)$}{
    $\mathit{fasted}_v\gets\mathsf{true}$; fire
      $\fastseal(v)\to\gfull(C,T)$; submit it to $\rstat{try-seal}$
      \Comment*[r]{no Direct AGB completion}
  }
}

\Upon{$R_i(u)=\noready$, at least $Q$ counted target-view responses are
  \echoskip{}, $z_i(u)=\mathsf{free}$, the known terminal outcome
  for $u$ is absent or $\gskip$, and not $\mathit{skipSent}_u$}{
  $z_i(u)\gets\mathsf{skip\mbox{-}voted}$;
  $\mathit{skipSent}_u\gets\mathsf{true}$\;
  \textbf{broadcast} $\langle\skipvote,u\rangle$
    \Comment*[r]{persist the stance first; no $f+1$ amplification}
}

\Upon{at least $Q$ direct $\langle\skipvote,u\rangle$ statements are counted
  and not $\mathit{skipSubmitted}_u$}{
  $\mathit{skipSubmitted}_u\gets\mathsf{true}$; fire
    $\skipseal(u)\to\gskip$; submit it to $\rstat{try-seal}$
    \Comment*[r]{no Direct AGB completion}
}

\end{algorithm}

Grades, including a permitted homogeneous refinement, are outside the counted
view-proposal identity: mixed grades can never prevent completion.  A residual
mixed grade leaves only the tip open; a refinement may instead close a later
homogeneous quorum.  Grade $0$ is positive support for a
terminal core-only outcome, not merely the absence of grade $1$.  A ready
carries no transferable echo certificate; its support is checked only by its
sender.

\begin{lemma}[Responsive proposal pacing]\label{lem:responsive-pacing}
Suppose every party is correct and $v$ is past the cutoff of
\Cref{lem:pacemaker}(c).  Let $\beta_v$ be the time $B_v$ is sent.  Then
\[
  \beta_{v+1}\le \max_{u\le v}\beta_u+\delta ,
\]
so the least not-yet-proposed view always follows within $\delta$ of the
latest send.  Call $B_v$ \emph{prefix-extending} when every lower view was
sent by $\beta_v$.  Suppose $B_v$ is prefix-extending, the
$\textsc{AuthorOK}$ and tip-anchoring checks for $B_v$ pass everywhere by
$\beta_v+\delta$.  Then every
correct party obtains the ordinary Direct AGB full result by
$\beta_v+3\delta$, and every correct party also emits and
submits the Vantage fast result $\gfull(C_v,T_v)$ by $\beta_v+2\delta$.
The send times need not be monotone in the view number: the early wish
trigger of R1 may send a proposal before an older one, and receivers buffer
it until its contiguous prefix arrives or formal entry activates it.
\end{lemma}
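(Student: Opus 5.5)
The proof has two parts: the pacing inequality, then the latency of a single prefix-extending proposal.

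\emph{Pacing.} Let $w$ be the least not-yet-proposed view and $\beta^\ast=\max_{u<w}\beta_u$. Every view $u<w$ has been sent by $\beta^\ast$, and all parties are correct. Hence every proposal $B_u$ with $u<w$ reaches $\proposer(w)$ by $\beta^\ast+\delta$; past the cutoff all these sends are after GST. Each $B_u$ is well formed because its proposer is correct, so $\proposer(w)$'s responsive frontier then advances contiguously through $w-1$, giving $a_i\ge w-1$. Rule R1 therefore fires by $\beta^\ast+\delta$, which yields $\beta_{v+1}\le\max_{u\le v}\beta_u+\delta$. One point needs care: some $B_u$ with $u<w$ may have been sent early through the $\omega^+$ trigger, out of order. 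Taking the maximum over all lower views absorbs this, since the frontier only needs every lower proposal to have arrived. Non-Zeno fairness guarantees the enabled R1 action is taken.

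\emph{Fast seal by $\beta_v+2\delta$.} Since $B_v$ is prefix-extending, all $B_u$ with $u\le v$ were sent by $\beta_v$ and arrive everywhere by $\beta_v+\delta$. So every party's responsive frontier reaches $v$ and activates $B_v$ by $\beta_v+\delta$, without needing formal entry. By hypothesis $\textsc{AuthorOK}$ and tip anchoring hold everywhere by $\beta_v+\delta$, so $\textsc{CoreOK}$ and $\textsc{TipOK}$ hold; both predicates are persistent. Each party must therefore send a grade-$1$ ECHO by $\beta_v+\delta$.

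The main obstacle is ruling out a grade-$0$ or \echoskip{} fallback firing first, for example because formal entry happened early. The fallback window starts at the later of entry and proposal receipt and lasts at least $\Delta\ge\delta$, so it cannot close before $\beta_v+\delta$. By the deadline-ordering convention, the positive gate is evaluated before the timeout rule at a tie. The absolute deadline $e_i(v)+\theta_E$ is also not hit before activation: by \Cref{lem:pacemaker}(c), entries are within $\delta$ of each other, and the first entry to $v$ is no earlier than $\beta_v-\delta$ or so. I would pin this down via the provenance argument for wishes in \Cref{sec:pacemaker} and $\theta_E=3\Delta$. This is the delicate step, and I would try to show $e_i(v)\ge\beta_v-2\delta$ directly.

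Given that all $n$ grade-$1$ ECHOs are sent by $\beta_v+\delta$, they are counted everywhere by $\beta_v+2\delta$, and the fast-seal wrapper emits $\gfull(C_v,T_v)$ then.

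\emph{READY path by $\beta_v+3\delta$.} By $\beta_v+2\delta$ each party has counted $Q$ grade-$1$ ECHOs for $B_v$. It therefore sends READY-$1$, unless it had already sent an initial READY. That earlier READY cannot be mixed or of grade $0$, because no correct party sends grade $0$. It also cannot be \noready{}, because $e_i(v)+\theta_R$ lies beyond $\beta_v+2\delta$ by the same entry bound as above. These READYs arrive by $\beta_v+3\delta$. The homogeneous quorum then completes $B_v$ and emits the direct seal $\gfull$ (R4).
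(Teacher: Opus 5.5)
\paragraph{Verdict.}
Your decomposition matches the paper's proof step for step. Delivery of every lower proposal by $\max_{u\le v}\beta_u+\delta$ advances $\proposer(v+1)$'s responsive frontier and fires R1. Prefix extension activates $B_v$ everywhere by $\beta_v+\delta$ without formal entry. The fallback branches are excluded, all $n$ grade-$1$ ECHOs are counted by $\beta_v+2\delta$, and the READY-$1$ quorum arrives by $\beta_v+3\delta$.

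\paragraph{The gap.}
The gap is the step you yourself call delicate and leave at ``no earlier than $\beta_v-\delta$ or so'': you need a lower bound on each formal entry time $e_i(v)$ in terms of $\beta_v$. Without it, an early entry could make one of two deadlines fire before the positive gates, and then the all-$n$ fast census fails.
\begin{itemize}
\item The deadline $e_i(v)+\theta_E$ can fire first. It caps the grade-$0$ window $\min\{t+\Delta,e_i(v)+\theta_E\}$ as well as triggering the absolute \echoskip{} branch. So you must show it does not precede the gate time $\beta_v+\delta$; showing it does not precede activation is not enough.
\item The deadline $e_i(v)+\theta_R$ can fire first and emit \noready{}.
\end{itemize}

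\paragraph{Why the proposed tool fails.}
The wish-provenance argument of \Cref{sec:pacemaker} points the wrong way. It shows that an early wish-triggered proposal cannot outrun genuine progress, so it bounds proposal sends from below. What you need is an upper bound on $\beta_v$ in terms of the first correct entry $s_v$.

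\paragraph{The missing observation.}
Formal entry to $v$ raises the frontier to $v-1$, so a correct $\proposer(v)$ invokes R1 no later than its own entry. By \Cref{lem:pacemaker}(c), the proposer enters by $s_v+\delta$. Hence $\beta_v\le s_v+\delta$, which gives $e_i(v)\ge s_v\ge\beta_v-\delta$ for every party. Two consequences follow:
\begin{itemize}
\item $e_i(v)+\theta_E\ge\beta_v+3\Delta-\delta\ge\beta_v+\delta$;
\item $e_i(v)+\theta_R\ge\beta_v+4\Delta-\delta$, which exceeds the ready gate time $\beta_v+2\delta$.
\end{itemize}
With the deadline-tie convention, the rest of your argument then goes through. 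Your weaker target $e_i(v)\ge\beta_v-2\delta$ would also have sufficed.

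\paragraph{A minor imprecision in the pacing part.}
Views below the cutoff may have been proposed before GST, so ``all these sends are after GST'' is not literally true. What you actually need is $\beta_v\ge\mathrm{GST}$, which the cutoff guarantees. The in-flight convention then still delivers every lower proposal by $\max_{u\le v}\beta_u+\delta$.
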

\begin{proof}
Because $v$ is past the cutoff, $\beta_v\ge\mathrm{GST}$.  Let
$z=\max_{u\le v}\beta_u$.  In an all-correct execution every view $u\le v$
is proposed, and by $z$ every such $B_u$ has been sent; the in-flight
convention delivers all of them everywhere by $z+\delta$, and in-order
processing of buffered fixed proposals from genesis raises every correct
frontier to $v$ by $z+\delta$ --- a proposal delivered after its echo
deadline is still fixed and processed by the frontier (R2).  Hence $\proposer(v+1)$ sends $B_{v+1}$ by
$z+\delta$ unless it already did, proving the first claim.

For a prefix-extending $B_v$, the same delivery argument with $z=\beta_v$
activates $B_v$ at every correct party by $\beta_v+\delta$, so
under the stated checks every positive echo gate holds by $\beta_v+\delta$.  No
fallback response can preempt it: fallback timers exist only after formal
entry, and the first correct formal entry satisfies $s_v\ge \beta_v-\delta$,
since the all-correct spread of \Cref{lem:pacemaker}(c) makes the proposer
enter and send by $s_v+\delta$.  Party $p_i$'s echo fallback window starts
at $t\ge$ its direct receipt of $B_v\ge \beta_v$ and therefore ends at
$\min\{t+\Delta,\,e_i(v)+\theta_E\}\ge \beta_v+\delta$, using
$e_i(v)+\theta_E\ge \beta_v+3\Delta-\delta$ from $e_i(v)\ge s_v$; the absolute
\echoskip{} deadline is the same bound, and the \noready{} deadline
$e_i(v)+\theta_R\ge \beta_v+4\Delta-\delta$ exceeds the ready gate time
$\beta_v+2\delta$ below.  Messages delivered by a deadline are processed before
its timeout branch, so every echo is grade~$1$ and fires by $\beta_v+\delta$.
The echoes arrive everywhere by $\beta_v+2\delta$; these are all $n$ matching
responses for $B_v$, so the Vantage wrapper emits its fast full result then.
Independently, positive proposal-ready fires even before formal entry, and
those statements arrive everywhere by $\beta_v+3\delta$, producing the
ordinary Direct AGB full result.
All proposal echoes in this all-correct execution have grade $1$, so no
completed-open quarantine is created.
\end{proof}

\subsection{Guarantees}\label{sec:agb-guarantees}

Every quorum below is first-hand and counts distinct authors.  Manifests travel
by value; replacing them by hashes is an optimization outside these proofs.
Every decisive threshold is a quorum or $f{+}1$, and WISH never contributes to
one.  R2 treats the eligibility hooks as opaque local
predicates; only their witness contract and the composition results depend on
their content.

Every correct proposal-ready sender counted a proposal-echo quorum for the
named proposal; a correct grade-$g$ sender, $g\in\{0,1\}$, counted a
grade-$g$ echo quorum.  Hence any two correct proposal-readies name the same
proposal, and correct grade-$1$ and grade-$0$ proposal-readies cannot coexist:
their grounding echo quorums would intersect in a correct immutable echoer.

The direct safety theorem below treats the eligibility hooks
opaquely.  It concerns only completions, the initial-response and refinement
discipline, and direct R4 seals.  Cross-party validity and agreement of resolver-supplied terminal seals
are separate composition properties proved in
\Cref{lem:direct-resolution,lem:vantage-resolver,thm:resolved-agb}.

\begin{theorem}[Safety]\label{thm:agb-safety}
In every execution, regardless of timing:
\begin{enumerate}[label=\textup{(\alph*)},leftmargin=2.0em,topsep=2pt,itemsep=3pt]
  \item \textup{(Local uniqueness and sender integrity)} Each correct party
        completes $v$ at most once and emits at most one $\rstat{direct-seal}$
        result.  Every completion names a proposal
        received directly from $\proposer(v)$ by a correct echoer.  If
        $\proposer(v)$ is correct and proposes
        $B_v=(C,T)$, every completion names that proposal and every direct
        result for view $v$ names the payload $(C,T)$.
  \item \textup{(Value agreement)} All correct proposal-ready statements,
        proposal-ready quorums, and completions for $v$ derive from one view
        proposal $B_v=(C,T)$.
  \item \textup{(Grade compatibility)} A grade-$1$ proposal-ready quorum and a
        grade-$0$ proposal-ready quorum cannot both exist.  Hence correct
        parties never emit conflicting direct outcomes $\gfull$ and
        $\gcore$; either may coexist with the same $\gopen(C,T)$.  Parties
        that have not counted a completion quorum may still remain unknown.
  \item \textup{(Eligibility witness)} Completion implies
        $\textsc{Witnessed}(C)$; a grade-$1$ proposal-ready quorum also implies
        $\textsc{Witnessed}(T)$.  In Vantage these properties mean that every
        named valid lane prefix is author-backed and retrievable.
  \item \textup{(Ready--refusal exclusion)} A proposal-ready quorum and a
        no-ready quorum for view $v$ cannot both exist.
\end{enumerate}
\end{theorem}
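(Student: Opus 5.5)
The plan is to derive all five items from one structural fact: every counted statement is first-hand, and correct echo-stage and initial ready-stage responses are one-shot. A correct party sends one immutable echo-stage response per view. It sends one initial ready-stage response, which can be followed only by a same-proposal homogeneous refinement of a READY-mix. I would first record the grounding invariant stated just before the theorem. A correct proposal-ready for $B$ is sent only after its sender counted $Q=n-f$ proposal echoes for $B$. A correct READY-$g$, whether initial or refined, is sent only after its sender counted $Q$ grade-$g$ echoes for $B$. Quorum intersection (property (ii) of \Cref{sec:model}) then gives that two correct proposal-readies name the same $B$. Their grounding echo quorums share a correct echoer, and that echoer's single immutable echo names one proposal. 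This gives (b): any proposal-ready quorum contains at least $n-2f\ge f+1$ correct senders, so two completions for different proposals would yield conflicting correct readies.

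For (a), local uniqueness follows directly from the $\mathit{completed}_v$ and $\mathit{directed}_v$ guards in R4. For sender integrity, a completing quorum contains a correct readier. Its echo quorum contains a correct echoer, which echoes only the proposal it fixed as the first one received directly from $\proposer(v)$ (R2). If $\proposer(v)$ is correct it sends exactly one proposal, so that is the named $B_v$. Every direct result then names the payload of the unique completion.

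For (c), suppose a grade-$1$ ready quorum and a grade-$0$ ready quorum both exist. Each contains a correct sender of that grade. The grade-$1$ sender counted a grade-$1$ echo quorum and the grade-$0$ sender counted a grade-$0$ echo quorum. These two quorums share a correct echoer, which would have sent both grades, contradicting immutability. The refinement rule needs one check here: a refinement to grade $g$ is still grounded in a grade-$g$ echo quorum, and each author is counted once in each grade tally. So a Byzantine author contributes at most one vote per tally, and the correct-sender argument is unaffected. Coexistence with $\gopen$ is immediate because a direct result and $\gopen$ name the same completed $B$.

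For (d), the grounding chain again gives a correct echoer of $B$. Any echo of $B$, of either grade, requires $\textsc{CoreOK}_i(C)$. A grade-$1$ echo also requires $\textsc{TipOK}_i(C,T)$. A grade-$1$ quorum contains a correct grade-$1$ readier and therefore a correct grade-$1$ echoer. The contract implications $\textsc{CoreOK}\Rightarrow\textsc{Witnessed}$ and $\textsc{TipOK}\Rightarrow\textsc{Witnessed}$ then apply, and in Vantage they are instantiated through \Cref{lem:author-witness,lem:availability-witness}. For (e), a proposal-ready quorum and a \noready{} quorum would share a correct party. That party's initial ready-stage response is either a proposal-ready or \noready{} but never both, and a refinement is never a \noready{}.

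I expect the main obstacle to be the bookkeeping around refinements and reordering. Some correct author's first counted statement may be its homogeneous refinement while the READY-mix is ignored. The task is to show that every counted grade-$g$ vote from a correct author is still grounded in a grade-$g$ echo quorum, and that no author is counted twice toward any census. I would handle this with a small lemma: for each correct author, the set of statements any party counts from it consists of one proposal identity and at most one homogeneous grade, and both are grounded.
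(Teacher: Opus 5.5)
Your proposal is correct and follows essentially the paper's proof. Both rest on the grounding fact that every correct READY, initial or refined, sits on a first-hand echo quorum for the matching proposal and grade; quorum intersection in a correct one-shot echoer, or in a correct initial ready-stage author, then gives (a)--(e) as you describe. Your per-author counting lemma for refinements and reordering only makes explicit what the paper folds into that grounding fact and into its remark that a refinement adds no new author to either census.
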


\begin{proof}
\emph{(a)} A correct party acts only on its first completion and first direct
result.  Every triggering proposal-ready quorum contains a correct sender,
whose grounding echo quorum contains a correct echo.  That echo was sent only
for a proposal received directly from $\proposer(v)$, proving sender integrity.
Against a correct proposer it names the sender's unique $B_v$.

\emph{(b)} The grounding fact gives one proposal for all correct
proposal-readies.  Every proposal-ready quorum contains a correct sender, so
all such quorums and the completions they trigger use that proposal.

\emph{(c)} Quorums of grade-$1$ and grade-$0$ proposal-readies would contain
correct senders of both grades, which the grounding fact excludes.

\emph{(d)} A completing quorum contains a correct ready sender, whose grounding
echo quorum contains a correct party that passed $\textsc{CoreOK}$; the
eligibility contract gives $\textsc{Witnessed}(C)$.  For a grade-$1$ ready
quorum, choose a correct grade-$1$ sender and apply the same argument to its
grade-$1$ echo quorum and $\textsc{TipOK}$.  The concrete Vantage consequence
follows from \Cref{lem:availability-witness,lem:author-witness}.

\emph{(e)} The two quorums count $\quorum$ distinct initial ready-stage
authors each, so they share at least one correct party.  That party's initial
choice cannot be both proposal-ready and \noready{}; a READY-mix refinement
does not add another author to either quorum.
\end{proof}

\paragraph{Graded Complete--Adopt relation.}
Reading a correct party's initial proposal-ready response as
$\rstat{adopt}(v,B,g)$, $g\in\{0,\mathsf{mix},1\}$, with a possible monotone
update from $\mathsf{mix}$ to a homogeneous grade, and reading its \noready{}
response as $\rstat{no-adopt}(v)$ gives a graded analogue of BBCA's
Complete--Adopt interface~\cite{bbcachain}, directly from
\Cref{thm:agb-safety} and the grounding argument above: all correct adopters
name one proposal, and grade-$1$ and grade-$0$ adopters cannot coexist; a
completion implies at least $f{+}1$ correct adopters of its proposal, while
a quorum of $\rstat{no-adopt}$ responses excludes every completion; and each
correct adopter's grounding echo quorum establishes $\textsc{Witnessed}(C)$
through the eligibility contract, a grade-$1$ adopter's likewise
$\textsc{Witnessed}(T)$.  These are names for already emitted responses, not
a probe operation, and no totality is added: without the Vantage resolver,
some correct parties may never complete or obtain a direct result for a
Byzantine-proposer instance.

\begin{theorem}[Conditional termination]\label{thm:agb-live}
Let $s\ge\mathrm{GST}$ be the first correct $\rstat{enter}(v)$ event, assume
every correct party enters $v$ by $s+2\delta$, and assume the view proposal is
broadcast no earlier than GST.  Suppose $\proposer(v)$ is correct and invokes
$\rstat{propose}(v,C,T)$ with $\textsc{Formed}_v(C,T)$ no later than
time $s+\delta$.  Assume
that at every correct party $\textsc{CoreOK}_i(C)$ holds at some instant no
later than its echo-window closure; persistence then keeps it true through the
fallback action.  Then
every correct party fires
$\rstat{complete}(v)$ by $s+4\delta+\Delta\le s+5\Delta$, returning $B_v$.
\end{theorem}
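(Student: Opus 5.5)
The plan is to trace the timeline of one correct-proposer view through R2, R3, and R4, and show that every correct party issues a proposal-ready naming $B_v$ by $s+3\delta+\Delta$. One further delay then delivers those READYs everywhere. Because grades are ignored for completion, mixed outcomes are harmless: only the existence of $\quorum$ proposal echoes and $\quorum$ proposal-readies for the single proposal $B_v$ matters.

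\emph{Echo stage.} The correct proposer invokes $\rstat{propose}$ by $s+\delta$. Its broadcast is sent no earlier than GST, so it reaches every correct party by $s+2\delta$. Every correct party has entered $v$ by $s+2\delta$, so the fixed proposal is the proposer's unique $B_v$ (by \Cref{thm:agb-safety}(a) no other version exists) and is active by $s+2\delta$. For a correct $p_i$, let $t=\max\{e_i(v),\rho_i(v)\}\le s+2\delta$. The fallback window closes at
\[
  d_i^E(v)=\min\{t+\Delta,\,e_i(v)+\theta_E\}\le s+2\delta+\Delta .
\]
I will check this against the $e_i(v)+3\Delta$ cap. Since $e_i(v)\ge s$ and $\rho_i(v)\le s+2\delta\le s+2\Delta$, we get $t+\Delta\le s+3\Delta\le e_i(v)+\theta_E$, so the window really ends at $t+\Delta$ and never takes the absolute \echoskip{} branch. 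By that closure, either the grade-$1$ gate has fired, or $\textsc{CoreOK}_i(C)$ holds by hypothesis (persistence keeps it true through the fallback action). In either case $p_i$ sends a \emph{proposal} echo (grade $1$ or $0$) for $B_v$ by $s+2\delta+\Delta$, and never \echoskip{}. This window arithmetic, together with the deadline tie-breaking convention, is the step I expect to need the most care.

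\emph{Ready stage.} The correct parties form a quorum, so their proposal echoes reach every correct party by $s+3\delta+\Delta$. At that point each correct party has counted $\quorum$ echoes naming $B_v$. The R3 positive gate does not require entry, so it fires by then unless the \noready{} deadline $e_i(v)+\theta_R$ has already passed. That deadline satisfies $e_i(v)+4\Delta\ge s+4\Delta\ge s+3\delta+\Delta$; the case of equality is resolved by the rule that delivered messages are processed before the timeout branch. So every correct party sends an initial proposal-ready for $B_v$, whose grade may be $0$, $1$, or mix, by $s+3\delta+\Delta$. By \Cref{thm:agb-safety}(b) every such ready names the same $B_v$.

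\emph{Completion.} These readies reach every correct party by $s+4\delta+\Delta$. They give $\quorum$ distinct ready authors for $B_v$, so R4 fires $\rstat{complete}(v)\to B_v$ there unless it already fired earlier. Any earlier completion also names $B_v$ by \Cref{thm:agb-safety}(a). Finally, $\delta\le\Delta$ yields $s+4\delta+\Delta\le s+5\Delta$.
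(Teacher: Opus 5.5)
Your proof is correct and follows the paper's argument essentially step for step. It uses the same window arithmetic $t+\Delta\le s+2\delta+\Delta\le s+3\Delta\le e_i(v)+\theta_E$ to rule out the absolute \echoskip{} branch, the same comparison of echo arrival at $s+3\delta+\Delta$ against the \noready{} deadline $s+4\Delta$ under the delivered-before-timeout convention, and one further delay for the READY quorum. One small correction: that each correct party fixes exactly $B_v$ follows from R1 (a correct proposer sends once) and authenticated channels, not from \Cref{thm:agb-safety}(a), which is a statement about completions.
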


\begin{proof}
By premise, all correct parties formally enter by $s+2\delta$ and the sender
invokes its well-formed input by $s+\delta$.  No correct receiver
rejects it, so the proposal is received everywhere by $s+2\delta$ and, since
formal entry activates any fixed proposal, active everywhere by $s+2\delta$.
For each correct party
$p_i$, both $e_i(v)\le s+2\delta$ and its direct proposal receipt occur by
$s+2\delta$.  Thus the fallback-window start $t$ is at most $s+2\delta$, while
$e_i(v)+\theta_E\ge s+3\Delta\ge t+\Delta$; its window therefore ends at
$t+\Delta$, at least $\Delta$ after direct receipt.  The assumed core check
passes within every echo window, so every correct party sends a
proposal echo by $s+2\delta+\Delta$.  This is no later than
the earliest absolute echo deadline $s+3\Delta$; R2's absolute \echoskip{}
branch requires that no echoable proposal be active by that deadline, so
under the core-check premise the window branch emits a proposal echo instead.

The echoes arrive everywhere by
$s+3\delta+\Delta\le s+4\Delta$, no later than the earliest ready deadline
$s+4\Delta$; messages delivered by a deadline are processed before its
timeout branch.  Every correct party therefore sends proposal-ready, and those
statements arrive everywhere by $s+4\delta+\Delta$, proving completion.
\end{proof}

For every view beyond the cutoff of \Cref{lem:pacemaker}(b), Vantage's WISH
service supplies the entry-spread premise of \Cref{thm:agb-live}, and the
cutoff's definition excludes a pre-GST correct proposal for that view.  The
R1 construction supplies a formed value, and its early wish trigger supplies
the invocation premise: the wish quorum counted at the first correct entry
contains $f{+}1$ correct wishes supporting $v$, sent by $s$ and
delivered to $\proposer(v)$ by $s+\delta$ under the in-flight convention,
so $\omega^+$ supports $v$ there and the proposer invokes by $s+\delta$
unless it already has.  For the Vantage eligibility hooks, every
correct-proposer core entry is exact-$\quorum$-available before the send time
$\beta\ge\mathrm{GST}$ and therefore becomes $(f{+}1)$-available everywhere
by $\beta+\delta$ by \Cref{lem:availability-witness}(iv).  A quorum assembled using
integer claims has the same eventual property by
\Cref{lem:availability-witness}(iii), but another receiver may first have to
retrieve the anchoring chain; it therefore cannot promote a correct
proposer's core until exact-position support is also present.  The theorem
otherwise depends only on Direct AGB's environment
events and message rules.

For a view with a correct proposer beyond this cutoff, none of these
premises requires $\textsc{TipOK}$.  If a correct party cannot verify the tip,
R2 emits a grade-$0$ proposal echo once its fallback window closes, while R3
and R4 count proposal echoes and readies regardless of grade.  Thus every
correct party still completes by $s+4\delta+\Delta$, regardless of which
correct parties fail which tip checks.  Completion fixes the witnessed core,
and once the output cursor reaches the view and obtains its lane prefixes, it emits
$\mathsf{Core}_D(C_v)$ without waiting for the tip; only the tip may remain
open (\Cref{thm:agb-safety,sec:protocol}).

\begin{corollary}[Transaction latency]\label{cor:tx-latency}
Suppose every party is correct, each party places arriving transactions in a
new valid data block and publishes it at once.  Suppose the execution is in a
non-Zeno responsive suffix beginning at a view $r$ past the cutoff of
\Cref{lem:pacemaker}(c): every view below $r$ has already been proposed,
and sealed and fully output at every correct party, by $\beta_r$.  Let $\beta_v$
be the send time of $B_v$.  A transaction arriving at time $t\ge \beta_r$ is sequenced
within
\[
  \begin{gathered}
    L_{\mathrm{tip}}\le\delta+\delta+2\delta=4\delta,\\[-2pt]
    \text{and, ignoring tips,}\\[-2pt]
    L_{\mathrm{core}}\le
      \delta+\delta+2\delta+\delta+2\delta=7\delta.
  \end{gathered}
\]
If the next prefix-extending proposal opportunity is sent at time $t+\delta$
after processing the new block, the tip path is aligned and sequences
it by $t+3\delta$.  If an anchoring proposal is sent at $t+\delta$ after
processing the new block and a later prefix-extending proposal is sent at
$t+3\delta$ after processing its ECHO acknowledgment quorum, the core path is
aligned and sequences it by $t+5\delta$ through fast seal, or by $t+6\delta$
through the Direct AGB READY path.
Over any finite union of complete responsive-opportunity intervals, if the
eligibility time is wall-clock-uniform and independent of the fixed proposal
schedule, the expected wait for the next prefix-extending opportunity is at
most $\delta/2$.  Under this additional convention,
\[
  \mathbf{E}[L_{\mathrm{tip}}]\le
    \delta+\frac{\delta}{2}+2\delta=3.5\delta,
  \qquad
  \mathbf{E}[L_{\mathrm{core}}]\le
    \delta+\frac{\delta}{2}+2\delta+\delta+2\delta=6.5\delta.
\]
\end{corollary}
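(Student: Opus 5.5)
The plan is to decompose each latency into the four stages named in the displayed sums and bound each stage with an earlier result. The first $\delta$ is publication: the author broadcasts its new block at time $t$, so every correct party, in particular every future proposer, holds it directly by $t+\delta$ and $\textsc{DirectPub}$ holds there. The second term is the wait for the next prefix-extending proposal opportunity. \Cref{lem:responsive-pacing} gives $\beta_{v+1}\le\max_{u\le v}\beta_u+\delta$, so after $t+\delta$ the next least not-yet-proposed view is sent within one more $\delta$. Its proposer includes the block in $T$ by the R1 manifest rule, since the block is directly published and extends the author's core entry, and no quarantine arises in an all-correct execution. I would also check that this proposal is prefix-extending: the lower views were sent earlier, because the lemma's bound concerns the least unproposed view.

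The last $2\delta$ of the tip path comes from \Cref{lem:responsive-pacing}. Every party already holds the tip directly, so $\textsc{AuthorOK}$ and tip anchoring pass by $\beta_v+\delta$, and the fast seal fires everywhere by $\beta_v+2\delta$. The sealed view must also be sequenced. By hypothesis every view below $r$ is sealed and output by $\beta_r$, and every view in $[r,v]$ is itself prefix-extending with checks passing. An induction on views from $r$ then shows that each seals by its own $\beta_u+2\delta\le\beta_v+2\delta$, so the output cursor reaches $v$ by then. This gives $L_{\mathrm{tip}}\le4\delta$. In the aligned case the wait term vanishes, giving $t+3\delta$.

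For the core path, the anchoring proposal's ECHOs (sent by $\beta+\delta$) carry exact-position bits from all $n$ parties and reach the next proposer within another $\delta$. By the next opportunity after that (at most $\delta$ later, by pacing) the prefix is exact-$Q$-available and enters $C$. That proposal then fast-seals within $2\delta$. This gives $\delta+\delta+2\delta+\delta+2\delta=7\delta$. In the aligned case the two waits collapse, giving $5\delta$ via the fast seal or $6\delta$ via the READY path of \Cref{lem:responsive-pacing}. For the expectations, I would replace the worst-case wait $\delta$ for the first opportunity by its expectation. Under the uniform-phase convention, the gaps between consecutive opportunities are at most $\delta$, and a uniform point in an interval of length at most $\delta$ waits on average at most $\delta/2$. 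Averaging over the union of gaps gives at most $\delta/2$.

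The main obstacle is the induction that lower views seal in time and do not hold the cursor back. It requires showing that every intermediate proposal is prefix-extending and that its $\textsc{AuthorOK}$ checks pass by $\beta_u+\delta$, including core entries whose $(f{+}1)$-availability follows from \Cref{lem:availability-witness}(iv). I would handle this by applying \Cref{lem:responsive-pacing} uniformly to every $u\ge r$.
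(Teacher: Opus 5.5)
Your decomposition matches the paper's: publication in $\delta$, then a wait of at most $\delta$ until the least view $k$ not yet proposed at $x=t+\delta$, then a $2\delta$ fast seal. The core path repeats the wait after a $2\delta$ ECHO census, and the expectations average over gaps of at most $\delta$.

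The gap is in the step you flag as the main obstacle. You assert that every view in $[r,k)$ is prefix-extending, and you plan to apply \Cref{lem:responsive-pacing} to each so that each $B_u$ seals by $\beta_u+2\delta$. The hypotheses do not give this. Send times need not be monotone in the view number: the early $\omega^+$ trigger of R1 can send $B_u$ before $B_{u-1}$, as the statement of \Cref{lem:responsive-pacing} itself notes. So an intermediate $B_u$ may be sent before some lower view. The lemma's sealing clause then does not apply, and $B_u$ need not seal by $\beta_u+2\delta$. Minimality of $k$ makes only $B_k$ prefix-extending.

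The fix is to use only what minimality gives: every $B_u$ with $u\in[r,k)$ was sent by $x<\beta_k$.
\begin{itemize}
\item The delivery argument inside the proof of \Cref{lem:responsive-pacing} makes each such $B_u$ active everywhere by $x+\delta$, through in-order frontier processing.
\item Its checks pass by $\beta_u+\delta\le x+\delta$.
\item A direct window calculation excludes fallback preemption. A fallback window exists only after formal entry, which itself activates the fixed proposal. With $t=\max\{e_i(u),\rho_i(u)\}$, the window ends no earlier than $\max\{t,\beta_u+\delta\}$, using $\beta_u\le\rho_i(u)\le\beta_u+\delta$ and $e_i(u)\ge s_u\ge\beta_u-\delta$.
\end{itemize}
Hence all $n$ grade-$1$ ECHOs for every such $u$ are sent by $x+\delta$ and fast-seal everywhere by $x+2\delta$. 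The cursor is therefore drained before $B_k$ seals at $\beta_k+2\delta\le x+3\delta$. The core path needs the same argument again for its core-carrying proposal.

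There is also a smaller omission. R1 does not simply place the newest directly published tip in $T$. It prefers the least-height prefix above the core that is generally but not exactly $(n{-}f)$-available, and that prefix need not cover your block. You must argue, as the paper does, that in the all-correct suffix no such confirmation target exists above the core, or else that the core already covers the block. Likewise, the named entries may be newer than your block. The checks pass by $\beta_k+\delta$ because every named prefix was published before the send, not because every party already holds it.
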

\begin{proof}
We first factor out the proposal wait.  Fix either path and let
$x\ge\beta_r$ be a time by which every proposer satisfies
$\textsc{DirectPub}$ for a correct-author reference covering the block; on the
core path, require in addition that every proposer has counted $\quorum$
acknowledgments for that reference.
Every view is eventually proposed by \Cref{lem:pacemaker}(a) and R1, while
non-Zeno execution permits only finitely many proposal sends by $x$.  Hence
there is a least $k\ge r$ with $\beta_k>x$.  Since $x\ge\beta_r$, we have
$k>r$.  The suffix premise bounds $\beta_u\le\beta_r\le x$ for $u<r$, minimality
bounds $\beta_u\le x$ for $r\le u<k$, and
\Cref{lem:responsive-pacing} gives
\[
  x<\beta_k\le\max_{u<k}\beta_u+\delta\le x+\delta.
\]
The inclusion rule makes $B_k$ cover the block.  On the core path its newest
exact-$\quorum$-available core does so.  On the tip path, either the core
already covers it or the newest directly published lane tip does.  Indeed, in
this all-correct suffix every claim at its named position is exact: once a
general quorum is reached for a named tip, that same tip is exact-quorum
available and becomes the core candidate, so no stable-confirmation target
above the core displaces the newest tip.  The author's lane is unforked.

Every proposal in the all-correct suffix passes its
$\textsc{AuthorOK}$ and tip-anchoring checks everywhere within one delay of
its send: every named lane prefix was directly published by its correct
author before the send and reaches every party within one further delay.
$B_k$ is prefix-extending by the send bounds above, so
\Cref{lem:responsive-pacing} fast-seals it $\gfull$ within $2\delta$ of its
send; thus $B_k$ seals by $x+3\delta$.  Every lower view was sent by $x$
--- below $r$ by the suffix premise, in $[r,k)$ by those bounds --- so, by
the delivery argument of that lemma, every proposal $B_u$ with $u<k$ is
fixed and, through in-order frontier processing, active at every correct
party by $x+\delta$, without waiting for formal entry.  For a suffix view
$u\in[r,k)$ the checks pass by $\beta_u+\delta\le x+\delta$, so every
correct party's grade-$1$ gate holds by $x+\delta$.  No fallback response
preempts it: a fallback window exists only after entry, starts at
$t=\max\{e_i(u),\rho_i(u)\}$, and ends no earlier than the gate time
$\max\{t,\beta_u+\delta\}$ on both arms.  Indeed, $t+\Delta$ dominates both
$t$ and $\beta_u+\Delta$ from $\rho_i(u)\ge\beta_u$; and
$e_i(u)+\theta_E\ge\beta_u+3\Delta-\delta$ dominates $\beta_u+\delta$ as
well as $t\le\max\{e_i(u),\,\beta_u+\delta\}$, using
$e_i(u)\ge s_u\ge\beta_u-\delta$ --- valid for every suffix view, since a
correct
proposer invokes no later than its own entry and the all-correct entry
spread past the cutoff is $\delta$ --- and $\rho_i(u)\le\beta_u+\delta$.  Hence
all $n$ matching responses for each suffix view in $[r,k)$ are sent by
$x+\delta$, counted by $x+2\delta$, and fast-seal it.  The drained earlier prefix therefore makes the block
sequenced by $x+3\delta$.

For the tip path, direct publication makes the block eligible at
$x=t+\delta$, giving $4\delta$.  If a prefix-extending proposal is sent exactly
at that time after processing the block, it covers the block and all lower
views have already been proposed, so sequencing occurs by $t+3\delta$.

For the core path, let $x_0=t+\delta$, when every party has directly received
the correct-author block.  The first prefix-extending proposal after $x_0$ is
sent by $x_0+\delta$.  It names a tip covering the block, and every correct
party's ECHO carries the exact-position acknowledgment.  Those ECHOs are
counted everywhere within $2\delta$ of that proposal's send, so the block is a
core candidate everywhere by $t+4\delta$.  Applying the preceding proposal
wait and fast-seal argument once more gives $t+7\delta$.  In the aligned case,
the anchoring proposal is sent at $t+\delta$, its exact claims are counted by
$t+3\delta$, and a core proposal sent then fast-seals by $t+5\delta$.
The same send-to-seal theorem gives $t+6\delta$ when the core proposal instead
finishes through the three-delay READY path.

For the expected bounds, let $S_j$ be the chronological proposal-send events
that fill the least not-yet-proposed numerical view and hence extend the fully
proposed prefix.  Already-sent higher-view proposals may enlarge the resulting
jump, but are not fresh opportunities.  If the prefix reaches view $q$ at
$S_j$, then the responsive-pacing rule gives
$\beta_{q+1}\le\max_{u<q+1}\beta_u+\delta\le S_j+\delta$; hence the next fresh
event satisfies $g_j=S_{j+1}-S_j\le\delta$.  Conditional on the fixed schedule,
a wall-clock-uniform eligibility time over any finite union of complete
intervals $[S_j,S_{j+1})$ therefore gives
\[
  \mathbf E[W]
    =\frac{\sum_j g_j^2}{2\sum_j g_j}
    \le \frac{\delta}{2}.
\]
For the tip path, adding publication and the $2\delta$ fast-sealing time gives
$3.5\delta$.  For the core path, the wait to the first anchoring proposal has
the same $\delta/2$ expectation; its ECHO census costs $2\delta$, the wait to
the subsequent core proposal is at most $\delta$, and fast sealing costs
$2\delta$, giving $6.5\delta$.  Any limiting time average, when it exists,
inherits the same inequality.
\end{proof}

\paragraph{Communication.}
Per instance, Direct AGB exchanges $O(n^2)$ messages: one proposal
broadcast ($n$ messages), one echo-stage and one initial ready-stage broadcast
per party ($n^2$ messages each), and at most one homogeneous READY refinement
per party after READY-mix (at most another $n^2$ messages).  Thus the explicit
worst-case count is $3n^2+O(n)$, while the all-grade-$1$ common path has no
refinement and retains $2n^2+O(n)$.  \echoskip{} and \noready{} replace
initial statements and never add any.  In Vantage, responses piggyback the sender's
current WISH; threshold amplification adds at most one own WISH
broadcast per raised view, within the same $O(n^2)$ per-view message bound.
Fast sealing only reads the existing ECHO set and adds no message; its lock
is one payload identifier plus the already maintained first-response counts.
Completed-open quarantine adds one reference and two scalar counters per
author and no message; its probes reuse ordinary proposals.
The grounded skip shortcut adds at most one short \skipvote{} broadcast per
party on an eligible failed view, another $O(n^2)$ messages in that view and
none on the all-correct favorable path.  Its per-target stance is one
persistent trit.  Thus Vantage's complete per-view statement layer remains
$O(n^2)$ messages.
In Vantage's proved by-value instantiation, manifests are carried by value and well-formedness
bounds each at $n$ entries --- for Byzantine proposals too, since correct
parties never echo a malformed proposal --- so every statement is
$O(n\lambda)$ bits and a view costs
$O(n^3\lambda)$ bits, including the proposal-relative ACK fields.\footnote{Scalar fields --- view numbers, heights,
lengths --- cost $O(\log\nu)$ bits, where $\nu$ bounds those quantities in
the execution prefix.  The displayed bounds use the deployment regime
$\log\nu=O(\lambda)$; otherwise each $O(\lambda)$ scalar term reads
$O(\lambda+\log\nu)$.  This accounting convention does not bound the
unbounded-execution liveness model.}  The default implementation names a
single proposal by hash in later statements and puts an $O(n)$-bit
availability bitmap on each ECHO.  Its aligned common case is therefore
$O(n^2\lambda+n^3)$ bits per view; sparse $(\text{position},\text{distance})$
exceptions cost $O(n^3(\log n+\log\nu))$ bits.
\Cref{sec:optimizations} gives the accounting.  Digest-naming is a wire
optimization; the ECHO acknowledgment semantics are part of the protocol.
Data dissemination is accounted separately: a data
block of $\ell_b\le\ell_{\max}$ payload bits occupies
$\ell_b+O(\lambda)$ bits on the wire and costs $O(n(\ell_b+\lambda))$ for its
author's typed publication; acknowledgments add no standalone message and
were counted in the ECHO layer above.  Thus blocks of size
$\ell_b=\Omega(n\lambda)$ cost $O(n)$ bits of communication per payload bit ---
the cost of $n$-fold replication itself.  Repair is off the common path and
absent with correct authors: a repairing party spends $O(n\lambda)$ on
requests per distinct authorized missing block, including ancestors discovered
while walking a named root, and, under the deduplicated typed-repair
response rule, may receive up to $O(n(\ell_b+\lambda))$ per block ---
$O(n^2(\ell_b+\lambda))$ across all parties in the worst case.  Counted
availability claims also authorize roots, but a view counts one echo-stage
response per sender, each naming at most $2n$ references, which bounds the
roots a Byzantine sender can authorize per view.
Repairing a lane prefix sums these per-block costs over its missing ancestors.
Coded dissemination would reduce this and is orthogonal.

The resolution plane of \Cref{sec:commit-rule} is accounted separately.
For one target and resolver view, WISH synchronization, suggestions and
proofs, fresh ECHOs and witnesses, the IT-HS phases, and DONE cost
$O(n^2)$ messages.  A full/core value contains two $O(n\lambda)$ manifests;
because witnesses and DONE attach that value, the proved by-value cost is
$O(n^3\lambda)$ bits, excluding lane-prefix repair.  Skip values are shorter.
Bounded value-body requests to counted phase authors add $O(n^2)$ messages and
$O(n^3\lambda)$ bits per resolver view in the worst case, preserving both
bounds.  Across $R$ entered resolver views of one target these per-view costs
add, giving the $O(Rn^2)$ messages and $O(Rn^3\lambda)$ bits stated in
\Cref{sec:commit-rule}.
Different targets run concurrently, so $k$ overlapping active targets
multiply both bounds by $k$.  The plane is absent from the logical dependency
chain of direct completion and sealing, not from network and processor use
while resolution is active.

\section{Resolution and Output: Full Specification and Proofs}
\label{app:protocol}

\paragraph{Resolver adapter.}
Resolution is composition state outside Direct AGB.  For each target data-plane
view $u$, the adapter may receive a target-local resolver decision: a bounded value
$R$ (below) whose submitted outcome $X$ is $\gfull(C,T)$, $\gcore(C)$, or
$\gskip$.  A non-skip value retains its exact backing payload $(C,T)$; the
core outcome omits $T$ from output but keeps it in the value's identity.  The
adapter authorizes the named lane roots, submits $X$ to the caller-owned
$\rstat{try-seal}(u,\cdot)$ arbiter, and fires $\rstat{seal}(u)\to X$ if no
compatible terminal outcome was already stored.
Direct AGB, fast-seal, grounded skip, and resolver submissions race only at
this arbiter.  The first terminal outcome is persistent and later compatible
submissions are idempotent.

A local terminal seal does not remove a party from resolver participation.
If such a party receives a peer resolver WISH for $u$, it allocates the
target-local instance, relays its own WISH, and may fresh-ECHO only a resolver
value whose submitted outcome equals that seal.  This rule is needed when a direct or skip
seal was selectively delivered: the parties still missing it may need every
correct participant.  A party without either target-local evidence or a local
terminal seal instead requires WISHes from $f+1$ distinct members before it
allocates an instance, so one Byzantine member cannot activate arbitrary
target phases.  Upon each direct WISH, a correct party also sends that requester its
current own WISH watermark and, once per requester and coordinate, every
applicable own
retained SUGGEST, PROOF, PROPOSE, phase statement, and witness through the
requested watermark plus one, together with its own retained DONE, if any.
Thus traffic sent before a late activation is not lost even though unknown-target
phase traffic is otherwise discarded.

\paragraph{Values and local candidates.}
For target $u$, write
\[
 R_u^{\mathrm{full}}(C,T),\qquad
 R_u^{\mathrm{core}}(C,T),\qquad
 R_u^{\mathrm{skip}}
\]
for the three bounded resolver values.  Their submitted outcomes are,
respectively, $\gfull(C,T)$, $\gcore(C)$, and $\gskip$.  The second value keeps
$T$ so that full and core values derived from the same target proposal have an
unambiguous identity.  The identifier $\hash{R}$ abbreviates a
domain-separated digest that binds the session, target, outcome kind, and
complete value.

Let $h_i=\max\{a_i+1,\omega_i^+\}$ be $p_i$'s local data-plane proposal
horizon.  Once $h_i\ge u+3$, each party repeatedly derives a monotone
candidate set from its first-hand target records; a party that has terminally
sealed $u$ does not start an instance from them but joins on a peer WISH.  Every candidate
requires $Q=n-f$ initial ready-stage responses.  In addition:
\begin{enumerate}[label=\textup{(C\arabic*)},leftmargin=2.7em,topsep=2pt,itemsep=2pt]
  \item $R_u^{\mathrm{full}}(C,T)$ requires $f+1$ matching grade-$1$ target
        ECHOs for $(C,T)$;
  \item $R_u^{\mathrm{core}}(C,T)$ requires $Q$ ready-stage responses whose
        first counted form is not grade $1$, and $f+1$ matching target ECHOs of
        either grade for $(C,T)$;
  \item $R_u^{\mathrm{skip}}$ requires $Q$ \noready{} responses.
\end{enumerate}
Only the first authenticated target ECHO and initial READY from each sender are
counted.  A READY-mix counted first remains historical non-grade-$1$ candidate
evidence after a permitted same-proposal homogeneous refinement.  The sender's
own provisional READY-mix nevertheless cannot pass fresh acceptance until it closes.
Because $f+1$ matching ECHOs for distinct payloads use disjoint first-response
authors, a local candidate set contains at most
$2\lfloor n/(f+1)\rfloor+1$ values.  Its canonical order is payload order,
full before core for the same payload, and skip last.

Fresh acceptance is deliberately stronger than candidacy.  Immediately before
sending a resolver ECHO for a zero-key proposal, $p_i$ checks:
\begin{enumerate}[label=\textup{(F\arabic*)},leftmargin=2.7em,topsep=2pt,itemsep=2pt]
  \item its own immutable echo-stage response exists and its own initial READY state is
        closed;
  \item full is rejected by an own grade-$0$ READY or a different ready
        payload; core is rejected by an own grade-$1$ READY or a different
        ready payload; skip requires the own response \noready{};
  \item full has locally available $C,T$ and passes the paired-tip ancestry
        check; core has locally available $C$;
  \item the value is compatible with any local terminal seal and active
        optimistic lock; and
  \item before a non-skip ECHO, the persistent stance changes atomically from
        $\mathsf{free}$ to $\mathsf{non\mbox{-}skip}$, or is already
        non-skip; a skip-voted stance rejects it.
\end{enumerate}
A fresh full ECHO carries origin bit $1$ only if its sender previously emitted
the exact matching grade-$1$ target ECHO.  A fresh core ECHO carries origin bit
$1$ after a matching target ECHO of either grade.  Skip carries no origin bit.
A fresh ECHO quorum for a non-skip value is usable only with at least $f+1$
origin-$1$ ECHOs.  Every resolver ECHO carries the exact proposal key $q$ and
is counted at coordinate $(u,r,q,x)$.  Thus $q=0$ ECHOs that executed these
fresh checks never combine with $q>0$ ECHOs for a carried \backed{} value, even
when both proposals name the same digest $x$.

\paragraph{Stable backing.}
After a usable quorum of resolver ECHOs for $(u,r,0,x)$, a correct party
broadcasts one witness $\langle\reswit,u,r,x,R\rangle$, attaching the bounded
value with $\hash{R}=x$.  A party that has not sent that witness relays it after
counting $f+1$ matching witnesses.  It sets
$\textsc{Backed}_i(u,x)$ after counting $Q$ matching witnesses.  A positive-key
view-change suggestion also attaches $R$; a receiver accepts the suggestion
only after checking its digest, well-formedness, IT-HS key proof, and
$\textsc{Backed}_i(u,x)$.

\begin{lemma}[Stable backing]\label{lem:stable-backing}
If a correct party sets $\textsc{Backed}(u,x)$, then a usable fresh resolver
ECHO quorum for one value $R$ with $\hash{R}=x$ occurred, at least one correct
party permanently holds $R$, and every correct party that participates in the
instance eventually sets the same \backed{} predicate and obtains $R$.
No correct witness can support two values at one $(u,r)$.
\end{lemma}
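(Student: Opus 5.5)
The plan traces a \backed{} latch back through the witness relay to its causally first correct witness. That witness can only have been sent after a genuine usable fresh ECHO quorum, which yields each part of the statement.

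\emph{Step 1 (causal origin).} Suppose a correct party sets $\textsc{Backed}_i(u,x)$. It has counted $Q$ first-hand witnesses $\langle\reswit,u,r,x,R\rangle$ from distinct authors, so at least $n-2f\ge f+1$ of them are correct. Among all correct witnesses for digest $x$, I would order the sends in real time and take the earliest. A correct party sends a witness for $x$ for one of only two reasons: it counted a usable quorum of $q=0$ ECHOs at $(u,r,0,x)$, or it relayed after counting $f+1$ matching witnesses. The relay reason is impossible for the first correct witness. Its $f+1$ counted witnesses would include a correct one, and that correct witness would have been sent strictly earlier. So the first correct witness followed a usable fresh ECHO quorum for a value $R$ with $\hash{R}=x$. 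Collision resistance makes $R$ unique up to the negligible collision event, and ECHO keys never merge fresh and carried ECHOs. This proves the first claim.

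\emph{Step 2 (holding and totality).} Every correct witness attaches $R$ by value and is retained permanently under the formal retention convention, so the correct witnesses hold $R$ permanently. For eventual commonality, the $f+1$ correct witnesses counted by the latching party were broadcast. Any correct participant of the instance therefore eventually counts $f+1$ matching witnesses and relays its own. Then at least $n-f$ correct witnesses exist, and each correct participant eventually counts $Q$ of them, sets \backed{}, and obtains $R$ from any attached body.

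The main obstacle is the activation race: a participant may allocate the target after those witnesses were already delivered and then discarded as unknown-target traffic. Here I would invoke the send-upon-join replay rule. A late joiner broadcasts a resolver WISH, and each correct responder replays its retained witnesses once per coordinate. The witness relay and \backed{} are view-independent, so the replayed witnesses count regardless of the joiner's current resolver view. I also need to check that the replay bound ``through the requested view plus one'' covers resolver view $r$. If the WISH is lower, I would argue that the joiner's WISH later rises by amplification, and a fresh request then triggers replay for the new coordinates.

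\emph{Step 3 (one value per $(u,r)$).} A correct party sends at most one witness per $(u,r)$. Its fresh witness requires a usable quorum of $q=0$ ECHOs. Two usable quorums for different digests at the same $(u,r,0)$ would intersect in a correct resolver ECHO sender. That party echoes only the first proposal of the primary of $(u,r)$, so it cannot have echoed both digests. This excludes two fresh witnesses for distinct values. For relays, the induction of Step 1 shows that any correct witness for $x$ at $(u,r)$ inherits a fresh quorum for $x$ at that coordinate. Combining this with uniqueness of the fresh quorum shows that no correct witness can support two values at one $(u,r)$.
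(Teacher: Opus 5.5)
Your proof is correct and takes essentially the same route as the paper's. The causally first correct witness cannot be a relay, so it follows a genuine usable fresh ECHO quorum; the $Q$ counted witnesses contain $f+1$ correct broadcasts that drive the relay everywhere, with send-upon-join replay closing the late-activation race; and two values at one $(u,r)$ are excluded by intersecting their fresh ECHO quorums in a correct sender that sends one ECHO per resolver view. Your extra check that the replay window ``through the requested view plus one'' eventually covers view $r$ through WISH amplification is a detail the paper's proof leaves implicit.
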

\begin{proof}
Consider the causally first correct witness for $(u,r,x)$.  It cannot be a
relay: among the $f+1$ witnesses required for relay, one is correct and would
precede it.  It therefore followed the usable fresh ECHO quorum and attached
the value it retained.  Correct witnesses for two different values at the same
$(u,r)$ would each descend from a fresh ECHO quorum; those quorums intersect in
a correct one-ECHO sender, a contradiction.  A correct party that counts $Q$ witnesses counts at
least $Q-f=n-2f\ge f+1$ correct senders.  Reliable delivery brings those
witnesses and their attached value to every correct participant, causing each
to relay.  If a witness preceded a party's activation, that party's WISH makes
each correct witness sender replay its own witness directly; the same
$f+1$ relay then starts at the late party.  The at least $n-f=Q$ correct parties
therefore supply a witness quorum everywhere.  Per-coordinate first-send and
per-requester replay rules give the final claim.
\end{proof}

\subsection{Resolved AGB Contract}\label{sec:resolver-contract}

Direct AGB intentionally leaves residual mixed-grade and silent instances
unsealed.  A resolver is a composition service with the following per-target
contract.  Every non-skip result also supplies the exact backing payload
$(C,T)$.
\begin{enumerate}[label=\textbf{RS\arabic*.},leftmargin=3.2em,topsep=3pt,itemsep=3pt]
  \item \textbf{Single choice.}  Resolver decisions at correct parties for
        target $u$ name one exact value and outcome.
  \item \textbf{Future-closed direct compatibility.}  A full result for
        $(C,T)$ excludes every grade-$0$ READY quorum and every READY quorum
        for a different payload.  A core result backed by $(C,T)$ excludes
        every grade-$1$ READY quorum and every READY quorum for a different
        payload.  A skip result excludes every READY quorum.
  \item \textbf{Output validity.}\newline A core result satisfies
        $\textsc{Witnessed}(C)$; a full result satisfies both
        $\textsc{Witnessed}(C)$ and $\textsc{Witnessed}(T)$.
\end{enumerate}
For termination we use one additional property:
\begin{enumerate}[label=\textbf{RL1.},leftmargin=3.2em,topsep=3pt,itemsep=3pt]
  \item \textbf{Per-target totality.}  For every target $u$, one outcome
        $X_u$ eventually reaches every correct party that lacks a terminal
        seal; every party already terminally sealed at $u$ holds that same
        compatible outcome.
\end{enumerate}
This service has agreement power; the definition does not claim that AGB
implements it.

\begin{theorem}[Resolved AGB composition]\label{thm:resolved-agb}
Direct AGB composed with any resolver satisfying RS1--RS3 has at most one
terminal outcome at each correct party, and no two correct parties terminally
seal different outcomes for a target $u$.  If completion exists, every
terminal outcome is non-skip, is backed by the completion's unique payload,
and contains its core.  Every non-skip outcome has the witness promised by
RS3 or by the direct path.  If RL1 also holds, every correct party eventually
terminally seals $u$.
\end{theorem}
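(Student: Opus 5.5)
We prove the statement by treating the terminal outcomes of a target $u$ as submissions to the $\rstat{try-seal}$ arbiter from two sources: the Direct AGB R4 results and the resolver decisions.

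\textbf{Local uniqueness.} At most one terminal outcome per correct party is immediate, because the arbiter stores its first submission and ignores all later ones. The real content is that any two submissions that can occur anywhere are compatible. We check this pairwise.

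\emph{Direct versus direct.} \Cref{thm:agb-safety}(b) and (c) give one payload and exclude coexisting grade-$1$ and grade-$0$ READY quorums. Two direct results therefore agree.

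\emph{Resolver versus resolver.} RS1 says all correct resolver decisions name one exact value and outcome.

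\emph{Direct versus resolver.} A direct $\gfull(C,T)$ result means a grade-$1$ READY quorum for $(C,T)$ exists. By RS2, a resolver full result for $(C',T')$ excludes READY quorums for any different payload, so $(C',T')=(C,T)$. A resolver core result excludes every grade-$1$ READY quorum. A resolver skip result excludes every READY quorum. Hence the only compatible resolver result is $\gfull(C,T)$. The $\gcore(C)$ direct case is symmetric.

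RS2 is stated as \emph{future-closed}, i.e.\ as exclusion regardless of order. So it covers both temporal orders of the direct quorum and the resolver call, and we need not argue about which happened first.

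\textbf{Agreement across parties.} Any two correct terminal seals are submissions of the kinds above, so the same pairwise compatibility gives agreement.

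\textbf{Completion case.} Assume completion exists. Then a READY quorum for the unique payload $B_u=(C,T)$ exists, by \Cref{thm:agb-safety}(b).
\begin{itemize}
\item RS2 for skip excludes every READY quorum, so no resolver skip can be chosen.
\item Direct AGB never outputs $\gskip$, so every terminal outcome is non-skip.
\item A non-skip resolver result names a payload, and RS2 excludes READY quorums for a different payload, so its backing payload is $(C,T)$.
\item Direct results name $B_u$ by \Cref{thm:agb-safety}(b).
\item Both $\gfull(C,T)$ and $\gcore(C)$ contain $C$, since the core expansion is a prefix of the full expansion.
\end{itemize}

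\textbf{Witnesses.} For resolver outcomes, RS3 supplies the witnesses. For direct outcomes, \Cref{thm:agb-safety}(d) does: completion gives $\textsc{Witnessed}(C)$, and a grade-$1$ READY quorum gives $\textsc{Witnessed}(T)$.

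\textbf{Termination.} Under RL1, every correct party that lacks a terminal seal eventually receives $X_u$ and submits it to its arbiter. Any party already sealed holds a compatible outcome by RL1, so every correct party ends with a terminal seal.

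\textbf{Main obstacle.} The subtle step is the direct-versus-resolver pairing when the two occur in opposite orders, or at parties that never observe each other's evidence. In particular, a direct quorum may form after a resolver decision at some other party. I would handle this entirely through the future-closed wording of RS2, which makes exclusion a global property of the execution rather than of any party's local view. With that reading, no timing or locality argument is needed in this theorem; the work of establishing RS2 is deferred to the resolver lemmas.
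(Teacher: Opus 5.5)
Your proposal is correct and follows the paper's proof essentially step for step. The arbiter gives local uniqueness. \Cref{thm:agb-safety}(b)--(c) handles direct versus direct, RS1 handles resolver versus resolver, and RS2 applied to the READY quorum behind each direct result handles direct versus resolver; RS2 also excludes skip and pins the backing payload under completion. The core-prefix expansion contract gives core containment, \Cref{thm:agb-safety}(d) and RS3 supply the witnesses, and RL1 gives termination.
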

\begin{proof}
The caller-owned arbiter gives local uniqueness.  Two direct results agree by
\Cref{thm:agb-safety}(b)--(c).  Two resolver results agree by RS1, and a direct
result agrees with a resolver result by RS2 and AGB value agreement.  A
completion has a READY quorum, so RS2 excludes skip and binds every non-skip
resolution to its payload.  Its core is fixed by
\Cref{thm:agb-safety}(b),(d), and the value-domain expansion contract makes
$\mathsf{Core}_D(C)$ a prefix of $\mathsf{Full}_D(C,T)$ for every prior-output
set $D$.  Direct outcomes obtain their witness from
\Cref{thm:agb-safety}(d); resolver outcomes obtain it from RS3.  RL1 supplies
the common outcome to every party that did not already seal it.
\end{proof}

\paragraph{Ordered output.}
The cursor processes data-plane views in increasing order.  At a locally
completed but open cursor view, it retrieves and verifies the core prefixes,
emits the canonical core expansion once, and stops before the tip.  A full
seal then retrieves and emits the canonical tip expansion before advancing; a
core seal advances without the tip; a skip seal emits nothing.  A resolver
decision received before local completion supplies the same manifests.  Late
compatible completion or sealing is idempotent.  Later AGB and resolver
instances continue while the cursor is stopped, but no payload from a later
view crosses that position.

Call a sequence of instances \emph{input-fair} when every admissible item from
a correct source eventually occurs in the core of some completed instance.

\begin{corollary}[Atomic broadcast from resolved AGB]
\label{cor:resolved-agb-log}
Run one Direct AGB instance per increasing data-plane view, compose each with a
resolver satisfying RS1--RS3 and RL1, and assume input fairness and fair
repair.  Under the cursor above, deterministic view-order expansion, and
first-occurrence deduplication, correct output sequences are prefix-comparable,
have one common limit, and eventually contain every finite prefix of that
limit.  Every output item is emitted at most once, satisfies external
validity, is witnessed as required by Direct AGB or RS3, and is retrieved
before local output.  Every admissible correct-source item is eventually
output.  Applying a deterministic transition function implements state-machine
replication.
\end{corollary}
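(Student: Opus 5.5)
The plan is to derive the corollary from \Cref{thm:resolved-agb}, applied once per data-plane view, together with the cursor discipline. The argument separates into safety, namely prefix-comparability, at-most-once emission, validity and retrieval, and liveness, namely a common limit, eventual production of every finite prefix, and inclusion of every correct-source item.

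\textbf{Per-view output determinism.} By \Cref{thm:resolved-agb} (RS1--RS3 together with \Cref{thm:agb-safety}), no two correct parties terminally seal different outcomes for any view. The cursor emits a view's core once that view is completed or sealed non-skip. Whenever completion exists at any correct party, every terminal outcome is non-skip and contains that unique core. Hence the core a party emits before sealing is the same one every other correct party will emit for that view.

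\textbf{Prefix-comparability.} Define the canonical sequence $\sigma_v$ inductively. Given the prior-output set $D$ determined by views below $v$, the contribution of view $v$ is $\mathsf{Core}_D(C_v)$ followed, if the outcome is full, by the rest of $\mathsf{Full}_D(C_v,T_v)$; a skip outcome contributes nothing. Each correct party's output at any time equals $\sigma_{v-1}$ followed by a prefix of view $v$'s contribution. Since $\mathsf{Core}_D$ is a prefix of $\mathsf{Full}_D$, an open view's partial emission is still a prefix of its eventual contribution. By induction on $v$, using that $D$ is identical at all correct parties, any two correct outputs are prefixes of one common infinite sequence; that sequence is the claimed common limit. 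First-occurrence deduplication inside $\mathsf{Core}_D$ and $\mathsf{Full}_D$ gives at-most-once emission. External validity and the witness property come from \Cref{thm:agb-safety}(d) for direct outcomes and from RS3 for resolver outcomes. Finally, the cursor waits for retrieval before emitting, so each item is retrieved before local output.

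\textbf{Liveness.} Assume RL1. Every view is eventually terminally sealed at every correct party, and a non-skip seal supplies its manifests. The witness implication $\textsc{Witnessed}$ together with fair repair then ensures the cursor eventually obtains every item of the deterministic expansion. Induction on views shows every finite prefix of the limit is eventually output everywhere. By input fairness, each admissible correct-source item lies in the core of some completed instance $v$. Completion forces a non-skip outcome that contains that core, so the item enters $\sigma_v$ at or before its first occurrence and is output. Applying a deterministic transition function to identical prefixes yields state-machine replication.

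\textbf{Main obstacle.} The delicate step is the interaction between early core emission at an open cursor view and a later seal, possibly one a party obtains from the resolver without ever completing locally. I would handle it by noting that completion at any single correct party already excludes skip globally, by \Cref{thm:agb-safety}(b) and RS2. A party that never completed receives the same $(C,T)$ through the resolver value, so the core it emits coincides with the others'. Deduplication then uses the same $D$ at every party because all earlier views agree.
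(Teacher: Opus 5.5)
Your proposal is correct and follows essentially the same route as the paper's proof. \Cref{thm:resolved-agb} gives one outcome per view, with the completed core a prefix of either non-skip expansion, so the in-order cursor exposes only prefixes of one deterministic view-order expansion; RL1, the Direct AGB and RS3 witnesses, fair repair, deduplication, and input fairness then give liveness, validity, at-most-once output, and inclusion, and your version mainly spells out the induction on views and the resolver-before-completion case that the paper leaves implicit.
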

\begin{proof}
By \Cref{thm:resolved-agb}, all correct parties use one outcome per view, and
a completed open core is a prefix of either compatible non-skip outcome.  The
cursor therefore exposes only prefixes of one deterministic view-order
expansion.  RL1 advances every cursor through every finite set of views; RS3,
Direct AGB availability, and fair repair provide the referenced prefixes
before output.  Deduplication gives at-most-once output, and input fairness
gives correct-input inclusion.  Equal input prefixes and a deterministic
transition function give equal state prefixes.
\end{proof}

\section{Concurrent Target-Local Resolver: Construction and Proofs}
\label{app:commit-rule}

Each unresolved target $u$ owns an independent sequence of resolver views
$r=1,2,\ldots$.  Resolver view~1 starts with the committee member immediately
after $\proposer(u)$; subsequent views rotate through the committee.  The
wire format binds each message's type and applicable target, resolver
view, and value digest; a resolver ECHO additionally binds the proposal key
$q$.  The authenticated connection binds its direct sender.
For each target and resolver view, parties count only the first statement from
each sender of each message type.  A quorum additionally requires equality of
the full typed coordinate; in particular, ECHOs with different proposal keys
never combine.  Parties count phase traffic at most one resolver view above
their own WISH watermark and act on a view's proposal, ECHO, KEY, and LOCK
quorums only while that view is current, a quorum counted before entry firing
on entry; witness generation and relay, \backed{}, WISH, and DONE relay and
decision are view-independent.

WISH uses the same high-watermark pattern as the data pacemaker, independently
for each target: $f+1$ WISHes amplify the own watermark and $Q$ enter all
missing views through that watermark.  A locally justified initial WISH is
retransmitted every $5\Delta$ until entry.  On entering view $r$, a party sends
the primary its KEY3/KEY2 suggestion, broadcasts its KEY1 proof, and arms both
a $5\Delta$ no-proposal timer and an $11\Delta$ full-view timer.  Either timer
raises the own WISH to $r+1$ when its condition still applies.
Each new WISH also elicits the responder's current own WISH and at most one
direct replay per coordinate of every applicable own retained view-scoped send
through $r+1$ and of any own DONE.  This instantiates IT-HS's send-upon-join
discipline under our non-FIFO channels.

The resolver then uses the information-theoretic HotStuff view-change core of
Abraham et al.~\cite{iths}.  KEY1 supplies lock-opening proofs, KEY2 supplies
\textsc{AcceptKey} proofs, and KEY3 is the candidate carried across views.  A
positive suggestion attaches its value body.  The primary waits for $Q$
accepted suggestions and re-proposes a maximum positive key, breaking ties by
digest.  If every accepted key is zero, it takes the next value of its
candidate set cyclically under a persistent local cursor and advances that
cursor; with no candidate it proposes nothing and the view times out.  The
cursor moves
on the party's own primary turns, rather than being indexed by the global
resolver view, so a common divisor between committee size and candidate count
cannot skip a candidate forever.

A zero-key proposal is fresh and invokes F1--F5 immediately before ECHO.  A
positive-key proposal must be \backed{}.  The ECHO binds that exact key, and only
ECHOs matching the receiver's stored $(u,r,q,x)$ proposal count together.  In
either case, the IT-HS lock accepts
the locked value or a proposal whose key and $f+1$ lock-opening proofs satisfy
\textsc{OpenLock}.  A syntactically valid designated-primary proposal is stored
even while these guards are false: its arrival stops the shorter no-proposal
timer, while the full-view timer remains armed and the proposal is rechecked
when proofs, backing, or target data change or its view becomes current.  After $Q$ matching ECHOs for one proposal
coordinate, a carried value advances directly; a fresh value first executes the witness step and advances only after it is
\backed{}.  $Q$ matching KEY1, KEY2, KEY3, and LOCK statements advance the
successive phases.  When a new KEY1 value differs from $x_1$, only then does
$\mathit{prev1}$ take the old KEY1 view; KEY2 updates
$\mathit{prev2}$ by the same conditional rule.  $Q$ LOCKs emit DONE.  DONE has
an $f+1$ relay and a $Q$ decision threshold.  The complete executable rules are
in \Cref{alg:direct-resolver,alg:direct-resolver-phases}.

Every digest quorum is inert until the party holds a well-formed value body
with the named digest.  A missing body is requested only from authors of the
counted statements; a reply is accepted only for a pending target/digest and
creates no protocol statement.  This repair rule handles non-FIFO body/phase
delivery and does not weaken the first-hand thresholds.

\begin{algorithm*}[tp]
\SetAlgoSkip{}
\caption{Target-local resolver activation and stable external validity at
  party $p_i$ for target $u$ ($Q=n-f$)}
\label{alg:direct-resolver}
\scriptsize\linespread{0.70}\selectfont
\Comment{For each $(u,r)$, count only the first authenticated statement from a
sender of each message type.  Quorums match the full typed coordinate: a
resolver ECHO uses $(u,r,q,x)$, including the exact proposal key $q$, and a
later phase uses $(u,r,x)$.  $\hash{R}$ abbreviates the domain-separated
digest of a bounded value
$R\in\{R_u^{\mathrm{full}}(C,T),R_u^{\mathrm{core}}(C,T),R_u^{\mathrm{skip}}\}$,
whose submitted outcomes are $\gfull(C,T)$, $\gcore(C)$, and $\gskip$; the
core value keeps $T$ only in its identity.}
\Glob{}{
  monotone candidate set $\mathcal C_i[u]$ and persistent next-candidate
    pointer $\mathit{next}_i[u]$;
  resolver WISH high-watermarks $\mathit{wish}_i[u,j]$ and own watermark
    $\mathit{ownWish}_i[u]$;
  $\textsc{Backed}_i(u,x)\gets\mathsf{false}$;
  persistent AGB stance
    $z_i(u)\in\{\mathsf{free},\mathsf{non\mbox{-}skip},
    \mathsf{skip\mbox{-}voted}\}$
}

\Upon{the data-plane proposal horizon reaches $u+3$, or target evidence changes thereafter}{
  add every locally justified value to $\mathcal C_i[u]$ in canonical order:
  every value needs $Q$ initial ready-stage responses for $u$;
  $R_u^{\mathrm{full}}(C,T)$ also needs $f+1$ matching grade-$1$ ECHOs;
  $R_u^{\mathrm{core}}(C,T)$ needs $Q$ ready-stage responses whose first
    counted form was not grade $1$ and $f+1$ matching ECHOs of either grade;
  $R_u^{\mathrm{skip}}$ needs $Q$ \noready{} responses;
  \If{$\mathcal C_i[u]\ne\emptyset$ and no instance, decision, or terminal
    seal for $u$ exists}{
    allocate the instance and invoke \textsc{RaiseWish}$(u,1)$
  }
}

\Proc{\textsc{RaiseWish}$(u,r)$}{
  \If{an instance exists, no decision exists, and
      $r>\mathit{ownWish}_i[u]$}{
    persist $\mathit{ownWish}_i[u]\gets r$;
    \textbf{broadcast} $\langle\rwish,u,r\rangle$;
    \lIf{no resolver view has been entered}{arm a $5\Delta$ retransmission timer}
  }
}

\Upon{a direct $\langle\rwish,u,r\rangle$ from $p_j$}{
  \If{the resolver already decided $(x,R)$}{send the retained
    $\langle\rstat{done},u,x,R\rangle$ to $p_j$ once}
  \Else{
    record $\mathit{wish}_i[u,j]\gets\max\{\mathit{wish}_i[u,j],r\}$
      \Comment*[r]{one watermark per sender, kept before allocation}
    \If{no instance exists}{
      \eIf{a retained local terminal seal exists}{
        allocate the instance and invoke \textsc{RaiseWish}$(u,1)$
      }{
        \lIf{$f+1$ senders have watermarks}{allocate the instance}
      }
    }
    \If{an instance exists}{
      \If{$f+1$ WISH high-watermarks support $w>\mathit{ownWish}_i[u]$}{
        invoke \textsc{RaiseWish}$(u,w)$
      }
      enter every missing view through the largest $w$ supported by $Q$ WISHes;
      send $p_j$ the current own WISH and, once per requester and coordinate,
        each applicable own SUGGEST, PROOF, PROPOSE, phase statement, and
        witness through $r+1$, plus the own DONE, if any
    }
  }
}

\Upon{the initial-WISH timer expires before any resolver view is entered}{
  rebroadcast the own WISH and re-arm the $5\Delta$ timer
}

\Fn{\textsc{FreshOK}$(u,R)$}{
  require a well-formed value for $u$, the own immutable echo-stage response,
    and a closed own initial READY state\;
  for $R_u^{\mathrm{full}}(C,T)$, reject an own grade-$0$ READY or a
    different own READY payload; require locally available $C,T$ and the
    tip-anchoring check\;
  for $R_u^{\mathrm{core}}(C,T)$, reject an own grade-$1$ READY or a
    different own READY payload; require locally available $C$\;
  for $R_u^{\mathrm{skip}}$, require the own response \noready{}\;
  require compatibility with any terminal seal and active optimistic lock\;
  before a non-skip acceptance, atomically claim
    $z_i(u):\mathsf{free}\to\mathsf{non\mbox{-}skip}$ or keep
    $\mathsf{non\mbox{-}skip}$; reject $\mathsf{skip\mbox{-}voted}$\;
  return origin bit $1$ for a full value after an exact matching grade-$1$
    target ECHO, and for a core value after a matching target ECHO of either
    grade; return bit $0$ for another non-skip value and no bit for skip
}

\Upon{$Q$ resolver ECHOs for the fresh coordinate $(u,r,0,x)$, a held value
  $R$ with $\hash{R}=x$ (requested from counted senders if missing), and,
  when $R$ is non-skip, at least $f+1$ origin-$1$ ECHOs among them; in any
  resolver view}{
  retain $R$ and \textbf{broadcast} $\langle\reswit,u,r,x,R\rangle$
}
\Upon{$f+1$ matching witnesses for $(u,r,x)$, before sending one}{
  retain the attached verifying $R$ and \textbf{broadcast} the own witness
}
\Upon{$Q$ matching witnesses for $(u,r,x)$}{
  latch $\textsc{Backed}_i(u,x)\gets\mathsf{true}$ and retry the current phase
}
\end{algorithm*}

\begin{algorithm*}[tp]
\SetAlgoSkip{}
\caption{One target's WISH/IT-HS views and decision at party $p_i$}
\label{alg:direct-resolver-phases}
\scriptsize\linespread{0.70}\selectfont
\Comment{SUGGEST, PROOF, PROPOSE, ECHO, KEY, and LOCK statements are counted
  through one view above $\mathit{ownWish}_i[u]$.  The ECHO send and the ECHO,
  KEY, and LOCK quorum rules act only while their view is current; a quorum
  already counted at entry fires on entry.  Witness generation and relay,
  \backed{}, WISH, and DONE relay and decision are view-independent.}
\Glob{}{
  $(\mathit{key1},x_1,\mathit{prev1})$,
  $(\mathit{key2},x_2,\mathit{prev2})$,
  $(\mathit{key3},x_3)$, and $(\mathit{lock},x_L)$, initially zero;
  one PROPOSE as primary, one sent ECHO, and one statement of each later
    phase per resolver view; at most one witness per $(u,r)$ and one DONE per
    target
}

\Upon{entering target-local resolver view $r$}{
  send the primary
    $\langle\rstat{suggest},u,r,\mathit{key3},x_3,
    \mathit{key2},x_2,\mathit{prev2},R_3\rangle$, where $R_3=\bot$ iff
    $\mathit{key3}=0$ and otherwise $\hash{R_3}=x_3$;
  \textbf{broadcast}
    $\langle\rstat{proof},u,r,\mathit{key1},x_1,\mathit{prev1}\rangle$;
  arm a $5\Delta$ no-proposal timer and an $11\Delta$ full-view timer
}

\Fn{\textsc{AcceptKey}$(q,x)$}{
  \KwRet{$q=0$, or $f+1$ counted suggestions with KEY2 history
    $(k_2,y_2,p_2)$ satisfy $p_2<k_2<r$ and either
    $q\le p_2$ or $(q\le k_2\land x=y_2)$}
}
\Fn{\textsc{OpenLock}()} {
  \KwRet{$f+1$ counted proofs with KEY1 history $(k_1,y_1,p_1)$ satisfy
    $p_1<k_1<r$ and either $\mathit{lock}\le p_1$ or
    $(\mathit{lock}\le k_1\land y_1\ne x_L)$}
}

\Comment{Accept a zero-key suggestion directly.  Accept a positive suggestion
  only when $0<q<r$, $\textnormal{\textsc{AcceptKey}}(q,x)$,
  $\backed{}_i(u,x)$, and its attached $R$ verifies with
  $\hash{R}=x$.}

\Upon{$p_i$ is the primary of $(u,r)$ and has $Q$ accepted suggestions}{
  choose one with maximum positive key view, breaking ties by digest;
  \uIf{that key is positive}{
    \textbf{broadcast} $\langle\rstat{propose},u,r,q,x,R\rangle$ with its
      attached $(q,x,R)$
  }
  \uElseIf{$\mathcal C_i[u]\ne\emptyset$}{
    take $R$ at position $\mathit{next}_i[u]\bmod|\mathcal C_i[u]|$ of
      canonical $\mathcal C_i[u]$ and advance the persistent pointer;
    \textbf{broadcast} $\langle\rstat{propose},u,r,0,\hash{R},R\rangle$
  }
  \Else{propose nothing in this view
    \Comment*[r]{the view times out}}
}

\Upon{the first designated-primary proposal $(u,r,q,x,R)$ with
  $q<r$ and $\hash{R}=x$}{
  store it, stopping only the no-proposal timer
}

\Upon{$r$ is current, a proposal $(u,r,q,x,R)$ is stored, no ECHO was sent in
  $r$, and either $\mathit{lock}=0$, $x=x_L$, or
  $(r>q\ge\mathit{lock}\land\textsc{OpenLock}())$; evaluated whenever proofs,
  backing, target data, or the current view change}{
  \eIf{$q=0$}{require \textsc{FreshOK}$(u,R)$ and attach its origin bit}{
    require $\textsc{Backed}_i(u,x)$; attach no origin bit
  }
  \textbf{broadcast} $\langle\rstat{echo},u,r,q,x,\mathit{origin}\rangle$
}

\Comment{Act on a digest quorum only while holding a verifying $R$; if it is
missing, request it from counted senders and accept only the pending digest.}

\Upon{$Q$ ECHOs matching the stored proposal coordinate $(u,r,q,x)$}{
  if $q=0$, wait until the witness rule of
    \Cref{alg:direct-resolver} makes $x$ \backed{};
  \If{$x\ne x_1$}{$\mathit{prev1}\gets\mathit{key1}$ and $x_1\gets x$}
  $\mathit{key1}\gets r$;
  \textbf{broadcast} $\langle\rstat{key1},u,r,x\rangle$
}
\Upon{$Q$ matching KEY1s for $(u,r,x)$}{
  \If{$x\ne x_2$}{$\mathit{prev2}\gets\mathit{key2}$ and $x_2\gets x$}
  $\mathit{key2}\gets r$;
  \textbf{broadcast} $\langle\rstat{key2},u,r,x\rangle$
}
\Upon{$Q$ matching KEY2s for $(u,r,x)$}{
  $(\mathit{key3},x_3)\gets(r,x)$;
  \textbf{broadcast} $\langle\rstat{key3},u,r,x\rangle$
}
\Upon{$Q$ matching KEY3s for $(u,r,x)$}{
  $(\mathit{lock},x_L)\gets(r,x)$;
  \textbf{broadcast} $\langle\rstat{lock},u,r,x\rangle$
}
\Upon{$Q$ matching LOCKs for $(u,r,x)$ and a verifying $R$ is held}{
  \textbf{broadcast} $\langle\rstat{done},u,x,R\rangle$
}
\Upon{$f+1$ matching DONEs for $(u,x)$ before sending one}{
  retain a verifying attached $R$ and \textbf{broadcast} the own DONE
}
\Upon{$Q$ matching DONEs for $(u,x)$}{
  decide $R$, retain the decision, and submit its outcome to
    $\rstat{try-seal}(u,\cdot)$
}

\Upon{the no-proposal timer expires with no stored designated-primary proposal, or the
  full-view timer expires before a decision}{
  invoke \textsc{RaiseWish}$(u,r+1)$
}
\end{algorithm*}

\begin{lemma}[Fresh external validity]\label{lem:direct-resolution}
If a usable fresh resolver ECHO quorum exists for target $u$, its value has the
following properties.
\begin{enumerate}[label=\textup{(\roman*)},leftmargin=2.2em,topsep=2pt,itemsep=2pt]
  \item A full value backed by $(C,T)$ excludes every grade-$0$ target READY
        quorum and every target READY quorum for a different payload; $C,T$
        are retrievable and author-backed.
  \item A core value backed by $(C,T)$ excludes every grade-$1$ target READY
        quorum and every target READY quorum for a different payload; $C$ is
        retrievable and author-backed.
  \item A skip value excludes every target READY quorum.
\end{enumerate}
These claims remain true when the value is carried into later resolver views.
\end{lemma}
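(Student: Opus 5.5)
The plan is quorum intersection between the usable fresh ECHO quorum and any target READY quorum, using only the fresh-acceptance checks F1--F5 that every correct resolver ECHO sender ran at $q=0$. Fix a usable quorum $E$ of $Q$ fresh ECHOs at some $(u,r,0,x)$ with $\hash{R}=x$. Let $Y$ be any target READY quorum of the kind to be excluded (grade-$0$, grade-$1$, different-payload, or any READY at all). Both sets count $Q$ distinct authors, so by quorum property (ii) they share a correct party $p$. By F1, $p$ sent its fresh ECHO only after its initial READY state for $u$ was closed. Its counted ready-stage response is therefore fixed: the first counted form is immutable, and the only later change is one same-proposal refinement from READY-mix.

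The three cases then follow from F2. \emph{Full.} $p$ rejects a full value if its own READY has grade $0$ or names a different payload. So $p$'s contribution to $Y$ cannot be a grade-$0$ READY or a READY for a payload other than $(C,T)$. One point needs care: a READY-mix later refined to grade $0$. F1 requires the READY state to be closed, and F2 is evaluated on the closed (final) grade, so a refinement to grade $0$ has already happened when $p$ checks, and it blocks the full value. \emph{Core.} Symmetric: F2 rejects an own grade-$1$ READY (again on the closed state) or a different payload. \emph{Skip.} F2 requires $p$'s initial response to be \noready{}, so $p$ cannot be in any READY quorum; equivalently, this is \Cref{thm:agb-safety}(e)-style exclusion. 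Each exclusion holds for READY quorums formed before or after $E$, because $p$'s ready-stage statement is one-shot apart from the closed refinement.

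For availability, usability supplies at least $f+1$ origin-$1$ ECHOs in $E$, so at least one comes from a correct $p'$. For a full value, $p'$ emitted a matching grade-$1$ target ECHO, so it passed $\textsc{CoreOK}(C)$ and $\textsc{TipOK}(C,T)$. \Cref{lem:author-witness} and \Cref{lem:availability-witness}(i) then give author-backed, retrievable $C$ and $T$. For a core value, $p'$ sent a matching target ECHO of either grade, which passed $\textsc{CoreOK}(C)$, giving the same conclusion for $C$. F3's local availability check gives an independent route through \Cref{lem:availability-witness}(i) for retrievability.

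For carried values, a positive-key proposal is echoed only under $\textsc{Backed}(u,x)$. By \Cref{lem:stable-backing}, that predicate implies that a usable fresh quorum for exactly this $R$ occurred in some earlier view. The properties above are statements about the value $R$, which is bound by the digest, and not about the view. They therefore transfer unchanged.

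I expect the main obstacle to be the refinement subtlety: showing that a correct $p$ in $E\cap Y$ cannot contribute a grade in $Y$ that it did not have when it checked F2. The concern is READY-mix refining after the fresh ECHO, or a grade tally counted at another party from the refinement. I would settle this by arguing that "closed" in F1 means the refinement window has ended, by deadline, by all $n$ responses, or by refinement. After that point $p$ sends no further ready-stage statement, so its grade as seen by every counter equals its final grade at check time. A remaining mix grade contributes to neither homogeneous quorum.
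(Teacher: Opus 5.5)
Your proposal is correct and follows the paper's proof step for step. You intersect the usable fresh ECHO quorum with any conflicting target READY quorum to obtain a correct party whose closed ready history (F1--F2, including any refinement) rules out the conflict; you use the $f{+}1$ origin-$1$ ECHOs together with the author-witness and availability lemmas for provenance and retrievability; and you invoke stable backing to transfer the properties to carried values. Your longer treatment of READY-mix refinement simply spells out what the paper compresses into ``excludes \ldots\ any later refinement.''
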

\begin{proof}
Let $E$ be the fresh resolver ECHO quorum.  Any target READY quorum intersects
$E$ in at least $n-2f\ge f+1$ parties and therefore in a correct party.  That
party evaluated its own closed immutable target history in F1--F2.  For full,
it excludes own grade $0$, a different payload, and any later refinement; for
core it analogously excludes own grade $1$ and a different payload; for skip
it requires its own \noready{}.  The corresponding conflicting READY quorum
is therefore impossible.

Every correct ECHOer checked the required local prefixes, so at least one
correct permanent holder exists.  A usable non-skip quorum also contains
$f+1$ origin-$1$ ECHOs, at least one from a correct party.  In the full case
that party's matching grade-$1$ target ECHO checked $\textsc{AuthorOK}$ for
$C\cup T$ and tip ancestry; in the core case its matching target ECHO checked
$\textsc{AuthorOK}$ for $C$.  The author-witness and availability lemmas give
publication provenance and retrievability.  Carrying changes no target record
or ECHO fact, and \Cref{lem:stable-backing} binds the later value to this fresh
quorum.
\end{proof}

\begin{lemma}[Fast-seal compatibility]\label{lem:fast-seal}
For a proposal $B=(C,T)$, any two correct fast seals name
the same payload.  A correct $\fastseal(v)\to\gfull(C,T)$ is compatible with
every Direct AGB result and every resolver decision for $v$, and $C,T$ are
witnessed.
\end{lemma}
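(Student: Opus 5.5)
The argument rests on one observation: a correct fast seal at $p_i$ means $p_i$ counted first-hand grade-$1$ ECHOs for $B=(C,T)$ from all $n$ parties. In particular, every correct party's unique, immutable echo-stage response for $v$ is the grade-$1$ ECHO for exactly $B$, since a correct party sends one echo-stage statement and broadcasts it identically to all. Call this set of correct responses the \emph{fixed core}. I would derive each claim of the lemma from it.

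\emph{Payload agreement.} Suppose two correct parties fast-seal $B$ and $B'$. Both counted the same correct party's single ECHO, which names one proposal, so $B=B'$.

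\emph{Witnesses.} A correct grade-$1$ ECHOer passed $\textsc{CoreOK}_i(C)$ and $\textsc{TipOK}_i(C,T)$. The eligibility contract then gives $\textsc{Witnessed}(C)$ and $\textsc{Witnessed}(T)$. Concretely for Vantage, this follows from \Cref{lem:author-witness,lem:availability-witness}.

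\emph{Compatibility with Direct AGB.} Any READY quorum contains a correct sender, and that sender's grounding ECHO quorum contains a correct echoer. That echoer's ECHO names $B$, so every completion names $B$ (\Cref{thm:agb-safety}(b)). A grade-$0$ READY quorum would need a correct sender that counted $Q$ grade-$0$ ECHOs for $B$, hence at least one correct grade-$0$ ECHO. This contradicts the fixed core, so no $\gcore$ direct result exists. Any direct result is therefore $\gfull(C,T)$.

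\emph{Compatibility with the resolver.} This is the main obstacle, and the optimistic lock carries it. By \Cref{lem:stable-backing} and \Cref{lem:direct-resolution}, a resolver decision descends from a usable fresh resolver ECHO quorum $E$. Skip values need a correct \noready{} sender in $E$ that also holds the F-checks, so I have to rule out $E$ being for any value other than $R_u^{\mathrm{full}}(C,T)$. Each correct member of $E$ sent a grade-$1$ ECHO for $B$, so it installed $L_i(v,B)$ before that send. F4 then requires that the lock be inactive when a conflicting fresh ECHO is sent. An inactive lock means the party counted $f{+}1$ first-hand echo-stage responses that are not matching responses for $B$, and at least one of these came from a correct party. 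But every correct party's response is a matching grade-$1$ ECHO, which is a contradiction. This settles full-versus-core, different payloads, and skip together. One caveat is that a skip ECHO would also fail F2, because a correct party's initial ready-stage response cannot be \noready{} if \ldots; the lock argument is cleaner, so I would rely on it.

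\emph{Temporal order.} The lock is recorded before the grade-$1$ ECHO, and a correct party's ECHO is the same regardless of when it is sent. The contradiction therefore holds whether the resolver ECHO came before or after the fast seal. If a correct party sent a conflicting resolver ECHO, it would have had to do so before its own grade-$1$ ECHO. That is impossible, because F1 requires the party's own echo-stage response to already exist, and that response is the grade-$1$ ECHO.

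\emph{Carried values.} For positive keys I would invoke IT-HS lock safety together with \Cref{lem:stable-backing}. Every carried value traces back to some fresh quorum, and the argument above rules out every fresh quorum except $R_u^{\mathrm{full}}(C,T)$. So any resolver decision submits $\gfull(C,T)$, which is compatible with the fast seal.
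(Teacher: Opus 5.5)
Your proof is correct and follows essentially the same route as the paper. The all-$n$ census fixes every correct echo-stage response as the grade-$1$ ECHO for $B$; quorum intersection then rules out any grade-$0$ or different-payload READY quorum and supplies the witnesses; a never-released optimistic lock blocks every incompatible fresh resolver ECHO quorum; and stable backing extends this to carried and decided values.

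Your explicit use of F1 to show that the lock exists before any fresh resolver ECHO states the ``either temporal order'' point more sharply than the paper's proof does. The appeal to IT-HS lock safety for carried values is unnecessary, since stable backing alone suffices. Your unfinished F2 aside would not have worked anyway: a correct party may send \noready{} before counting an ECHO quorum. Relying on the lock is therefore essential, not merely cleaner.
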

\begin{proof}
A fast observer counted the unique response of every author as a matching
response for $B$, a grade-$1$ ECHO for exactly $B$.  Hence all correct parties
emitted that response and installed the optimistic lock for $B$.  Two all-$n$ censuses cannot name
different proposals.  Quorum intersection excludes a grade-$0$ or
different-payload READY quorum, so direct results are compatible, and the
fast observer's checks witness $C,T$.

Suppose an incompatible fresh resolver ECHO quorum existed.  Its intersection
with the correct parties contains a correct sender.  That sender's optimistic
lock could accept the incompatible value only after $f+1$ nonmatching target
responses released it.  At least one such response is correct and also occurs
in the all-$n$ census, contradicting that every response there is a matching
response for $B$.  By \Cref{lem:stable-backing}, every carried or decided
resolver value descends from a fresh quorum, so it is compatible as well.
\end{proof}

\begin{lemma}[Grounded skip compatibility]\label{lem:skip-seal}
For every target $u$:
\begin{enumerate}[label=\textup{(\roman*)},leftmargin=2.2em,topsep=2pt,itemsep=2pt]
  \item if a correct party broadcasts $\skipvote(u)$, no correct party ever
        broadcasts a proposal-READY for $u$;
  \item a quorum of skip votes and a usable fresh non-skip resolver ECHO
        quorum cannot both exist; and
  \item every correct $\skipseal(u)\to\gskip$ is compatible with every direct,
        fast, or resolver submission.
\end{enumerate}
\end{lemma}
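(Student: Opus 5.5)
We prove the three parts in order, each resting on the grounding rule for a skip vote: a correct party broadcasts $\skipvote(u)$ only after its own initial ready-stage response $R_i(u)$ was \noready{}, after counting $Q$ direct \echoskip{}s for $u$, and after atomically moving its stance $z_i(u)$ from $\mathsf{free}$ to $\mathsf{skip\mbox{-}voted}$.

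\emph{Part (i).} Suppose correct $p_i$ sent $\skipvote(u)$, so it counted a quorum $S$ of \echoskip{} echo-stage responses. Suppose some correct $p_j$ broadcasts a proposal-READY for $u$. By the grounding fact before \Cref{thm:agb-safety}, $p_j$ counted a quorum $E$ of proposal ECHOs. By quorum property (ii) of \Cref{sec:model}, $S\cap E$ contains a correct party. Echo-stage responses are immutable and one-shot, so that party cannot have sent both an \echoskip{} and a proposal ECHO. This is a contradiction. The point to note is that the argument uses $p_i$'s counted \echoskip{} quorum, not merely its own \noready{}: a single \noready{} would not exclude other parties' READYs.

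\emph{Part (ii).} Let $K$ be a quorum of skip votes and $E'$ a usable fresh non-skip resolver ECHO quorum for $u$. Their intersection contains a correct party $p$. Because $p$ is in $K$, it set $z_p(u)=\mathsf{skip\mbox{-}voted}$ before voting. Because $p$ is in $E'$, it passed F5 before its fresh ECHO, which requires the stance to be $\mathsf{free}$ or $\mathsf{non\mbox{-}skip}$ and then leaves it $\mathsf{non\mbox{-}skip}$. The stance is persistent, and both transitions are atomic and leave $\mathsf{free}$ irreversibly. Hence the two sends by $p$ are mutually exclusive in either temporal order, which is a contradiction. The skip-vote guard $z_i(u)=\mathsf{free}$ covers the order in which the non-skip claim comes first.

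\emph{Part (iii).} A correct $\skipseal(u)\to\gskip$ rests on $Q$ skip votes, and at least one of them is from a correct party. We check each kind of competing submission in turn.
\begin{itemize}[leftmargin=1.4em,topsep=2pt,itemsep=2pt]
  \item \emph{Direct results.} Every direct full or core result needs a proposal-READY quorum, which contains a correct READY sender. Part (i) forbids such a sender, so no direct result exists.
  \item \emph{Fast seal.} A fast seal needs grade-$1$ ECHOs from all $n$ parties, including the $Q$ correct-majority parties. But the correct skip voter counted $Q$ \echoskip{}s, at least $f{+}1$ of them from correct parties, and those parties' immutable echo-stage responses cannot also be grade-$1$ ECHOs. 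So no fast seal exists.
  \item \emph{Resolver decisions.} By \Cref{lem:stable-backing}, every decided resolver value descends from a usable fresh ECHO quorum. A non-skip decision would need a usable fresh non-skip quorum, which part (ii) excludes. Hence any resolver decision is skip and agrees with the skip seal.
  \item \emph{Other skip seals.} These name the same outcome trivially.
\end{itemize}

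The main obstacle I expect is the carried-value case in (iii). Part (ii) directly handles only fresh quorums, while decisions may be carried through positive keys. The plan is to rely on the causal-first-witness argument of \Cref{lem:stable-backing}: every backed value originates from a genuine usable fresh ECHO quorum at some resolver view, and part (ii) applies to that quorum whenever it forms. Since skip votes are never retracted, the order in which the skip-vote quorum and that fresh quorum form does not matter.
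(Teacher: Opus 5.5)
Your proof is correct and takes essentially the paper's route. Part (i) intersects the skip voter's counted \echoskip{} quorum with a READY sender's grounding ECHO quorum; the paper phrases this as a count over correct parties using the actual Byzantine number $b$, which is equivalent. Part (ii) uses the exclusivity of the persistent stance in either temporal order, and part (iii) uses \Cref{lem:stable-backing} to reduce any carried or decided resolver value to a fresh ECHO quorum, with the direct and fast-seal cases spelled out where the paper compresses them into one sentence.
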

\begin{proof}
Let $b\le f$ be the actual number of Byzantine parties.  A correct skip voter
counted $Q$ target \echoskip{} responses, including at least $n-f-b$ correct
echo-skippers.  A correct proposal-READY sender would have a grounding ECHO
quorum with at least $n-f-b$ correct proposal echoers.  The two correct sets
are disjoint, yet
\[
  2(n-f-b)>n-b
\]
because the margin is $n-2f-b\ge f+1-b>0$.  Thus no correct proposal-READY,
completion, or direct result exists; an \echoskip{} also excludes the all-$n$
fast census.

A skip-vote quorum and a fresh non-skip resolver ECHO quorum intersect in at
least $n-2f\ge f+1$ parties, including a correct one.  Before the vote that
party persistently claimed skip-voted; before the ECHO it instead claimed
non-skip.  The claims are exclusive in either temporal order.  A carried
resolver value has a \backed{} fresh quorum by \Cref{lem:stable-backing}, so the
same argument applies to it and proves compatibility.
\end{proof}

\begin{lemma}[One value per resolver view]\label{lem:resolver-view-uniqueness}
Fix a target $u$ and resolver view $r$.  If two correct parties send any
statements among KEY1, KEY2, KEY3, and LOCK in view $r$, then the statements
name the same value.
\end{lemma}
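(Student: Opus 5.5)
The plan is to reduce every phase statement in view $r$ to the ECHO quorum that justified it, and then use quorum intersection on ECHOs. By the rules of \Cref{alg:direct-resolver-phases}, a correct party sends KEY1 for $(u,r,x)$ only after counting $Q$ resolver ECHOs that match its stored proposal coordinate $(u,r,q,x)$. It sends KEY2 only after $Q$ matching KEY1s, KEY3 only after $Q$ matching KEY2s, and LOCK only after $Q$ matching KEY3s. Each of these quorums contains at least $n-2f\ge f+1$ correct senders, hence at least one. So I would argue by induction along the phase chain, KEY1, KEY2, KEY3, LOCK. A correct statement of any later phase naming $x$ implies some correct statement of the preceding phase in the same view naming $x$. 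Following this back, we reach a correct KEY1 for $(u,r,x)$, and hence a $Q$ ECHO quorum for some coordinate $(u,r,q,x)$ counted at a correct party. Since every statement is first-hand and counted per sender, this chain needs no timing.

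Next I would show that all ECHO quorums in view $r$ name the same digest. Each correct party sends at most one resolver ECHO per resolver view, as stated in the globals of \Cref{alg:direct-resolver-phases}. Suppose two correct parties counted $Q$ ECHOs for $(u,r,q,x)$ and $(u,r,q',x')$ respectively. The two sender sets intersect in at least $n-2f\ge f+1$ parties, so they share a correct party. That party's single ECHO binds a single coordinate, so $(q,x)=(q',x')$. Hence the digests named by any two correct KEY1s in view $r$ coincide. With the backward induction, the digests named by any two correct statements among KEY1--LOCK in view $r$ coincide too. As an alternative for the later phases, the same argument works directly: each correct party sends one statement of each phase per view, so two $Q$ quorums of KEY$j$ share a correct sender. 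I would still route everything through the ECHO layer, because the statement mixes phases, e.g.\ a KEY1 at one party and a LOCK at another.

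Finally I would pass from equality of digests to equality of values. A digest quorum is inert until a verifying body $R$ with $\hash{R}=x$ is held. Collision resistance, assumed throughout per \Cref{sec:model}, then makes $x$ determine a unique bounded value $R$. Thus the named values agree, except with negligible probability.

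The main obstacle I expect is bookkeeping rather than the counting itself. I must check that the one-statement-per-view rule really holds across the fresh and carried branches, given that a stored proposal may be rechecked. This is needed so that a party cannot emit an ECHO for $q=0$ and then a separate ECHO for a backed $q>0$ value in the same view. I must also check that a quorum counted before entry does not let a party act on two different proposals. For the first point I would cite the ``no ECHO was sent in $r$'' guard. For the second, I would note that only the first designated-primary proposal is stored, and that quorums require full coordinate equality.
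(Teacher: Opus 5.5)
Your proof is correct and follows the paper's argument: two correct KEY1 senders' ECHO quorums intersect in a correct party that sends only one ECHO, and an induction through KEY2, KEY3, and LOCK traces every correct later-phase statement back to a correct earlier-phase statement for the same digest. Your explicit treatment of mixed phases, and your step from equal digests to equal values via collision resistance, only spell out what the paper leaves implicit.
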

\begin{proof}
A correct KEY1 sender counted $Q$ ECHOs for one exact proposal coordinate
$(u,r,q,x)$.  Two such ECHO quorums intersect in at least
$2Q-n=n-2f\ge f+1$ parties and hence in a correct party.  A correct party sends
at most one ECHO in a resolver view, so two correct KEY1 statements in that
view have the same value.  Induct on the remaining phases.  A correct sender
of the next phase counted $Q$ matching statements of the preceding phase; any
two such quorums again intersect in a correct one-shot sender, and every such
statement agrees with the already unique value.  This gives the claim through
LOCK.
\end{proof}

\begin{lemma}[Persistent lock safety]\label{lem:resolver-lock-safety}
Suppose a correct party sends DONE for $x$ from a quorum of LOCK statements in
resolver view $r$.  No correct party sends KEY1 for $y\ne x$ in a later view.
Consequently, all correct DONE statements, including relays, name one value,
and no two correct parties decide differently.
\end{lemma}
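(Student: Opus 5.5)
We prove this by induction on resolver views after $r$, following the IT-HS lock argument. Since a correct party sent DONE for $x$ from $Q$ LOCKs in view $r$, \Cref{lem:resolver-view-uniqueness} and quorum intersection give a set $S$ of at least $n-2f\ge f+1$ correct parties that sent KEY1, KEY2, KEY3 and LOCK for $x$ in view $r$. These parties therefore hold $\mathit{key1}=\mathit{key2}=\mathit{key3}=\mathit{lock}=r$ with $x_1=x_2=x_3=x_L=x$ at view $r$. In fact, any LOCK quorum for $x$ in view $r$ contains at least $n-2f$ correct LOCK senders, each with $\mathit{lock}=r$ and $x_L=x$. The induction hypothesis, for every view $r'$ with $r<r'<w$, is that no correct party sends KEY1 (hence any later phase statement, by \Cref{lem:resolver-view-uniqueness}) for a value other than $x$ in view $r'$. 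Under this hypothesis the locked parties' histories stay pinned to $x$. Their $(\mathit{key1},x_1,\mathit{prev1})$ histories either remain at $(r,x,\cdot)$ or advance with the same value $x$, and by the conditional update rule $\mathit{prev1}$ stays below $r$. The same holds for KEY2.

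For the step, suppose a correct party sends KEY1 for $y\ne x$ in view $w>r$. It counted $Q$ ECHOs for a coordinate $(u,w,q,y)$, so some party in the locked set $S$ (those with $\mathit{lock}\ge r$ on $x$) echoed it. That party's lock guard required either $y=x_L$, which is impossible, or $w>q\ge\mathit{lock}\ge r$ together with \textsc{OpenLock}. I split the analysis on $q$.

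In the first case $q\ge r$ with $q>0$, so the proposal is a carried value whose positive key the primary obtained from accepted suggestions. The echoing party also checks that the key satisfies the IT-HS relation for view $w$. I need that no key $q\ge r$ can be attached to $y\ne x$. A key $q\ge r$ exists for $y$ only if KEY2/KEY3 quorums for $y$ formed in view $q$. For $q=r$ this contradicts \Cref{lem:resolver-view-uniqueness}, and for $r<q<w$ it contradicts the induction hypothesis.

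In the second case, \textsc{OpenLock} at the echoer needs $f+1$ KEY1 proofs with $p_1<k_1<w$ and either $\mathit{lock}\le p_1$ or ($\mathit{lock}\le k_1$ and $y_1\ne x_L$). Such proofs come from at least one correct sender. A correct sender with $k_1\ge r$ has $y_1=x$ by the hypothesis and $p_1<r$, because $\mathit{prev1}$ changes only when the KEY1 value changes, and that value has been $x$ since view $r$. This rules out both disjuncts when $\mathit{lock}\ge r$.

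The main obstacle is the gap in this second case: the $f+1$ proofs are counted only at the echoing party, and up to $f$ of them may be Byzantine. The echoer's own lock may also be older than $r$ if it lies outside $S$. I would resolve this by choosing the echoer in the intersection of the ECHO quorum and the LOCK-sender quorum, so that its own $\mathit{lock}\ge r$ and $x_L=x$. I would then argue, as in \cite{iths}, that every \textsc{OpenLock} witness set contains a correct proof, and that correct proofs carry the histories described above. The same care is needed to confirm that the \textsc{AcceptKey} condition excludes a Byzantine-inflated key $q\ge r$ for $y$; this relies on \backed{} from \Cref{lem:stable-backing} together with the KEY2-history proofs.

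With KEY1 for $y\ne x$ excluded in all later views, no KEY2, KEY3, LOCK quorum for $y$ exists afterward, so no correct original DONE names $y$. A relayed DONE follows $f+1$ DONEs, which include a correct one; inducting causally on correct DONEs shows every correct DONE names $x$. Hence every $Q$ decision quorum names $x$ and decisions agree.
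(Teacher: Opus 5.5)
Your second-case argument is the paper's proof. You take an echoer in the intersection of the ECHO quorum for $y$ with the correct LOCK senders of view $r$. Its guard then forces \textsc{OpenLock}. Among the $f+1$ proofs there is a correct sender, and its KEY1 history, through the conditional $\mathit{prev1}$ update, would exhibit a non-$x$ KEY1 in a view in $[r,w)$, contradicting \Cref{lem:resolver-view-uniqueness} or minimality. The echo guard requires \textsc{OpenLock} whenever $y\ne x_L$, whatever $q$ is, so this argument alone closes the step. You therefore do not need a case split on $q$, and the ``gap'' you flag is already closed by your own choice of echoer.

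You should delete the first case, because it relies on a check that does not exist. A resolver echoer verifies only $r>q\ge\mathit{lock}$, \textsc{OpenLock}, and \backed{} (or \textsc{FreshOK} when $q=0$). \textsc{AcceptKey} is evaluated only by the primary when it accepts suggestions. A Byzantine primary can therefore attach any key $\mathit{lock}\le q<w$ to a \backed{} value $y$. \backed{} certifies a fresh ECHO quorum for the value, not for any key. So the claim that a key $q\ge r$ for $y$ requires KEY2/KEY3 quorums for $y$ in view $q$ is false, and no rule excludes an inflated key; safety rests on \textsc{OpenLock} alone.

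There are two smaller omissions. First, take the causally first non-$x$ correct KEY1 in view $w$. Otherwise the echoer's lock could already have moved to $(w,y)$ inside view $w$ before its ECHO, which makes $y=x_L$ and bypasses \textsc{OpenLock}. Second, for DONE agreement, also rule out a correct DONE from an earlier view by applying the first part from the smaller view, and handle same-view DONEs with \Cref{lem:resolver-view-uniqueness}.
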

\begin{proof}
The LOCK quorum contains at least $Q-f=n-2f\ge f+1$ correct senders.  Fix a
set $L$ of $f+1$ of them; each member of $L$ held lock $(r,x)$ when it sent its
LOCK statement.

Assume for contradiction that $s>r$ is the smallest view in which a correct
party sends KEY1 for some $y\ne x$, and choose the causally first such KEY1 in
view $s$.  Its sender counted an ECHO quorum $E$ for $y$; every ECHO in $E$
therefore precedes that chosen KEY1.
Because $|E|=Q$ and $|L|=f+1$, $E\cap L$ is nonempty; let $p_h$ be a party in
the intersection.  Since $p_h$ sent its LOCK while $r$ was current and its
ECHO after entering $s>r$, before that ECHO it still has an $x$-valued lock
at some view at least $r$.  Indeed, changing that lock to a different
value in an intermediate view would require a correct KEY1 for that value in
the same view by \Cref{lem:resolver-view-uniqueness}, contradicting the
minimality of $s$.  It cannot first change the lock within view $s$ before
this ECHO either: a correct LOCK for the new value causally follows a correct
KEY1 for that value, which would precede the chosen KEY1.

Thus $p_h$ can ECHO $y$ only by counting $f+1$ proofs that satisfy
\textsc{OpenLock}; one proof sender $p_j$ is correct.  Write its proof as
$(k_1,z,p_1)$, where $p_1<k_1<s$.  If $p_h$'s lock view is at most $k_1$ and
$z\ne x$, then $p_j$ sent a non-$x$ KEY1 at view $k_1\ge r$.  Otherwise the
accepted proof has the lock view at most $p_1$.  By the conditional previous-
key update, $p_j$ sent KEY1 at both $p_1$ and $k_1$ with different values, so
one of them is non-$x$.  In either case a correct non-$x$ KEY1 occurs in a
view between $r$ and $s$: equality with $r$ contradicts
\Cref{lem:resolver-view-uniqueness}, and a larger view contradicts the
minimality of $s$.

It follows that phase-originated correct DONEs in different views agree: take
the smaller view and apply the first part, while equal views use
\Cref{lem:resolver-view-uniqueness}.  A causally first correct DONE for any
other value cannot be a relay, because its $f+1$ triggering DONEs include an
earlier correct sender.  Hence relays introduce no new value.  Finally, every
$Q$-DONE decision quorum contains a correct sender, so all correct decisions
have that unique value.
\end{proof}

\begin{lemma}[Persistent key support]\label{lem:resolver-key-support}
If a correct party sets KEY3 to $(k,x)$ with $k>0$, then at least $f+1$
correct parties' suggestions in every later view support
\textsc{AcceptKey}$(k,x)$.  Once stable backing and the attached values have
reached the correct participants, a correct primary eventually accepts every
correct suggestion.  Moreover, if some correct party has a lock at view
$\ell>0$, every set of $Q$ accepted suggestions in a later view contains a
positive KEY3 view at least $\ell$.
\end{lemma}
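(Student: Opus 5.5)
The plan is to treat the three claims separately, using only quorum arithmetic, the conditional previous-key update rule of \Cref{alg:direct-resolver-phases}, and \Cref{lem:stable-backing}.

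\textbf{First claim.} A correct party sets KEY3 to $(k,x)$ only after counting $Q$ matching KEY2 statements for $(u,k,x)$. These contain at least $n-2f\ge f+1$ correct senders. Each of them set $(\mathit{key2},x_2)=(k,x)$ while view $k$ was current. Fix such a sender $p_j$ and a later view $r>k$; its suggestion carries the KEY2 history $(k_2,y_2,p_2)$. I would prove, by induction over $p_j$'s KEY2 updates after view $k$, the invariant that either $y_2=x$ and $k\le k_2$, or $k\le p_2$.
\begin{itemize}
\item An update to the same value $x$ raises $k_2$ and leaves $p_2$ unchanged.
\item An update to a different value sets $p_2$ to the old $\mathit{key2}\ge k$.
\item After that, $\mathit{prev2}$ only ever receives earlier values of the monotone $\mathit{key2}$, so it never drops below $k$.
\end{itemize}
The side conditions also hold: $p_2<k_2$ because $\mathit{prev2}$ is always an earlier key view, and $k_2<r$ because phase statements act only while their view is current, so every KEY2 update precedes entry to $r$. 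Together these give \textsc{AcceptKey}$(k,x)$ in view $r$.

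\textbf{Second claim.} Take a correct suggestion with positive key $(q,x)$ in view $r$; then $0<q<r$. Its value reached KEY1 in view $q$ only through an ECHO quorum. For a fresh proposal the KEY1 rule waits for \backed{}, and for a carried proposal the ECHO rule already required \backed{}. So the suggester holds $\textsc{Backed}(u,x)$ and a verifying $R$. By \Cref{lem:stable-backing}, the primary eventually latches the same predicate and obtains $R$.

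For \textsc{AcceptKey}, apply the first claim to the suggester's own KEY3. This yields $f+1$ correct suggestions in view $r$ supporting $(q,x)$. Those suggestions reach the primary, either by reliable delivery or through the send-upon-join replay triggered by its WISH. The primary therefore eventually accepts the original suggestion.

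\textbf{Third claim.} A correct lock at view $\ell$ with value $x$ follows $Q$ KEY3 statements for $(\ell,x)$. These give a set $K$ of at least $f+1$ correct parties that set $\mathit{key3}\ge\ell$ while $\ell$ was current, hence before entering any later view; since $\mathit{key3}$ is monotone, it stays at least $\ell$. A set of $Q$ accepted suggestions in a later view has $|Q|+|K|\ge n+1>n$, so it contains a suggestion from some member of $K$. That member is correct and reports its current $\mathit{key3}\ge\ell>0$.

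The main obstacle is the bookkeeping in the first claim. I must check that the conditional $\mathit{prev2}$ update really keeps the invariant across arbitrary interleavings of same-value and different-value KEY2 updates. I also need that the strict inequalities $p_2<k_2<r$ follow from the current-view discipline and not merely from monotonicity.
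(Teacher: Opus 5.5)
Your proof is correct and follows the paper's argument essentially step for step. Each part uses the same idea: the first claim uses the $\mathit{prev2}$ invariant carried by the $f+1$ correct KEY2 senders; the second uses \Cref{lem:stable-backing} together with those supporters for acceptance; and the third uses quorum intersection with the correct KEY3 senders. You also check $p_2<k_2<r$ explicitly, which the paper leaves implicit.

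Two small fixes are needed. First, the suggester need not itself have latched $\textsc{Backed}(u,x)$ when it set KEY3 from a KEY2 quorum. Some correct party did, though: a KEY1 sender in the fresh case, or an ECHOer in the carried case. That is all \Cref{lem:stable-backing} requires. Second, you do not mention zero-key suggestions; they are accepted directly by rule.
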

\begin{proof}
Acting only in the current view keeps a party's KEY1, KEY2, KEY3, and lock
views nondecreasing.  Setting KEY3 to $(k,x)$
requires $Q$ KEY2 statements for $(k,x)$, including
at least $f+1$ from correct parties.  Immediately afterward, each such sender
has either $k_2\ge k$ with $x_2=x$, or already has $p_2\ge k$.  A later KEY2
with the same value preserves the first condition.  At the first change of
value, the conditional update sets $p_2$ to the previous $k_2\ge k$, after
which the second condition is permanent.  Their future suggestions therefore
continue to support \textsc{AcceptKey}$(k,x)$.

A zero-key correct suggestion is accepted directly.  A positive correct
suggestion carries its retained value; the $f+1$ correct supporters above
eventually reach the primary, and stable backing eventually holds there, so
the positive suggestion is accepted as well.

Finally, a correct lock at $\ell$ followed a quorum of KEY3 statements in
view $\ell$.  At least $n-2f$ of those senders are correct and thereafter
report KEY3 views at least $\ell$.  Any set of $Q=n-f$ senders intersects this
set because its complement has at most $2f<Q$.  Hence a $Q$-suggestion set
contains a positive key at least $\ell$.
\end{proof}

\begin{lemma}[Persistent lock-opening support]\label{lem:resolver-open-support}
In resolver view $r$, let a correct primary select a maximum positive accepted
suggestion $(q,x)$.  For every correct lock $(\ell,z)$, $q\ge\ell$; if
$x\ne z$, at least $f+1$ correct proofs eventually satisfy
\textsc{OpenLock} for that lock.
\end{lemma}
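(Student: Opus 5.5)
The plan is to prove both claims in the standard IT-HS manner, using the key-history invariants already established in \Cref{lem:resolver-key-support}.

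\emph{First claim: $q\ge\ell$.}
Fix a correct lock $(\ell,z)$ with $\ell<r$. The lock is acquired only while its view is current, and lock views are nondecreasing, so $\ell<r$. By the last part of \Cref{lem:resolver-key-support}, every set of $Q$ accepted suggestions in view $r$ contains a positive KEY3 view at least $\ell$. The primary selects a maximum positive accepted key, so $q\ge\ell$.

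\emph{Second claim: the case $x\ne z$.}
Suppose $x\ne z$ and $q\ge\ell$. Since $(q,x)$ is accepted, \Cref{lem:stable-backing} and the acceptance rule give a correct KEY3 for $(q,x)$ somewhere. By \Cref{lem:resolver-view-uniqueness}, $q\ne\ell$: otherwise view $\ell$ would carry two values. Hence $q>\ell$.

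A correct KEY3 for $(q,x)$ required $Q$ KEY2s for $x$ in view $q$, hence $Q$ KEY1s for $x$ in view $q$. At least $f+1$ of those KEY1 senders are correct; call this set $S$. Each member of $S$ set $k_1=q$, $x_1=x$. The lock $(\ell,z)$ likewise descends from a correct KEY1 for $z$ in view $\ell$.

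I will show that every member of $S$ sends a PROOF in view $r$ satisfying \textsc{OpenLock} for $(\ell,z)$. The argument mirrors the KEY2 persistence argument of \Cref{lem:resolver-key-support}, applied to KEY1 histories $(k_1,y_1,p_1)$. Immediately after view $q$, a member of $S$ has $k_1\ge q\ge\ell$ and $y_1=x\ne z$, or it later changes value. At the first value change, the conditional update sets $p_1$ to the old $k_1\ge q\ge\ell$, so $\ell\le p_1$ holds permanently thereafter. If instead the value returns to $z$ while $p_1$ stays small, the first disjunct still holds through $p_1\ge\ell$. So in either branch the proof satisfies $\ell\le p_1$ or ($\ell\le k_1$ and $y_1\ne z$). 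Because KEY1 is sent only in the current view, $p_1<k_1<r$. Each such proof is broadcast on entry to $r$ and reaches the lock holder by reliable delivery. This gives at least $f+1$ correct satisfying proofs.

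\emph{Main obstacle.}
The delicate step is the bookkeeping of $\mathit{prev1}$ under the conditional update, in particular ensuring that a member of $S$ whose current $y_1$ happens to equal $z$ still satisfies the $p_1$ disjunct. I would settle this first by a short invariant: once a correct party has held $x_1=x$ at KEY1 view $k$, every later triple $(k_1,y_1,p_1)$ satisfies either ($y_1=x$ and $k_1\ge k$) or $p_1\ge k$. This is proved by induction on KEY1 updates.
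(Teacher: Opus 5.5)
Your first claim ($q\ge\ell$ from the final part of \Cref{lem:resolver-key-support}) matches the paper. The second claim has a genuine gap at its starting point.

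\paragraph{The gap.} You assert that acceptance of $(q,x)$, together with \Cref{lem:stable-backing}, yields a correct KEY3 for $(q,x)$, and you build your set $S$ of correct KEY1 senders for $x$ in view $q$ on that. Nothing supports this. The selected suggestion may come from a Byzantine party, and the primary checks only $\textsc{Backed}_i(u,x)$ and $\textsc{AcceptKey}(q,x)$.
\begin{itemize}
\item Backing certifies a usable fresh resolver ECHO quorum for $x$ in some resolver view. It says nothing about any KEY phase.
\item $\textsc{AcceptKey}$ guarantees only that one correct supporter $p_j$ has a KEY2 history with either $p_2\ge q$, or $k_2\ge q$ and $y_2=x$.
\end{itemize}
In the second case, the correct KEY2 for $x$ may sit at a view $k_2>q$, with no KEY statement for $x$ at view $q$ at all. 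In the first case, the history says nothing about $x$, and possibly no correct party ever sent KEY1 for $x$ at any view $\ge q$. So $S$ can be empty.

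Your derivation of $q\ne\ell$ via \Cref{lem:resolver-view-uniqueness} collapses for the same reason. A Byzantine suggestion $(\ell,x)$ can tie a correct $(\ell,z)$ and win the digest tie-break, so $q=\ell$ with $x\ne z$ is actually possible.

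\paragraph{The missing idea.} Anchor the argument on the correct $\textsc{AcceptKey}$ supporter $p_j$, not on the suggester.
\begin{itemize}
\item If $k_2\ge q\ge\ell$ and $y_2=x\ne z$: the KEY1 quorum behind $p_j$'s KEY2 at view $k_2$ gives $f+1$ correct KEY1 senders at a view $\ge\ell$ with a non-$z$ value.
\item If $p_2\ge q\ge\ell$: the conditional $\mathit{prev2}$ update means $p_j$ sent KEY2 at two views $p_2<k_2$, both $\ge\ell$, with different values. At least one of those values differs from $z$. Its KEY1 quorum again supplies $f+1$ correct senders at a view $\ge\ell$ with a non-$z$ value, though that value need not be $x$.
\end{itemize}
From here, your persistence invariant finishes the proof as you planned, once you state it for an arbitrary non-$z$ value $y$ in place of $x$. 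You also never need $q>\ell$, because $\textsc{OpenLock}$'s second disjunct requires only $\ell\le k_1$ together with $y_1\ne z$.

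One small correction: your sentence about the value ``returning to $z$ while $p_1$ stays small'' describes an impossible case, since any value change raises $p_1$ to the old $k_1$. The invariant in your final paragraph is correct, however.
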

\begin{proof}
The inequality $q\ge\ell$ follows from the final part of
\Cref{lem:resolver-key-support}.  Because $(q,x)$ was accepted, the primary
counted $f+1$ supporting KEY2-history tuples.  At least one belongs to a
correct party $p_j$.  Its tuple satisfies either $p_2\ge q$, or
$k_2\ge q$ with $x_2=x$.  Since $q\ge\ell$ and $x\ne z$, $p_j$ therefore has
either $p_2\ge\ell$, or $k_2\ge\ell$ with $x_2\ne z$.

In the second case, when $p_j$ set $(k_2,x_2)$ it counted a KEY1 quorum,
including $f+1$ correct KEY1 senders at view $k_2\ge\ell$ with value different
from $z$.  In the first case, the conditional previous-key update records two
KEY2 views $p_2<k_2$ with different values.  At least one of those values
differs from $z$; the corresponding KEY1 quorum again supplies $f+1$ correct
senders at a view at least $\ell$ with a non-$z$ value.

For each of these correct KEY1 senders, the condition needed by
\textsc{OpenLock} persists.  Repeating the same value leaves
$k_1\ge\ell$ and $x_1\ne z$; changing value sets $p_1$ to the old
$k_1\ge\ell$, after which $p_1\ge\ell$ remains true.  These parties broadcast
their proof tuples on entering $r$, and send-upon-join replays them if needed.
Thus every correct party with lock $(\ell,z)$ eventually counts the required
$f+1$ proofs.
\end{proof}

\begin{lemma}[One timely resolver view]\label{lem:resolver-timing}
Suppose the first correct entry to target-local view $r$ occurs at
$t\ge\mathrm{GST}$, all correct parties enter by $t+2\delta$, the primary is
correct, and stable backing and attached values have reached all correct
participants.  Suppose either that the primary selects a positive key, or
that all accepted keys are zero and its fresh candidate is already acceptable
with its required data at every correct party, with at least $f+1$ correct
parties attaching origin bit $1$ when that candidate is non-skip.  Then all
correct parties receive the proposal before their $5\Delta$ no-proposal
deadlines.  A fresh proposal decides by $t+11\delta$; a carried proposal
decides by $t+10\delta$.
\end{lemma}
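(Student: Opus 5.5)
The plan is a delay-by-delay walk through one resolver view, in the style of \Cref{thm:agb-live}. Each phase is charged at most one actual delay $\delta$ after the last correct input it waits for. Separately, we check that no correct timer or WISH moves a party out of view $r$ before it decides.

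\emph{Step 1: entry and suggestions.} By hypothesis every correct party enters $r$ by $t+2\delta$. On entry it sends the primary its SUGGEST and broadcasts its PROOF, so all correct SUGGEST and PROOF messages arrive by $t+3\delta$; send-upon-join replay covers any that were sent before the recipient joined. The primary must accept the $Q$ correct suggestions. Zero-key suggestions are accepted directly. For positive keys, backing and values are already common by hypothesis, so $\textsc{Backed}$ holds, and \Cref{lem:resolver-key-support} supplies the $f{+}1$ correct KEY2 histories that satisfy $\textsc{AcceptKey}$. The primary therefore proposes by $t+3\delta$, and the proposal reaches every correct party by $t+4\delta$. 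Each party's no-proposal deadline is at least $e_i(r)+5\Delta\ge t+5\Delta>t+4\delta$, so the proposal arrives in time.

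\emph{Step 2: echo eligibility.} By $t+4\delta$ every correct party has both the proposal and all correct proofs. We split on the key.
\begin{itemize}
\item \emph{Carried proposal.} Let $(q,x)$ be the maximum positive key. \Cref{lem:resolver-open-support} gives $q\ge\ell$ for every correct lock $(\ell,z)$. If $x\ne z$, it also gives $f{+}1$ correct proofs satisfying $\textsc{OpenLock}$. So the lock guard passes, and $\textsc{Backed}(u,x)$ holds by hypothesis.
\item \emph{Fresh proposal.} All $Q$ accepted keys are zero. By the last part of \Cref{lem:resolver-key-support}, no correct party holds a positive lock, so the lock guard is vacuous. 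F1--F5 hold by hypothesis, with at least $f{+}1$ origin-$1$ senders when the value is non-skip.
\end{itemize}
In either case every correct party ECHOs by $t+4\delta$, and the ECHOs arrive by $t+5\delta$.

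\emph{Step 3: the phase chain.}
\begin{itemize}
\item \emph{Carried proposal.} Correct parties send KEY1 at $t+5\delta$. Each later phase costs one more delay: KEY2 by $t+6\delta$, KEY3 by $t+7\delta$, LOCK by $t+8\delta$, DONE by $t+9\delta$. The $Q$ correct DONEs arrive by $t+10\delta$, giving a decision.
\item \emph{Fresh proposal.} After the usable ECHO quorum, correct parties broadcast witnesses at $t+5\delta$. Their $Q$ witnesses give $\textsc{Backed}$ everywhere by $t+6\delta$. The same chain then runs shifted by one delay, ending at $t+11\delta$.
\end{itemize}
Quorums counted before a party's current view fires on entry, and every party is already current by $t+2\delta$, so no step waits on a view change.

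\emph{Main obstacle: staying in view $r$.} We must show no correct party abandons $r$ early.
\begin{itemize}
\item \emph{No-proposal timer.} Step 1 shows it is stopped before it expires.
\item \emph{Full-view timer.} It expires no earlier than $e_i(r)+11\Delta\ge t+11\Delta\ge t+11\delta$. Its deadline tie is handled by the convention that messages delivered by a deadline are processed first.
\item \emph{WISH escalation.} A WISH above $r$ from the at most $f$ Byzantine parties reaches only $f$ supporters. That is below the $f{+}1$ amplification threshold, so no correct party raises its WISH and leaves $r$ before deciding.
\end{itemize}
I would check this last point most carefully, together with the fresh-case claim that all-zero accepted keys really exclude every correct lock.
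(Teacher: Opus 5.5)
Your proof is correct and follows the paper's argument step for step: suggestions and proofs reach the primary by $t+3\delta$, \Cref{lem:resolver-key-support} makes them acceptable, and \Cref{lem:resolver-open-support} (or the absence of any positive correct lock) makes the proposal echoable. The proposal then arrives by $t+4\delta$ and the ECHOs by $t+5\delta$, after which the five-phase chain ends at $t+10\delta$, or at $t+11\delta$ with the witness delay. Your explicit check that no correct party leaves view $r$ early is just a more detailed version of the paper's closing remark that a timeout raises WISH without abandoning the current view.
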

\begin{proof}
Suggestions and proofs reach the correct primary by $t+3\delta$.
\Cref{lem:resolver-key-support} makes the $Q$ correct suggestions acceptable.
If their maximum key is positive, it is at least every correct lock and
\Cref{lem:resolver-open-support} makes the proposal acceptable at every
correct party.  If their maximum is zero, the final part of
\Cref{lem:resolver-key-support} implies that no correct lock is positive, and
the premise supplies fresh acceptance and, for a non-skip value, the required
origin support.  The proposal reaches all correct
parties by $t+4\delta$, and their ECHOs arrive everywhere by $t+5\delta$.

For a fresh proposal, the resulting witnesses arrive by $t+6\delta$ and make
the value \backed{}.  KEY1, KEY2, KEY3, LOCK, and DONE then arrive by,
respectively, $t+7\delta,t+8\delta,t+9\delta,t+10\delta,t+11\delta$.  A
carried value is already \backed{} and omits the witness delay, so the same five
arrivals end at $t+10\delta$.  At least $n-f=Q$ correct parties participate.
The $5\Delta$ proposal deadline therefore has one delay of slack even when
$\delta=\Delta$, and the fresh full-view deadline is met at equality only in
that limiting case.  A timeout raises WISH without abandoning the current
view, so a deadline tie does not suppress the proposal or decision.
\end{proof}

\begin{theorem}[Target-local resolver properties]\label{thm:direct-resolver}
For each target $u$:
\begin{enumerate}[label=\textup{(D\arabic*)},leftmargin=2.7em,topsep=2pt,itemsep=2pt]
  \item no two correct parties decide different resolver values;
  \item every decided non-skip value is author-backed and satisfies the
        future-closed compatibility of \Cref{lem:direct-resolution};
  \item if one correct party decides $R$, every correct party eventually
        either decides $R$ or already holds its compatible terminal seal; and
  \item under partial synchrony, every target not already terminally sealed at
        all correct parties eventually obtains one common resolver decision or
        the compatible terminal seal at every correct party.
\end{enumerate}
\end{theorem}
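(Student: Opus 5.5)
We discharge the four clauses in order, reusing the lemmas just proved. (D1) is exactly the final sentence of \Cref{lem:resolver-lock-safety}. The causally first correct DONE for a value must come from a LOCK quorum, since relays need $f{+}1$ senders and hence an earlier correct one. Lock safety together with \Cref{lem:resolver-view-uniqueness} makes all correct DONEs name one value. Every $Q$-DONE decision quorum contains a correct sender, so all correct decisions coincide. For (D2), we trace a decided non-skip value back to a usable fresh ECHO quorum. The decision rests on a correct LOCK, hence a KEY1 from a $Q$ ECHO quorum at some coordinate $(u,r,q,x)$. If $q=0$, the KEY1 rule waited for \backed{}; if $q>0$, each correct ECHOer required $\textsc{Backed}_i(u,x)$. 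In both cases \Cref{lem:stable-backing} yields a usable fresh quorum for the same $R$. \Cref{lem:direct-resolution} then gives author-backing and future-closed compatibility, including for carried values.

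For (D3), we use the DONE relay. A correct decider counted $Q$ DONEs, at least $n-2f\ge f+1$ of them correct, and these broadcasts reach every correct participant. Parties that have not yet allocated an instance still receive them, because the send-upon-join rule replays the own DONE to any WISH sender. A party holding a terminal seal joins on a peer WISH, so we must ensure a WISH reaches it; if it never allocates, it already holds a seal, and compatibility with $R$ follows from \Cref{thm:resolved-agb}-style agreement at the arbiter. Each correct receiver relays after $f{+}1$, so all $\ge Q$ correct parties send DONE, and every correct party lacking a seal counts $Q$ and decides $R$. The bodies of $R$ are obtained through the attached value or bounded body requests to counted senders.

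(D4) is the main obstacle. We first show that some correct party allocates and that eventually all correct parties participate. If the target is not sealed everywhere, the proposal horizon passes $u+3$ by \Cref{lem:pacemaker}(a). Every correct party's first-hand records eventually contain $Q$ initial ready-stage responses from correct parties, so some candidate becomes justified: skip if a \noready{} quorum forms, and otherwise full or core derived from a correct READY's grounding echoes. A \noready{} response does not by itself fix which candidate, so this step needs a case split on the correct ready-stage census. Retransmitted initial WISHes, the $f{+}1$ allocation rule, and seal-holder joining then bring every correct party in. Per-target WISH arguments as in \Cref{lem:pacemaker}(b) give entry spread $2\delta$ beyond some finite resolver view past GST, and the timers guarantee that views keep advancing. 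Round robin eventually yields a correct primary in such a view. If its maximum accepted key is positive, \Cref{lem:resolver-key-support,lem:resolver-open-support,lem:resolver-timing} give a decision. If all keys are zero, we must show that the primary's persistent cursor eventually reaches a candidate that passes F1--F5 at every correct party with $f{+}1$ correct origin bits. We argue this by cases on the target state. If there is a completion, the unique payload with a correct homogeneous or mixed READY history yields a full or core value acceptable to all correct parties; here optimistic locks must be released or else match, using \Cref{lem:fast-seal}. If there is none, we use skip, or a core value when correct stances block skip. Because the cursor advances only on the party's own turns and candidate sets are finite and monotone, it visits that value on some correct-primary turn. Some correct party may have already terminally sealed, in which case we invoke the seal-compatible ECHO rule. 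This stance and lock case analysis is where we expect the proof to be delicate. Any decision then propagates by (D3).
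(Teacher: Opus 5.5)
Your (D1) and (D2) match the paper. The outer skeleton of (D4) is also the right one: per-target WISH entry spread, a late correct primary, the positive-key case via \Cref{lem:resolver-key-support,lem:resolver-open-support,lem:resolver-timing}, and cursor cycling.

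\textbf{The gap in (D4).} The gap is the step you flag as delicate. Your choice of the universally acceptable zero-key value is not just unproved but wrong, because it ignores when optimistic locks can release. Take $n=3f+1$ with all $f$ faulty parties sending no target ECHO. Let the faulty proposer of $u$ send a valid $P=(C,T)$ over correctly published lanes only to a set $G$ of $f+1$ correct parties.
\begin{itemize}
\item The members of $G$ grade-$1$ ECHO and lock; the other $f$ correct parties send \echoskip{}.
\item No READY, skip vote, or fast census can form, so every correct party sends \noready{}.
\item Your no-completion rule picks skip (or core). But a lock releases only after $f+1$ nonmatching first-hand echo-stage responses, and only $f$ exist. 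So the members of $G$ refuse both values forever under F4, and neither can reach $Q=2f+1$ fresh ECHOs. Only $R_u^{\mathrm{full}}(P)$ can.
\end{itemize}
A completed view fails the same way. Suppose $P$ reaches every correct party but only $G$ can verify its tip. Then all correct READYs are mixed and the view completes open. The READY history cannot tell you which value to take, and in fact core fails at $G$ while only full works. \Cref{lem:fast-seal} does not rescue this: it concerns only an all-$n$ matching census, not when locks release.

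\textbf{The missing idea.} It is the paper's counting dichotomy. Let $b\le f$ be the actual number of faulty parties, and let $G_P$ be the correct parties whose target response is a grade-$1$ ECHO for $P$. At most one $P$ has $|G_P|\ge n-f-b$, since $2(n-f-b)>n-b$.
\begin{itemize}
\item If such a $P$ exists, $R_u^{\mathrm{full}}(P)$ passes F1--F5 everywhere. A correct grade-$0$ or different-payload READY, or a correct skip vote, would need at least $n-f-b$ correct supporters disjoint from $G_P$. Locks on other payloads see at least $f+1$ correct nonmatching responses from $G_P$.
\item Otherwise every lock eventually releases, and no correct grade-$1$ READY exists. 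The value is then core if some correct proposal-READY exists (\Cref{lem:skip-seal}(i) excludes correct skip votes), and skip if all correct ready-stage responses are \noready{}.
\end{itemize}
This fixed $R^*$ also coincides with any fast, direct, or skip seal already held. You assert this when you let sealed parties vote, but you never check it.

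\textbf{The same gap in (D3).} Your (D3) needs this fact too. A party without an instance sends no resolver WISH and discards phase traffic for $u$. So the DONE replay reaches it only after you show that every unresolved correct party eventually allocates. The paper gets this from $R^*$ entering every unresolved party's candidate set once the horizon reaches $u+3$.
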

\begin{proof}
For D1, \Cref{lem:resolver-view-uniqueness,lem:resolver-lock-safety} prove
agreement directly for the stated message and update rules.  The target
coordinate separates concurrent instances, and the value digest binds every
nonzero value to the session.  Every value entering KEY1 is \backed{}: a fresh proposal
waits for its witness quorum, and a carried proposal checks the stable
predicate before ECHO.  Thus no phase advances a value outside the resolver's
external-validity domain.
D2 follows from \Cref{lem:stable-backing,lem:direct-resolution}.

We first establish a candidate-convergence fact used by both D3 and D4.  By
the Direct AGB response deadlines, every correct
party eventually broadcasts its target echo-stage and closed initial
ready-stage response.  Let $b\le f$ be the actual Byzantine count.  For each
payload $P$, let $G_P$ be the correct parties whose target response is a
grade-$1$ ECHO for $P$.  Call $P$ large when
$|G_P|\ge n-f-b$; at most one payload is large because
$2(n-f-b)>n-b$.

If a large $P=(C,T)$ exists, the universally acceptable value is
$R_u^{\mathrm{full}}(C,T)$.  A correct grade-$0$ or different-payload READY
would require a grounding ECHO quorum whose at least $n-f-b$ correct members
are disjoint from $G_P$, contradicting the same inequality.  Active locks for
other payloads eventually count at least $f+1$ correct nonmatching responses
and release.  A correct skip voter would likewise require a disjoint set of at
least $n-f-b$ correct echo-skippers, so no correct stance is skip-voted.  The
at least $n-f-b\ge f+1$ correct members of $G_P$ supply the candidate evidence,
origin bits, data, and provenance.

Otherwise every nonempty $G_P$ is small and every optimistic lock eventually
releases after the other correct target responses arrive.  No correct
grade-$1$ READY exists: its grounding quorum would contain at least
$n-f-b$ correct grade-$1$ ECHOers and make that payload large.  The closed
correct ready histories therefore fall into two exhaustive cases: some
correct response is a proposal-READY, or all correct responses are
\noready{}.  AGB value agreement makes all correct proposal-READY responses
name the same payload.  In the first case, its grounding quorum provides $f+1$ correct matching
ECHOs, all correct READYs are non-grade-$1$, and the universal value is core.
By \Cref{lem:skip-seal}(i), that correct proposal-READY also excludes every
correct skip vote.
In the second, the at least $Q$ correct \noready{} responses make skip universal.
Fair
repair eventually supplies every required prefix, so in every case one fixed
value $R^*$ is in every unresolved correct party's candidate set and passes
F1--F5.  By \Cref{lem:fast-seal,lem:skip-seal,lem:direct-resolution}, any
correct party already sealed holds the same submitted outcome and can help
vote for $R^*$.

For D3, every unresolved correct party eventually derives the candidate
$R^*$ above and therefore starts WISH even if it received none of the earlier
resolver traffic.  $Q$ DONEs at one correct party
contain at least $n-2f\ge f+1$ correct senders.  Their reliable broadcasts
cause every already active correct party to relay DONE.  A party activated
later broadcasts a WISH; each of those correct senders replies with its own
retained DONE, so the $f+1$ relay starts there as well.  The at least $Q$ correct
relays then decide everywhere.  A party with a local terminal seal already
satisfies the alternative in D3; on a peer WISH it can also activate and vote
only for the value whose submitted outcome equals that seal.

It remains to prove D4.  If a correct resolver decision already exists, D3
gives the claim, so assume none exists.
The candidate sets are finite and eventually stabilize.  The first local
candidate broadcasts WISH.  Every correct party that already holds the
compatible terminal seal responds to that first peer WISH by broadcasting its
own initial WISH; every unresolved correct party eventually derives $R^*$ and
broadcasts its own.  Thus the $Q$ correct parties supply a view-$1$ WISH quorum
without Byzantine participation.  Reliable delivery, WISH amplification,
timers, and round-robin leaders then make correct parties enter arbitrarily
high common resolver views.  The send-upon-join replies
return both a responder's own watermark and any earlier phase traffic, so
pre-activation delivery alone cannot leave a correct party below an entry or
phase quorum.  Applied per target, the
high-watermark argument of \Cref{lem:pacemaker}(b) gives an eventual
$2\delta$ correct-entry spread.  Explicitly, the $Q$ WISHes seen by the first
correct entrant contain at least $f+1$ correct senders.  Their WISHes reach
every correct party within one delay and cause amplification; the resulting
$Q$ correct WISHes arrive within one more delay.

Choose a sufficiently late common view after candidate sets, \backed{} facts,
and retained value bodies have stabilized.  By
\Cref{lem:resolver-key-support}, a correct primary accepts $Q$ suggestions.  If
one is positive, it selects a maximum positive key; the same lemma and
\Cref{lem:resolver-open-support} make that carried proposal acceptable through
every correct lock.  If all accepted keys are zero, no correct lock is
positive.  Each unresolved correct primary advances its persistent cursor on
its own turns through the finite stabilized candidate set.  At least one
correct party is unresolved in the case under consideration, and $R^*$
belongs to every such party's set, so one of its primary turns selects $R^*$.
If an intervening fresh value instead reaches KEY3, it is
\backed{} and the positive-key case applies thereafter.  Static round-robin
rotation supplies infinitely many correct-primary turns.  Therefore a view
eventually satisfies \Cref{lem:resolver-timing} and decides, proving D4.
\end{proof}

\begin{lemma}[Vantage resolver contract]\label{lem:vantage-resolver}
The target-local resolver satisfies RS1--RS3 and RL1 of
\Cref{sec:resolver-contract}.
\end{lemma}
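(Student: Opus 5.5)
The plan is to discharge each clause of the contract by pointing it at the matching part of \Cref{thm:direct-resolver}, plus the backing and fresh-validity lemmas. RS1 (single choice) is D1 as stated. One detail needs checking: a decision names an exact value $R$, and $R$ fixes its submitted outcome $X$ deterministically. Since the digest binds the session, the target, the outcome kind and the complete value, agreement on $R$ gives agreement on both the value and the outcome.

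For RS2 (future-closed direct compatibility) I would argue as follows.
\begin{itemize}
\item Every decided value entered KEY1 as a \backed{} value; the proof of D1 records this.
\item By \Cref{lem:stable-backing}, a \backed{} value descends from a usable fresh resolver ECHO quorum for that exact value.
\item \Cref{lem:direct-resolution}(i)--(iii) then give the required exclusions. A full value for $(C,T)$ excludes grade-$0$ and different-payload READY quorums. A core value excludes grade-$1$ and different-payload READY quorums. A skip value excludes all READY quorums.
\item That lemma's final sentence covers values carried into later resolver views.
\end{itemize}
The exclusions are future-closed because they rest on a correct party in the intersection whose own immutable, closed target history (F1--F2) refuses the conflicting READY. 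That holds no matter when the READY quorum would form.

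For RS3 (output validity) I would reuse the provenance half of \Cref{lem:direct-resolution}. A usable non-skip quorum has $f{+}1$ origin-$1$ ECHOs, so at least one comes from a correct party. For a full value, that party's matching grade-$1$ target ECHO passed \textsc{AuthorOK} on $C$ and $T$ and passed the tip-anchoring check. For a core value, its target ECHO passed \textsc{AuthorOK} on $C$. \Cref{lem:availability-witness,lem:author-witness} then give $\textsc{Witnessed}(C)$, and $\textsc{Witnessed}(T)$ in the full case. This matches the instantiation of $\textsc{Witnessed}$ fixed in \Cref{app:primitive}.

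RL1 (per-target totality) should follow from D3 and D4, and I expect this to be the main obstacle. D4 gives, for every target, either a common resolver decision or a compatible terminal seal at every correct party. D3 propagates any decision to all unresolved correct parties. The gap is that RL1 asks that every already-sealed party hold the same outcome that reaches the others. A party may have sealed through a direct result, a fast seal or a skip seal rather than through the resolver. I would close this with a case split on the resolver's decided value $R$:
\begin{itemize}
\item against direct seals, use RS2 just established together with \Cref{thm:agb-safety}(b),(c);
\item against a fast seal, use \Cref{lem:fast-seal};
\item against a skip seal, use \Cref{lem:skip-seal}(iii).
\end{itemize}
In each case the terminal outcome equals $X_R$. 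If no resolver decision is ever reached, D4 says every correct party holds a terminal seal. These seals pairwise agree by the same compatibility lemmas, so that common outcome serves as $X_u$.
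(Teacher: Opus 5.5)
Your proposal is correct and matches the paper's proof. The paper likewise obtains RS1 and RL1 from D1, D3, and D4 of \Cref{thm:direct-resolver}, RS2 and RS3 from D2 together with \Cref{lem:direct-resolution}, and compatibility with fast and skip seals from \Cref{lem:fast-seal,lem:skip-seal}; you unpack D2 into \Cref{lem:stable-backing} plus fresh external validity and spell out the case split on already-sealed parties, which the paper leaves implicit.
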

\begin{proof}
RS1 and RL1 are D1, D3, and D4 of \Cref{thm:direct-resolver}.  RS2 and RS3
follow from D2 and \Cref{lem:direct-resolution}.  Fast and skip compatibility
are given separately by \Cref{lem:fast-seal,lem:skip-seal}.
\end{proof}

\begin{theorem}[Prefix and seal agreement]\label{thm:grade-agreement}
Each correct party seals a view at most once, and no two correct parties seal
different outcomes.  If completion exists, every terminal outcome is
$\gfull(C,T)$ or $\gcore(C)$ for its unique payload and contains its fixed
core.  Before sealing, a correct completion may expose $\gopen(C,T)$; for
every prior-output set $D$, $\mathsf{Core}_D(C)$ is a prefix of either
compatible non-skip expansion.
\end{theorem}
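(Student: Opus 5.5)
The proof reduces this theorem to the resolved-AGB composition \Cref{thm:resolved-agb}, extended by the two Vantage-only submissions (fast seal and skip seal) that feed the same arbiter. Local uniqueness is immediate: every terminal event passes through the caller-owned $\rstat{try-seal}(v,\cdot)$ arbiter, which persists its first submission and ignores all later ones. Hence each correct party seals a view at most once.

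For cross-party agreement, I will enumerate the four submission sources: Direct AGB result, fast seal, grounded skip seal, and resolver decision. I then show that any two submissions, possibly at different correct parties, name the same outcome.
\begin{itemize}
\item Direct versus direct follows from \Cref{thm:agb-safety}(b)--(c).
\item Resolver versus resolver, and resolver versus direct, follow from \Cref{lem:vantage-resolver}. It supplies RS1 and RS2, and \Cref{thm:resolved-agb} then applies.
\item Every pair involving a fast seal is covered by \Cref{lem:fast-seal}. That includes two fast seals, since the all-$n$ censuses coincide.
\item Every pair involving a skip seal is covered by \Cref{lem:skip-seal}(iii). Two skip seals trivially agree.
\end{itemize}
The one pairing that needs care is fast seal versus skip seal. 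Here I will use the observation in the proof of \Cref{lem:skip-seal}: a correct skip voter counted a quorum of \echoskip{} responses. One of those comes from a correct party, whose unique echo-stage response therefore cannot appear as a grade-$1$ ECHO in the all-$n$ census. So the two seals cannot coexist.

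Next, assume some correct party completes $v$. By \Cref{thm:agb-safety}(b) the completion names a unique payload $(C,T)$, and a READY quorum for it exists. \Cref{lem:skip-seal}(i) then excludes any correct skip voter: a quorum of skip votes would contain a correct voter, and that would forbid every correct proposal-READY, contradicting the correct READY sender inside the completing quorum. RS2 of \Cref{lem:vantage-resolver} excludes a resolver skip and binds any resolver non-skip result to $(C,T)$. A fast seal names its all-$n$ payload, and by value agreement this is the completion's payload, because the correct grounding ECHOs for the completion coincide with the census ECHOs. Therefore every terminal outcome is $\gfull(C,T)$ or $\gcore(C)$ for that payload.

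The prefix statement is the value-domain expansion contract: $\mathsf{Core}_D(C)$ is by definition a prefix of $\mathsf{Full}_D(C,T)$, and trivially of itself. It remains to justify exposing $\gopen(C,T)$ before sealing. The core is fixed by completion, so whichever compatible non-skip outcome is later sealed extends the emitted core.

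The main obstacle is keeping the pairwise case analysis exhaustive while each pair is still justified by quorum intersection over first-hand responses alone. The fast-seal versus skip pair and the completion-excludes-skip step are the places where I must verify that the cited lemmas really cover the pairing. Where they do not, I would argue directly from the correct party lying in the relevant quorum intersection.
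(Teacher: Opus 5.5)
Your proposal is correct and follows essentially the same route as the paper's proof. Local uniqueness comes from the persistent \rstat{try-seal} arbiter, and cross-source agreement and the completion clause come from \Cref{thm:resolved-agb} instantiated with \Cref{lem:vantage-resolver}. The two caller-side shortcuts are handled by \Cref{lem:fast-seal,lem:skip-seal}, and the open-core prefix follows from AGB safety together with the expansion contract. Your explicit fast-versus-skip and completion-excludes-skip arguments are sound, and they spell out what \Cref{lem:skip-seal}(i),(iii) and RS2 already provide.
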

\begin{proof}
Apply \Cref{thm:resolved-agb} using \Cref{lem:vantage-resolver}.
\Cref{lem:fast-seal,lem:skip-seal} make the caller-side shortcuts compatible
with direct and resolver submissions.  The persistent arbiter gives local
uniqueness, and \Cref{thm:agb-safety} gives the completed-open core prefix.
\end{proof}

\begin{theorem}[Sealing liveness]\label{thm:resolution-live}
With the data pacemaker and target-local resolver above, every correct party
eventually seals every data-plane view.  A completed open view is eventually
sealed as full or core everywhere.
\end{theorem}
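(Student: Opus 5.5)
The proof reduces sealing liveness to the per-target totality already established. Fix a data-plane view $u$. By \Cref{lem:pacemaker}(a), every correct party eventually enters every view and emits its echo-stage and initial ready-stage responses for $u$. The responsive frontier together with the WISH high-watermark therefore eventually pushes every correct party's proposal horizon $h_i$ to at least $u+3$. From that point each correct party scans its first-hand target records, so the candidate machinery of \Cref{app:protocol} is eventually armed at every correct party.

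The argument splits into two cases. If $u$ is already terminally sealed at every correct party, there is nothing to prove. Otherwise $u$ is an unresolved target. The candidate-convergence argument in the proof of \Cref{thm:direct-resolver} applies here. It produces a fixed value $R^*$ that every unresolved correct party eventually derives and that passes F1--F5, so each such party allocates the instance and broadcasts its resolver WISH. \Cref{thm:direct-resolver}(D4) then gives one common resolver decision, or the compatible terminal seal, at every correct party. Each decision is submitted to $\rstat{try-seal}$, so every correct party fires $\rstat{seal}(u)$. Equivalently, \Cref{lem:vantage-resolver} supplies RL1, and \Cref{thm:resolved-agb} turns RL1 into eventual terminal sealing. \Cref{thm:grade-agreement} guarantees that parties which sealed earlier by a direct, fast, or skip submission hold the same outcome, so no arbiter conflict arises.

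For the second sentence of the statement, suppose view $u$ has completed at some correct party. \Cref{thm:resolved-agb} then implies that every terminal outcome is non-skip and is bound to the unique completion payload $(C,T)$, so it is $\gfull(C,T)$ or $\gcore(C)$. The same conclusion follows from RS2 via \Cref{lem:direct-resolution}(iii), because a skip value excludes every READY quorum. Combining this with the first part, every correct party eventually seals the completed open view as full or core.

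The main obstacle is the first step: showing that an unresolved target actually activates at all correct parties and that the D4 hypotheses hold. Several conditions must be established, namely that the proposal horizon passes $u+3$, that the target responses are closed, that optimistic locks have released, and that fair repair has delivered the required prefixes. Parties that sealed early and have no local candidate must also be drawn back in through the peer-WISH join rule. The plan is to lean on \Cref{lem:pacemaker}(a) for horizon growth and on the join rule for early sealers. The candidate-convergence case analysis inside the proof of \Cref{thm:direct-resolver} is reused verbatim and not reproved.
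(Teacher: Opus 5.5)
Your proposal is correct and follows the paper's proof. You split on whether $u$ is already terminally sealed at every correct party; otherwise you obtain a common resolver decision from \Cref{thm:direct-resolver}(D4) and submit it to $\rstat{try-seal}$, use \Cref{thm:grade-agreement} for consistency with earlier direct, fast, or skip seals, and rule out skip for a completed view via RS2. The activation concerns in your last paragraph (the horizon reaching $u+3$, early sealers rejoining via peer WISHes) are already handled inside the proof of D4, whose statement assumes only partial synchrony, so citing it suffices.
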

\begin{proof}
Fix a target $u$.  If compatible local direct, fast, or grounded skip routes
seal it everywhere, the claim holds.  Otherwise
\Cref{thm:direct-resolver}(D4) supplies a common resolver value to every party
that lacks the compatible terminal seal, and the adapter submits it to the
arbiter.  \Cref{thm:grade-agreement} makes all outcomes equal.  A completed
view has a READY quorum; RS2 excludes skip, so its result is full or core.
\end{proof}

\begin{corollary}[Crash-only silent-view bound]\label{cor:crash-skip}
Let $s\ge\mathrm{GST}$ be the first correct entry to data-plane view $u$, and
suppose every correct party enters by $s+2\delta$.  Suppose the proposer and
every other faulty party crashed before sending any view-$u$ or later protocol
message, so no view-$u$ proposal or proposal ECHO exists.  Then every correct
party submits $\skipseal(u)\to\gskip$ by
\[
  s+4\Delta+3\delta.
\]
If all correct parties enter simultaneously at time $e$, the bound is
$e+4\Delta+\delta$.
\end{corollary}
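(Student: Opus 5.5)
Let $e_i=e_i(u)\in[s,s+2\delta]$ be $p_i$'s entry time; the plan is to trace the timer chain for a single correct party and then add delays.

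\textbf{Echo stage.} Because no view-$u$ proposal exists, no proposal is fixed, so neither the positive echo gate nor the windowed fallback of R2 can fire. The only rule left is the absolute deadline: every correct $p_i$ sends \echoskip{} at $e_i+\theta_E=e_i+3\Delta$. This is at most $s+3\Delta+2\delta$, and after GST the clocks are drift-free.

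\textbf{Ready stage.} No proposal ECHO exists, so no party ever counts a quorum of proposal echoes. Hence no proposal-READY fires, and at $e_i+\theta_R=e_i+4\Delta$ party $p_i$ sends \noready{}, so $R_i(u)=\noready$. Next we need the quorum of \echoskip{}s at $p_i$. The correct parties number at least $Q$, and each sent its \echoskip{} by $e_j+3\Delta\le s+3\Delta+2\delta$. These arrive everywhere by $s+3\Delta+3\delta$. That is no later than $e_i+4\Delta$ whenever $3\delta\le\Delta+(e_i-s)$. This need not hold, so the vote time is the maximum of $e_i+4\Delta\le s+4\Delta+2\delta$ and $s+3\Delta+3\delta$, and both are at most $s+4\Delta+2\delta$ because $\delta\le\Delta$.

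\textbf{Skip vote and seal.} We still need the remaining conditions of the skip-vote rule of \Cref{alg:fastseal}: the stance $z_i(u)=\mathsf{free}$, and no non-skip terminal outcome. Nothing non-skip can exist, since no proposal and no ECHO exist (so there is no completion, direct result, fast seal, or non-skip resolver candidate, and C1/C2 need matching ECHOs). Correct parties also never claim the non-skip stance, because F5 needs a non-skip fresh ECHO, which needs a non-skip candidate. So every correct party broadcasts \skipvote{} by $s+4\Delta+2\delta$. The at least $Q$ correct votes arrive everywhere one delay later, so every correct party submits $\skipseal(u)\to\gskip$ by $s+4\Delta+3\delta$.

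\textbf{Simultaneous entry.} If every party enters at $e$, all \echoskip{}s are sent at $e+3\Delta$ and arrive by $e+3\Delta+\delta\le e+4\Delta$. The \noready{} and \skipvote{} are therefore sent at $e+4\Delta$, and the seal follows by $e+4\Delta+\delta$.

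The step I expect to be delicate is the tie at deadlines. The \echoskip{} quorum may arrive exactly at the \noready{} deadline, and we rely on the model's rule that delivered messages are processed before the timeout branch. The skip-vote rule is a state rule that fires once both conditions hold, so it does not matter which condition becomes true last.
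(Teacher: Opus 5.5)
Your proof is correct and follows essentially the same route as the paper. It uses the absolute \echoskip{} and \noready{} deadlines, the bound $s+3\Delta+3\delta\le s+4\Delta+2\delta$ from $\delta\le\Delta$ so that both skip-vote conditions hold by $s+4\Delta+2\delta$, the fact that no non-skip resolver value can be justified so every stance stays free, and one final delay for the \skipvote{} quorum; your explicit maximum over the two per-party times is just a more careful rendering of the paper's ``by then'' step.
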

\begin{proof}
Every correct party emits \echoskip{} by its absolute ECHO deadline, at latest
$s+2\delta+3\Delta$.  These at least $Q$ correct responses arrive everywhere by
$s+3\delta+3\Delta$.  Every correct party emits \noready{} by at latest
$s+2\delta+4\Delta$, no earlier than the former time because
$\delta\le\Delta$.  By then it therefore holds both its own \noready{} and Q
\echoskip{} responses.

No correct party can justify a non-skip resolver value because no target
proposal ECHO exists, and the faulty parties send no resolver message.  Every
stance remains free until the vote; an independently learned compatible skip
does not suppress it.  All correct parties broadcast $\skipvote(u)$ by
$s+2\delta+4\Delta$, and those votes arrive everywhere one actual delay later.
Aligned entry removes the $2\delta$ entry spread.  Each later crashed view has
its own deadlines; the corollary does not collapse a burst into one timeout.
\end{proof}

\section{End-to-End Correctness Proof}\label{app:correctness}

\begin{theorem}[End-to-end correctness]\label{thm:correctness}
Every correct party eventually seals every data-plane view with exactly one
common terminal outcome.  At every time, the output sequences of any two
correct parties are prefix-comparable; they have the same limit sequence, and
every finite prefix is eventually output by every correct party.  Every output
block occurs at most once, satisfies $\textsc{BlockOK}$, was published with
its valid lane prefix by its encoded author to at least one correct party, has a
permanent correct holder, and is obtained by each correct party before that
party locally outputs it.
Moreover, every valid data block
published by a correct author is eventually output.
\end{theorem}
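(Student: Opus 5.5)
The plan is to assemble \Cref{thm:correctness} from the composition results already proved, viewing Vantage as an instance of \Cref{cor:resolved-agb-log}. We proceed in three layers: per-view sealing, output agreement and validity, and correct-input inclusion.

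\emph{Sealing.} Uniqueness and agreement of the terminal outcome per view come directly from \Cref{thm:grade-agreement}. That theorem combines the arbiter's first-wins rule with \Cref{thm:resolved-agb}, which applies via \Cref{lem:vantage-resolver}, and with the shortcut compatibility results \Cref{lem:fast-seal,lem:skip-seal}. Eventual sealing of every view is \Cref{thm:resolution-live}.

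\emph{Output agreement and validity.} Since all correct parties use one outcome per view, and a completed-open core is a prefix of either compatible non-skip expansion, the deterministic cursor with first-occurrence deduplication produces prefix-comparable sequences with a common limit. This is the argument of \Cref{cor:resolved-agb-log}. Every finite prefix of that limit is eventually output because every view is eventually sealed, and every named lane prefix is eventually retrieved. Retrieval holds because each non-skip outcome is witnessed, by \Cref{thm:agb-safety}(d) on the direct and fast paths and by \Cref{lem:direct-resolution} on the resolver path. \Cref{lem:author-witness} and \Cref{lem:availability-witness}(i),(ii) then give a correct permanent holder, author publication to a correct party, $\textsc{BlockOK}$, and eventual retrieval under fair repair. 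The cursor waits for the bytes before output, so every block is obtained before it is locally output. At-most-once output follows from deduplication together with collision resistance.

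\emph{Input fairness (the main obstacle).} We must show that a valid block $b$ of a correct author $p_a$ eventually lies in some sealed core or full tip. The approach is to argue that $b$ eventually enters some completed core; completion fixes the core irrevocably regardless of the tip's fate, by \Cref{thm:agb-safety}. \Cref{lem:pacemaker}(a) shows that every view is entered and every correct party responds. After GST there are infinitely many views with correct proposers past the cutoff of \Cref{lem:pacemaker}(b). By \Cref{thm:agb-live}, each such view completes, since its core is exact-$\quorum$-available and the required $(f{+}1)$-availability follows from \Cref{lem:availability-witness}(iv).

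The difficulty is showing that $b$ becomes exact-$\quorum$-available at some correct proposer. We plan to argue as follows. Every correct party eventually holds $\textsc{DirectPub}$ for $b$. Some correct proposer's tip then names a prefix that covers $b$ at an exact position. Correct ECHOs carry exact bits, so after that view's ECHO census at least $n-f$ exact acknowledgments are counted, and a later turn of the same proposer places the prefix in $C$.

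Two complications arise. First, quarantine may omit $p_a$ from tips, but the exponential re-probe schedule together with the clearing rule guarantees that $p_a$ is eventually probed. Second, the stable-confirmation rule may name a lower prefix, but each confirmed coordinate advances the core monotonically. Both will need a careful finite-progress induction on lane height, using non-Zeno fairness and the fact that correct authors do not fork. Once $b$ lies in a completed core, the cursor outputs it by the sealing layer above.
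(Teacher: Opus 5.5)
Your three-layer decomposition is the paper's own. Seal uniqueness and liveness come from \Cref{thm:grade-agreement,thm:resolution-live}. The ordered-output and validity claims come from the cursor argument of \Cref{cor:resolved-agb-log}, extended by \Cref{lem:fast-seal,lem:skip-seal}. Input fairness is the remaining work, closed by completion of a correct-proposer core via \Cref{thm:agb-live} and \Cref{lem:availability-witness}(iv). (Minor: a fast seal's witness comes from \Cref{lem:fast-seal}, not from \Cref{thm:agb-safety}(d), which concerns READY quorums.)

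The gap is the input-fairness step you treat as routine.
\begin{itemize}
\item \textbf{Why the exact-bit step fails.} An exact bit at a named position requires $\textsc{DirectPub}$ for that exact reference when the ECHO is sent. Absent a stable target, R1 names the proposer's \emph{newest} directly published head, not $b$. A correct party can ECHO before that fresh head reaches it directly, for instance with grade~$1$ after its $\textsc{AuthorOK}$ for the head is met by $f{+}1$ counted claims. It then attaches only an integer claim for an older ancestor. With the $f$ Byzantine parties withholding their bits, the head need not become exact-$\quorum$-available. Since the author keeps publishing, every later turn can name a new head and fail the same way. So ``correct ECHOs carry exact bits, hence $n-f$ exact acknowledgments'' does not follow.
\item \textbf{The stable-confirmation rule is the fix, not an obstacle.} Once every correct party holds $b$, a proposal naming any tip covering $b$ receives from every correct party a claim covering $b$, exact or integer-derived. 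So every coordinate above the current core through height $k_b$ becomes generally $\quorum$-available. The least such coordinate that is not exact-$\quorum$-available is then frozen as the confirmation target, and continuous publication cannot move it. Being a fixed coordinate of an unforked lane, it is eventually $\textsc{DirectPub}$ at all correct parties. A sufficiently late proposal naming it therefore gets exact bits from all $n-f$ correct parties, and it either full-seals or becomes the next core candidate, advancing the core. Induction over the finitely many coordinates through $k_b$ is what actually covers $b$.
\item \textbf{Quarantine needs infinitely many probes, not one.} The clearing rule may never fire, because the stored witness can be fabricated. You need the probe schedule for $p_a$ to be infinite. This holds because each proposal probes at most one due author, and earliest-due selection over finitely many authors is fair since each selected author's due time strictly advances. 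Eligibility also reads the proposer's real lane state rather than the witness, so a fabricated witness does not block probes.
\end{itemize}
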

\begin{proof}
\Cref{thm:grade-agreement,thm:resolution-live} give exactly one eventual common
seal per view.  For the ordered-output claims, Direct AGB supplies the local
properties, \Cref{lem:vantage-resolver} supplies RS1--RS3,
\Cref{thm:resolution-live} supplies per-target totality, and fair repair is
\Cref{lem:availability-witness}(ii).  The only extension beyond
\Cref{cor:resolved-agb-log} is the pair of caller-side local routes:
\Cref{lem:fast-seal} proves that a fast seal names the same witnessed full
outcome as every Direct AGB or resolver submission, while
\Cref{lem:skip-seal} proves compatibility of a skip seal.  Hence the
corollary's cursor argument applies unchanged.  It
remains to establish input fairness.

Let a correct author $a$ publish a valid data block $b$ at height $k_b$.
Every correct party eventually receives the full lane prefix through $b$ in
the author's \rstat{publish} messages.  Fix a correct party $p_i$.  By
\Cref{lem:pacemaker}(a), R1, round-robin assignment, and non-Zeno execution,
$p_i$ has infinitely many data-plane proposer turns; resolution runs in
separate target-local instances and never consumes a data-plane proposal.  If
$a$ is quarantined, its exponential probe schedule is also infinite: among the
finite author set, earliest-due selection is fair because every selected
author's next due opportunity moves strictly forward.  These claims remain
true when the stored witness was fabricated, since eligibility tests the
proposer's real directly published lane state.

Consider such turns after the pacemaker cutoff.  If $p_i$ already has an
exact-$\quorum$ core prefix covering $b$, a subsequent proposal carries it in
$C$.  Otherwise, if a stable confirmation target exists above the current
core, the next ordinary inclusion or fair due probe for $a$ names the
least-height such target, not a fresher head.  It is a fixed coordinate of the
correct author's unforked lane, so every correct party eventually has
$\textsc{DirectPub}$ for it.  A sufficiently late proposal naming it therefore
receives exact-position claims from all correct parties and makes it
exact-$\quorum$-available.  Either that proposal full-seals the prefix, or a
subsequent proposal carries the new candidate in $C$ and advances the core
beyond its previous coordinate.

If no stable target exists, take an ordinary inclusion or due probe after
every correct party has received $b$.  Unless the core already covers $b$, R1
names a real directly published tip covering $b$.  Each correct ECHO claim for
that tip acknowledges a direct prefix covering $b$ --- exactly at the named
position or through its verified integer-derived prefix.  Consequently every
coordinate above the current core through $b$ becomes generally
$\quorum$-available.  If the proposal full-seals, it already outputs $b$;
otherwise the least such coordinate becomes the stable confirmation target of
the preceding paragraph.  Each confirmation either outputs the target or
makes it the exact-$\quorum$ candidate carried by a subsequent core proposal;
that proposal advances the core.  Induction over the finite coordinates
through $k_b$ therefore eventually covers $b$.  Continuous publication cannot move the least
unconfirmed target, and no periodic availability message is used.

Choose a final full-sealing confirmation, or the subsequent proposer turn
that carries the confirmed prefix in $C$, after the cutoff.  In the latter
case \Cref{thm:agb-live} makes the proposal complete everywhere even if some
unrelated tip spoils its grades; completion makes its core irrevocable and the
resolver eventually seals the view.  The output cursor therefore emits the
confirmed prefix.  This proves input fairness.  All
remaining premises of the cursor argument in \Cref{cor:resolved-agb-log} now
hold, and it gives the remaining claims.
\end{proof}

\begin{corollary}[State-machine replication]\label{cor:smr}
Fix an initial state and a deterministic transition function defined on
$\textsc{BlockOK}$ blocks.  Applying output blocks in order gives correct
parties identical states after every common output prefix, and every finite
state prefix is eventually reached by every correct party.
\end{corollary}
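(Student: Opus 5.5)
The plan is to derive this corollary directly from \Cref{thm:correctness}, treating state-machine replication as a deterministic fold over the common output sequence. Write the output of a correct party $p_i$ at time $t$ as $O_i(t)=(b_1,\ldots,b_m)$ and its state as $S_i(t)=\tau(\cdots\tau(\tau(s_0,b_1),b_2)\cdots,b_m)$. Here $s_0$ is the fixed initial state and $\tau$ is the transition function.

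\textbf{Step 1 (well-definedness).} I first check that $\tau$ is only ever applied inside its domain. \Cref{thm:correctness} says every output block satisfies $\textsc{BlockOK}$. It also says each block is obtained locally before it is output. Therefore the fold $S_i(t)$ is defined at every correct party and at every time.

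\textbf{Step 2 (agreement on common prefixes).} \Cref{thm:correctness} makes any two correct output sequences prefix-comparable. So whenever two correct parties have both output a common prefix $P$, they have applied $\tau$ to exactly the same block sequence $P$, starting from the same $s_0$. I will prove by induction on $|P|$ that the two resulting states are equal. The base case is $s_0$. The inductive step is immediate because $\tau$ is deterministic and does not depend on any party-local input.

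\textbf{Step 3 (liveness).} \Cref{thm:correctness} gives a common limit sequence and guarantees that every finite prefix of it is eventually output by every correct party. For a finite state prefix, I take the states obtained by folding the first $m$ blocks of that limit. Each such prefix is eventually output, and by Step 2 every correct party then reaches the same state.

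This makes the ``state prefix'' notion precise as the sequence of intermediate states along the limit. I expect no real obstacle, since all the work is in \Cref{thm:correctness}. The only point needing care is that the at-most-once output and the pre-output retrieval guarantees are what make the applied sequence exactly the output sequence: no block is duplicated, and no block is applied while its bytes are still missing. I will cite both explicitly from \Cref{thm:correctness}.
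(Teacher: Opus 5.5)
Your proposal is correct and follows the paper's own route. The paper gives no separate proof: the corollary is read off \Cref{thm:correctness} --- $\textsc{BlockOK}$ output makes the fold well defined, prefix-comparability with a deterministic transition function gives equal states on common prefixes, and eventual output of every finite prefix of the common limit gives liveness --- as in the closing line of the proof of \Cref{cor:resolved-agb-log}. You do not need the at-most-once guarantee for state agreement, since a duplicated block would be applied identically at every party, but citing it does no harm.
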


\section{Implementation and Evaluation Details}\label{app:evaluation}

\subsection{Implementation and measured encoding}\label{app:implementation}

Our Rust implementation uses one deterministic protocol task and separate
networking and storage.  It performs no public-key or signature operation.
Validators retain 32-byte identifiers internally but use canonical one-byte
wire indices; the transport binds every connection to a committee identity and
authenticates each frame with a pairwise MAC, on by default
(\Cref{sec:implementation}).

The same binary runs both Autobahn modes and two Simple-IT variants on a shared
TCP, batching, worker, storage, client, and metric stack
~\cite{autobahn,simpleit,narwhal}.  Simple-IT follows its paper, adds a safe
notarization bound, and fetches missing proposals; Bluestreak and Sailfish++
run on the separate artifact~\cite{bluestreak,starfish}.  Timer-calibrated in-tree runs set
$\Delta=200$\,ms: the largest injected RTT is 309\,ms, hence the largest
one-way delay is 154.5\,ms; we round this to 160\,ms and reserve 40\,ms for
processing and scheduling.  Vantage uses
$\theta_E=3\Delta=600$\,ms and $\theta_R=4\Delta=800$\,ms.  The fault-free
Q1 and Q2 runs do not consume a fallback decision.  The target-local resolver
used in Q4 and Q5 has a $5\Delta=1$\,s no-proposal timer and an
$11\Delta=2.2$\,s full-view timer.
Both optimistic all-to-all and
seamless Autobahn use the paper's $10\Delta=2$\,s consensus timeout;
all-to-all routing and tip selection do not alter it.  Simple-IT uses
$8\Delta=1.6$\,s with Opt-RBC and $5\Delta=1$\,s with Bracha RBC.  Autobahn's
separate 500\,ms unanimous-fast-path wait is an implementation tuning, not a
view timeout.  The upstream artifact implements linear Autobahn; we add its
paper's all-to-all exchange for the optimistic variant.  Other shared settings
are a 100\,ms header delay, 64\,KiB/5\,ms network flushing, and 20\,ms worker
batching; Vantage and Simple-IT use BLAKE3 and round-robin leaders.

\subsection{Measurement methodology}\label{app:methodology}

\paragraph{Network and workload.}
Validators use private addresses in one availability zone.  Netem assigns
validator $i$ row $i\bmod 10$ of \Cref{tab:rtt-matrix}, placing half the RTT
on each direction without jitter~\cite{starfish}.  Variants at one committee
size reuse instances and rules; sizes use new instances, so cross-size trends
include AWS placement variation.  Each validator offers $R/n$ open-loop load
in 50\,ms bursts.  Transactions are timestamped, random, effectively
incompressible 512-byte values~\cite{starfish,bluestreak}.

Workers store and disseminate batches before passing digests to the primary.
Vantage puts them in unsigned lane blocks, broadcasts headers every 100\,ms,
and fetches missing batches; the interval approximates the uncertified-DAG
round duration.

\paragraph{Metrics.}
After all validators advance, runs warm up for 30 seconds and measure for 120.
Latency spans submission through local sequencing and byte retrieval
~\cite{starfish,simpleit}; throughput is sequenced transactions per window.
We report the median validator's p50/p99 latency, total primary-plus-worker
CPU, and outbound framed bytes per sequenced byte.  Cloud points run as up to
three campaign repetitions on fresh fleets; every reported metric is the median
across the available repetitions of its per-run value.  The $n{=}100$
committee-scaling cell joins from the throughput campaign's 100~tx/s point,
which every repetition measures.

\subsubsection{Baseline fleet}\label{app:baseline-fleet}
We measured both artifact baselines on \texttt{c5d.2xlarge} and
\texttt{m5d.2xlarge} (equal 8 vCPUs; 16 vs 32 GiB RAM; 200 vs 300 GB local
NVMe) and report each from its better fleet.  Bluestreak runs on
\texttt{m5d.2xlarge}: on \texttt{c5d.2xlarge} its throughput collapses well
before the highest accepted point it reaches on \texttt{m5d.2xlarge}, with six
validators killed for exhausting memory.  The limit is memory rather than CPU.
Sailfish++ stays on the shared \texttt{c5d.2xlarge} fleet.  Bluestreak therefore runs
on a stronger instance than \sysname{}; no baseline runs on a weaker one.

\begin{table}[!htbp]
  \centering
  \scriptsize
  \setlength{\tabcolsep}{3pt}
  \caption{Injected round-trip latency matrix (ms).  Row and column order is
  the committee-index assignment order.}
  \label{tab:rtt-matrix}
  \begin{tabular}{lrrrrrrrrrr}
    \toprule
     & USE1 & CAC1 & EUS2 & SAE1 & APSE2 & USW1 & EUW1 & EUN1 & APS1 & APNE1 \\
    \midrule
    USE1  &   1 &  14 & 104 & 112 & 198 &  65 &  68 & 110 & 201 & 146 \\
    CAC1  &  14 &   1 & 106 & 122 & 196 &  78 &  67 & 103 & 189 & 142 \\
    EUS2  & 104 & 106 &   1 & 215 & 281 & 163 &  29 &  50 & 143 & 238 \\
    SAE1  & 112 & 122 & 215 &   1 & 309 & 175 & 176 & 220 & 299 & 254 \\
    APSE2 & 198 & 196 & 281 & 309 &   1 & 137 & 254 & 268 & 150 & 101 \\
    USW1  &  65 &  78 & 163 & 175 & 137 &   1 & 127 & 172 & 226 & 108 \\
    EUW1  &  68 &  67 &  29 & 176 & 254 & 127 &   1 &  38 & 125 & 199 \\
    EUN1  & 110 & 103 &  50 & 220 & 268 & 172 &  38 &   1 & 148 & 245 \\
    APS1  & 201 & 189 & 143 & 299 & 150 & 226 & 125 & 148 &   1 & 140 \\
    APNE1 & 146 & 142 & 238 & 254 & 101 & 108 & 199 & 245 & 140 &   1 \\
    \bottomrule
  \end{tabular}
\end{table}

\begin{table}[!htbp]
  \centering
  \scriptsize
  \setlength{\tabcolsep}{3pt}
  \caption{End-to-end latency as p50/p99 in milliseconds, for the median
  validator, taken as the median across the available campaign repetitions.
  These are the cells the corresponding figures plot, with the tail added.
  Columns are \sysname{}, optimistic all-to-all Autobahn (A2A), seamless
  Autobahn (Seamless), Simple-IT with Opt-RBC (IT-Opt) and with Bracha RBC
  (IT-Bracha), Bluestreak, and Sailfish++.  A dash
  marks a size or offered-load point the variant does not reach, or a variant the relay
  stress does not cover.  Overload endpoints are excluded, since a variant past
  overload reports a cumulative percentile over pre-collapse traffic.}
  \label{tab:latency-percentiles}
\begin{tabular}{@{}lrrrrrrr@{}}
\toprule
 & \sysname{} & A2A & Seamless & IT-Opt & IT-Bracha & Bluestreak & Sailfish++ \\
\midrule
\multicolumn{8}{@{}l}{\emph{Committee scaling at 100 tx/s} (\Cref{fig:committee-scaling})} \\
$n{=}10$ & 446/599 & 474/677 & 706/992 & 642/802 & 774/970 & 463/627 & 682/1957 \\
$n{=}20$ & 446/601 & 484/677 & 684/950 & 666/833 & 782/985 & 466/628 & 704/2268 \\
$n{=}50$ & 442/603 & 490/694 & 692/969 & 669/864 & 777/985 & 473/633 & 773/1518 \\
$n{=}100$ & 452/613 & 560/776 & 795/1097 & 699/868 & 778/985 & 480/640 & 786/2091 \\
\midrule
\multicolumn{8}{@{}l}{\emph{Throughput ladder at $n=100$, accepted rungs} (\Cref{fig:throughput-sweep})} \\
100 tx/s & 452/613 & 560/776 & 795/1097 & 699/868 & 778/985 & 480/640 & 786/2091 \\
10k & 456/616 & 566/788 & 800/1112 & 692/878 & 780/983 & 481/641 & 788/1992 \\
150k & 475/640 & 604/832 & 838/1150 & 704/897 & 794/1001 & 512/786 & 840/2910 \\
200k & 483/646 & 616/850 & 852/1163 & 708/904 & 800/1010 & 522/870 & --- \\
225k & 487/653 & 629/860 & 864/1178 & 712/908 & 804/1018 & 542/1283 & --- \\
250k & 492/654 & 636/868 & 877/1198 & 718/916 & 810/1028 & 541/1465 & --- \\
275k & 509/698 & 1130/3306 & 1442/3553 & 742/992 & 825/1064 & --- & --- \\
\midrule
\multicolumn{8}{@{}l}{\emph{Leader-relay stress at $n=20$, $f=6$} (\Cref{fig:leader-relay})} \\
100 tx/s & 450/1395 & 809/1154 & --- & 674/861 & --- & --- & --- \\
1k & 452/1354 & 860/1144 & --- & 668/863 & --- & --- & --- \\
10k & 454/1392 & 1706/2592 & --- & 672/868 & --- & --- & --- \\
20k & 460/1426 & 1836/2694 & --- & 679/872 & --- & --- & --- \\
100k & 458/1317 & 1886/3562 & --- & 674/867 & --- & --- & --- \\
200k & 462/1226 & 2336/5660 & --- & 672/859 & --- & --- & --- \\
\bottomrule
\end{tabular}

\end{table}

\subsection{Representation optimizations}\label{sec:optimizations}

The protocol places lane-frontier acknowledgments in the authenticated
envelope of each proposal ECHO, outside its immutable Direct AGB identity.
Since $\hash{B_v}$ binds an ordered reference vector, a bitmap marks exact
directly published positions.  A sender behind a named tip may instead give a
sparse pair containing the position and positive integer distance to its
greatest directly published ancestor.  The pair carries no digest: a receiver
credits it only after the named tip's verified predecessor chain derives the
exact ancestor, and otherwise keeps it pending.  Once verified, either form
acknowledges the sender's whole directly received prefix through that
coordinate, so one claim can backfill older lane coordinates.  Every resulting $(a,k,h)$
remains attributed to and counted first-hand from the envelope sender; the
envelope is neither a certificate nor a forwarded-MAC vector.  Batch-recovery
ECHOs remain by value and use the same claims over their core and tip vectors.
Only a quorum of exact-position claims promotes a correct proposer's core;
integer-derived counts remain valid for $\textsc{AuthorOK}$ and eventual
availability but are not assumed to become common within one delay.  A general
quorum therefore freezes the least-height unconfirmed direct prefix as an
exact-position confirmation target; it neither promotes the core nor clears
completed-open quarantine by itself.

The implementation retains periodic digest-bearing availability watermarks
behind the benchmark switch that disables ECHO claims.  This is an A/B oracle,
not the default protocol or a liveness dependency.

The proposal travels by value; later single-proposal ECHOs and READYs name
$\hash{B_v}$, a grade, and any origin bit.  Before a digest quorum acts, the
receiver must hold the canonical body and execute the unchanged by-value rule.
Statement authors retain and serve it, but served bytes create no provenance
or vote.  Transport batches to 64\,KiB or 5\,ms.  Bounded recovery and state
reclamation in the artifact are outside the protocol and theorem statements.

With $\lambda$ bounding a digest, identifier, and reference, the proposal and
digest statements cost $O(n^2\lambda)$ bits over all deliveries; the
$O(n)$-bit ECHO bitmaps add $O(n^3)$.  Thus the aligned single-proposal
encoding uses $O(n^2\lambda+n^3)$ control bits and $O(n^2)$ logical messages.
Sparse integer exceptions cost $O(n^3(\log n+\log\nu))$ bits, where $\nu$
bounds heights and distances; by-value recovery retains the proved
$O(n^3\lambda)$ statement bound under the paper's scalar-size convention.

\subsection{Experiment-specific records}

\subsubsection{Committee scaling}\label{app:committee-scaling}
The seven protocol variants at $n\in\{10,20,50\}$ use one on-demand
\texttt{c5d.2xlarge} validator per instance (8 vCPUs, 16 GiB RAM, 200 GB local
NVMe) in \texttt{eu-west-1a}, the matrix above, 100 aggregate tx/s, five-second
Prometheus scrapes, and the common warmup/window.  All 21 first attempts
complete at approximately the offered load, with no reported netem drop or
panic.  The $n=100$ row instead joins each variant's 100~tx/s point from the
throughput campaign and therefore inherits that figure's fleets
(\Cref{app:throughput-sweep,app:baseline-fleet}); this is why Bluestreak's
$n=100$ cell is \texttt{m5d.2xlarge} while its smaller sizes are
\texttt{c5d.2xlarge}.

\subsubsection{Fresh-reference cadence}\label{app:cadence}
For the $n=100$, 100~tx/s points, we subtract each validator's progress counter
at the measurement-window boundaries, divide by that run's actual window, and
take the median validator.  The available runs give about 14 Vantage proposal
views/s, about 10 Bluestreak DAG rounds/s, about four Sailfish++ rounds/s, and
about eight slots/s for both Autobahn modes.  Autobahn has no direct slot gauge,
so its rate is derived from received \texttt{ConsensusRequest} counts: one request per
optimistic slot and two per seamless slot.  The generator records every
run's median-validator delta in \texttt{carrier-cadence.json}.  These counters
measure the next opportunity to reference fresh data, not equivalent protocol
work or capacity.

\subsubsection{View anatomy and network-delay control}\label{app:decision-anatomy}
The $n=20$ mechanism study uses the target-local resolver implementation with
no fault-mode switch enabled; the resolver therefore remains inactive.
All validators run in one release binary on the 32-vCPU, 124-GiB native-Linux
host used for Q5, with matrix delays injected in process rather than through
the Docker netem path, the optimized encoding, and 100 tx/s.  Each of three
60-second runs submits 6,000 transactions and materializes all of them after
drain.  A reported percentile is first taken across validators and then the
median across runs.

\begin{figure}[t]
  \centering
  \begin{tikzpicture}
\begin{axis}[
  ybar,
  bar width=11pt,
  width=0.53\textwidth,
  height=0.30\textwidth,
  ymin=0,
  ymax=500,
  symbolic x coords={Coverage,Seal,Ordered},
  xtick=data,
  xticklabels={Proposal coverage,Proposal $\rightarrow$ seal,End-to-end sequence},
  x tick label style={font=\scriptsize,align=center},
  ylabel={p50 latency (ms)},
  tick label style={font=\scriptsize},
  label style={font=\small},
  grid=major,
  grid style={black!10},
  legend columns=2,
  legend style={font=\scriptsize,draw=none,fill=none,
    at={(0.5,1.03)},anchor=south},
  nodes near coords,
  every node near coord/.append style={font=\scriptsize},
]
  \addplot[fill=blue!62,draw=blue!75!black]
    coordinates {(Coverage,175) (Seal,183) (Ordered,438)};
  \addlegendentry{Ten-region matrix}
  \addplot[fill=orange!78,draw=orange!90!black]
    coordinates {(Coverage,194) (Seal,82) (Ordered,347)};
  \addlegendentry{Uniform 143\,ms}
\end{axis}
\end{tikzpicture}
\hfill
\begin{tikzpicture}
\begin{axis}[
  xbar stacked,
  bar width=15pt,
  width=0.35\textwidth,
  height=0.30\textwidth,
  xmin=0,
  xmax=100,
  ymin=-0.55,
  ymax=1.55,
  y dir=reverse,
  ytick={0,1},
  yticklabels={Ten-region matrix,Uniform},
  xlabel={Local sealing mechanisms (\%)},
  xtick={0,50,100},
  tick label style={font=\scriptsize},
  yticklabel style={font=\scriptsize},
  label style={font=\small},
  grid=major,
  grid style={black!10},
  legend columns=2,
  legend style={font=\scriptsize,draw=none,fill=none,
    at={(0.5,1.03)},anchor=south},
  area legend,
]
  \addplot[
    fill=teal!68,draw=white,
    nodes near coords,point meta=explicit symbolic,
    every node near coord/.append style={font=\scriptsize,text=white,
      anchor=east,xshift=-2pt}
  ] coordinates {(70.1,0) [70.1] (100,1) [100]};
  \addlegendentry{Fast-seal}
  \addplot[
    fill=black!38,draw=white,
    nodes near coords,point meta=explicit symbolic,
    every node near coord/.append style={font=\scriptsize,
      anchor=east,xshift=-2pt}
  ] coordinates {(29.9,0) [29.9]};
  \addlegendentry{READY}
\end{axis}
\end{tikzpicture}
  \caption{Matched network-delay control at $n=20$.  Left: p50 stage and
  end-to-end latency.  Right: local sealing mechanisms summed over validators.  The
  uniform value is the 143\,ms median off-diagonal RTT after assigning the
  ten-region matrix to 20 validators.}
  \Description{Grouped bars compare stage and end-to-end latency under the
  ten-region matrix and a uniform RTT.  Stacked bars show about 70 percent fast
  seals under the matrix and 100 percent under the uniform RTT.}
  \label{fig:vantage-rtt-shape}
\end{figure}

\begin{figure}[t]
  \centering
  \begin{tikzpicture}
\begin{axis}[
  xbar stacked,
  bar width=13pt,
  width=0.90\textwidth,
  height=0.20\textwidth,
  xmin=0,
  xmax=100.1,
  ymin=-0.55,
  ymax=1.55,
  y dir=reverse,
  ytick={0,1},
  yticklabels={Serialized bytes,Logical messages},
  xlabel={Share of typed outgoing traffic (\%)},
  xtick={0,25,50,75,100},
  grid=major,
  grid style={black!10},
  tick label style={font=\scriptsize},
  yticklabel style={font=\scriptsize},
  label style={font=\small},
  legend columns=4,
  legend style={font=\scriptsize,draw=none,fill=none,
    at={(0.5,-0.62)},anchor=north,
    /tikz/every even column/.append style={column sep=0.30cm}},
  area legend,
]
  \addplot[fill=orange!78,draw=white] coordinates {(39.354,0) (58.164,1)};
  \addlegendentry{AGB}
  \addplot[fill=blue!62,draw=white] coordinates {(30.529,0) (9.403,1)};
  \addlegendentry{Payload}
  \addplot[fill=teal!62,draw=white] coordinates {(20.018,0) (18.617,1)};
  \addlegendentry{Lane blocks}
  \addplot[fill=purple!62,draw=white] coordinates {(5.484,0) (2.283,1)};
  \addlegendentry{Sequence state}
  \addplot[fill=green!55!black,draw=white] coordinates {(4.501,0) (11.265,1)};
  \addlegendentry{Primary--worker}
  \addplot[fill=red!68,draw=white] coordinates {(0,0) (0,1)};
  \addlegendentry{Resolver (none)}
  \addplot[fill=black!38,draw=white] coordinates {(0.114,0) (0.268,1)};
  \addlegendentry{Other}
\end{axis}
\end{tikzpicture}
  \caption{Outgoing typed bytes and logical unicast messages by module in a
  representative matrix run of the $n=20$ mechanism study.  The inactive
  resolver sends nothing.}
  \Description{Two stacked bars show the byte and logical-message shares of
  AGB, payload, lane blocks, sequence state, primary--worker traffic, and other
  traffic; the resolver share is zero.}
  \label{fig:vantage-traffic-shares}
\end{figure}

Fast-seal waits for all 20 matching ECHOs, while Direct AGB may finish from
$n-f=14$ READYs.  Under heterogeneous delays, the Direct AGB result sometimes
precedes the slowest ECHO, so the fast seal supplies about 70\% of local seals;
at a uniform 143\,ms RTT every observed seal uses fast-seal.  All seals are full.
Uniformization changes coverage from 175 to 194\,ms, proposal-to-seal from 183
to 82\,ms, and reduces end-to-end sequencing latency from 438 to 347\,ms;
these are direct submission-to-sequencing medians, which the chained stage
medians below, taken over different event populations, need not add up to.
\Cref{fig:vantage-traffic-shares} gives the outgoing traffic shares by module
in the representative matrix run.

Under the matrix, the run-median p50/p99 is about 444/596~ms from
submission to materialization.  From block publication, the send of the first
proposal naming the block as tip is about 119/212~ms, measured on the shared
clock of the single-process run in three repeat runs whose other stage medians
agree with the values here within 3~ms; author-observed first proposal
coverage is about 175/320~ms, exact-$Q$ availability 298/404~ms, and sequencing
371/466~ms.  Proposal receipt to local full seal is about 183/275~ms.  Under
uniform RTT, submission-to-materialization is about 353/434~ms and every seal
uses fast full.  The remaining local p50 intervals are about 11~ms from
submission to batch seal, 51~ms from the first batch digest to author-block
publication, and 6~ms from sequencing to materialization.
The 51~ms is the expected half of the 100~ms header delay, and the 47~ms by
which tip naming exceeds the 72~ms median one-way delay is the wait for the
next proposal opportunity.

The representative matrix run sends about 1.2 million typed logical messages,
200~MB of typed-message bodies, and 230~MB physically over 60 seconds.  AGB
contributes about 40\% of typed bytes and 60\% of messages; payload plus lane
blocks contribute about half of the bytes and 30\% of messages.  Sequence state
and primary--worker traffic account for nearly all of the remainder.  The
target-local resolver contributes no bytes or messages.  Optional tracing
retains at most 4,096 block timestamps per validator.

For a hardware-consistency check, we retain the same six repetitions on the
original Apple M4 Pro.  Its matrix p50/p99 is about 445/599~ms, its uniform
p50/p99 359/445~ms, and it again emits no resolver traffic.  Q4 reports
the native-Linux measurements and uses the workstation set only as a
consistency check; the two machines are not pooled.

\subsubsection{Throughput sweep}\label{app:throughput-sweep}
The $n=100$ sweep offers 100, 10,000, 150,000, 200,000, 225,000, 250,000,
275,000, and 300,000 tx/s on \texttt{c5d.2xlarge} validators (8 vCPUs,
16 GiB RAM, 200 GB NVMe) in \texttt{eu-west-1a}; the Sailfish++ campaign adds
a 170,000 tx/s point.

The campaign strictly validates all validators at 100 and 10k and treats
higher loads as exploratory.  Acceptance
means at least 95\% median throughput, with the first lower point retained as
overload.  Accepted points end with 100 responsive validators and no reported
netem drops.  \sysname{}, optimistic all-to-all Autobahn, seamless Autobahn,
and both Simple-IT variants pass through 275k and overload at 300k;
Bluestreak overloads at 275k and Sailfish++ at 170k.  These are
deployment endpoints, not capacity bounds.

\subsubsection{Leader-relay stress}\label{app:leader-relay}
The $n=20,f=6$ campaign offers 100, 1,000, 10,000, 20,000, 100,000, and
200,000 total tx/s.  Six Byzantine publishers receive the same per-author
load as every correct publisher.  Each creates one batch per $\Delta$, sends
it directly to a fixed five-party correct-holder group, exposes the lane
header, and refuses repair.  The publisher retains its own copy and correct
holders serve normally, so each batch has six direct holders but remains
below the $f+1=7$ PoA threshold.  The six holder groups jointly cover all 14
correct consensus leaders while preserving a complete prefix at every holder.
In the Autobahn variant, Byzantine consensus leaders propose their certified
cut rather than self-inflicting a timeout.  Across the three variants, the
only emulated faulty behavior is partial data publication and repair refusal.

Each protocol uses a fresh fleet of 20 on-demand \texttt{c5d.2xlarge}
validators plus one control instance in \texttt{eu-west-1a}.  Protocol traffic
uses private addresses and the matrix of \Cref{tab:rtt-matrix} through netem;
observed probes span approximately 1--309\,ms.  The common 30-second warmup,
120-second window, 512-byte transactions, calibrated timers, and positional
Vantage encoding apply.
Useful throughput excludes transactions from the six Byzantine publishers;
their Autobahn commitment is measured separately.  All points except the
200k optimistic Autobahn retry report 20/20 healthy validators, zero netem
drops, and zero panics.  The retry ends at 19/20 healthy; the overloaded host
does not expose its final netem counter.  A first 200k attempt, retained only
as a diagnostic, stalls two validators and is not plotted.

\subsubsection{Target-local resolver recovery}\label{app:resolver-evaluation}
Q5 runs Docker on a separate native-Linux host with 32 virtual CPUs and
124~GiB RAM.  Both profiles use $\Delta=200$~ms, the ten-region netem matrix,
a 100,000-packet netem limit, 512-byte transactions, compact committee IDs,
authenticated channels, and one-second Prometheus reads.  These local
mechanism runs do not share a fleet or measurement window with the cloud
campaigns.

The transient profile uses $n=10,f=3$ for 130 seconds.  The counted
1,000~tx/s load is assigned only to the seven non-victims.  At second~20, the
other three otherwise-correct containers receive SIGKILL simultaneously; they
restart with their stores after a configured 30-second blackout, about
31 seconds in the observed timeline.  Each process exports an exact latency
histogram cleared every second, and Prometheus reads every second.  The figure plots the non-victim
median at each read, not a repetition aggregate: throughput is a trailing
five-second rate and latency is the latest-one-second p50.  All ten validators
are live at the end.  This profile has one run.

The mixed-open profile uses $n=10,f=3$ for 240 seconds and three fresh
committees.  From seconds 20--110, each Byzantine publisher proposes only its
own partially disseminated current tip, ignoring its local quarantine, and
suppresses its outgoing AGB, skip, WISH, resolver-phase, witness, DONE, and
resolver-serve messages.  Outside that interval it resumes ordinary behavior
immediately.  Correct publishers offer the counted 1,000~tx/s; the faulty
publishers add 600 uncounted tx/s.  Resolver completion is read from protocol events at
every correct validator; the plotted backlog is the committee median of
completed-open injected targets not yet sealed locally.  The arrival/seal
comparison uses the common steady interior of each attack, excluding its
boundary transients.  All reported Q5 runs end with every correct validator
live, no panic, and the injected backlog drained.

Prometheus evaluates both mixed-open metric series every second and excludes
the three Byzantine validators.  Throughput is a trailing-five-second rate;
latency is the median of the correct validators' latest-one-second
materialization p50.  The figure first computes those per-run committee
medians and then shows their three-run median and min--max band.  Empty
one-second latency windows remain missing rather than repeating a stale gauge.

The plotted mixed-open repetitions use the complete resolver path, including
admission, candidate cycling, retained-terminal participation, and
send-upon-join replay.  Resolver unit/integration tests and an $n=31$ smoke run
provide additional functional coverage.

Resolver instances activate once the proposal horizon reaches $u+3$.  In this
profile a Byzantine primary sends no resolver proposal, so its view ends at the
$5\Delta=1$~s no-proposal timeout.  A target can encounter up to two such
primaries before a correct-primary execution.  The separate $11\Delta=2.2$~s
deadline applies after a proposal appears and need not expire in a successful
correct-primary view.  The traces report target completion and output release,
but do not divide the observed delay among these stages.

\subsubsection{Inactive-path check}\label{app:resolver-load}
A separate one-run-per-arm, fault-free $n=30$ diagnostic at 1,000~tx/s checks
that merely enabling the target-local resolver does not perturb the inactive
path.  Both builds sustain the offered load at about 0.44~s p50, 0.60~s p99,
and 0.21 median validator CPU cores, with no observable inactive-path
regression.

\subsubsection{Artifact scope}\label{app:artifact-scope}
The implementation and harness repositories named in \Cref{sec:implementation}
contain the implementation, experiment configurations, generated figure
inputs, and per-run records; the figure data of this paper are under
\texttt{recorded/paper-fixedbuild-20260822} in the harness repository.  Q1 and Q2 terminate through
homogeneous AGB and consume no
fallback decision; Q3 stresses partial-publication admission.  Q4 keeps the
target-local resolver inactive, whereas Q5 exercises the recovery path: the
grounded skip shortcut under crashes and the resolver under mixed-open faults.  The matched $n=30$ check above compares the inactive path
before and after enabling the target-local resolver.

\paragraph{Reproduction.}
In the artifact's \texttt{wan-bench} repository, run
\texttt{wanbench campaign {-}{-}config CONFIG {-}{-}execute}; its paper
configuration index names the committee, throughput, baseline, and
leader-relay campaigns.  Regenerate the figure inputs with
\texttt{python3 vantage/gen\_paper\_dat.py}.  For the local controls, build the
artifact's \texttt{vantage-bft} repository with
\texttt{cargo build {-}{-}release -p node {-}{-}features pipeline-tracing} and run
\texttt{./target/release/node local-benchmark}; the repository contains the Q4
flags for the matrix and uniform-RTT arms.  Repeat each arm three times with a
fresh data directory.

For Q5, run the scripts in \texttt{vantage-bft/docker-bench}.  The sustained
profile uses \texttt{./direct\_resolver\_q5.sh}; the transient profile uses
\texttt{./transient\_crash\_q5.sh}.  The recorded configurations supply the
durations and repetition count.  The latter fixes
$n=10,f=3$, keeps the counted load off validators 0--2, schedules their
30-second blackout.
The plotting and validation commands are
\texttt{direct\_resolver\_q5.py} and \texttt{transient\_crash\_q5.py} in the same
directory; each accepts the archived run root and an output directory.

\section{Extended Related Work}\label{sec:related-work}

\Cref{sec:closest-comparisons} gives a normalized quantitative comparison with
five baselines.  Here we place those protocols and other related work in the
broader design space, organized by where a protocol spends strong agreement
and whether its evidence is transferable.

\paragraph{Signature-free BFT}
Agreement without digital signatures is classical: Castro's MAC-vector
Algorithm BFT reaches a $3\delta$ good case, with cubic authenticator
communication from recipient-specific MAC vectors~\cite{pbft,castrothesis}.
Those vectors provide forwarded per-recipient evidence and hence lie outside
our channel-only model.  Mostéfaoui, Moumen, and Raynal gave signature-free
asynchronous binary consensus with $O(n^2)$
messages~\cite{mmr14}.  Information-Theoretic HotStuff brought the rotating
leader-and-view discipline to authenticated channels without signatures or a
PKI~\cite{iths};
TetraBFT reduces the good case to $5\delta$~\cite{tetrabft}; Forget-IT
attains the good-case optimum of $3\delta$~\cite{forgetit}.  Simple-IT makes
the line practical with reliable notification and an optimistic broadcast path
under a stronger honesty threshold~\cite{simpleit}; speculative pipelining can
narrow leader-proposal spacing at the cost of requiring runs of correct
leaders.  Its shared-mempool variant is compared directly in
\Cref{sec:closest-comparisons}.  Vantage instead instantiates fresh IT-HS
agreement independently for each unresolved target.  A one-type first-hand
witness relay supplies the stable external-validity predicate that the
standalone agreement protocol assumes, and typed WISH messages synchronize
each target's resolver views.  No instance runs before target-local evidence
or $f+1$ peer WISHes activate it.  This
dependency can gate output after an open tip but remains outside direct
completion and homogeneous sealing.
A different route to the same goal buys authentication from the
infrastructure: SwitchBFT drops protocol signatures by relying on packet
source authentication in a trusted datacenter network, and programs a switch
to enforce agreement decisions and check that safety is not violated, matching
the speed of an in-switch crash-fault-tolerant protocol~\cite{switchbft}.  Its
trust therefore rests on the network --- a fabric in one administrative domain
with a trusted switch on the path --- where \sysname{} assumes only pairwise
authenticated channels between validators and no trusted network element,
which is what lets it run over the wide-area committee of
\Cref{sec:evaluation}.

\paragraph{Certified-data designs.}
Narwhal separates availability-certified dissemination from
ordering~\cite{narwhal}; DAG-Rider~\cite{dagrider},
Bullshark~\cite{bullshark}, Shoal and Shoal++~\cite{shoal,shoalpp}, and
Sailfish~\cite{sailfish} order DAGs in which every vertex is reliably
broadcast or availability-certified, so ordering adds little communication
but every data unit completes a strong-broadcast or certification round before
it can be referenced; most practical instantiations also use signatures.
DAG-Rider's RBC-DAG is signature-free given an ideal perfect coin, although
its standard practical coin realization uses threshold cryptography for
liveness.  Sailfish++ makes the Sailfish line signature-free by running
optimistic signature-free RBC for every vertex~\cite{sailfishpp};
\Cref{sec:closest-comparisons} details the resulting nonleader classes and the
contrast with Vantage's graded tip.  On the signed side, Autobahn combines
PoA-certified per-author lanes with a PBFT-style core and optionally admits an
uncertified current tip~\cite{autobahn};
\Cref{sec:closest-comparisons} compares that optimization and its seamlessness
tradeoff directly with \sysname{}'s graded tip.  Dumbo-NG
separates continuously growing signed data streams from asynchronous
agreement~\cite{dumbong}, and Raptr integrates prefix certification of
uncertified batches into leader consensus~\cite{raptr}.
Simplex~\cite{simplex} and DispersedSimplex~\cite{dispersedsimplex} are the
signed leader-dissemination baselines against which the signature-free
leader line measures itself.

Concurrent work, Multimmit, independently addresses missing data in a signed
multi-chain protocol~\cite{multimmit}.  Its votes report a supported prefix
per producer chain and may extend the leader's proposed tips, so one missing
producer need not block the other chains, a localization they prove rather than
approximate.  Its per-chain report is a finer split than Vantage's single grade
bit.  Its extension votes omit a step that Vantage retains: a Vantage tip
entry must reach the proposer before the proposal.  In its good case a block
disseminated at $t$ is ordered by $t+3\delta$ in expectation and $t+2\delta$ at
best, measured from dissemination as ours is.  The cost is cryptographic and in
resilience: $n\geq5f+1$, a PKI, and ordinary plus aggregate signatures, together
with two threshold schemes that their optimization section shows can be replaced
by the aggregate scheme alone.  Aggregation is what keeps their certificates
succinct, and the standardized post-quantum signatures have no comparable
aggregate~\cite{fips204,fips205}, so a post-quantum instantiation would keep
the structure and lose the succinctness.  Vantage studies signature-free one-delay admission
with $n\ge 3f{+}1$.  Because it is concurrent work with no reported evaluation and a
different fault threshold, it is not included in \Cref{tab:data-path-comparison}.

Other concurrent work takes proposal pipelining beyond the one-round cadence
of DAGs: Cadence runs independent signed slots at an arbitrary configured
interval $\tau$, with concurrent proposers disseminating coded encrypted
proposals against $\tau$-spaced deadlines, and signed include/exclude
evidence plus multivalue Byzantine agreement (MVBA) resolving a failed fast
path~\cite{cadence}.  Cadence can
configure $\tau$ independently of $\delta$, a stronger pacing result than our
all-correct at-most-$\delta$ fresh-opportunity gap.  Its favorable per-slot finality is
$\Delta+2\delta$ from the formal slot start $D-\Delta$, where $D$ is the slot
deadline, although in-order append may wait for earlier slots.  Its endpoint and mechanisms --- aggregate
signatures, per-slot threshold encryption, and proposer-local input rather
than a distinct data-block author --- place it outside the comparison
experiment of \Cref{tab:data-path-comparison}.  \sysname{} instead studies how
far pipelined multi-author data can go with hashes and non-transferable
first-hand counts alone.

Banyan paired the $n=3f{+}2p{-}1$ two-delay fast path with rotating
leaders~\cite{banyan}; concurrent FinWhale brings it to Mysticeti, using
evidence blocks to reconcile fast and slow decisions across local DAG views
~\cite{finwhale}.  At $p=1$, it keeps $n=3f{+}1$ and survives one non-voter,
unlike Vantage's all-$n$ fast-seal shortcut, but its blocks remain signed.  The two systems
share the $2\delta$ carrier-relative endpoint; their aligned totals and
FinWhale's round bound are in \Cref{tab:data-path-comparison}.

\paragraph{Recent hybrid and sparse DAGs.}
Recent work reduces certified-DAG metadata without removing the data-side
dissemination gate.  Angelfish replaces some nonleader RBC vertices with
one-delay best-effort votes, but those votes carry no transactions; every
transaction-carrying vertex remains reliably broadcast, and the protocol uses
BLS certificates~\cite{angelfish}.  Clownfish sparsifies certified-DAG edges
in RBC- and signed consistent-broadcast variants and explicitly leaves a
signature-free adaptation to future work~\cite{clownfish}.  DAGs for the
Masses uses verifiable sampling to reduce Bullshark's parent set, while every
vertex remains reliably broadcast and references carry signed availability
evidence~\cite{dagsmass}.  Morpheus is a closer signed endpoint: its
Autobahn-style mode lets a leader hash-reference a freshly received author
block, but voters proceed only after obtaining that block's signed
$0$-QC~\cite{morpheus}.  Hashgraph's constant-degree gossip DAG likewise
uses signed two-parent events~\cite{hashgraph}; the broader certified and
uncertified DAG design space is surveyed in~\cite{dagsok}.

\paragraph{Uncertified data with signatures.}
Cordial Miners introduced leader-based ordering of a best-effort block
DAG~\cite{cordialminers}; Mysticeti improves latency by pipelining leaders,
committing a signed uncertified DAG with a
three-round leader path~\cite{mysticeti}.  Its idealized aligned
cross-author block path is $4\delta$; the familiar $4.5\delta$ starts at
transaction arrival and adds the average wait for the author's next block.
The Starfish paper presents the pacemaker under which this line is provably
live, and its Mysticeti-L variant improves consensus bandwidth with
multisignatures.  Its explicit all-correct post-GST construction for
Mysticeti approaches $5\tfrac{1}{3}\delta$ average block latency; adding the
$\delta/3$ admission induced by that synchronized arrival schedule gives
$5\tfrac{2}{3}\delta$ average transaction latency, a different experiment.
Starfish itself further decouples availability through coded
dissemination~\cite{starfish}.
Bluestreak reduces nonleader metadata without
multisignatures~\cite{bluestreak}, and concurrent FinWhale adds a two-round
fast path to this family~\cite{finwhale}.  In these protocols, signatures keep
relayed blocks and votes attributable.  Vantage instead distinguishes direct
author publication from repair and counts author-receipt acknowledgments
first-hand.

Starfish-S is the closest mechanism lineage: its optimistic, standard, and
pending grades use later committed anchors~\cite{starfishs}.  Vantage replaces
its signed DAG evidence with first-hand response histories and locally
validated entries.

\paragraph{Graded and abortable broadcast.}
Feldman--Micali gradecast grades confidence in one value for Byzantine
agreement~\cite{feldmanmicali}; AGB instead grades optional-suffix inclusion
after fixing a common core and omits faulty-proposer totality.  BBCA gives an
indivisible value Complete--Adopt, and BBCA-Chain recovers a backbone with
transferable adopt/no-adopt evidence~\cite{bbcachain}.  AGB's
READY/\noready{} and READY-mix refinement imply a graded counterpart, but Vantage
resolves open and general silent views through target-local agreement, using its
grounded shortcut only for clean crash silence.  Optimistic RBC retains
reliable-broadcast totality~\cite{bracha,sailfishpp}.  The BBCA-Chain comparison
in \Cref{sec:discussion} gives Vantage no bounded fallback.
GradedDAG grades a different object in an asynchronous signed DAG
~\cite{gradeddag}.

\paragraph{Foundations.}
AGB uses Bracha's echo/ready pattern~\cite{bracha}; Abraham, Ren, and Xiang
classify unauthenticated broadcast latency~\cite{arxunauth}.  Its pacemaker
combines the bounded-space synchronizer of Bravo, Chockler, and Gotsman with
Starfish's two-response trigger~\cite{bravo2022liveness,starfishs}; verifiable
dispersal could compress data and repair independently~\cite{avid}.
The on-demand resolver adopts IT-HS's three-key and lock-opening discipline
for first-hand authenticated messages~\cite{iths}.  Separately, Vantage
excludes overlap between grounded skip votes and ECHOs for fresh non-skip
resolver values, without $f+1$ skip-vote amplification.  Its $f+1$ relay
applies to resolver witnesses only after a genuine fresh ECHO quorum.

\end{document}